\documentclass{article} 
\usepackage{iclr2027_conference,times}

\definecolor{orangetag}{rgb}{1.0, 0.55, 0.0}

\newcommand{\Cvxa}{\mathcal{A}}
\newcommand{\Cvxb}{\mathcal{B}}

\newcommand{\DD}{{\rm D}}
\newcommand{\EE}{{\rm E}}

\newcommand{\NN}{{\rm N}}
\newcommand{\MM}{{\rm M}}

\newcommand{\KK}{{\rm K}}
\newcommand{\LL}{{\rm L}}

\newcommand{\Reg}{\mathcal{R}}
\newcommand{\OT}{{\rm OT}}
\newcommand{\QM}{{\rm QM}}

\newcommand{\diff}{{\rm d}}

\newcommand{\N}{\mathbb{N}}

\newcommand{\one}{\mathbf{1}}

\newcommand{\meas}[1]{\mathcal{M}\left( #1 \right)}
\newcommand{\measplus}[1]{\mathcal{M}_+\left( #1 \right)}

\newcommand{\coupling}[1]{\Pi\left( #1 \right)}

\newcommand{\Engy}{\mathcal{E}}
\newcommand{\Loss}{\mathcal{L}}
\newcommand{\Corr}{\rm Corr}

\newcommand{\C}{\mathcal{C}}
\newcommand{\T}{\mathcal{T}}

\newcommand{\Hilb}{\mathcal{H}}

\newcommand{\img}[1]{{\rm Im}\left({#1}\right)}

\newcommand{\TV}{{\rm TV}}
\newcommand{\Div}[3]{{\rm D}_{#1}\left(#2 \Vert #3 \right)}

\newcommand{\scal}[1]{\left\langle #1 \right\rangle}
\newcommand{\norm}[1]{\left\Vert #1 \right\Vert}
\newcommand{\abs}[1]{\left\vert #1 \right\vert}

\newcommand{\HS}{{\rm HS}}

\newcommand{\X}{\mathcal{X}}
\newcommand{\Y}{\mathcal{Y}}

\usepackage{amsthm} 

\newtheorem{theorem}{Theorem}
\newtheorem{proposition}{Proposition}
\newtheorem{corollary}{Corollary}
\newtheorem{lemma}{Lemma}
\newtheorem{remark}{Remark}
\newtheorem{definition}{Definition}

\usepackage{amsmath,amsfonts,bm}

\def\eqref#1{equation~\ref{#1}}

\def\1{\bm{1}}

\DeclareMathAlphabet{\mathsfit}{\encodingdefault}{\sfdefault}{m}{sl}
\SetMathAlphabet{\mathsfit}{bold}{\encodingdefault}{\sfdefault}{bx}{n}

\newcommand{\R}{\mathbb{R}}

\newcommand{\KL}{D_{\mathrm{KL}}}

\usepackage[ruled,vlined]{algorithm2e}
\DontPrintSemicolon

\usepackage{hyperref}
\usepackage{url}

\usepackage{cleveref}

\usepackage{thmtools}
\usepackage{thm-restate}

\usepackage{graphicx}

\usepackage{booktabs}
\usepackage{tabularx}
\newcolumntype{R}{>{\raggedleft\arraybackslash}X}
\newcolumntype{C}{>{\centering\arraybackslash}X}

\title{A Unified Dual Method for Matching Problems}

\author{Guillaume Houry \\
HeKA team \\
Inria, Université Paris-Cit\'e, Inserm\\
F-75015 Paris, France \\
\texttt{guillaume.houry@inria.fr} \\
\And
Ferdinand Genans \\
LPSM \\
Sorbonne Université \\
F-75005 Paris, France \\
\texttt{genans.ferdinand@gmail.com} \\
\AND
Jean Feydy \\
HeKA team \\
Inria, Université Paris-Cit\'e, Inserm\\
F-75015 Paris, France \\
\texttt{jean.feydy@inria.fr} \\
\And
François-Xavier Vialard \\
LIGM \\
Universit\'e Gustave Eiffel \\
Marne-la-Vall\'ee, France \\
\texttt{francois-xavier.vialard@univ-eiffel.fr} \\
}

\iclrfinalcopy 
\begin{document}

\maketitle

\begin{abstract}
Matching problems are ubiquitous in data science as they enable the alignment of structured objects and distributions.
While existing solvers are often tailored to specific matching formulations, we unify a broad class of such problems within a common mathematical and optimization framework based on duality theory. 
Theoretically, we demonstrate that matching objectives decomposable as a difference of convex (DC) functions can be recast as implicit registration problems. This connection links matching to another well-studied class of objectives and yields a dual formulation amenable to natural optimization strategies.
We then apply these findings to quadratic matching (QM) problems, which admit DC decompositions and for which we provide extensive convergence guarantees.
Our framework applies to Gromov-Wasserstein (GW), as well as its unbalanced formulation and several variants, which are increasingly popular QM problems. Numerically, we implement our algorithms at scale for various data modalities such as graphs, point clouds, meshes, and word embeddings.
Finally, our modular approach allows us to explore new formulations such as fracture matching, broadening the scope of problems that can be addressed within this framework.
\end{abstract}

\section{Introduction}
Many objects studied in data science and imaging, such as point clouds, shapes, graphs, physical variables, or probability distributions, are naturally spread across space.
Comparing, aligning, and transferring information between such objects requires to establish reliable spatial correspondences between such inputs.
To this end, two approaches are available: registration, which searches for optimal transforms of one source object to a target one, and matching, which draws pointwise correspondences between their constituent points.
Although the former class of methods provides more regular output, it is also less flexible and is not applicable to all data modalities.
The latter, on the other hand, is generally harder to solve algorithmically, and the lack of efficient solvers for arbitrary problems is still an important bottleneck for many downstream applications. 

Arguably the most popular category of matching techniques, quadratic matching searches for matchings as solutions of a quadratic optimization problem, allowing the use of principled numeric routines while being sufficiently expressive to encode complex geometric objectives.
Depending on the exact problem and the scientific community, quadratic matching takes different names: quadratic assignment  \citep{lawler1963quadratic}, Gromov--Wasserstein \citep{memoli2011gromov}, graph matching \citep{livi2013graph}, or quadratic-form optimal transport \citep{wang2025quadratic}.
More custom QM instances have also been introduced for shape matching \citep{mandad2017variance,brifault2025efficient}, evidencing the importance of this class of problems. 

Yet, quadratic matching remains challenging to compute.
Its non-convexity makes it NP-hard to minimize globally. 
Local optimization via first-order methods is possible \citep{frank1956algorithm}, but it requires to manipulate large matching matrices explicitly which limits the scalability of this approach.
A promising line of work exploit duality properties of quadratic matching to recast quadratic matching problems as a sequence of linear matching instances, or optimal transport (OT).
OT is a popular and well-studied tool in data science, its success coming from the availability of efficient solvers \citep{peyre2019computational}.
Indeed, OT admits a dual formulation that implicitly encodes matching matrices in variables of lower size, removing its main computational bottleneck.
OT is also a flexible framework, since regularizations can be added to the linear optimization problems to smooth or sparsify the final matching \citep{blondel2018smooth}, to ignore outliers \citep{chapel2020partial,liero2018optimal}, or to enable fast and GPU-compatible algorithms \citep{cuturi2013sinkhorn}. 
Building upon this connexion between quadratic matching and optimal transport, faster QM methods have been proposed in specific cases \citep{gold1996graduated,peyre2016gromov,sejourne2023unbalanced}, and duality properties of some QM instances have been recently investigated \citep{rioux2024entropic,zhang2024gromov,sebbouh2024structured,houry2026gromov}.
However, these works still lack the generality required to solve any QM problem at scale.

\paragraph{Contribution.} Focusing on the current limitations of the matching literature, we propose a new framework to solve any quadratic matching locally in an efficient and rigorous way.
We first draw a conceptual link between matching and registration problems, revealing a hidden geometric structure in matching problems, and introduce a dual formula that reduce arbitrary matching problems to a sequence of OT computations.
We then specialize these results to the quadratic case, allowing us to build an efficient solver with quantitative convergence guarantees that we apply in various practical settings.
Beyond the methodological contributions, we release an optimized, backend-agnostic implementation that is modular by design and readily extensible to other regularized quadratic problems. While unbalanced Gromov-Wasserstein serves as our primary motivating application throughout the paper, the framework applies with minimal adaptation to a wider class of problems.

\paragraph{Notations.}
In this article, $\X$ and $\Y$ are two compact measurable metric spaces, $\C(\X)$ denotes the space of continuous real-valued functions on $\X$, $\meas{\X}$ is the space of measures of $\X$, and $\measplus{\X}$ is its restriction to positive measures.
We recall that in the discrete setting, when $\X = \{x_1,\dots,x_\NN\}$ and $\Y = \{y_1,\dots,y_\MM\}$ are finite point sets, any measures $\alpha \in \meas{\X}$ and $\beta \in \meas{\Y}$ correspond to weight vectors of $\R^\NN$ and $\R^\MM$ and measures $\pi$ of $\meas{\X \times \Y}$ correspond to matrices of $\R^{\NN \times \MM}$:
\begin{equation*}
    \alpha = \sum_{i=1}^{\NN} \alpha_i \delta_{x_i}, \quad  \beta = \sum_{j=1}^{\MM} \beta_j \delta_{y_j} \quad  \text{ and } \quad \pi_{ij} = \sum_{i,j} \pi_{ij} \delta_{(x_i,y_j)}.
\end{equation*}
Let $A : \X \times \Y \to \Hilb_A$ be a continuous function with values in a Hilbert space $\Hilb_A$.
It naturally acts on $\meas{\X \times \Y}$ via its associated \emph{integral operator}, defined as follows:
\begin{equation*}
    A : \pi \in \meas{\X \times \Y} \;\longmapsto\; A[\pi] := \int A(x,y)\,\diff\pi(x,y) \in \Hilb_A.
\end{equation*}
In the discrete case and when $\Hilb_A = \R^{\DD}$ is Euclidean, it becomes $[ \sum_{ij} A(x_i,y_j)_d \pi_{ij}]_d \in  \R^{\DD}$. 
Given two Hilbert vectors $x \in \Hilb_\X$ and $y \in \Hilb_\Y$, their \emph{tensor product} is the rank-$1$ linear operator:
\begin{equation*}
    x y^{\top}: z \in \Hilb_\Y \longmapsto \scal{y,z}_{\Hilb_\Y} x~~\in \Hilb_\X.
\end{equation*}
The limit of sums of such operators forms a Hilbert space, the \emph{Hilbert-Schmidt space} $\HS(\Hilb_\Y, \Hilb_\X)$.
In the Euclidean case, the tensor product of two vectors $x \in \R^\DD$ and $y \in \R^\EE$ is $x y^\top\!=\![x_d y_e]_{de}\!\in\!\R^{\DD \times \EE}$, and $\HS(\R^\DD, \R^\EE)$ identifies with $\R^{\DD \times \EE}$ equipped with the Frobenius norm.
Finally, given a convex function $ \mathcal{F} : \meas{\X \times \Y} \to \R$, its \emph{convex conjugate} $\mathcal{F}^*$ is defined as:
\begin{equation*}
    \mathcal{F}^*: c \in \C(\X \times \Y) \longmapsto \sup_{\pi \in \meas{\X \times \Y}} \left( \int c(x,y)\diff\pi(x,y) - \mathcal{F}(\pi) \right).
\end{equation*}

\section{Measure Matching as an Implicit Registration Problem}
\label{section:measure_matching}

To model both finite and continuous inputs in a common mathematical framework, our work focuses on matching from a measure-theoretical perspective.
Given a source $\alpha \in \measplus{\X}$ and a target $\beta \in \measplus{\Y}$, we represent matchings from $\alpha$ to $\beta$ as elements $\pi$ of $\measplus{\X \times \Y}$ -- which we call \emph{transport plans} by analogy with optimal transport.
Each $\pi(x,y)$ indicates the amount of $\alpha(x)$ that is sent to $\beta(y)$.
This framework is naturally compatible with matching relaxation methods where one point of $\alpha$ can be sent to several points of $\beta$, and conversely: one-to-one matching corresponds plans $\pi$ that describe the graph of a bijective function of $\X \to \Y$, or, in the finite setting, when $\pi$ is a permutation matrix.  
Despite the variety of existing methods, most matching algorithms can be represented as variational problems in this measure setting, taking the form:
\begin{equation}
\label{eq:general_matching}
  {\rm M}(\alpha, \beta) ~~=~~  \min_{\pi \in \measplus{\X \times \Y}} \;\Engy_\Reg(\pi) \qquad \text{with} \qquad \Engy_\Reg(\pi) := \Engy(\pi) + \Reg_{\alpha,\beta}(\pi).
\end{equation}
The continuous energy term $\Engy$ encodes the objective of the matching problem; an ideal matching should reach $\Engy(\pi) = 0$. 
The convex constraint term $\Reg_{\alpha,\beta}$ forces the matching to \emph{send} $\alpha$ to $\beta$ and may contain data-specific regularizations; when there is no ambiguity, we will keep the dependency on $\alpha, \beta$ implicit and simply write $\Reg := \Reg_{\alpha,\beta}$.
To clarify this formalism, we use Gromov--Wasserstein (GW) as a running example throughout our article. 
GW promotes isometric matching by penalizing distortions between matched pairs. In the finite case, these distorsions are quantified by symmetric matrices $C^\X$ and $C^\Y$ so that GW reads:
\begin{equation*}
    \min_{\pi \in \R_+^{\NN \times \MM}} \frac{1}{2} \sum_{ijkl} (C^\X_{ik}- C^\Y_{jl})^2 \pi_{ij} \pi_{kl} \quad \text{ s.t. } \quad \sum_j \pi_{ij} = \alpha_i, \quad \text{ and } \quad \sum_i \pi_{ij} = \beta_j.
\end{equation*}
After encoding the constraints in the term $\Reg$, GW takes the form of \cref{eq:general_matching} with:
\begin{equation*}
    \Engy(\pi) = \frac{1}{2} \sum_{ijkl} (C^\X_{ik}- C^\Y_{jl})^2 \pi_{ij} \pi_{kl}  
    \quad \text{ and } \quad
    \Reg(\pi) = 
    \begin{cases} 
    0 & \text{if } \sum_j \pi_{ij} = \alpha_i \text{ and }  \sum_i \pi_{ij} = \beta_j \\
    +\infty & \text{otherwise}.
    \end{cases}
\end{equation*}
In the continuous case, we replace matrix operations with their measure-based counterparts:
\begin{multline*}
    \Engy(\pi) = \frac{1}{2} \int \bigl(c_\X(x,x') - c_\Y(y,y')\bigr)^2 \diff\pi(x,y) \diff\pi(x',y')  \\
\text{ and } \quad
    \Reg(\pi) =
    \begin{cases} 
    0 & \text{if } \pi_1 := \int \diff\pi(x,\cdot) = \diff\alpha(x) \text{ and }  \pi_2 := \int \diff\pi(\cdot,y) =  \diff\beta(y) \\
    +\infty & \text{otherwise},
    \end{cases}
\end{multline*}
where $c_\X, c_\Y$ are continuous symmetric functions.
Classical GW variants add regularizations to $\Reg$ or relax the margins constraints: we give typical examples in \cref{appendix:regularizations}.
Notably, entropic GW adds a KL-divergence term to $\Reg$, and unbalanced GW relaxes the marginals using: 
\begin{equation}
\label{eq:unbalanced_reg}
\Reg(\pi) = \rho \KL(\pi_1, \Vert \alpha) + \rho \KL(\pi_2 \Vert \beta) + \varepsilon \KL(\pi \Vert \alpha \otimes \beta), \quad \rho, \varepsilon \geq0. 
\end{equation}
Matching problems are generally hard to solve numerically.
They are usually non-convex, making global solutions difficult to compute.
Besides, their objectives are defined on the set of plans $\pi \in \measplus{\X \times \Y}$.
In the finite case, these are encoded by large matrices, making even local minimization expensive to perform.
In specific cases, recent works have alleviated these issues by reformulating matching objectives into easier problems  \citep{rioux2024entropic,zhang2024gromov,sebbouh2024structured,houry2026gromov}.
We generalize these works to arbitrary matching problems, focusing on the cases where $\Engy$ is decomposed into convex and concave terms:
\begin{equation}
\label{eq:dc_matching}
    \mathcal{E}(\pi) = \Cvxa(\pi) - \Cvxb(\pi),
\end{equation}
where $\Cvxa, \Cvxb$ are proper lower semi-continuous convex functions and continuous on their domains.
This structure is known as difference of convex (DC).
Under mild assumptions, any matching problem is approximable by a DC one, and in the finite case, exact DC decompositions always exist.
Without loss of generality, we can also assume the convex conjugates $\Cvxa^*$, $\Cvxb^*$ to be continuous on $\C(\X \times \Y)$; we formalize these properties in  \cref{appendix:general_matching}.
The DC structure isolates the hard part of the problem -- the concave term -- so we can handle it separately, allowing us to prove the equivalence:

\begin{restatable}{theorem}{thmccvdualitygeneral} 
\label{theorem:ccv_duality_general}
If $\Engy$ admits the structure of \cref{eq:dc_matching}, then \cref{eq:general_matching} is equivalent to:
    \begin{equation}
    \label{eq:general_matching_dualccv}
    {\rm M}(\alpha, \beta) ~~=~~
    \inf_{c_\Cvxb \in \C(\X \times \Y)} ~ \min_{\pi \in \measplus{\X \times \Y}}\; -\int c_\Cvxb(x,y) \diff \pi(x,y) + \mathbf{R}(\pi) + \mathbf{S}(c_\Cvxb),
    \end{equation}
    where $\mathbf{R}(\pi)= \Cvxa + \Reg$ and $\mathbf{S} =  \Cvxb^*$ are convex functions.
If $\X$ and $\Y$ are compact subsets of Euclidean spaces, then there exists two Hilbert-valued continuous embeddings $\Phi : \X \to \Hilb$ and $\Psi: \Y \to \Hilb$, a convex set $U \subset \HS(\Hilb,\Hilb)$ and a convex function $\mathbf{S}'$ such that \cref{eq:general_matching} becomes:
\begin{equation}
\label{eq:general_matching_registrform}
  {\rm M}(\alpha, \beta) ~~=~~
  \inf_{\Gamma \in U} \min_{\pi \in \measplus{\X \times \Y}} - \int \scal{\Phi(x), \Gamma \Psi(y)} \diff\pi(x,y) + \mathbf{R}(\pi) + \mathbf{S}'(\Gamma).
    \end{equation}
\end{restatable}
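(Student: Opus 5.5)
The first claim follows from Fenchel--Moreau biconjugation applied to the concave part. Since $\Cvxb$ is proper, convex and lower semi-continuous, I will use $\Cvxb = \Cvxb^{**}$. The pairing between $\meas{\X \times \Y}$ and $\C(\X \times \Y)$ is the one used in the definition of the conjugate, so for every $\pi$
\begin{equation*}
    -\Cvxb(\pi) = -\sup_{c \in \C(\X\times\Y)} \Bigl( \int c \,\diff\pi - \Cvxb^*(c) \Bigr) = \inf_{c \in \C(\X\times\Y)} \Bigl( -\int c\,\diff\pi + \Cvxb^*(c) \Bigr).
\end{equation*}
For this to hold, $\Cvxb$ must be lower semi-continuous for the weak-$*$ topology induced by $\C(\X\times\Y)$. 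I expect this to be among the regularity assumptions formalized in \cref{appendix:general_matching}, together with continuity of $\Cvxb^*$. I will check it there and invoke it.

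Substituting this expression into $\Engy_\Reg = \Cvxa - \Cvxb + \Reg$ turns the problem into an infimum over $\pi$ of an infimum over $c_\Cvxb$. Two infima always commute, so the order can be swapped freely. The inner problem in $\pi$, namely $-\int c_\Cvxb\,\diff\pi + \Cvxa(\pi) + \Reg(\pi)$, is convex. It remains to justify writing it as a $\min$: on the domain of $\Reg$ the plans have bounded mass, or coercivity comes from the KL terms. Together with weak-$*$ compactness on the compact space $\X\times\Y$ and lower semi-continuity of $\Cvxa + \Reg$, this gives existence of a minimizer by the direct method. Both $\mathbf{R} = \Cvxa + \Reg$ and $\mathbf{S} = \Cvxb^*$ are convex, the latter as a conjugate. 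This proves \cref{eq:general_matching_dualccv}.

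For the registration form, I will factor the cost through a feature map. Since $\X$ and $\Y$ are compact Euclidean sets, I will pick a Hilbert space with continuous embeddings $\Phi$ and $\Psi$ whose bilinear forms $\scal{\Phi(x), \Gamma \Psi(y)}$, with $\Gamma \in \HS(\Hilb,\Hilb)$, are dense in $\C(\X\times\Y)$. One concrete choice is $\Hilb = L^2$ of a larger set, with $\Phi$ and $\Psi$ given by a universal kernel, for instance Gaussian features $\Phi(x) = k(x,\cdot)$. Then $\scal{\Phi(x), \Gamma\Psi(y)}$ realizes $(x,y) \mapsto \iint k(x,u)\,G(u,v)\,k(y,v)$ for an $L^2$ kernel $G$ of $\Gamma$. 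By universality of the tensor-product kernel $k\otimes k$ on $\X\times\Y$ (Stone--Weierstrass), these functions are uniformly dense in $\C(\X\times\Y)$.

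Define the linear continuous map $L:\Gamma \mapsto \scal{\Phi(\cdot),\Gamma\Psi(\cdot)}$, and set $\mathbf{S}' = \mathbf{S}\circ L$ and $U = \HS(\Hilb,\Hilb)$. A smaller convex set, such as $L^{-1}(\operatorname{dom}\mathbf{S})$, also works. The function $\mathbf{S}'$ is convex as a convex function composed with a linear map. The objective in \cref{eq:general_matching_dualccv} restricted to $c_\Cvxb = L\Gamma$ is exactly the one in \cref{eq:general_matching_registrform}.

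The main obstacle is showing that the infimum over the dense image of $L$ equals the infimum over all of $\C(\X\times\Y)$. The value function $c \mapsto \min_\pi \bigl(-\int c\,\diff\pi + \mathbf{R}(\pi)\bigr) + \mathbf{S}(c) = -\mathbf{R}^*(c) + \mathbf{S}(c)$ should be continuous in the sup norm. Here $\mathbf{S}$ is continuous by the assumption on $\Cvxb^*$. For $\mathbf{R}^*$, I will use that $\mathbf{R}^*$ is convex and finite on $\C(\X\times\Y)$; in particular it is finite everywhere when $\Reg$ bounds the mass. A convex function that is finite on a Banach space and lower semi-continuous is continuous. Continuity plus density then allows approximating any $c_\Cvxb$ by $L\Gamma_n$ with converging objective, which closes the argument. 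If finiteness of $\mathbf{R}^*$ fails for some relaxed $\Reg$, I would instead argue via $\mathbf{R}^*$ being $\|\pi\|$-Lipschitz on its domain using bounded mass of near-optimal plans.
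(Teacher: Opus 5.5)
Your first part is the paper's own argument: Fenchel--Moreau applied to $\Cvxb$, then exchanging the two infima. Your remarks on two further points go beyond the paper, which takes both for granted. First, $\Cvxb$ must be weak-$*$ lower semi-continuous so that the biconjugate taken through $\C(\X\times\Y)$ returns $\Cvxb$. Second, the inner infimum over $\pi$ needs a compactness argument before it can be written as a $\min$.

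The second part takes a different route. The paper uses power-series features: $\Phi$ and $\Psi$ list all monomials, reweighted by inverse factorials so that they lie in $\ell^2$, and $\Gamma$ ranges over finitely supported matrices. Then $\scal{\Phi(x),\Gamma\Psi(y)}$ describes exactly the polynomials in $(x,y)$, and density is plain Stone--Weierstrass with explicit coordinates. Your universal-kernel features are closer in spirit to the embeddable-cost discussion of the appendix, but they need a step you skip. With $\Hilb=L^2$, the functions you realize are $\iint k(x,u)G(u,v)k(y,v)\,\diff u\,\diff v$ with $G\in L^2$. These are not the products $k(x,u_0)k(y,v_0)$ to which universality of $k\otimes k$ applies, so you must approximate those products uniformly by mollifying $G$ around $(u_0,v_0)$. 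Taking $\Hilb$ to be the RKHS of $k$ removes this step: the rank-one operator $\Gamma=k(\cdot,u_0)\,k(\cdot,v_0)^\top$ yields $k(x,u_0)k(y,v_0)$ directly by the reproducing property.

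On the step you single out as the main obstacle, you ask for more than is needed. Continuity of $F(c):=-\mathbf{R}^*(c)+\Cvxb^*(c)$ would require $\mathbf{R}^*$ to be finite everywhere. You justify that only when $\Reg$ bounds the mass, and your fallback is not an argument. Restricting an infimum to a dense subset only requires upper semicontinuity: if $L\Gamma_n\to c$ uniformly, then $\inf_\Gamma F(L\Gamma)\le\limsup_n F(L\Gamma_n)\le F(c)$. Here $-\mathbf{R}^*(c)=\inf_\pi\bigl(-\int c\,\diff\pi+\mathbf{R}(\pi)\bigr)$ is an infimum of functions of $c$ that are $\TV(\pi)$-Lipschitz for the uniform norm, hence automatically upper semicontinuous. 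Equivalently, $\mathbf{R}^*$ is lower semicontinuous, which you already use. Adding the continuous $\Cvxb^*$ preserves upper semicontinuity, and this is why the paper's one-line appeal to the continuity of $\Cvxb^*$ suffices. Alternatively, your continuity route closes as soon as ${\rm M}(\alpha,\beta)>-\infty$, since the first identity gives $\mathbf{R}^*(c)\le\Cvxb^*(c)-{\rm M}(\alpha,\beta)$ for every $c$.
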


This structure is identical to another class of methods seeking correspondences between geomeric objects called \emph{registration}.
The principle of registration is to find a deformation of the source shape $\beta$ that minimizes its spatial discrepancy relative to the target shape $\alpha$.  
Given a space of transformations $\T$ and a real-valued loss function $\Loss$, standard registration problems are phrased as:
\begin{equation}
\label{eq:registration_pb}
    \min_{\Gamma \in \T} \min_{\pi \in \meas{\X \times \Y}} \frac{1}{2} \int \Loss(x, \Gamma(y)) \diff\pi(x,y) + \mathbf{R}(\pi) + \mathbf{S}(\Gamma),
\end{equation}
where $\mathbf{R}$ is convex and $\mathbf{S}$ is an arbitrary regularizer.
Many popular algorithms follow this structure, such as iterative closest points \citep{besl1992method}, robust point matching \citep{gold1994new}, coherent point drift \citep{myronenko2010point}, correlation-based registration \citep{tsin2004correlation}, functional maps \citep{ovsjanikov2012functional}, Procrustes-Wasserstein \citep{grave2019unsupervised} and other OT-based methods \citep{alvarez2019towards}.
Registration is also non-convex, but non-convexity is isolated within the outer minimization over $\Gamma$: it provides a cleaner framework for analyzing and exploring optimization landscapes.
Registration objectives are also more robust: unlike plans $\pi$, optimal transformations $\Gamma$ depend only on the global input geometry, making them intrinsically robust to empirical noise, point resampling, or permutations, and opening the door to geometrically-driven heuristics.
Thanks to our reformulation, we can exploit these properties and adapt the numerous existing tools for registration to the matching context. 

Our reformulation also reveals a nested minimization structure whose local solutions can be obtained by alternating optimization over $\Gamma$ and $\pi$.
Since the problem is convex in $\pi$, this parameter can be updated using any off-the-shelf solver.
Yet, this approach still requires explicit optimization over the plan $\pi$.
In classical registration, the regularization $\mathbf{R}$ is typically designed to make this optimization cheap.
However, in our setting, $\mathbf{R}$ is prescribed by the original objective $\Engy$ so we need a method that handles arbitrary regularizations on $\pi$.
Taking inspiration from optimal transport, we adapt duality results for DC functions \citep{toland1978duality} to avoid manipulating $\pi$ explicitly:
\begin{restatable}{theorem}{tolanddualitygeneral}
\label{theorem:tolanddualitygeneral}
Let us assume that $\Reg$ is a proper lower semi-continuous convex function.
Let $\Phi$ be its convex set of dual variables and $\mathcal{D}_\Reg$ its dual objective that satisfies, for all $c \in \C(\X \times \Y)$:
\begin{equation}
\label{eq:dualizable_reg}
    \min_{\pi \in \measplus{\X \times \Y}} \int c\diff\pi + \Reg(\pi) = \sup_{\phi \in \Phi}\; \mathcal{D}_\Reg(\phi;\, c, \alpha, \beta).
\end{equation}
The matching problem of \cref{eq:dc_matching} is then equivalent to the following equations:
\begin{align}
\label{eq:toland_dual}
    \inf_{c_\Cvxb \in \C(\X \times \Y)}~~ \sup_{c_\Cvxa \in \C(\X \times \Y)} ~~ \sup_{\phi \in \Phi} ~~ 
            \mathcal{D}_\Reg\bigl(\phi;\, c_\Cvxa - c_\Cvxb,\, \alpha, \beta\bigr) + \Cvxb^*(c_\Cvxb) - \Cvxa^*(c_\Cvxa), \\
\label{eq:toland_primal}
       \inf_{c_\Cvxb \in \C(\X \times \Y)}~~ \sup_{c_\Cvxa \in \C(\X \times \Y)} ~~ \min_{\pi \in \measplus{\X \times \Y}} ~~ 
            \int (c_\Cvxa - c_\Cvxb) \diff\pi  + \Reg(\pi) + \Cvxb^*(c_\Cvxb) - \Cvxa^*(c_\Cvxa). 
\end{align}
If $\X$ and $\Y$ are compact subsets of Euclidean spaces, then there exists a sequence of continuous Hilbert embeddings $\Phi_\Cvxa, \Phi_\Cvxb : \X \to \Hilb$ and $\Psi_\Cvxa, \Psi_\Cvxb: \Y \to \Hilb$ such that these equations become:
\begin{equation}
\label{eq:general_matching_registrform_toland}
\begin{gathered}
{\rm M}(\alpha, \beta)  ~~ = ~~ \inf_{\Gamma \in U} ~~  \sup_{\Xi \in U} ~~ \sup_{\phi \in \Phi} ~~  \mathcal{G}(\Gamma, \Xi, \phi)
~~ = ~~
     \inf_{\Gamma \in U}~~ \sup_{\Xi \in U} ~~ \min_{\pi \in \measplus{\X \times \Y}} ~~ 
     \mathcal{F}(\Gamma, \Xi, \phi), \\
 \text{with } \quad \mathcal{G}(\Gamma, \Xi, \phi) :=
            \mathcal{D}_\Reg\bigl(\phi;\, c_{\Gamma, \Xi},\, \alpha, \beta\bigr) + \mathbf{S}'(\Gamma) - \mathbf{S}''(\Xi)
\\
\text{and } \quad \mathcal{F}(\Gamma, \Xi, \phi) := \int c_{\Gamma, \Xi} \diff\pi + \Reg(\pi) + \mathbf{S}'(\Gamma) - \mathbf{S}''(\Xi),
\end{gathered}
\end{equation}
where $U \subset \HS(\Hilb,\Hilb)$ is a convex set, $\mathbf{S}'$ and $\mathbf{S}''$ are convex functions and $c_{\Gamma, \Xi}$ is the following effective cost (where the matrix notation indicates the direct sum, or concatenation, of its rows):
\begin{equation}
\label{eq:effective_cost_decomp}
    c_{\Gamma,\Xi}(x,y) = - \scal{\begin{pmatrix}
        \Phi_\Cvxa(x) \\ \Phi_\Cvxb(x) \end{pmatrix},\begin{pmatrix}
        -\Xi \Psi_\Cvxa(y) \\ \Gamma \Psi_\Cvxb(y) \end{pmatrix}}.
\end{equation}
\end{restatable}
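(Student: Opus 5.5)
The plan is to start from \cref{theorem:ccv_duality_general} and apply Fenchel--Moreau biconjugation to $\Cvxa$ as well. Since $\Cvxa$ is proper, convex and lower semi-continuous, we have $\Cvxa = \Cvxa^{**}$, that is,
\begin{equation*}
\Cvxa(\pi) = \sup_{c_\Cvxa \in \C(\X\times\Y)} \int c_\Cvxa \diff\pi - \Cvxa^*(c_\Cvxa).
\end{equation*}
In \cref{eq:general_matching_dualccv} the convex part is $\mathbf{R} = \Cvxa + \Reg$. Substituting this representation of $\Cvxa$ gives, for fixed $c_\Cvxb$,
\begin{equation*}
\min_{\pi} \sup_{c_\Cvxa} \int (c_\Cvxa - c_\Cvxb)\diff\pi + \Reg(\pi) + \Cvxb^*(c_\Cvxb) - \Cvxa^*(c_\Cvxa).
\end{equation*}
The whole point is to exchange $\min_\pi$ and $\sup_{c_\Cvxa}$ so as to reach \cref{eq:toland_primal}. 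Once this is done, \cref{eq:toland_dual} follows immediately by applying the assumed duality \cref{eq:dualizable_reg} to the inner problem with cost $c = c_\Cvxa - c_\Cvxb$. The remaining terms $\Cvxb^*(c_\Cvxb) - \Cvxa^*(c_\Cvxa)$ do not depend on $\pi$ or $\phi$ and simply pass through.

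\textbf{Main obstacle: the minimax exchange.} The function $(\pi, c_\Cvxa)\mapsto \int (c_\Cvxa - c_\Cvxb)\diff\pi + \Reg(\pi) - \Cvxa^*(c_\Cvxa)$ is convex and lower semi-continuous in $\pi$ and concave in $c_\Cvxa$. It is also affine in the coupling term, and the pairing between $\meas{\X\times\Y}$ and $\C(\X\times\Y)$ is exactly the duality used to define conjugates. I would therefore invoke a Fenchel--Rockafellar / Sion-type argument. The most direct route is Fenchel--Rockafellar duality for $\min_\pi [\Cvxa(\pi) + G(\pi)]$ with $G(\pi) = \Reg(\pi) - \int c_\Cvxb \diff\pi$. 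This yields $\sup_{c} -\Cvxa^*(c) - G^*(-c)$, and $-G^*(-c)$ is precisely $\min_\pi \int (c - c_\Cvxb)\diff\pi + \Reg(\pi)$.

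The qualification condition is where the appendix assumptions enter. Since $\Cvxa$ is continuous on its domain, $\Cvxa^*$ is continuous on $\C(\X\times\Y)$, and I would use this to verify the condition on the dual side, i.e.\ continuity of one conjugate at a point of the other's domain. Compactness of $\X\times\Y$ makes the bounded sublevel sets of measures weak-$*$ compact, which guarantees that the inner minimum over $\pi$ is attained; this justifies writing $\min$ rather than $\inf$. If needed, I would restrict to a weak-$*$ compact sublevel set of $\Engy_\Reg$ and apply Sion's theorem there.

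\textbf{Registration form.} For the Euclidean statement, I would reuse the construction from the proof of \cref{theorem:ccv_duality_general}. Continuous functions on compact Euclidean sets are approximated, or exactly parametrized, via a Hilbert feature representation $c(x,y) = \scal{\Phi(x), \Gamma\Psi(y)}$ with $\Gamma$ Hilbert--Schmidt; for instance, one can take tensor-product bases of $L^2$ or an RKHS whose kernel is universal, so that such costs are dense in $\C$. I would apply this parametrization twice:
\begin{itemize}
\item $c_\Cvxb = \scal{\Phi_\Cvxb, \Gamma\Psi_\Cvxb}$ with $\mathbf{S}'(\Gamma) = \Cvxb^*(c_\Cvxb)$;
\item $c_\Cvxa = \scal{\Phi_\Cvxa, \Xi\Psi_\Cvxa}$ with $\mathbf{S}''(\Xi) = \Cvxa^*(c_\Cvxa)$.
\end{itemize}
Both $\mathbf{S}'$ and $\mathbf{S}''$ are convex because they are compositions of convex functions with linear maps. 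Stacking the two features gives exactly $c_\Cvxa - c_\Cvxb = -\scal{(\Phi_\Cvxa,\Phi_\Cvxb),(-\Xi\Psi_\Cvxa,\Gamma\Psi_\Cvxb)}$, which is \cref{eq:effective_cost_decomp}.

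Replacing the $\inf$/$\sup$ over $\C$ by $\inf$/$\sup$ over a dense subset is legitimate because of the continuity of $\Cvxa^*$ and $\Cvxb^*$ together with continuity of the inner value in $c$ for the sup norm: it is $1$-Lipschitz up to the mass of $\pi$, which is bounded on the relevant sublevel set. This continuity argument is the technical point I would check most carefully.
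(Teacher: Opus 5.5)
Your argument is correct and follows the paper's skeleton. For fixed $c_\Cvxb$ you apply Fenchel--Rockafellar to $\Cvxa + (\Reg - \int c_\Cvxb\diff\pi)$, identify the inner term with the $\Reg\OT$ value at cost $c_\Cvxa - c_\Cvxb$ (hence with $\sup_\phi\mathcal{D}_\Reg$), and reuse the feature/polynomial density construction of \cref{theorem:ccv_duality_general} for the registration form.

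The real difference is where you check the qualification condition. The paper checks it on the primal side, through continuity of $\Cvxa$ at a finite-energy plan $\pi_0$. You check it on the dual side, through continuity of $\Cvxa^*$ on the Banach space $\C(\X\times\Y)$, whose dual is $\meas{\X\times\Y}$. Your choice fits the statement better: it yields exactly its pairing, a supremum over $\C(\X\times\Y)$ and an attained minimum over measures. A primal-side argument on $(\meas{\X\times\Y},\TV)$ a priori dualizes against $\meas{\X\times\Y}^*$, which is strictly larger than $\C(\X\times\Y)$. Moreover, ``continuous on its domain'' does not give continuity at a point in the sense the theorem requires.

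Your stated reason for the continuity of $\Cvxa^*$ is wrong, though: continuity of $\Cvxa$ does not imply it. For $\Cvxa(\pi)=\int c_0\diff\pi$, the conjugate $\Cvxa^*$ is the indicator of $\{c_0\}$. What you need is that $\Cvxa^*$ be finite everywhere. The paper takes this as a standing without-loss-of-generality assumption, obtained from superlinearity via \cref{lemma:continuity_conjugate} after adding $\TV(\pi)^2$ to both $\Cvxa$ and $\Cvxb$; cite that instead.

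Your density step needs no mass bound. The outer infimum only requires upper semicontinuity of $c_\Cvxb\mapsto\min_\pi[\mathbf{R}(\pi)-\int c_\Cvxb\diff\pi]+\Cvxb^*(c_\Cvxb)$. This is automatic, since the first term is an infimum of continuous affine maps. For the inner supremum, $c\mapsto\Reg\OT(c-c_\Cvxb)$ is concave and upper semicontinuous. It is also finite on the Banach space $\C(\X\times\Y)$, because the minimum in the hypothesis is attained for every cost, so it is continuous. The paper skips this regularity check entirely, so you were right to flag it.
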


For classical choices of $\Reg$, either $\Reg$-regularized linear problems ($\min_\pi \int c\diff\pi + \Reg(\pi)$) or their duals ($\sup_\phi \mathcal{D}_\Reg (\phi;\, c,\, \alpha, \beta)$) admit efficient solvers for any $c \in \C(\X \times \Y)$.
We refer to these problems as $\Reg\OT$, as they have the structure of a $\Reg$-regularized optimal transport.
If efficient $\Reg\OT$ solvers are available, \cref{theorem:tolanddualitygeneral} provides a strategy to solve the original matching efficiently. 
As a counterpart, the embedding space on which matching is performed is now infinite-dimensional and structurally complex: our reformulation transfers the difficulty of matching problems into the embeddings $\Phi_\Cvxa$, $\Phi_\Cvxb$, $\Psi_\Cvxa$ and $\Psi_\Cvxb$, the problem structure being now agnostic to the matching objective $\Engy$.

\section{An Explicit Dual for Regularized Quadratic Matching}
\label{section:quadratic_matching_case}

To convert the results of the previous section into practical numerical algorithms, we need to explicitly have access to the embeddings $\Phi_\Cvxa$, $\Phi_\Cvxb$, $\Psi_\Cvxa$ and $\Psi_\Cvxb$ which are generally unknown and high-dimensional, limiting the applicability of our method to arbitrary matching problems.
However, in the quadratic matching (QM), the DC decomposition is often simple and explicit, bypassing these limitations.
In QM, objectives are parametrized by a symmetric kernel $k \in \C((\X\times \Y)^2)$: 
\begin{equation}
    \label{eq:rgw_def}
    \Engy(\pi) = \frac{1}{2} \int k(x,y,x',y')  \diff\pi(x,y) \diff \pi(x',y').
\end{equation}
When $\Engy$ follows this structure, we denote by $\Reg\QM(\alpha,\beta)$ the $\Reg$-regularized quadratic matching of \cref{eq:general_matching}.
For instance, Gromov-Wasserstein is a QM with $k(x,y,x',y') =  (c_\X(x,x') - c_\Y(y,y'))^2$.

In the finite setting and several continuous cases that we detail in \cref{appendix:general_matching}, $k$ admits an explicit decomposition as a difference of quadratic functions.
We refer to such $k$ as factorizable kernels:
\begin{definition}[Factorizable kernel]
We say that $k$ is factorizable if its objective is decomposable as:
\begin{equation}
\label{eq:loss_factorized}
    \Engy(\pi) = \frac{1}{2} \norm{A[\pi]}^2 - \frac{1}{2} \norm{B[\pi]}^2,
\end{equation}
where $A : \meas{\X \times \Y} \to \Hilb_A$ and $B : \meas{\X \times \Y} \to \Hilb_B$ are integral operators.
\end{definition}

Under this hypothesis, we directly adapt the formulas of the previous section to the $\Reg\QM$ case:

\begin{restatable}{theorem}{outermin}
    \label{prop:outer_gamma}
    Let assume that $k$ is factorizable as in \cref{eq:dc_matching}. Then, the following identity holds:
    \begin{equation}
        \label{eq:outer_gamma}
        \Reg\QM(\alpha,\beta) = \min_{\Gamma \in \img{B}} \min_{\pi \in \measplus{\X \times \Y}}\; -\int \scal{\Gamma, B(x,y)} \diff \pi(x,y) + \mathbf{R}(\pi) + \mathbf{S}(\Gamma),
    \end{equation}
    with $\mathbf{R}(\pi) = \tfrac{1}{2}\norm{A [\pi]}^2 + \Reg(\pi)$ and $\mathbf{S}(\Gamma) = \tfrac{1}{2} \norm{\Gamma}^2$.
Moreover:
    \begin{equation}
            \label{eq:quad_minmax}
    \Reg\QM(\alpha,\beta) = \min_{\Gamma \in \img{B}}\;
            \max_{\Xi \in \img{A}}\;
            \sup_{\phi \in \Phi}\;
            \mathcal{D}_\Reg\bigl(\phi;\, c_{\Gamma,\Xi},\, \alpha, \beta\bigr) - \tfrac{1}{2}\norm{\Xi}^2 + \tfrac{1}{2} \norm{\Gamma}^2,
    \end{equation}
    with the effective cost $c_{\Gamma,\Xi}(x,y) := \scal{\Xi, A(x,y)} - \scal{\Gamma, B(x,y)}$.
\end{restatable}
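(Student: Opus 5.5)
The whole argument rests on the elementary variational identity for a squared Hilbert norm:
\begin{equation*}
\tfrac{1}{2}\norm{v}^2 = \max_{\Gamma \in \Hilb} \left( \scal{\Gamma, v} - \tfrac{1}{2}\norm{\Gamma}^2 \right), \qquad \text{attained at } \Gamma = v .
\end{equation*}
I would apply it once to each of the two quadratic terms of \cref{eq:loss_factorized} and then exchange the order of minimization. This is the specialization of \cref{theorem:ccv_duality_general} and \cref{theorem:tolanddualitygeneral} to $\Cvxa = \tfrac12\norm{A[\cdot]}^2$ and $\Cvxb = \tfrac12\norm{B[\cdot]}^2$. In this case the conjugates can be written down by hand, so the embeddings are explicit: $\Phi, \Psi$ are replaced by $A(x,y)$ and $B(x,y)$ through the integral-operator pairing.

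\textbf{First identity.} Take $v = B[\pi]$ and write
\begin{equation*}
-\tfrac12 \norm{B[\pi]}^2 = \min_{\Gamma} \left( -\scal{\Gamma, B[\pi]} + \tfrac12\norm{\Gamma}^2 \right).
\end{equation*}
Since $B[\pi] \in \img{B}$, the minimizer $\Gamma = B[\pi]$ lies in $\img{B}$, so the minimum may be restricted to $\img{B}$. Linearity of the integral operator gives $\scal{\Gamma, B[\pi]} = \int \scal{\Gamma, B(x,y)}\,\diff\pi(x,y)$. Substituting into $\Engy_\Reg(\pi) = \tfrac12\norm{A[\pi]}^2 + \Reg(\pi) - \tfrac12\norm{B[\pi]}^2$ yields a joint infimum over $(\pi,\Gamma)$, and two infima always commute. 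This gives \cref{eq:outer_gamma} with an inf over $\Gamma$ outside. To write the outer infimum as a $\min$, I would argue as follows: if $\pi^\star$ attains the original problem, then $\Gamma = B[\pi^\star]$ attains the outer problem. If attainment is not guaranteed, I would state the result with $\inf$ and note equality of values.

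\textbf{Second identity.} For fixed $\Gamma$, write $\tfrac12\norm{A[\pi]}^2 = \max_{\Xi \in \img{A}} \scal{\Xi, A[\pi]} - \tfrac12\norm{\Xi}^2$, restricting to $\img{A}$ as before. The inner problem becomes
\begin{equation*}
\min_\pi \max_{\Xi} \int c_{\Gamma,\Xi}\,\diff\pi + \Reg(\pi) - \tfrac12\norm{\Xi}^2 + \tfrac12\norm{\Gamma}^2 .
\end{equation*}
Once $\min_\pi$ and $\max_\Xi$ are swapped, \cref{eq:dualizable_reg} applied to the cost $c_{\Gamma,\Xi}$ replaces the $\pi$-minimization by $\sup_\phi \mathcal{D}_\Reg$. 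This gives \cref{eq:quad_minmax}, exactly the form of \cref{eq:general_matching_registrform_toland}.

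\textbf{Main obstacle.} The hard step is swapping $\min_\pi$ and $\max_\Xi$. The function is convex lower semicontinuous in $\pi$, since it is linear in $\pi$ plus the convex $\Reg$, and concave in $\Xi$. I would invoke Sion's minimax theorem, or Fenchel--Rockafellar duality for $\mathbf{R}$ at the cost $-\scal{\Gamma,B}$, in three steps:
\begin{enumerate}
\item Restrict $\pi$ to a sublevel set of $\Reg$. This set is weak-$*$ compact when $\Reg$ controls mass, as for marginal constraints or the KL terms of \cref{eq:unbalanced_reg}, using compactness of $\X\times\Y$.
\item Check that $\pi \mapsto \int \scal{\Xi, A(x,y)}\,\diff\pi$ is weak-$*$ continuous. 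This holds because $A$ is continuous on a compact set.
\item If compactness in $\pi$ fails, fall back on the general argument behind \cref{theorem:tolanddualitygeneral}: the conjugate of $\pi \mapsto \tfrac12\norm{A[\pi]}^2$ is continuous, so Fenchel--Rockafellar applies with no duality gap.
\end{enumerate}
The max over $\Xi$ is attained because it is a strongly concave quadratic, maximized at $\Xi = A[\pi]$ for the optimal $\pi$.
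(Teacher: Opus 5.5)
Your route is essentially the paper's. There, the first identity comes from inserting the conjugate of $\Cvxb=\tfrac12\norm{B[\cdot]}^2$ on $\C(\X\times\Y)$ (\cref{lem:fquad_conjugate}) into \cref{theorem:ccv_duality_general} and substituting $\Gamma\leftarrow B[\pi']$. The second identity is read off the Fenchel--Rockafellar argument of \cref{theorem:tolanddualitygeneral}, with $\Xi\leftarrow A[\pi']$. Conjugating $\tfrac12\norm{\cdot}^2$ directly in $\Hilb_B$ and $\Hilb_A$, as you do, is the same computation without the detour through $\C(\X\times\Y)$; it is exactly the square completion of \cref{lem:unified_structure}. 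Your first identity, including the caveat on attainment of the outer $\min$, is fine.

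The swap of $\min_\pi$ and $\max_\Xi$, which you correctly flag as the hard step, is not correctly justified by either of your routes.

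\textbf{Step 3 uses the wrong qualification condition.} The conjugate of $\pi\mapsto\tfrac12\norm{A[\pi]}^2$ is in general \emph{not} continuous on $\C(\X\times\Y)$: by \cref{lem:fquad_conjugate} it is $+\infty$ outside the subspace $\img{A^\top A}$. Fenchel--Rockafellar needs continuity of the primal term itself at a point of $\operatorname{dom}\Reg$, which is what the proof of \cref{theorem:tolanddualitygeneral} actually uses.

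\textbf{Steps 1--2 only cover a fixed compact set of plans.} Sion works when the admissible plans form a fixed weak-$*$ compact set, as with hard marginal constraints. It fails for the unbalanced regularizer \cref{eq:unbalanced_reg}. Restricting $\pi$ to a compact sublevel set $K$ of $\Reg$ gives $\sup_\Xi\inf_{\pi\in K}$, which dominates the quantity $\sup_\Xi\inf_\pi$ you need. For large $\Xi$ the minimizer at cost $c_{\Gamma,\Xi}$ leaves $K$, so you only recover weak duality. To close it you would need an extra convexity argument: a primal minimizer strictly inside $K$ must also minimize the unrestricted inner problem at the restricted saddle value of $\Xi$.

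\textbf{Repair.} Dualize only through $\Hilb_A$. Write the inner problem as $\inf_\pi f(\pi)+g(A[\pi])$, with $f(\pi)=-\scal{\Gamma,B[\pi]}+\Reg(\pi)$ on $\measplus{\X\times\Y}$ and $g=\tfrac12\norm{\cdot}^2$. Here $g$ is continuous on all of $\Hilb_A$, and $A^\top\Xi=\scal{\Xi,A(\cdot,\cdot)}\in\C(\X\times\Y)$. Fenchel--Rockafellar then gives $\inf_\pi f(\pi)+\tfrac12\norm{A[\pi]}^2=\max_\Xi\bigl(\inf_\pi\int c_{\Gamma,\Xi}\diff\pi+\Reg(\pi)\bigr)-\tfrac12\norm{\Xi}^2$. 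This holds with no compactness, no gap, and attainment of the maximum.

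This also fixes your last sentence. After the swap, the function maximized over $\Xi$ is $\Xi\mapsto\inf_\pi\int c_{\Gamma,\Xi}\diff\pi+\Reg(\pi)-\tfrac12\norm{\Xi}^2$, which is not a quadratic. The identity $\Xi^\star=A[\pi^\star]$ only follows once a saddle point is known to exist.
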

To recover the registration-like structure of \cref{eq:general_matching_registrform}, we must also assume that $B$ is block-separable:
\begin{definition}[Separable function]
\label{definition:separable}
A function $B : \X \times \Y \to \Hilb_B$ is separable if there exist two embedding functions $\Phi$ and $\Psi$ such that:
\begin{equation*}
    \text{For all } x \in \X \text{ and } y \in \Y, \quad B(x,y) = \Phi(x) \Psi(y)^\top.
\end{equation*}
We say that $B$ is block-separable if there is a collection of separable functions $B_1,\dots,B_\LL$ such that $\quad \norm{B[\pi]}^2 = \norm{B_1[\pi]}^2 + \dots + \norm{B_\LL[\pi]}^2$ for all $\pi \in \meas{\X \times \Y}$.
\end{definition}
Then, $\Reg\QM$ admits the following form (we give a detailed version in the appendix, in \cref{thm:qot_as_registration_blockseparable}):
\begin{restatable}{theorem}{qotasregistration}
\label{thm:qot_as_registration}
If $B$ is block-separable, then let $\mathbf{R}$ and $\mathbf{S}$ be as in \cref{prop:outer_gamma}.
There exists embeddings $\Phi_\Cvxb: \X \to \Hilb_\Phi$, $\Psi_\Cvxb: \Y \to \Hilb_\Psi$ and a compact subspace $U \subset \HS(\Hilb_\Phi,\Hilb_\Psi)$ such that:
\begin{equation}
\label{eq:quadmatching_as_registration}
       \Reg\QM(\alpha,\beta) = \min_{\Gamma \in U} \min_{\pi \in \measplus{\X \times \Y}}\; - \int \scal{\Phi_\Cvxb(x), \Gamma \Psi_\Cvxb(y)} \diff \pi(x,y) + \mathbf{R}(\pi) + \mathbf{S}(\Gamma).
\end{equation}
If both $A$ and $B$ are block-separable, there exists embeddings $\Phi_\Cvxa$,$\Phi_\Cvxb$,$\Psi_\Cvxa$ and $\Phi_\Cvxb$ such that the effective cost $c_{\Gamma,\Xi}$ in \cref{eq:general_matching_registrform_toland} takes the scalar product form of \cref{eq:effective_cost_decomp} and $\mathbf{S}',\mathbf{S}'' = \tfrac{1}{2} \norm{\cdot}^2$.
\end{restatable}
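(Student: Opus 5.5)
The plan is to start from the identity of \cref{prop:outer_gamma} and rewrite the linear term $\scal{\Gamma, B(x,y)}$ using the block-separable structure of $B$. Block-separability gives separable functions $B_1,\dots,B_\LL$ with $B_l(x,y) = \Phi_l(x)\Psi_l(y)^\top \in \HS(\Hilb_{\Psi_l},\Hilb_{\Phi_l})$ and $\norm{B[\pi]}^2 = \sum_l \norm{B_l[\pi]}^2$.

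The first step replaces $B$ by the stacked operator $\tilde B := (B_1,\dots,B_\LL)$ with values in $\bigoplus_l \HS(\Hilb_{\Psi_l},\Hilb_{\Phi_l})$. Since $\norm{\tilde B[\pi]}^2 = \norm{B[\pi]}^2$ for every $\pi$, the energy $\Engy$ is unchanged. Applying \cref{prop:outer_gamma} with $\tilde B$ in place of $B$ then gives the same value $\Reg\QM(\alpha,\beta)$, with $\Gamma = (\Gamma_1,\dots,\Gamma_\LL)$ ranging over $\img{\tilde B}$.

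The second step sets $\Phi_\Cvxb(x) := \bigoplus_l \Phi_l(x) \in \Hilb_\Phi := \bigoplus_l \Hilb_{\Phi_l}$ and $\Psi_\Cvxb(y) := \bigoplus_l \Psi_l(y) \in \Hilb_\Psi := \bigoplus_l \Hilb_{\Psi_l}$. Each $\Gamma$ is identified with the block-diagonal operator $\mathrm{diag}(\Gamma_1,\dots,\Gamma_\LL)$. The Hilbert–Schmidt pairing with a rank-one tensor gives
\[
\scal{\Gamma_l, \Phi_l(x)\Psi_l(y)^\top}_{\HS} = \scal{\Phi_l(x), \Gamma_l \Psi_l(y)},
\]
and summing over $l$ yields $\scal{\Gamma, \tilde B(x,y)} = \scal{\Phi_\Cvxb(x), \Gamma \Psi_\Cvxb(y)}$. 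The HS norm of a block-diagonal operator is the sum of the squared block norms, so $\mathbf{S}(\Gamma) = \tfrac12\norm{\Gamma}^2_{\HS}$ is preserved. This establishes \cref{eq:quadmatching_as_registration}, with $U$ the image of $\img{\tilde B}$ under the block-diagonal embedding, up to the question of the domain of $\Gamma$ treated next.

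The main obstacle is showing that the minimization over $\Gamma$ can be restricted to a compact set $U$ without changing the value. Two ingredients should suffice.

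\begin{itemize}
\item \textbf{Location of the optimum.} From the proof of \cref{prop:outer_gamma}, the inner optimum is $\Gamma = \tilde B[\pi]$. This comes from the variational identity $-\tfrac12\norm{\tilde B[\pi]}^2 = \min_\Gamma \bigl(-\scal{\Gamma,\tilde B[\pi]} + \tfrac12\norm{\Gamma}^2\bigr)$.
\item \textbf{Boundedness of admissible plans.} Plans with finite objective have bounded mass: $\Reg$ fixes or controls the marginals, for instance via the KL terms in the unbalanced case, which I assume as part of the standing hypotheses of \cref{appendix:general_matching}.
\end{itemize}

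Given these, take $U := \overline{\{\tilde B[\pi] : \pi \in \measplus{\X\times\Y},\ \pi(\X\times\Y)\le m\}}$. This set is convex. It is compact because $\tilde B$ is continuous and the set of measures of mass at most $m$ on the compact space $\X\times\Y$ is weak-$*$ compact, and the integral operator with a continuous kernel is weak-$*$-to-norm continuous on bounded sets. That last fact follows from uniform continuity of the kernel on a compact domain, by approximating it with finite-rank sums. I expect this continuity argument to be the delicate part when $\Hilb$ is infinite-dimensional. If it fails, I would fall back on defining $U$ as the closed convex hull of the compact set $\{\tilde B(x,y)\}$ scaled by $m$; this is compact by Mazur's theorem.

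For the second claim, I apply the same construction to $A$ inside the min–max formula \cref{eq:quad_minmax}. Block-separability of $A$ gives $\scal{\Xi, A(x,y)} = \scal{\Phi_\Cvxa(x), \Xi\Psi_\Cvxa(y)}$. The effective cost $c_{\Gamma,\Xi} = \scal{\Xi,A} - \scal{\Gamma,B}$ is then exactly the concatenated scalar product of \cref{eq:effective_cost_decomp}, and $\mathbf{S}' = \mathbf{S}'' = \tfrac12\norm{\cdot}^2$ by the same block-norm identity. To use a common space $\Hilb$ and a common set $U$ as in \cref{theorem:tolanddualitygeneral}, I would pad the $A$-blocks and $B$-blocks with zeros in a direct sum, which leaves all the identities intact.
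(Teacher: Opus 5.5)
Your derivation of the identity itself is the paper's. The proof of \cref{thm:qot_as_registration_blockseparable} also rewrites \cref{eq:outer_gamma} with the stacked blocks $(B_1,\dots,B_\LL)$, which is legitimate because only $\norm{B[\pi]}^2$ enters. It then uses $\scal{\Gamma_l,\Phi_l(x)\Psi_l(y)^\top}=\scal{\Phi_l(x),\Gamma_l\Psi_l(y)}$ blockwise for block-diagonal $\Gamma$, and treats $A$ in \cref{eq:quad_minmax} the same way.

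Where you differ is the set $U$. The paper never establishes compactness. In its precise version, $\Gamma$ ranges over the whole block-diagonal space $\HS_{\oplus}(\Hilb^{\Psi}_B,\Hilb^{\Phi}_B)$, a closed but non-compact subspace. Enlarging the domain beyond $\img{B}$ is harmless there, because for fixed $\pi$ the unconstrained minimizer of $\Gamma\mapsto-\scal{\Gamma,B[\pi]}+\tfrac12\norm{\Gamma}^2$ is $B[\pi]$ itself.

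Your construction takes $U$ to be the image of plans of mass at most $m$. It is compact by weak-$*$-to-norm continuity of $\pi\mapsto\tilde B[\pi]$ on bounded sets, and your finite-rank approximation of the compact set $\tilde B(\X\times\Y)$ does prove that continuity. This buys the literal ``compact'' claim, but only under a mass bound that is not among the hypotheses. Some such hypothesis is unavoidable. Take a one-point space with $A=B\equiv1$ and $\Reg(\pi)=e^{-\pi(\X\times\Y)}$. At a plan of mass $m$, the registration objective is $\tfrac12(\Gamma-m)^2+e^{-m}$. So $\Reg\QM(\alpha,\beta)=0$, while any $U\subset[-r,r]$ gives a value at least $\min\{e^{-r-1},\tfrac12\}>0$.

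One correction to how you justify the mass bound. KL marginal penalties do not give bounded mass to all plans with finite objective; in the discrete case every plan has finite KL. What holds, and what your argument actually needs, is a bound on sublevel sets. The set $\{\Engy_{\Reg}\le\Engy_{\Reg}(\pi_0)\}$ has bounded mass when $\Engy$ is bounded below (e.g.\ the nonnegative GW energy) and $\Reg$ is coercive in the mass (KL or TV penalties, or fixed marginals). Choose $m$ as that bound. Plans of larger mass then contribute at least $\Engy_{\Reg}(\pi)>\Engy_{\Reg}(\pi_0)$ to the restricted problem, since shrinking the $\Gamma$-domain can only raise the inner minimum. Every plan in the sublevel set has $\tilde B[\pi]\in U$, so the value is unchanged.
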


If we have access to a block-sparse decomposition of the kernel $k$, then the embeddings are explicitly computable and \cref{prop:outer_gamma,thm:qot_as_registration} provide a computation-compatible version of \cref{theorem:tolanddualitygeneral}.
In the finite case, this can be done by eigendecomposition of $k$.
For GW kernels, explicit computations that we detail in \cref{appendix:gw_decomposition} provide block embeddings directly without the need to diagonalize the whole kernel $k$, making their computation tractable on large inputs.
Moreover, these embeddings are now related to the images of $A$ and $B$: when these operators are low-rank, which is generally the case in practice, the optimization problem becomes low-dimensional and easier to solve.

\section{Numerical Optimization of the $\Reg\QM$ Dual}
\label{section:numerical_optim}
We now exploit these theoretical results to develop a unified algorithm for $\Reg\QM$ with quantitative convergence guarantees.
Up to a factorization of the kernel $k$, always possible in the finite case, our method applies to any QM instance and most standard choices of $\Reg$, making the proposed framework applicable in a broad range of settings.
The algorithm proceeds by alternate minimization over the block of variables of \cref{eq:general_matching_registrform_toland}: $\Gamma$, $\Xi$ and either $\Phi$ or $\pi$.
Since our method is compatible with both primal and dual $\Reg\OT$ solvers, we can choose whether to optimize over $\Phi$ or $\pi$ depending on which formulation is computationally more efficient.
Each block is updated as follows:

\paragraph{$\phi$/$\pi$-block: $\Reg\OT$ solve.} For fixed $(\Gamma, \Xi)$, maximizing $\mathcal{F}$ over $\phi$ or $\mathcal{G}$ over $\pi$ amounts to solving a primal or dual $\Reg\OT$ problem with cost $c_{\Gamma,\Xi}$.
For classical $\Reg$, any compatible solver can be applied.

\paragraph{$\Xi$-block: gradient ascent.} When $\Gamma$ is fixed, the function $h_\Gamma: \Xi \mapsto \min_{\pi} \mathcal{G}(\Gamma, \Xi, \pi)$ is $1$-strongly concave.
If the optimal $\pi_{\Gamma,\Xi}$ is unique, then $h_\Gamma$ is differentiable at $\Xi$ with $\nabla h_\Gamma(\Xi) = A[\pi_{\Gamma,\Xi}] - \Xi$.
The block $\Xi$ can then be optimized by alternating $\Reg\OT$ solves and gradient ascent steps.

\paragraph{$\Gamma$-block: closed form.} Finally, if both $\Xi$ and $\pi$ are fixed, then \cref{eq:outer_gamma} shows that $\Gamma \mapsto \mathcal{G}(\Gamma, \Xi, \pi)$ is minimized at $\Gamma = B[\pi]$: the $\Gamma$-block is updated using this closed-form minimum.

By design, for well-chosen gradient steps in the $\Xi$-block, each update improves the current block objective value: this approach is guaranteed to converge.
In practice, however, many $\Reg\OT$ solvers only provide an approximation of the true solution $(\phi, \pi)$, and gradient ascent on $\Xi$ never reaches its true maximum in finite time: we must set tolerance thresholds to indicate when these computations should be stopped, yielding inexact block updates. 
Our alternating minimization scheme is summarized in \cref{alg:ugw}. Despite the inexact updates, the stopping criterion imposed on $\Xi$ still guarantees finite-time convergence under appropriate assumptions.
We state this result informally here, providing full statements and detailed estimates in \cref{appendix:end-to-end-cv}.

\begin{theorem}[Convergence of \cref{alg:ugw}, informal] \label{thm:main_convergence}
Let $0 < \delta \leq 1$.
\textbf{(i)} If the inner loops converge, then \cref{alg:ugw} terminates after $O(\delta^{-2})$ outer iterations.
\textbf{(ii)} If the function $h_\Gamma$ is $L$-smooth for all $\Gamma$, then the choice $\tau = 1/L$ and $\delta_{\Reg\OT} = \Theta(\delta^2/L^2)$ ensures that each inner loop converges in $\widetilde{O}(L)$ iterations, for a total of  $\widetilde{O}(L \delta^{-2})$ cumulated $\Reg\OT$ solvings steps.
\end{theorem}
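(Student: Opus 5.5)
We treat \cref{alg:ugw} as an inexact block-coordinate descent on the outer value function
\[
f(\Gamma) := \max_{\Xi \in \img{A}} h_\Gamma(\Xi) + \tfrac12\norm{\Gamma}^2,
\]
where $h_\Gamma(\Xi) = \min_\pi \mathcal{G}(\Gamma,\Xi,\pi)$ collects the $\Xi$- and $\pi$-dependent terms. By \cref{prop:outer_gamma}, $\min_\Gamma f(\Gamma) = \Reg\QM(\alpha,\beta)$, and $f$ is bounded below by this value. This is finite: $\Engy$ is continuous, $\X\times\Y$ is compact, and $\Reg$ is proper and lower semi-continuous, so the primal objective $\Engy_\Reg$ is bounded below.

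\textbf{Outer descent and part (i).} The plan is to show that each outer iteration decreases $f$ by an amount controlled by a stationarity measure. After the inner loop returns an approximate pair $(\Xi_t,\pi_t)$, the $\Gamma$-update is $\Gamma_{t+1} = B[\pi_t]$. Since $\Gamma \mapsto \mathcal{G}(\Gamma,\Xi,\pi)$ is $1$-strongly convex with minimizer $B[\pi]$, we get
\[
\mathcal{G}(\Gamma_{t+1},\Xi_t,\pi_t) \le \mathcal{G}(\Gamma_t,\Xi_t,\pi_t) - \tfrac12\norm{\Gamma_t - B[\pi_t]}^2 .
\]
We then relate $\mathcal{G}(\cdot,\Xi_t,\pi_t)$ back to $f$ at both endpoints.
\begin{itemize}
\item At $\Gamma_t$, inner accuracy $\delta$ means $\mathcal{G}(\Gamma_t,\Xi_t,\pi_t) \le f(\Gamma_t) + O(\delta^2)$.
\item At $\Gamma_{t+1}$, we need $f(\Gamma_{t+1}) \le \mathcal{G}(\Gamma_{t+1},\Xi_t,\pi_t) + O(\delta^2)$. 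Here we use the $1$-strong concavity of $h$ in $\Xi$: together with the inner stopping rule $\norm{\nabla h_{\Gamma_t}(\Xi_t)} \le \delta$, it gives $\max_\Xi h - h(\Xi_t) \le \tfrac12\delta^2$.
\end{itemize}
Moving from $\Gamma_t$ to $\Gamma_{t+1}$ changes $h$ by at most a term linear in $\norm{\Gamma_{t+1}-\Gamma_t}$. This term is absorbed by the quadratic decrease, up to an $O(\delta^2)$ error.

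Combining these estimates gives
\[
f(\Gamma_{t+1}) \le f(\Gamma_t) - \tfrac14\norm{\Gamma_t - B[\pi_t]}^2 + O(\delta^2).
\]
The algorithm stops once $\norm{\Gamma_t - B[\pi_t]} \le \delta$, and before that each step decreases $f$ by $\Omega(\delta^2)$, provided the constants in the $O(\delta^2)$ error are tuned to be smaller. Telescoping from $f(\Gamma_0)$ down to $\Reg\QM(\alpha,\beta)$ therefore bounds the number of outer iterations by $O\bigl((f(\Gamma_0)-\Reg\QM)\,\delta^{-2}\bigr)$, which is (i).

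\textbf{Inner loop and part (ii).} For fixed $\Gamma$, $h_\Gamma$ is $1$-strongly concave and assumed $L$-smooth. Gradient ascent with $\tau = 1/L$ therefore contracts the distance to the maximizer at the linear rate $1 - 1/L$, so reaching gradient norm $\delta$ takes $O(L \log(1/\delta))$ steps, i.e. $\widetilde O(L)$. The gradients are inexact: each is computed from an approximate $\Reg\OT$ solution. We model this as an additive gradient error $e$, with $\norm{e}$ controlled by $\delta_{\Reg\OT}$ through the Lipschitz dependence of $A[\pi]$ on the $\Reg\OT$ accuracy. The standard inexact linear-convergence bound then gives a plateau of size $O(L\norm{e})$ in the gradient norm. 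Requiring this plateau to lie below $\delta/2$ gives the scaling $\delta_{\Reg\OT} = \Theta(\delta^2/L^2)$; the square appears because the $\Reg\OT$ accuracy is measured in objective value, and objective error translates to gradient error through a square root. Multiplying $\widetilde O(L)$ inner steps by the $O(\delta^{-2})$ outer iterations of (i) gives the stated total.

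\textbf{Main obstacle.} The hardest step is the coupling estimate in the outer descent: showing that the inexactness from the $\Xi$ and $\pi$ solves (and from the warm-started $\Xi$) enters only at order $\delta^2$, so that it does not swamp the decrease $\norm{\Gamma_t - B[\pi_t]}^2$. I would first try to make this precise using the envelope identity $\nabla f(\Gamma) = \Gamma - B[\pi^\star_\Gamma]$ together with the strong concavity in $\Xi$. If that is too lossy, the fallback is to carry a Lyapunov function $\mathcal{G}(\Gamma_t,\Xi_t,\pi_t)$ directly through the iterations instead of the value function $f$.
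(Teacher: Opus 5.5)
Your skeleton for (i) is sound as bookkeeping: $\Gamma\mapsto\mathcal{G}(\Gamma,\Xi_t,\pi_t)$ decreases exactly by $\tfrac12\norm{\Gamma_{t+1}-\Gamma_t}^2$, sandwiched between $f(\Gamma_t)$ and $f(\Gamma_{t+1})$. The gap is in the step you flag as the main obstacle, and the argument you propose for it does not close.

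You pass through the dual value at $\Gamma_t$ (strong concavity in $\Xi$) and then move to $\Gamma_{t+1}$ with the envelope supergradient $-B[\pi^\star_{\Gamma_t}]$. This leaves the cross term $\scal{B[\pi_t-\pi^\star_{\Gamma_t}],\,\Gamma_{t+1}-\Gamma_t}$, whose coefficient is not $O(\delta)$. The linearized objective $Q_\Gamma(\pi)=\tfrac12\norm{A[\pi]}^2-\scal{\Gamma,B[\pi]}+\Reg(\pi)$ is linear in the $B$-direction. Near-minimizers can therefore have very different $B$-moments, and so can distinct exact minimizers, in which case $f$ is not differentiable and the envelope identity is unavailable. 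An $O(1)\cdot\norm{\Gamma_{t+1}-\Gamma_t}$ term cannot be absorbed into $\tfrac12\norm{\Gamma_{t+1}-\Gamma_t}^2$ up to $O(\delta^2)$ when $\norm{\Gamma_{t+1}-\Gamma_t}\approx\delta$.

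Even a Young absorption down to $\tfrac14$ would fail for the algorithm as written. The inner test only guarantees an error of $\delta^2/4$, so a decrease of $\tfrac14\norm{\Gamma_{t+1}-\Gamma_t}^2-\delta^2/4$ gives no progress just above the stopping threshold. Separately, you read the inner test as $\norm{\nabla h_{\Gamma_t}(\Xi_t)}\le\delta$ and use the strong-concavity gap bound. In (i) nothing makes $h_\Gamma$ differentiable, and the test is on the inexact quantity $\tfrac12\norm{A[\widehat\pi]-\Xi}^2+\delta_{\Reg\OT}$.

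The missing idea is a \emph{primal} certificate, combined with testing $\widehat\pi_t$ itself at $\Gamma_{t+1}$. Test the $\Reg\OT$-accuracy of $\widehat\pi_t$ against an arbitrary $\pi$ and use the identity $\tfrac12\norm{a}^2-\tfrac12\norm{b}^2-\scal{\Xi,a-b}=\tfrac12\norm{a-\Xi}^2-\tfrac12\norm{b-\Xi}^2$. This shows that $\widehat\pi_t$ is an $\eta$-minimizer of $Q_{\Gamma_t}$ with $\eta=\tfrac12\norm{A[\widehat\pi_t]-\Xi_t}^2+\delta_{\Reg\OT}$, which is exactly the tested quantity. Weak duality then gives $f(\Gamma_{t+1})\le\tfrac12\norm{\Gamma_{t+1}}^2+Q_{\Gamma_{t+1}}(\widehat\pi_t)$. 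The right-hand side equals your $\mathcal{G}(\Gamma_{t+1},\Xi_t,\widehat\pi_t)$ (with $\tfrac12\norm{\Gamma_{t+1}}^2$ included) plus $\tfrac12\norm{A[\widehat\pi_t]-\Xi_t}^2$. So your second bullet does hold, but only this way: with $-B[\widehat\pi_t]$ as an $\eta$-supergradient instead of $-B[\pi^\star_{\Gamma_t}]$.

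The paper packages this with the Lyapunov function $\Lambda(\Gamma)=\tfrac12\norm{\Gamma}^2+\inf_\pi Q_\Gamma(\pi)$ and the square completion $\tfrac12\norm{\Gamma}^2+Q_\Gamma(\pi)=\Engy_\Reg(\pi)+\tfrac12\norm{\Gamma-B[\pi]}^2$. This yields $\Lambda(\Gamma_t)-\Lambda(\Gamma_{t+1})\ge\tfrac12\norm{\Gamma_{t+1}-\Gamma_t}^2-\eta$ with no cross term, hence a decrease of more than $\delta^2/4$ per non-terminal iteration.

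For (ii), your outline matches the paper's. However, the $\sqrt{\delta_{\Reg\OT}}$ gradient error does not come from a Lipschitz dependence of $A[\widehat\pi]$ on the accuracy. It comes from combining the inexact supergradient inequality with the descent lemma, giving $\norm{\widehat g_k-\nabla h_\Gamma(\Xi_k)}\le\sqrt{2L\delta_{\Reg\OT}}$.
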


\paragraph{Characterization of output plans.} 
Theorem~\ref{thm:main_convergence} ensures that our algorithm terminates but does not qualify the properties of the final plan $\widehat{\pi}$.
By non-convexity of $\Reg\QM$, we cannot expect $\widehat{\pi}$ to minimize the problem globally: as usual in DC programming \citep{le2018dc}, we will rather show that $\widehat{\pi}$ is almost critical.
A plan $\pi^*$ is \emph{DC-critical} if it is optimal after linearization of the concave term $\pi \mapsto-\tfrac{1}{2}\norm{B[\pi]}^2$ in the $\Reg\QM$ objective: that is, if for all $\pi \in \measplus{\X \times \Y}$,
\begin{gather*}
\tfrac{1}{2} \norm{A[\pi^*]}^2 - \scal{B[\pi^*], B[\pi^*]} + \Reg(\pi^*) \leq \tfrac{1}{2} \norm{A[\pi]}^2 - \scal{B[\pi^*], B[\pi]} + \Reg(\pi), \quad \text{or equivalently}: \\
\tfrac{1}{2} \norm{A[\pi^*]}^2 - \tfrac{1}{2} \norm{B[\pi^*]}^2 + \Reg(\pi^*) \leq \tfrac{1}{2} \norm{A[\pi]}^2 - \tfrac{1}{2} \norm{B[\pi]}^2 + \Reg(\pi) + \tfrac{1}{2} \norm{B[\pi - \pi^*]}^2.
\end{gather*}
The last inequality is approximately satisfied by $\widehat\pi$, its discrepancy controlled by the tolerance $\delta$:

\begin{theorem}
\label{corr:main_convergence_criticality}
If $\delta_{\Reg\OT} \le \delta^2/4$, and if the inner loops converge, the output of \cref{alg:ugw} is a $\delta$-critical plan in the following sense (recalling that $\Engy_\Reg = \Engy + \Reg$ denotes the matching objective):
\begin{equation*}
    \text{For all } \pi \in \measplus{\X \times \Y}, \quad 
    \Engy_\Reg(\widehat\pi) \;\le\; \Engy_\Reg(\pi) + \tfrac{1}{2}\norm{B[\pi - \widehat\pi]}^2 + \delta \norm{B[\pi - \widehat\pi]} + \delta^2/4.
\end{equation*}
\end{theorem}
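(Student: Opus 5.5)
The plan is to reduce $\delta$-criticality to an approximate optimality statement for the linearized (convex) problem at the final iterate, and then to expand the squared norm of $B$. Algorithm~\ref{alg:ugw} is not reproduced above, so I take its stopping rule to be the natural one: it terminates when the $\Gamma$-update stagnates, i.e.\ when the final $\Gamma$ and the returned plan satisfy $\norm{\Gamma - B[\widehat\pi]} \le \delta/2$, with $\widehat\pi = \pi_{\Gamma,\Xi}$ the (approximate) $\Reg\OT$ plan for the effective cost $c_{\Gamma,\Xi}$.

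The argument combines three ingredients.

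First, I treat the $\Xi$-block. If the inner loop has converged, $\Xi$ maximizes the strongly concave $h_\Gamma$, and the gradient formula gives $\Xi = A[\widehat\pi]$. By Fenchel duality for $\tfrac12\norm{\cdot}^2$ in \cref{prop:outer_gamma}, the inner sup/min over $\Xi$ and $\pi$ then equals $\min_\pi \tfrac12\norm{A[\pi]}^2 - \scal{\Gamma, B[\pi]} + \Reg(\pi)$. Hence $\widehat\pi$ minimizes the convex functional
\[
J_\Gamma(\pi) := \tfrac12\norm{A[\pi]}^2 - \scal{\Gamma, B[\pi]} + \Reg(\pi),
\]
up to the $\Reg\OT$ tolerance. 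The claim at this step is $J_\Gamma(\widehat\pi) \le J_\Gamma(\pi) + \delta_{\Reg\OT}$ for all $\pi$.

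Second, I make the key algebraic step. Write $\Engy_\Reg(\pi) = J_\Gamma(\pi) + \scal{\Gamma, B[\pi]} - \tfrac12\norm{B[\pi]}^2$. For any $\pi$,
\[
\Engy_\Reg(\widehat\pi) - \Engy_\Reg(\pi) \le \delta_{\Reg\OT} + \scal{\Gamma, B[\widehat\pi] - B[\pi]} - \tfrac12\norm{B[\widehat\pi]}^2 + \tfrac12\norm{B[\pi]}^2 .
\]
I then substitute $\Gamma = B[\widehat\pi] + e$ and expand. The quadratic part becomes exactly $\tfrac12\norm{B[\pi-\widehat\pi]}^2$, since $\tfrac12\norm{b}^2 - \tfrac12\norm{\hat b}^2 - \scal{\hat b, b - \hat b} = \tfrac12\norm{b-\hat b}^2$. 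The error contributes $\scal{e, B[\widehat\pi - \pi]} \le \norm{e}\,\norm{B[\pi-\widehat\pi]}$ by Cauchy--Schwarz.

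Third, I collect the constants. With $\norm{e} \le \delta/2$, the error term is at most $\tfrac{\delta}{2}\norm{B[\pi-\widehat\pi]}$, and $\delta_{\Reg\OT} \le \delta^2/4$ gives the additive term. If the stopping criterion instead controls $\norm{e}$ only by $\delta$, the stated factor $\delta$ in front of $\norm{B[\pi-\widehat\pi]}$ is obtained directly. Either way the displayed bound follows.

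The main obstacle is the first step: showing that inexactness in the $\Reg\OT$ solve and in the gradient ascent on $\Xi$ translates into an additive $\delta_{\Reg\OT}$ suboptimality of $\widehat\pi$ for $J_\Gamma$, rather than into errors that also perturb the $A$-term. I would use the assumption that the inner loop converges, so that $\Xi = A[\widehat\pi]$ exactly. Under that assumption, $\widehat\pi$ being a $\delta_{\Reg\OT}$-solution of the linear problem with cost $c_{\Gamma,\Xi}$ implies, by convexity of $\tfrac12\norm{A[\cdot]}^2$ and its tangent inequality at $\widehat\pi$ (slope $\Xi$), that it is a $\delta_{\Reg\OT}$-solution of $J_\Gamma$.
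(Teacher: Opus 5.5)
Your second and third steps are correct and coincide with the paper's one-step argument (\cref{lem:outer_step}(ii)). You write $\Gamma_t=B[\widehat\pi]+e$, expand, and apply Cauchy--Schwarz. The actual outer test is $\norm{\Gamma_{t+1}-\Gamma_t}\le\delta$ with $\Gamma_{t+1}=B[\widehat\pi]$, so $\norm{e}\le\delta$, which your fallback covers.

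The gap is in your first step, which rests on $\Xi=A[\widehat\pi]$ holding exactly at termination. \Cref{alg:ugw} never provides this. \emph{The inner loops converge} means they terminate, i.e.\ the test $\tfrac12\norm{A[\widehat\pi]-\Xi}^2+\delta_{\Reg\OT}\le\delta^2/4$ is passed, which in general leaves a nonzero residual $A[\widehat\pi]-\Xi$. Even if the $\Xi$-ascent reached the exact maximizer $\Xi^\star_\Gamma$ of $h_\Gamma$, the gradient formula (which also needs $h_\Gamma$ differentiable, not assumed here) would give $\Xi^\star_\Gamma=A[\pi^\star]$. Here $\pi^\star$ is the \emph{exact} $\Reg\OT$ minimizer, not the $\delta_{\Reg\OT}$-accurate plan $\widehat\pi$, so your identity need not hold once $\delta_{\Reg\OT}>0$. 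A symptom: your argument would give the additive constant $\delta_{\Reg\OT}$, whereas the statement has $\delta^2/4$, which is exactly the threshold of the inner stopping test.

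The missing idea is that no optimality of $\Xi$ is needed. Take an arbitrary $\Xi$ and test the $\delta_{\Reg\OT}$-accuracy of $\widehat\pi$ at cost $c_{\Gamma,\Xi}$ against any $\pi$. Then replace your tangent inequality by the exact identity $\tfrac12\norm{a}^2-\tfrac12\norm{b}^2-\scal{\Xi,a-b}=\tfrac12\norm{a-\Xi}^2-\tfrac12\norm{b-\Xi}^2$ with $a=A[\widehat\pi]$, $b=A[\pi]$. This gives
\[
J_\Gamma(\widehat\pi)-J_\Gamma(\pi)\le\tfrac12\norm{A[\widehat\pi]-\Xi}^2-\tfrac12\norm{A[\pi]-\Xi}^2+\delta_{\Reg\OT}\le\tfrac12\norm{A[\widehat\pi]-\Xi}^2+\delta_{\Reg\OT}\le\tfrac{\delta^2}{4},
\]
where the last inequality is precisely the inner stopping rule. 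This is \cref{lem:unified_certificate}; the hypothesis $\delta_{\Reg\OT}\le\delta^2/4$ only guarantees that the test can be met. Feeding $\eta=\delta^2/4$ and $\norm{e}\le\delta$ into your second and third steps then gives the stated bound.
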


\paragraph{Practical example: entropic unbalanced GW.} Theorem~\ref{thm:main_convergence} requires $h_\Gamma$ to be $L$-smooth in order to ensure end-to-end convergence.
This hypothesis does not hold for all choice of $\Reg$, but is satisfied whenever $\Reg$ contains an entropic term $\varepsilon \KL(\pi \Vert \alpha\otimes\beta)$, such as in the unbalanced case of \cref{eq:unbalanced_reg}: then, $L = 1 + O(\varepsilon^{-1})$.
Corresponding $\Reg\OT$ problems can be solved with the Sinkhorn algorithm, an iterative method that improves the dual variables $\phi$ at each step \citep{cuturi2013sinkhorn, chizat2018scaling}.
Relying on the Sinkhorn algorithm has several practical advantages.
First, it is a dual solver that does not manipulate plans $\pi$ explicitly, improving memory usage.
Besides, Sinkhorn converges linearly: exploiting this property, we further prove in \cref{subsec:unified_ugw} that \cref{alg:ugw} terminates after $\widetilde{O}((\rho+\varepsilon)/(\delta^2\varepsilon^2))$ single Sinkhorn iterations.
The linear convergence of Sinkhorn also permits to replace the $\Xi$-updates by Nesterov's accelerated scheme with an inexact, but arbitrarily close, oracle \citep{devolder2013strongly}. This acceleration scheme can save a factor $\varepsilon^{-1/2}$ in the end-to-end complexity; see \cref{cor:ugw_nesterov_acceleration} for details and conditions of use.
Finally, since we have an estimate of the smoothness in this case, we can calibrate the step size $\tau$ automatically without having to tune this hyperparameter manually.
This evidences the benefits of entropic regularization for QM, and allows us to build optimized end-to-end pipelines for unbalanced GW based on the Sinkhorn algorithm.

\paragraph{Warm starts and early stopping.} The stopping criteria of \cref{alg:ugw} are theoretically-grounded but rather conservative: since $(\Gamma, \Xi)$ varies slowly between consecutive $\Reg\OT$ solves, warm-starting the solver at its previous dual potentials makes a few solver iterations per $\Xi$- and $\phi$-update -- or even a fixed number of them -- sufficient in practice.
Empirically, in the entropic case, our method still remains stable when only one Sinkhorn update is performed at each $\phi$-block.
This represents a large computational acceleration, since we do not solve a full $\Reg\OT$ problem before each $\Xi$ update.  

\paragraph{Efficient implementation in the block-separable case.} To fully benefit from memory efficiency of dual $\Reg\OT$ solvers like Sinkhorn, we must replace the computations of $A[\widehat\pi]$ and $B[\widehat\pi]$ with cheaper ones.
When $A$ and $B$ are block-separable, they are parametrized by their embeddings so that $A[\pi]$ and $B[\pi]$ take the form of a cross-correlation:
\begin{equation}
\label{eq:primal_corr}
A[\pi] = {\rm Corr}_A[\pi] := \int \Phi_A(x)\Psi_A(y)^{\top} \diff\pi(x,y).
\end{equation}
These standard operations are efficiently implementable with Sinkhorn-like solvers.
Moreover, the cost $c_{\Gamma,\Xi}$ of \cref{eq:effective_cost_decomp} is an inner product, a cost function supported by most OT libraries: this makes our algorithm compatible with any choice of $\Reg$ as long as a $\Reg\OT$ solver is available.
The block-wise structure makes it fully modular (\cref{fig:schema_algo}), and benefits from modern Sinkhorn backends \citep{charlier2021kernel,ye2026flashsinkhorn}: our method is GPU-friendly, memory-efficient, and accelerable with multiscaling.
Since matching problems are parametrized by embeddings, dimensionality reduction can be naturally applied to offer a substantial speed-up over exact resolution.
We refer to \cref{appendix:detailed_implementation} for implementation details, acceleration techniques, and a discussion of initialization strategies.

\begin{algorithm}[t]
    \caption{$\Reg\QM$ solver}
    \label{alg:ugw}
    \KwIn{measures $\alpha, \beta$; operators $A, B$; $\Reg\OT$ solver; tolerances $\delta$, $\delta_{\Reg\OT}$; step size $\tau$.}
    \KwOut{a $(\delta, \delta^2/4)$-critical plan $\widehat\pi$ (\cref{thm:main_convergence}), with its variables $(\Gamma, \Xi, \phi)$.}
    Initialize $\Gamma_0$, $\Xi$ and $\phi$\;
    \For{$t = 0, 1, 2, \dots$}{
        \For{$k = 0, 1, 2, \dots$}{
            $(\phi, \widehat\pi) \gets$ $\Reg\OT$ solver (cost $c_{\Gamma_t, \Xi}$, warm start at $\phi$, accuracy $\delta_{\Reg\OT}$) \tcp*[r]{$\phi$-block}
            \lIf{$\tfrac{1}{2}\norm{A[\widehat\pi] - \Xi}^2 + \delta_{\Reg\OT} \le \delta^2/4$}{\textbf{break}}
            $\Xi \gets \Xi + \tau\,(A[\widehat\pi] - \Xi)$ \tcp*[r]{$\Xi$-block}
        }
        $\Gamma_{t+1} \gets B[\widehat\pi]$ \tcp*[r]{$\Gamma$-block}
        \lIf{$\norm{\Gamma_{t+1} - \Gamma_t} \le \delta$}{\KwRet{$\widehat\pi$}}
    }
\end{algorithm}

\begin{figure}[t]
\centering
\includegraphics[width=\linewidth]{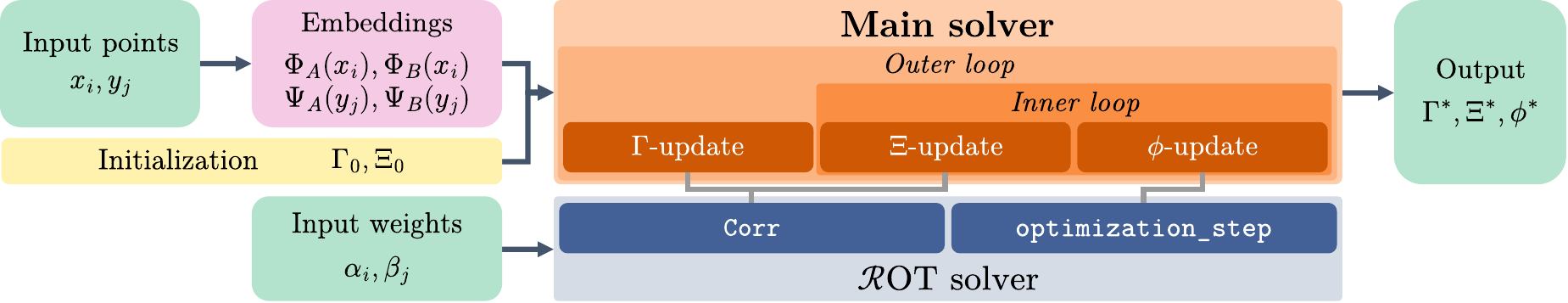} 
\caption{High-level description of our QM method for block-separable kernels. 
To apply our main solver to a specific $\Reg\QM$ problem, we only need to specifiy the block-separable embeddings of the objective (using eigendecomposition of $k$ or explicit equations) and a $\Reg\OT$ solver for the chosen $\Reg$.
This makes our approach fully modular and compatible with any state-of-the-art OT backend.
}
\label{fig:schema_algo}
\end{figure}

\section{Applications}

We now highlight the practical utility of our new framework by presenting several applications on a wide variety of data types, all solved using our numerical routine presented in \cref{section:numerical_optim}. 
We detail our experimental setup in \cref{appendix:expe_setup}.
These applications aim to demonstrate the versatility of our solver: we used the same algorithm to compute all these experiments, the only changes lying in the feature embeddings and the selected term $\Reg$.

\paragraph{Graph matching.} Fig~\ref{fig:graph_benchmark} evaluates our method on random graph matching using GW with shortest path as base costs.
We compare our exact and dimension-reduced methods against the standard EGW solver of \cite{peyre2016gromov} and the low-rank (LR) solver of \cite{scetbon2022linear}. 
All methods perform similarly, except for LR which trades matching quality for computational speed.
In sharp contrast, our method reaches an excellent accuracy with as few as $D=10$ dimensions.
Although EGW has a larger asymptotic complexity than our low-dimensional methods, its minimal overhead makes it competitive in this setting.
However, EGW lacks theoretical convergence guarantees: in \cref{appendix:additional_expes}, we highlight simple cases where EGW fails to converge entirely.
To our knowledge, our method is the first to robustly compute GW matchings on graphs with tens of thousands of nodes in a few seconds without suffering from such convergence issues.

\begin{figure}[h!]
\centering
\includegraphics[width=\linewidth]{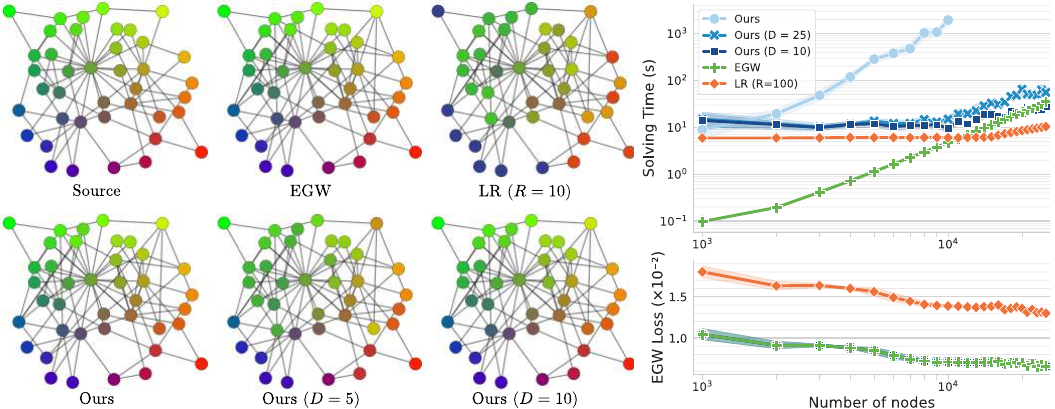} 
\caption{Gromov-Wasserstein methods for graph matching.
\textit{Left}: qualitative analysis; colors indicate which nodes are matched together.
\textit{Right}: quantitative results of speed and accuracy.
}
\label{fig:graph_benchmark}
\end{figure}

\paragraph{Word embedding alignment.} To showcase our method in high dimension, we reproduce the experiment of \cite[Table 1]{alvarez2018gromov} by aligning vocabularies of different languages.
The MUSE dataset \citep{conneau2017word} provides $300$-dimensional embeddings of different language vocabularies (English, Spanish, French, Italian and German).  
We apply GW between the $N = 20k$ most frequent words of these languages, using a multiscale scheme to reduce solving time.
We infer a translation for each source word in the target language and report total precision in \cref{table:nlp_results}.
We see that our solver obtains comparable results to the baseline while being almost $20\times$ faster. 
Using Nystr\"om methods \citep{nakatsukasa2020fast} further accelerates our algorithm by reducing the embedding dimensions, providing a flexible trade-off between precision and running time.

\begin{table*}[t!]
  \caption{Solving time and precision of word translation using balanced GW, following the method of \cite{alvarez2018gromov}.
  Baseline results are directly reported from the original study.}
  \label{table:nlp_results}
  \begin{center}
    \begin{small}
    \begin{tabularx}{\textwidth}{lcrCCCCCCCC}
          \toprule
           \sc Solver & \sc Dim. & \sc Time & \multicolumn{2}{c}{\sc \quad En-Es} & \multicolumn{2}{c}{\sc \quad En-Fr} & \multicolumn{2}{c}{\sc \quad En-It} & \multicolumn{2}{c}{\sc \quad En-De} \\
           \cmidrule(lr){4-5}\cmidrule(lr){6-7} \cmidrule(lr){8-9} \cmidrule(lr){10-11} 
           & & &  $\rightarrow$ & $\leftarrow$ &  $\rightarrow$ & $\leftarrow$ &  $\rightarrow$ & $\leftarrow$ & $\rightarrow$ & $\leftarrow$ \\
          \midrule
          \textsc{Ours} & $100$ & $\mathbf{36s}$ & $75.4$ & $75.8$ & $79.1$ & $79.3$ & $71.7$ & $70.6$ & $58.4$ & $59.6$ \\
          \textsc{Ours} & $200$ & $70s$ & $79.6$ & $80.5$ & $81.4$ & $81.8$ & $78.8$ & $78.0$ & $71.8$ & $74.2$ \\
          \textsc{Ours} & $300$ & $90s$ & $80.7$ & $\mathbf{81.6}$ & $\mathbf{81.7}$ & $\mathbf{82.2}$ & $\mathbf{79.6}$ & $\mathbf{79.0}$ & $\mathbf{72.7}$ & $\mathbf{75.2}$ \\
          \textsc{Base} & $300$ & $2220s$ & $\mathbf{81.7}$ & $80.4$ & $81.3$ & $78.9$ & $78.9$ & $75.2$ & $71.9$ & $72.8$\\
          \bottomrule
    \end{tabularx}
    \end{small}
  \end{center}
  \vskip -0.1in
\end{table*}

\paragraph{Semi-balanced shape matching.} In \cref{fig:shape_matching}, we match a source mesh to various partial targets from the SHREC'16 dataset \citep{cosmo2016shrec} using semi-balanced GW.
Semi-balancedness ensures that the whole partial shapes are matched to the source, while ignoring vertices of the full mesh that correspond to holes of the target.
Our method converges in a few minutes and accurately preserves the aspect of the textures.
We provide qualitative comparisons with other GW baselines in \cref{appendix:additional_expes} which highlight the limitation of concurrent unbalanced GW methods. 

\begin{figure}[t!]
\centering
\includegraphics[width=\linewidth]{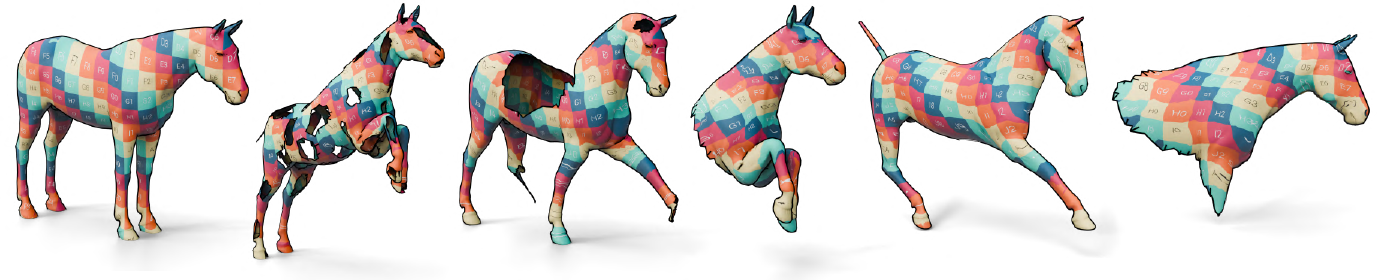} 
\caption{Texture transfer between source shape (\emph{left}) and different targets using our method.
The source contain $10k$ vertices; the targets, from $3k$ to $10k$.}
\label{fig:shape_matching}
\end{figure}

\paragraph{Fused matching of 3D point clouds.} To demonstrate the large-scale applicability of our method, we apply our solver to the scene reconstruction dataset of \cite{choi2015robust} which contains various scans of the same room encoded as 3D clouds with $100k$ to $200k$ points.
We match these scans with unbalanced GW, using RGB colors as additional linear features, following the \emph{fused} idea of \cite{vayer2020fused}.
The results are computed in about $30$ min and are qualitatively satisfying, as depicted in \cref{fig:livingroom_matching}.
    
\paragraph{Fracture reconstruction.} In \cref{fig:fracture_matching}, we highlight our framework polyvalence by matching fractured objects with their original templates from the synthetic dataset of \cite{li2025garf}.
We use a cluster-aware variant of GW that makes it invariant to isometries of the individual fragments; due to the high number of local minima, we selected the output as the best result over dozens of initializations.
We combine this technique with multiscaling to reduce the computational burden of these computations.
Since our input shapes contain $100k$ to $200k$ vertices each, an optimization still takes several hours.
Yet, this preliminary work shows how our framework can help to tackle new matching problems with custom geometries without needing to redesign solvers from scratch.

\begin{figure}[t!]
\centering
\begin{minipage}{0.49\textwidth}
  \centering
\includegraphics[width=\linewidth]{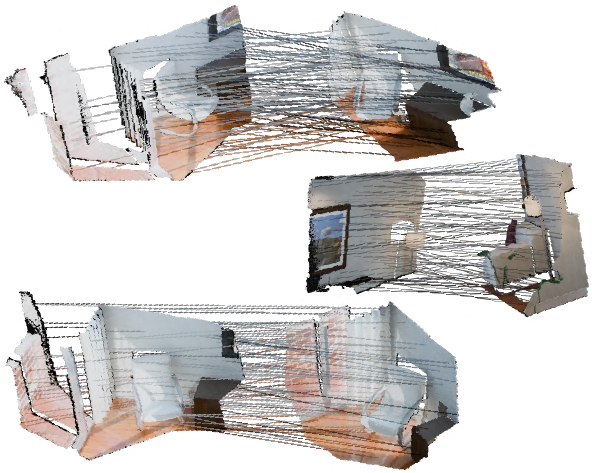} 
\caption{Matching between 3D point clouds represented by $100$ edges sampled from $\pi^*$. 
}
\label{fig:livingroom_matching}
\end{minipage}
\hfill
\begin{minipage}{0.49\textwidth}
  \centering
\includegraphics[width=\linewidth]{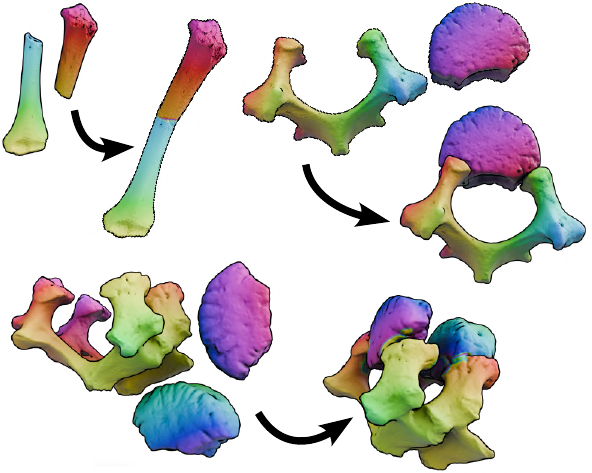} 
\caption{Color transfer between fractured and original objects using our method.
}
\label{fig:fracture_matching}
\end{minipage}
\end{figure}

\section{Conclusion}

Our work provides a common foundation for studying and solving a wide range of matching problems in a principled manner.
Rather than requiring a dedicated solver for each formulation, it offers a flexible framework that can be readily adapted to new models.
Built on duality theory, our framework encompasses Gromov--Wasserstein matching and its fused and unbalanced entropic variants within a unified algorithmic approach.
For a broad class of regularized quadratic problems, we establish convergence rates, thereby bridging a gap in the existing literature.
As illustrated by our fracture-matching model, the framework also enables new formulations to be explored without designing a solver from scratch.
We therefore believe that generic solvers such as ours can facilitate rapid prototyping and testing, and help the community to investigate approaches beyond classical matching algorithms.
More broadly, our theoretical results provide insight into which classes of matching problems are easier to solve and reveal connections between the many methods proposed in the literature.

\subsection*{Acknowledgements}

This work was supported by the French “Agence Nationale
de la Recherche” via the “PR[AI]RIE-PSAI” project (ANR-
23-IACL-0008). 

\subsection*{AI use statement}

In this work, we used generative AI tools to assist with and polish writing, to retrieve and discover related work, and to assist with the development and verification of some mathematical proofs. We did not use generative AI tools for research ideation, the design or execution of the research, or the generation of synthetic datasets.

All AI-assisted content was carefully reviewed and independently verified by the authors. In particular, we checked the validity of all mathematical arguments and proofs, rewrote any AI-generated text ourselves, and verified the existence and relevance of all references retrieved with the assistance of LLMs. We take full responsibility for the final content of this work, including any text, claims, proofs, or other artifacts produced with the aid of generative AI.

\subsection*{Reproducibility statement}

To ensure reproducibility of our results, we provide extensive details of both theoretical and implementation aspects in the appendix.  
In \cref{appendix:expe_setup}, we describe the setup of our experiments and the preprocessing of the input data.
In \cref{appendix:additional_results,appendix:detailed_implementation}, we provide a step-by-step description of all the computations and implementation tools necessary to reproduce our algorithm; we also attach our source code with simple examples of experiments as a file to the submission.
Finally, in \cref{appendix:proofs,appendix:end-to-end-cv}, we rigorously prove all the theoretical results that we state in our article and discuss thoroughly convergence guarantees.

\bibliography{bibliography}
\bibliographystyle{iclr2027_conference}

\appendix

\section{Additional Experiments}
\label{appendix:additional_expes}
\subsection{Other Baseline Comparisons}

\paragraph{Failure cases of naive solver.} We illustrate the relevance of our approach by identifying examples where our algorithm reaches a true EGW minimum whereas concurrent solvers \citep{peyre2016gromov} fail to converge.
As explained in \cite{houry2026gromov}, this can happen when the costs are not CNT.
We exhibit two typical examples of non-CNT configurations: shortest path distances on arbitrary graphs (\cref{fig:graph_pathological}), and 2D point clouds for base costs equal to $\norm{\cdot}^p$, with $p > 2$ (\cref{fig:points_pathological}).
In both cases, our method manage to converge to the correct minimum whereas the standard baseline oscillates between two transport plans without reaching any critical point.
Note that these examples represent typical inputs for GW applications and are not purely abstract pathological examples.
Therefore, this type of behavior is to be expected in practice, and using our solver ensures that convergence is indeed reached in every configuration.
Also note that, despite the symmetries of the inputs in \cref{fig:graph_pathological} and the symmetry of the initialization, our solver breaks the symmetry and reaches one of the global minima of the problem.
Therefore, numerical noise is sufficient to prevent our algorithm from getting stuck in a critical point -- a well-known property of gradient descent \citep{lee2016gradient}.

\paragraph{Partial shape matching using other baselines.} We complement the results of \cref{fig:shape_matching} with matching obtained using other Gromov-Wasserstein methods.
Except from our method, only two other baselines are able to solve unbalanced GW problems: the UGW solver of \cite{sejourne2021unbalanced}, and the low-rank relaxation of \cite{scetbon2023unbalanced}.
The first algorithm cannot scale properly to the inputs used in these experiments.
Although faster, the low-rank solver rely on a relaxation that creates undesirable quantization artifacts, as we illustrate at the top of \cref{fig:shape_matching_appendix}.
Therefore, to the best of our knowledge, our method is the only one which is able to compute accurate unbalanced Gromov-Wasserstein solutions on this input size in a reasonable time.
As a reference, we also show the matching obtained using a balanced Gromov-Wasserstein solver on these inputs at the bottom \cref{fig:shape_matching_appendix}.
We observe large distorsion in the results, evidencing the need to rely on unbalanced versions of Gromov-Wasserstein to tackle this type of problems.

\begin{figure}[h]
\centering
\includegraphics[width=0.9\linewidth]{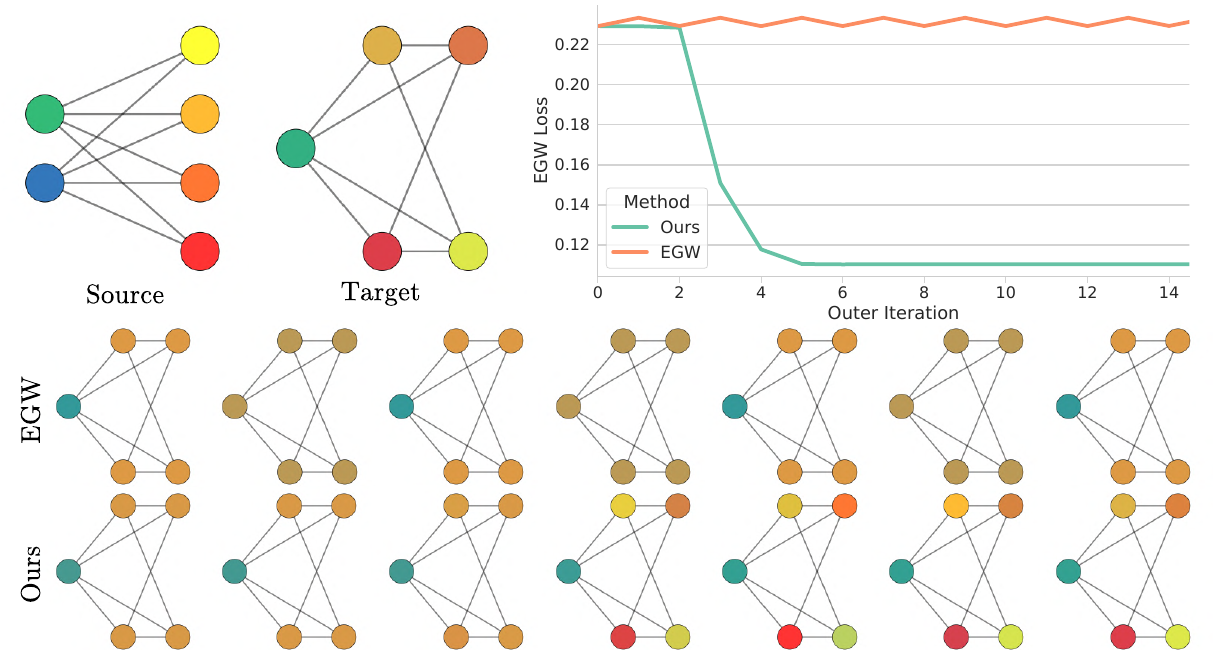} 
\caption{Graph matching obtained through the iterations of the EGW solver (\cite{peyre2016gromov}) and ours. EGW gets stuck between two configurations, whereas ours converge to the true optimum.
}
\label{fig:graph_pathological}
\end{figure}

\begin{figure}[h]
\centering
\includegraphics[width=0.95\linewidth]{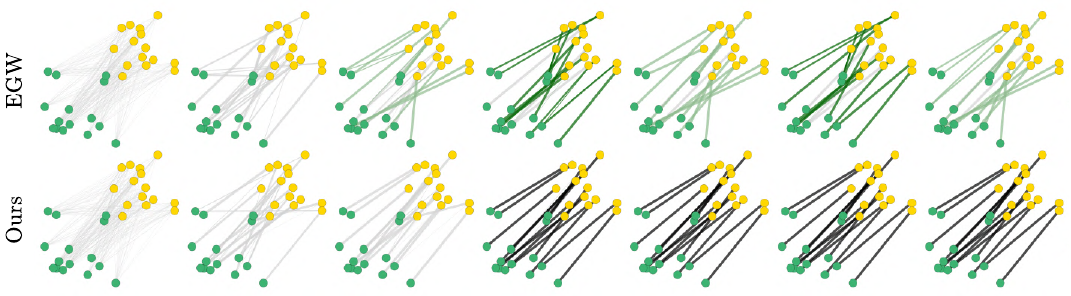} 
\caption{Matching obtained through iterations of the EGW solver and ours with $c_\X, c_\Y = \norm{\cdot}^6$.}
\label{fig:points_pathological}
\end{figure}

\begin{figure}[h]
\centering
\includegraphics[width=0.9\linewidth]{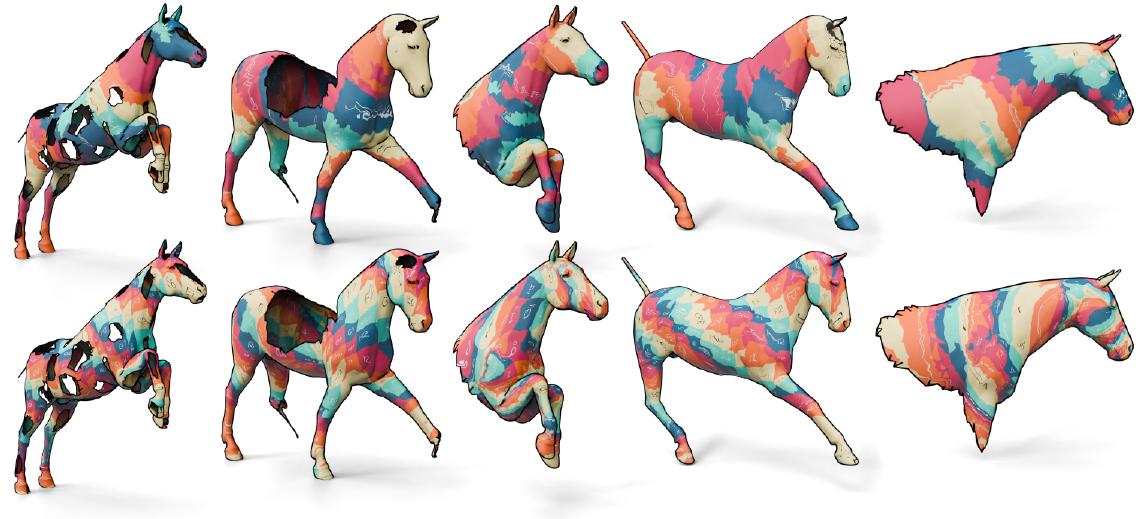} 
\caption{Texture transfer between the source of \cref{fig:shape_matching} and different targets using the LR unbalanced solver of \cite{scetbon2023unbalanced} (\emph{top}) and  the balanced EGW solver of \cite{peyre2016gromov} (\emph{bottom}).
}
\label{fig:shape_matching_appendix}
\end{figure}

\clearpage

\subsection{Experimental Setup}
\label{appendix:expe_setup}

All our algorithms were implemented in Pytorch \cite{imambi2021pytorch} and executed on a RTX6000 GPU with 24GB of VRAM.
To ensure a fair comparison with baselines, we reimplemented concurrent algorithms in Pytorch as well.
In dimension reduction settings, we performed OT related operations using either the Keops \citep{charlier2021kernel} or FlashSinkhorn \citep{ye2026flashsinkhorn} backends. 
Below, we provide additional details on the experiments we performed in this article.

\paragraph{Graph matching.} We compute graph matching using the balanced entropic GW objective:
\begin{equation*}
     \min_{\pi \in \measplus{\X \times \Y}} \int (d_\X(x,x') - d_\Y(y,y'))^2\diff\pi(x,y)\diff\pi(x',y') + \iota_{\Pi(\alpha,\beta)}(\pi) + \varepsilon \KL(\pi \Vert \alpha \otimes \beta),
\end{equation*}
where $d_\X$ and $d_\Y$ are the shortest path distances of the input graphs $\X$ and $\Y$, and $\varepsilon = 2e-3$.
The source graphs are generated randomly from the Barabási–Albert model using the corresponding function of the networkx library \citep{hagberg2020networkx}, with an edge parameter of $2$.
Targets graphs are obtained by adding $20 \%$ of noisy edges to the source, so that there is a ground-truth correspondences between the nodes of the source and the target.  
To apply our method, we normalize then diagonalize the shortest path distance matrices so that it takes the form of \cref{prop:diag_costs}.
In our dimension reduction methods, we truncate this diagonalization to the $D$ largest eigenvalues in absolute value.
For the low-rank baseline, we also truncate the costs to the $D = 100$ first eigenspaces, necessary to ensure linear-time computations of this method.
We use uniform weights $\alpha$ and $\beta$ on both source and target graphs, and initialize the algorithms at the trivial plan -- except for low-rank where we use a kmeans heuristics for initialization.
The accuract reported in \cref{fig:graph_benchmark}, called \emph{EGW loss}, refers to the value of \cref{eq:rgw_def} computed using the matching $\pi$ obtained after convergence.

\paragraph{Word embedding alignment.} Results of \cref{table:nlp_results} rely on the balanced entropic GW problem:
\begin{equation*}
     \min_{\pi \in \measplus{\X \times \Y}} \int (c_\X(x,x') - c_\Y(y,y'))^2\diff\pi(x,y)\diff\pi(x',y') + \iota_{\Pi(\alpha,\beta)}(\pi) + \varepsilon \KL(\pi \Vert \alpha \otimes \beta),
\end{equation*}
with cosine similarity as base costs:
\begin{equation*}
   c_\X(x,x') = \scal{\frac{x}{\norm{x}},\frac{x'}{\norm{x'}}} \quad \text{ and } \quad     c_\Y(y,y') = \scal{\frac{y}{\norm{y}},\frac{y'}{\norm{y'}}}. 
\end{equation*}
These have a scalar structure, so we can directly apply the decomposition of \cref{prop:scalar_decomp} on the spherical projection of the inputs.
We use a multiscale scheme by solving a coarse problem containing only the $2k$ most frequent words of each language, using the coarse result as initialization for the fine-scale one.
We then estimate a rigid alignment of the space from the optimal matching, as detailed in \cite{alvarez2018gromov}.
Finally, we match each word of the source language to a word of the target language using the CSLS matching method \citep{conneau2017word}.
The reported precision corresponds to the proportion of source words whose matched target is a valid translation in the dictionaries provided by the MUSE dataset. 

\paragraph{Semi-balanced shape matching.} The explicit optimization problem we used for \cref{fig:shape_matching} is:
\begin{equation*}
     \min_{\pi \in \measplus{\X \times \Y}} \int (c_\X(x,x') - c_\Y(y,y'))^2\diff\pi(x,y)\diff\pi(x',y') + \iota_{\alpha}(\pi_1) + \varepsilon \KL(\pi \Vert \alpha \otimes \beta),
\end{equation*}
with $\varepsilon=5e-4$. The costs $c_\X$ and $c_\Y$ are sums of squared norms:
\begin{equation*}
    c_\X(x,x') = \norm{x - x'}^2_{\text{geodesic}} + \norm{x - x'}^2_{\text{exp}} \quad \text{ and } c_\X(y,y') = \norm{y - y'}^2_{\text{geodesic}} + \norm{y - y'}^2_{\text{exp}}.
\end{equation*}
The norm $\norm{\cdot}_{\text{geodesic}}$ correspond to Euclidean approximations of geodesic distances as in \cite{panozzo2013weighted}.
We follow the same method as detailed in \cite{houry2026gromov}, yielding Euclidean embeddings in dimension $D = 8$.
On the other hand, the terms $\norm{\cdot}_{\text{exp}}^2$ represent squared-norm approximations in dimension $D = 20$ of the conditionally negative cost: 
\begin{equation*}
d_\X(x,x') = 2 - 2 \exp \left(- \frac{\norm{x - x'}}{0.5}\right) \quad \text{ and } \quad     d_\Y(y,y') = 2 - 2 \exp \left(- \frac{\norm{y - y'}}{0.5}\right),
\end{equation*}
using kernel PCA, to compute approximate Euclidean embeddings, as advocated in \cite{houry2026gromov}.
This way, $c_\X$ and $c_\Y$ combine both global geodesic and local Euclidean information.
We initialized the solver using landmarks and chose the weights $\alpha$ and $\beta$ proportional to average area of triangles surrounding each vertex.
For the low-rank baseline (in \cref{fig:shape_matching_appendix}), we used a rank of $R = 100$.

\paragraph{Fused matching of 3D point clouds.} To compute the matchings of \cref{fig:livingroom_matching}, we minimize the fused unbalanced GW objective:
\begin{multline*}
    \min_{\pi \in \measplus{\X \times \Y}} \int (c_\X(x,x') - c_\Y(y,y'))^2\diff\pi(x,y)\diff\pi(x',y') \\ + \lambda \int \norm{{\rm RGB}(x) - {\rm RGB}(y)}^2 \diff\pi(x,y)
    + \rho \KL(\pi_1 \Vert \alpha) + \rho \KL(\pi_2 \Vert \beta)  + \varepsilon \KL(\pi \Vert \alpha \otimes \beta),
\end{multline*}
where $c_\X, c_\Y$ are $20$-dimensional approximations of classical euclidean distances (following the kernel PCA method of \cite{houry2026gromov}). 
The fused features ${\rm RGB}(x) \in [0,1]^3$ correspond to the color of the point $x$.
We chose $\varepsilon = 2e-3$, $\rho = 1e-1$, and $\lambda = 4e-4$. 
We applied multiscaling, sampling $10k$ random points from each input to build the coarse problem.
We chose $\alpha$ and $\beta$ as uniform weights and initialized the solver at the trivial plan.

\paragraph{Fracture reconstruction.} In \cref{fig:fracture_matching}, we optimize the weighted GW objective: 
\begin{multline}
\label{eq:fracture_objective}
     \min_{\pi \in \measplus{\X \times \Y}} \int \sum_{k} \one_{x \in C_k} \one_{x' \in C_k} (\norm{x-x'}^2 - \norm{y-y'}^2)^2\diff\pi(x,y)\diff\pi(x',y') \\ + \varepsilon \KL(\pi \Vert \alpha \otimes \beta),
\end{multline}
where $\one_{x \in C_k}$ is equal to $1$ if $x$ belongs to the fragment $C_k$ and $0$ otherwise.
We take $\varepsilon = 10^{-4}$ and $\alpha$, $\beta$ as uniform weights.
The coarse optimization is performed on random subsamples of the inputs with $3k$ points.
The best coarse plan (in terms of the objective of \cref{eq:fracture_objective}) is selected over $30$ random initializations.
The final output is then obtained by computing the fine-scale problem initialized at this best coarse plan.

\section{Additional Results}
\label{appendix:additional_results}

The main part of our article describes a general strategy to tackle matching problems.
This section provides additional ingredients to apply these results to classical matching instances, as well as existence results for the decompositions we use in our main theorem.
In \cref{appendix:regularizations}, we provide examples of constraint functions $\Reg$ and the associated duals.
In \cref{appendix:general_matching}, we justify the existence of DC decomposition for general matching, and factorizations of quadratic objectives in a difference of quadratic operators.
In \cref{appendix:block_separable}, we detail the formulation of our dual theorems when the quadratic operators are block-separable.
Finally, in \cref{appendix:gw_decomposition}, we explicit the block-separable decomposition of GW objectives, which are the formula we ultimately implement in our solvers.
Before that, we introduce additional and definitions:

\paragraph{Weak-* continuity.} A sequence of measures $(\alpha_n) \in \meas{\X}$ converges weakly-* to $\alpha \in \meas{\X}$ (denoted by $\alpha_n \rightharpoonup \alpha$) if for any continuous function $f \in \C(\X)$, $\int\!f \diff\alpha_n \rightarrow \int\!f \diff\alpha$.
A function $\mathcal{F}: \meas{\X \times \Y} \to V$ is weak-* continuous if $\mathcal{F}(\alpha_n) \rightarrow \mathcal{F}(\alpha)$ for all sequence $\alpha_n \rightharpoonup \alpha$.
Bounded closed spaces of $\meas{\X \times \Y}$ are compact for the weak-* topology; in particular, if $\Engy$ is weak-* continuous and goes to infinity at infinity, the infimum: 
\begin{equation*}
    \inf_{\pi \in \meas{\X \times \Y}} \Engy(\pi)
\end{equation*}
is always attained.

\paragraph{Trivial measure.} Given $\alpha \in \meas{\X}$ and $\beta \in \meas{\Y}$, we denote by $\alpha \otimes \beta \in \meas{\X \times \Y}$ the \emph{trivial} measure with marginals $\alpha$ and $\beta$:
\begin{equation*}
    \text{For all } c \in \C(\X \times \Y), \quad \int c \diff\alpha\otimes \beta = \int c(x,y) \diff\alpha(x) \diff\beta(y).
\end{equation*}
We will notably use this notation to simplify the expression of quadratic matching objectives:
\begin{equation*}
    \int k \diff\pi\otimes\pi = \int k(x,y,x',y')\diff\pi(x,y)\diff\pi(x',y').
\end{equation*}

\paragraph{Generalized KL-divergence.} In this work, we adopt the setting of \cite{sejourne2022generalized} and use the following definition of KL-divergence:
\begin{equation*}
    \KL(\pi \Vert \alpha \otimes \beta) := \int \log\left(\frac{\diff\pi}{\diff\alpha\otimes\beta}\right)\diff\beta - \int \diff\alpha\otimes\beta + \int \diff\pi.
\end{equation*}
The terms $\int \diff\pi$ and $\diff\alpha\otimes\beta$ cancel when both measures have the same mass, but are required in unbalanced settings to avoid biasing matching objectives towards $0$-mass plans.

\paragraph{Convex indicator function.} Given a convex set $A$, we denote by $\iota_{A}$ its convex indicator function:
\begin{equation*}
    \iota_{A}(x) = \begin{cases}
        0 & \text{if } x \in A, \\
        + \infty &\text{otherwise}.
    \end{cases}
\end{equation*}

\paragraph{Adjoints of integral operators.} Since an integral operator $A : \pi(\X \times \Y) \to \Hilb_A$ is a continuous linear function, it admits an adjoint $A^\top : \Hilb_A \to \C(\X \times \Y)$ that is given by:
\begin{equation*}
    \text{For all } ~~ \Xi \in \Hilb_A ~~ \text{ and } ~~ (x,y) \in \X \times \Y, \quad (A^\top\Xi)(x,y) = \scal{\Xi, A(x,y)}.
\end{equation*}

\subsection{Classical choices of $\Reg$}
\label{appendix:regularizations}

In \cref{table:regularizers,table:regularizers_dual}, we introduce the main constraint terms $\Reg$ encountered in both matching and registration literatures, along with their duals (following the notations of \cref{theorem:tolanddualitygeneral}).
The two most standards are optimal transport and closest point.
In the OT case, the matching $\pi$ must belong to: 
\begin{equation}
\label{eq:couplings}
   \coupling{\alpha, \beta} := \left\{ \pi \in \measplus{\X \times \Y}, ~~ \pi_1 := \int \diff\pi(\cdot, y) = \alpha ~\text{ and }~ \pi_2 := \int \diff\pi(x, \cdot) = \beta \right\},
\end{equation}
the set of \emph{couplings} between $\alpha$ and $\beta$.
Closest point matchings, on the other hand, are obtained by matching each point of $\alpha$ to its nearest neighbor in $\beta$ based on a cost function $c$:
\begin{equation*}
    \pi_{\rm CP}(x,y) = \begin{cases}
        \alpha(x) & \text{if }  y \in \arg\min_x c(x,y), \\
        0 & \text{otherwise}.
        \end{cases}
\end{equation*}
Such matchings are actually the solutions of a problem of the form $\min_\pi \int c\diff\pi + \iota_{\alpha}(\pi_1)$, where $\iota_{\alpha}(\pi_1)$ only enforces the marginal constraint on $\alpha$.
Therefore, this popular method in registration is entirely compatible with our framework with the choice $\Reg(\pi) = \iota_{\alpha}(\pi_1)$.
Other important cases combine marginal penalization and entropic (or KL-divergence) regularization:
\begin{multline}
    \label{eq:csiszar_regs}
    \Reg(\pi) = \Div{\varphi_1}{\pi_1}{\alpha} + \Div{\varphi_2}{\pi_2}{\beta} + \varepsilon \KL(\pi \Vert \alpha \otimes \beta) \\ \text{ with } \quad \Div{\varphi}{\mu}{\nu} := \begin{cases} \int \varphi \left(\frac{\diff\mu}{\diff \nu}\right)\diff\nu & \text{ if } \mu \ll \nu, \\
    + \infty & \text{otherwise}.
    \end{cases}
\end{multline}
The function ${\rm D}_{\varphi}$ is called the Csisz\'ar divergence associated to the function $\varphi$: it is well-defined as soon as $\varphi$ is convex and $\varphi(1) = 0$.
As explained in \cite{sejourne2022generalized}, their dual problems admit the following form, the space of dual variables being $\Phi = \C(\X) \times \C(\Y)$:
\begin{equation*}
  \mathcal{D}_\Reg((f,g),c,\alpha,\beta) = - \int \varphi_1^*(-f)\diff\alpha - \int  \varphi_2^*(-g)\diff\beta - \varepsilon\int(e^{(f\oplus g - c)/\varepsilon} - 1)\,\diff\alpha\otimes\beta,
\end{equation*}
where $\varphi^*_1$ and  $\varphi^*_2$ are the convex conjugates of $\varphi_1$ and $\varphi_2$.
The unbalanced case of \cref{eq:unbalanced_reg} corresponds to $D_\varphi = \KL$ (with $\varphi(r) = r \log(r) - r + 1$).

\begin{table*}[h!]
  \caption{List of classical regularizers.}
  \label{table:regularizers}
\begin{tabularx}{\textwidth}{ll}
\toprule
Regularizer & $\Reg(\pi)$  \\
\midrule
Classical OT & $\iota_{\Pi(\alpha,\beta)}(\pi)$ \\
Closest Point & $\iota_{\alpha}(\pi_1)$ \\
Entropic OT & $\iota_{\Pi(\alpha,\beta)}(\pi) + \varepsilon \KL(\pi \Vert \alpha \otimes \beta)$ \\
Semi-Balanced EOT & $\iota_{\alpha}(\pi_1) + \varepsilon \KL(\pi \Vert \alpha \otimes \beta)$ \\
Unbalanced OT & $\rho_1 \KL(\pi_1, \Vert \alpha) + \rho_2 \KL(\pi_2 \Vert \beta) + \varepsilon \KL(\pi \Vert \alpha \otimes \beta)$ \\
Partial OT &  $\rho_1 \TV(\pi_1 \Vert \alpha) + \rho_2 \TV(\pi_2 \Vert \beta) + \varepsilon \KL(\pi \Vert \alpha \otimes \beta)$ \\
\bottomrule
\end{tabularx}
\end{table*}

\begin{table*}[h!]
  \caption{Dual functions of the regularizers.}
  \label{table:regularizers_dual}
\begin{tabularx}{\textwidth}{ll}
\toprule
Regularizer & $\mathcal{D}_\Reg((f,g), \alpha, \beta)$ \\
\midrule
Classical OT & $ \int f\diff\alpha + \int g \diff \beta + \iota_{f \oplus g \leq c}(f,g)$ \\
Closest Point & $\int f \diff \alpha + \iota_{f(x) = \min_y c(x,y)}(f)$ \\
Entropic OT & $\int f \diff\alpha + \int g \diff \beta - \varepsilon\int(e^{(f\oplus g - c)/\varepsilon} - 1)\,\diff\alpha\otimes\beta$ \\
Semi-Balanced EOT & $\int f \diff \alpha - \varepsilon\int(e^{(f\oplus 0 - c)/\varepsilon} - 1) \diff\alpha\otimes\beta $ \\
Unbalanced OT & $\rho_1\!\int(1 \!-\!e^{-f/\rho_1})\diff\alpha + \rho_2\!\int(1\!-\!e^{-g/\rho_2})\diff\beta
    - \varepsilon\!\int\bigl(e^{(f\oplus g - c)/\varepsilon} \!-\! 1\bigr)\diff\alpha\otimes\beta$ \\
Partial OT &  $\int \min(\rho_1, f) \diff \alpha + \int \min(\rho_2, g) \diff \beta - \varepsilon\!\int\bigl(e^{(f\oplus g - c)/\varepsilon} - 1\bigr)\diff\alpha\otimes\beta $  \\
\bottomrule
\end{tabularx}
\end{table*}

\subsection{General DC Decomposition of Matching Problems}
\label{appendix:general_matching}

Here, we motivate our study of DC matching problems by providing several theorems showing the existence of DC decompositions for general matching objectives.

\paragraph{General matching approximation by DC functions.} DC functions are dense in spaces of compactly-supported continuous functions: therefore, continuous matching objectives can always be approximated by a difference of convex functions on compact domains.
If the matching objective grows arbitrarily at infinity, the minimization can always be restricted to a compact set and the class of DC functions adequately approximates any sufficiently regular matching problem.
We formalize this property in the following proposition.

\begin{restatable}{proposition}{densitydcfunctions}
\label{prop:density_dc_functions}
Let assume that $\Engy$ is continuous for the weak-* convergence and goes to infinity at infinity.
Then, for any precision $\delta > 0$, there exists a bounded convex set $V \subset \measplus{\X \times \Y}$ containing the minima of $\Engy$ and two convex functions $\Cvxa_\delta$, $\Cvxb_\delta$ continuous on their domain such that:
\begin{equation*}
\text{For all } \pi \in V, \quad \abs{\Engy(\pi) - \Cvxa_{\delta}(\pi) + \Cvxb_{\delta}(\pi)} < \delta,
\end{equation*}
and $ \Cvxa_{\delta}(\pi) - \Cvxb_{\delta}(\pi) = + \infty$ for $\pi \notin V$.
\end{restatable}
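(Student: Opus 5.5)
The plan has three steps: restrict to a compact set, approximate $\Engy$ there by a finite-dimensional smooth function, and write that function as a difference of convex functions.

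\emph{Step 1: a compact sublevel region.} Since $\Engy$ goes to infinity at infinity, there is $M>0$ such that $\Engy(\pi) > \inf \Engy + 1$ whenever the total mass $\pi(\X\times\Y) > M$. I would take
\[
V := \{\pi \in \measplus{\X \times \Y} : \pi(\X\times\Y) \le M\},
\]
which is convex and bounded, and contains all minimizers of $\Engy$. By Banach--Alaoglu, and because $\C(\X\times\Y)$ is separable when $\X, \Y$ are compact metric, $V$ is weak-* compact and metrizable. The weak-* continuous function $\Engy$ is therefore uniformly continuous on $V$ for such a metric.

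\emph{Step 2: reduction to finitely many moments.} Choose a countable dense family $(f_n)$ in $\C(\X\times\Y)$. Weak-* uniform continuity on the compact set $V$ gives some $N$ and $\eta>0$ with the following property: if $\pi,\pi'\in V$ satisfy $|\int f_n \diff\pi - \int f_n \diff\pi'|<\eta$ for all $n\le N$, then $|\Engy(\pi)-\Engy(\pi')|<\delta/3$. This holds because the sets defined by finitely many such moment conditions form a base of the weak-* uniformity on compacts.

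Define the linear, weak-* continuous map $T:\pi\mapsto(\int f_n\diff\pi)_{n\le N}\in\R^N$. The image $K=T(V)$ is convex and compact. The oscillation bound lets me define a function $g$ on $K$ with $|\Engy(\pi) - g(T\pi)| \le \delta/3$ on $V$, for instance $g(z)=\Engy(\pi_z)$ for some choice $\pi_z\in T^{-1}(z)\cap V$. This $g$ need not be continuous, but $g$ has oscillation at most $\delta/3$ on $\eta$-balls. Convolving $g$ (extended to a neighbourhood of $K$) with a mollifier of width less than $\eta$ then gives a $C^2$ function $\tilde g$ on a neighbourhood of $K$ satisfying $|\Engy - \tilde g\circ T|<2\delta/3<\delta$ on $V$.

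\emph{Step 3: DC decomposition.} A $C^2$ function on a compact convex set of $\R^N$ is DC: with $\lambda$ larger than the maximal Hessian eigenvalue of $-\tilde g$ on $K$, the function $\tilde g + \lambda\|z\|^2$ is convex. I would then set
\[
\Cvxa_\delta(\pi) := \tilde g(T\pi)+\lambda\|T\pi\|^2+\iota_V(\pi), \qquad \Cvxb_\delta(\pi) := \lambda\|T\pi\|^2 .
\]
Both are convex, since they are compositions of convex functions with the linear map $T$, and $\iota_V$ is convex. Both are weak-* continuous on their domains, and $\Cvxa_\delta-\Cvxb_\delta=+\infty$ outside $V$, as required.

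The main obstacle is Step 2, namely extracting finitely many moments that control $\Engy$ uniformly on $V$. This requires the metrizability and compactness of $V$ together with the uniform-structure argument. The mollification also needs care near the boundary of $K$, which I would handle by extending $g$ to an $\eta$-neighbourhood of $K$ via nearest-point projection onto $K$.
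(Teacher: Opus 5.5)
Your argument is correct, but it takes a different route from the paper. The paper never reduces to finite dimensions. It takes the same weak-* compact set $V$ and applies the Stone--Weierstrass theorem directly to the class ${\rm DC}(V)$ of differences of continuous convex functions on $V$. The key lemma is that this class is an algebra: it contains the constants and the linear maps $\pi\mapsto\int f\diff\pi$, and it is closed under products. Closure under products follows from polarization together with $\mathcal{F}^2 = 2(\mathcal{A}^2+\mathcal{B}^2)-(\mathcal{A}+\mathcal{B})^2$ for nonnegative convex $\mathcal{A},\mathcal{B}$. Adding $\iota_V$ to $\Cvxa_\delta$ then concludes.

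That route is shorter. It works on any compact Hausdorff $V$, so it needs no separability or metrizability, no uniform-structure argument and no smoothing.

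Your route is more constructive and gives a stronger conclusion. The approximant factors through finitely many moments $T\pi$. The concave part $\Cvxb_\delta(\pi)=\lambda\|T\pi\|^2$ is exactly $\tfrac12\norm{B[\pi]}^2$ for the integral operator $B(x,y)=\sqrt{2\lambda}\,(f_1(x,y),\dots,f_N(x,y))$, so it already has the factorized form of \cref{eq:loss_factorized} with finite-dimensional $\img{B}$. Stone--Weierstrass gives no such control over $\Cvxb_\delta$.

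One small technical point needs fixing. An arbitrary selection $z\mapsto\pi_z$ need not make $g$ measurable, so the convolution is not justified as written. Either of two fixes works:
\begin{itemize}
\item take $g(z)=\min\{\Engy(\pi):\pi\in V,\ T\pi=z\}$, which is lower semicontinuous by compactness of $V$ and continuity of $T$, and still satisfies your oscillation bounds;
\item replace mollification by a smooth partition of unity subordinate to a finite cover of $K$ by balls of radius below $\eta/2$.
\end{itemize}
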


Besides, when $\Cvxa$ and $\Cvxb$ are superlinear, their convex conjugates are continuous:

\begin{restatable}{lemma}{lemmacontinuityconjugate}
\label{lemma:continuity_conjugate}
Let assume that $\Cvxa$ is superlinear:
\begin{equation*}
    \frac{\Cvxa(\pi_n)}{\TV(\pi_n)} \longrightarrow + \infty \quad \text{ for all } (\pi_n) \in \meas{\X \times \Y} \quad \text{ such that } \quad \TV(\pi_n) \longrightarrow \infty,
\end{equation*}
with $\TV(\pi)$ denoting the total variation of $\pi$.
Then, the convex conjugate $\Cvxb^*$ is continuous on $\C(\X \times \Y)$ for the uniform norm.
\end{restatable}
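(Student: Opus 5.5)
The plan is to reduce the claim to a local upper bound on $\Cvxb^*$, using the standard fact that a convex function on a normed space which is bounded above on a neighbourhood of every point is continuous. Here the space is $\C(\X \times \Y)$ with the uniform norm. The hypothesis concerns $\Cvxa$ and the conclusion concerns $\Cvxb^*$, so the first step must transfer superlinearity from $\Cvxa$ to the concave part of the decomposition.

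\textbf{Step 1 (superlinearity of $\Cvxb$).} I will use the freedom in the DC decomposition invoked in the ``without loss of generality'' remark before \cref{theorem:ccv_duality_general}. The pair $(\Cvxa,\Cvxb)$ can be replaced by $(2\Cvxa, \Cvxb+\Cvxa)$. This leaves $\Engy = \Cvxa - \Cvxb$ unchanged, keeps both functions proper, convex, lower semi-continuous and continuous on their domains, and preserves superlinearity of the first component. So it suffices to show that $\Cvxb$ is superlinear in this normalized decomposition.

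Since $\Cvxb$ is proper, convex and lower semi-continuous, it has a continuous affine minorant. Hence there are $c_0 \in \C(\X\times\Y)$ and $b \in \R$ with $\Cvxb(\pi) \ge \int c_0 \diff\pi + b \ge -\norm{c_0}_\infty \TV(\pi) + b$. Adding the superlinear $\Cvxa$ to this linear lower bound yields a superlinear function: $(\Cvxb+\Cvxa)(\pi_n)/\TV(\pi_n) \to +\infty$ whenever $\TV(\pi_n)\to\infty$. I expect this step to be the main obstacle. The statement only makes sense once $\Cvxb$ inherits growth from $\Cvxa$, and that inheritance comes from this normalization rather than from any intrinsic inequality. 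The relation $\Engy \ge \inf \Engy$ only gives $\Cvxb \le \Cvxa - \inf\Engy$, which points the wrong way for bounding $\Cvxb^*$.

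\textbf{Step 2 (uniform upper bound on balls).} Fix $r>0$. By superlinearity there is $C_r$ with $\Cvxb(\pi) \ge (r+1)\TV(\pi) - C_r$ for all $\pi$. This follows from the definition by contradiction along a sequence with $\TV(\pi_n)\to\infty$, together with the affine minorant on bounded sets. Then for $\norm{c}_\infty \le r$,
\begin{equation*}
\int c \diff\pi - \Cvxb(\pi) \;\le\; r\,\TV(\pi) - (r+1)\TV(\pi) + C_r \;\le\; C_r .
\end{equation*}
Taking the supremum over $\pi$ gives $\Cvxb^*(c) \le C_r$ on the ball of radius $r$. In particular $\Cvxb^*$ is finite everywhere, since properness of $\Cvxb$ gives $\Cvxb^* > -\infty$.

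\textbf{Step 3 (conclusion).} $\Cvxb^*$ is convex, being a supremum of affine functionals of $c$. It is finite everywhere and bounded above on every ball by Step 2. By the classical result quoted at the outset, it is therefore locally Lipschitz, and in particular continuous for the uniform norm. No separate argument for lower semi-continuity is needed.
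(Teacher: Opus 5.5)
Your Steps 2 and 3 are sound, but Step 1 is not a legitimate reduction. Replacing $(\Cvxa,\Cvxb)$ by $(2\Cvxa,\Cvxb+\Cvxa)$ preserves $\Engy$. However, the lemma is a claim about the conjugate of the \emph{given} $\Cvxb$, and $(\Cvxb+\Cvxa)^*$ is a different function, so its continuity tells you nothing about $\Cvxb^*$.

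No argument can repair this, because the statement read literally is false. Take $\Cvxa=\TV^2$, which is superlinear, and $\Cvxb\equiv 0$. Then $\Cvxb^*(c)=\sup_\pi\int c\,\diff\pi=\iota_{\{0\}}(c)$, which is not continuous.

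The paper's proof shows what is intended. It establishes $\operatorname{dom}\Cvxa^*=\C(\X\times\Y)$, i.e.\ continuity of the conjugate of the superlinear function itself, and the final ``$\Cvxb^*$'' is a slip. The remark preceding the lemma then applies this to each component after adding $\TV^2$ to both. You correctly sensed the mismatch, but you resolved it by changing the function rather than the conclusion.

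The fix is to drop Step 1 and run Steps 2--3 on $\Cvxa$ directly. Its affine minorant and superlinearity give $\Cvxa(\pi)\ge(r+1)\TV(\pi)-C_r$, hence $\Cvxa^*\le C_r$ on the ball $\norm{c}_\infty\le r$.

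With that change your proof is correct and differs mildly from the paper's.
\begin{itemize}
\item The paper uses the same estimate $\int c\,\diff\pi\le\norm{c}_\infty\TV(\pi)$ only to show $\Cvxa^*(c)<+\infty$ pointwise. It then invokes the theorem that a lower semi-continuous convex function on a Banach space is continuous on the interior of its domain, which is a Baire-category result.
\item You make the bound uniform on balls and use the elementary fact that a convex function bounded above near a point is continuous there. This avoids the Baire-category argument and yields local Lipschitz continuity with explicit constants.
\end{itemize}

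One minor point: Hahn--Banach gives an affine minorant whose linear part is a continuous functional $\ell$ on $(\meas{\X\times\Y},\TV)$. This is not necessarily integration against some $c_0\in\C(\X\times\Y)$ unless lower semi-continuity is meant in the weak-* sense. Since you only use the bound $\ell(\pi)\ge-\norm{\ell}\,\TV(\pi)$, nothing changes.
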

Since adding the same term to $\Cvxa$ and $\Cvxb$ do not modify their difference, we can always enforce the DC decomposition to have superlinear terms by adding the function $\pi \mapsto \TV(\pi)^2$ to both $\Cvxa$ and $\Cvxb$.
Therefore, we can always enforce the continuity of the convex conjugates $\Cvxa^*$ and $\Cvxb^*$ without any additional hypothesis -- a crucial requirement for our reformulation theorems.

When $\X$ and $\Y$ are finite spaces, the set $\meas{\X \times \Y}$ is identified with a Euclidean matrix space: if $\Engy$ is of class $\C^2$, \cite{hartman1959functions,hiriart1985generalized} proved that it admits an exact DC decomposition over the whole space.
In this case, the DC hypothesis is satisfied without any approximation.
We refer to \cite{hiriart1985generalized} for other characterizations of DC decomposable functions.

\paragraph{Factorizability of quadratic operators.} In the discrete case, the spectral theorem ensures that quadratic operators always admit a factorization like \cref{eq:loss_factorized} which is given by the decomposition of the kernel matrix $[k(x_i,y_j,x_k,y_l)]_{(ij),(kl)}$ in positive and negative eigenspaces.
Similarly, since the images of $A$ and $B$ are isomorphic to Euclidean spaces $\R^\DD$ and $\R^\EE$, the singular value decompositions of each matrix $[A(x_i,y_j)_d]_{ij}$ and $[B(x_i,y_j)_e]_{ij}$ (with $1 \leq d \leq \DD$ and $1 \leq e \leq \EE$) provides a block-separable factorization of QM objectives in the finite setting.

When $\alpha$ and $\beta$ are continuous, this reasoning does not hold anymore.
Indeed, spectral decompositions are notions related to Hilbert spaces, whereas $\meas{\X \times \Y}$ is only a Banach space.
However, when using entropic regularization, the $\Reg\QM$ optimization can actually be restricted to the space of measures with squared-integrable density with respect to $\alpha \otimes \beta$:
\begin{equation*}
    \mathcal{M}^{L^2}(\alpha\otimes\beta) := \left\{ \pi \in \meas{\X \times \Y} ~~ \text{ s.t. } ~ \exists f \in L^2(\X \times \Y, \alpha\otimes\beta), \quad \pi = f \diff\alpha\otimes\beta  \right\}.
\end{equation*}
Since $L^2(\X \otimes \Y, \alpha\otimes\beta)$ is a Hilbert space, this allows us to diagonalize the kernel and prove:
\begin{restatable}{proposition}{existencediagqm}
\label{prop:existence_diag_qm}
If $\Reg$ contains an entropic term, $k$ is continuous and $\alpha,\beta$ have compact support, then there exists two linear operators $A, B : \mathcal{M}_+^{L^2}(\alpha\otimes\beta) \to L^2(\X \times \Y, \alpha\otimes \beta)$ such that:
\begin{equation*}
    \Reg\QM(\alpha, \beta) = \inf_{\pi \in \mathcal{M}_+^{L^2}(\alpha\otimes\beta)} \tfrac{1}{2} \norm{A[\pi]}^2 - \tfrac{1}{2} \norm{B[\pi]}^2 + \Reg(\pi).
\end{equation*}
with $\mathcal{M}_+^{L^2}(\alpha\otimes\beta)$ denoting the subset of positive measures of  $\mathcal{M}^{L^2}(\alpha\otimes\beta)$.
\end{restatable}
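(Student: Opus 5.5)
The plan is to use the entropic term to restrict the minimization to measures with densities in $L^2$, diagonalize the kernel as a compact self-adjoint operator on $L^2(\alpha\otimes\beta)$, and split it into its positive and negative spectral parts.

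\textbf{Step 1: reduction to densities.} Since $\Reg$ contains $\varepsilon\KL(\pi\Vert\alpha\otimes\beta)$, any $\pi$ with $\Reg(\pi)<\infty$ is absolutely continuous with respect to $\alpha\otimes\beta$, so $\pi = f\,\diff\alpha\otimes\beta$ with $f\ge 0$ and $f\log f\in L^1$. The plan is to show that the infimum over such $\pi$ equals the infimum over $f\in L^2_+$. For this I would truncate, taking $f_n=\min(f,n)$ (renormalized if mass constraints are present, e.g.\ by mixing with $\alpha\otimes\beta$). By monotone convergence, $\KL(f_n)\to\KL(f)$. The quadratic term $\int k\,f_n\otimes f_n$ converges because $k$ is continuous, hence bounded on the compact support, and $f_n\to f$ in $L^1$. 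The marginal penalties converge by convexity and lower semicontinuity, together with an upper bound obtained along the convex mixing path.

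\textbf{Step 2: the kernel as an operator.} Define $K:L^2(\alpha\otimes\beta)\to L^2(\alpha\otimes\beta)$ by $(Kf)(x,y)=\int k(x,y,x',y')f(x',y')\,\diff\alpha\otimes\beta(x',y')$. Because $k$ is continuous and symmetric on a compact set of finite measure, $K$ is a self-adjoint Hilbert--Schmidt operator. The spectral theorem gives $K=\sum_i\lambda_i e_i e_i^\top$ with $(e_i)$ orthonormal. Set $K_+=\sum_{\lambda_i>0}\lambda_i e_ie_i^\top$ and $K_-=\sum_{\lambda_i<0}|\lambda_i|e_ie_i^\top$, and take $A=K_+^{1/2}$ and $B=K_-^{1/2}$. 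Both are well-defined bounded operators $L^2\to L^2$, and
\begin{equation*}
\tfrac12\scal{f,Kf}=\tfrac12\norm{Af}^2-\tfrac12\norm{Bf}^2 .
\end{equation*}
Identifying $\pi=f\,\diff\alpha\otimes\beta$ with $f$, we have $\Engy(\pi)=\tfrac12\scal{f,Kf}_{L^2}$, which gives the claimed identity.

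\textbf{Main obstacle.} Step 2 is routine. The delicate part is Step 1, which needs approximation in the objective value and not just in measure. I expect the hard case to be hard marginal constraints combined with entropy, such as balanced EOT, where truncation breaks the constraint $\pi\in\Pi(\alpha,\beta)$. There I would use the convex combination $(1-t)\pi_n+t\,\alpha\otimes\beta$ or a Sinkhorn-type rescaling of $f_n$, so that the marginals are restored while the density stays bounded. Alternatively, one could argue that minimizers of the entropic problem have bounded densities: their Gibbs form $e^{(f\oplus g-c)/\varepsilon}$ has bounded potentials, since the effective cost is continuous on a compact set. That argument would make the approximation unnecessary.
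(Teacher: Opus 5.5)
Your proposal follows the paper's route (use the entropic term to restrict to $L^2$ densities, diagonalize the self-adjoint Hilbert--Schmidt operator induced by $k$, and split its positive and negative spectral parts), and it is more careful than the paper on both steps. First, taking $A=K_+^{1/2}$ and $B=K_-^{1/2}$ is exactly what $\tfrac12\scal{f,Kf}=\tfrac12\norm{Af}^2-\tfrac12\norm{Bf}^2$ requires, whereas the paper's operators carry weights $\abs{\lambda_i}$ rather than $\abs{\lambda_i}^{1/2}$ and so yield $\sum_i\lambda_i^2\scal{e_i,f}^2$. Second, you prove convergence of the objective values, which the paper only asserts from the density of $L^2$ in $L^1$.

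One step needs fixing for hard marginal constraints: the mixture $(1-t)\pi_n+t\,\alpha\otimes\beta$ has first marginal $(1-t)(\pi_n)_1+t\alpha\neq\alpha$, so mixing does not restore the constraint. Instead, either add the bounded-density deficit correction $\bigl(\alpha-(\pi_n)_1\bigr)\otimes\bigl(\beta-(\pi_n)_2\bigr)/\delta_n$, with $\delta_n$ the mass removed by truncation, or use your Gibbs-form argument. The latter works once you note that a minimizer exists by weak-* compactness and, by first-order optimality, solves the entropic OT problem with the continuous cost $\int k(\cdot,\cdot,x',y')\diff\pi^*(x',y')$, hence has a bounded density.
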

The setting is slightly different from the hypotheses of our article, but our duality results can be adapted to this $L^2$ case.
It evidences the naturality of our factorization property when working with entropic-regularized QM, suggesting that the reformulations we propose in \cref{section:quadratic_matching_case} are more than a numerical trick and display a fundamental property of quadratic matching.

\subsection{Detail of the Block-Separable Case}
\label{appendix:block_separable}

Theorem~\ref{thm:qot_as_registration} states the existence of embedding functions such that $\Reg\QM$ is equivalent to our new formula.
In the separable case, these embeddings correspond to the functions $\Phi, \Psi$ of \cref{definition:separable}, but the block-separable case is more involved: we explain how to build them from the block-decomposition of the operators $A$ and $B$. 
Following the notations of \cref{definition:separable}, we denote the embeddings of each block operator $A_i$ and $B_j$ as follows:
\begin{align*}
    A_{i}(x,y) = \Phi_{A,i}(x) \Psi_{A,i}(y)^\top, \\
    B_{j}(x,y) = \Phi_{B,j}(x) \Psi_{B,j}(y)^\top.
\end{align*}

\paragraph{The finite-dimensional case.}
In the finite-dimensional case, the vectors $\Phi_B(x)$, $\Psi_B(y)$, $\Phi_A(y)$ and $\Psi_B(y)$ are obtained by concatenating the corresponding block embeddings as follows:
\begin{equation}
\label{eq:block_embeddings}
\begin{split}
    \Phi_A(x) = \begin{pmatrix} \Phi_{A,1}(x) \\ \dots \\ \Phi_{A,\KK}(x) \end{pmatrix}, & \quad\quad
    \Psi_A(y) = \begin{pmatrix} \Psi_{A,1}(y) \\ \dots \\ \Psi_{A,\KK}(y) \end{pmatrix}, \\
    \Phi_{B}(x) = \begin{pmatrix} \Phi_{B,1}(x) \\ \dots \\ \Phi_{B,\LL}(x) \end{pmatrix}, & \quad\quad
    \Psi_B(y) = \begin{pmatrix} \Psi_{B,1}(y) \\ \dots \\ \Psi_{B,\LL}(y) \end{pmatrix}.    
\end{split}
\end{equation}
The space of linear operators $U$ corresponds to block-diagonal matrices $\Gamma = {\rm diag}(\Gamma_1,\dots, \Gamma_{\LL})$, and the scalar product of \cref{thm:qot_as_registration} coincides with the following block-wise matrix operation:
\begin{equation*}
\scal{\Gamma,B(x,y)} = \scal{\Phi_B(x), \Gamma \Psi_B(y)} = 
    \begin{pmatrix}
        \Phi_{B,1}(x) & \dots & \Phi_{B,\LL}(x) 
    \end{pmatrix}
    \begin{pmatrix}
        \Gamma_1 & \dots & 0 \\
        \vdots & \ddots & \vdots \\
        0 & \dots & \Gamma_{\LL}
    \end{pmatrix}
    \begin{pmatrix}
        \Psi_{B,1}(y) \\ \vdots \\ \Psi_{B,\LL}(y) 
    \end{pmatrix}.
\end{equation*}
Similarly, the convex potentials take the form $\Xi = {\rm diag}(\Xi_1,\dots, \Xi_{\KK})$, so that:
\begin{equation*}
\scal{\Xi,A(x,y)} = \scal{\Phi_A(x), \Xi \Psi_A(y)} = 
    \begin{pmatrix}
        \Phi_{A,1}(x) & \dots & \Phi_{A,\KK}(x) 
    \end{pmatrix}
    \begin{pmatrix}
        \Xi_1 & \dots & 0 \\
        \vdots & \ddots & \vdots \\
        0 & \dots & \Xi_{\KK}
    \end{pmatrix}
    \begin{pmatrix}
        \Psi_{A,1}(y) \\ \vdots \\ \Psi_{A,\KK}(y) 
    \end{pmatrix}.
\end{equation*}

\paragraph{Algebraic formalization.} We rigorously define this finite-dimensional intuition of block-wise decomposition and generalizes it to Hilbert-valued operators $A$ and $B$.
Given a sequence of Hilbert vectors $u_1 \in \Hilb_1,\dots, u_\LL \in \Hilb_\LL$ living in distinct spaces, we use the following matrix notation to denote their direct sum (the natural formalization of vector concatenation):
\begin{equation*}
   \begin{pmatrix} u_1\\ 
     \dots \\ 
    u_\LL \\
    \end{pmatrix} := u_1 \oplus \dots \oplus u_\LL ~~\in \Hilb_1 \oplus \dots\oplus \Hilb_\LL. 
\end{equation*}
It comes with the natural scalar product:
\begin{equation*}
    \scal{   \begin{pmatrix} u_1\\ 
     \dots \\ 
    u_\LL \\
    \end{pmatrix},  
    \begin{pmatrix} v_1\\ 
     \dots \\ 
    v_\LL \\
    \end{pmatrix} } = \scal{u_1,v_1} + \dots + \scal{u_\LL, v_\LL}.
\end{equation*}
For any $\Hilb_\X := \Hilb_{\X,1} \oplus \dots \oplus \Hilb_{\X,\LL}$ and $\Hilb_\Y := \Hilb_{\Y,1} \oplus \dots \oplus \Hilb_{\Y,\LL}$,
let us denote the set of \emph{block-wise} Hilbert-Schmidt operators from $\Hilb_\Y$ to $\Hilb_\X$ as:
\begin{equation*}
    \HS_{\oplus}(\Hilb_\Y, \Hilb_\X) := \HS(\Hilb_{\Y,1}, \Hilb_{\X,1}) \oplus \dots \oplus \HS(\Hilb_{\Y,n}, \Hilb_{\X,\LL}) \subset \HS(\Hilb_\Y, \Hilb_\X),
\end{equation*}
where any $\Gamma = \Gamma_1,\dots, \Gamma_n \in  \HS_{\oplus}(\Hilb_\Y, \Hilb_\X)$ acts linearly on $\Hilb_\Y$ as follows:
\begin{equation*}
\Gamma 
\begin{pmatrix}
    u_1 \\
    \dots \\
    u_\LL
\end{pmatrix} = 
\begin{pmatrix}
    \Gamma_1 u_1 \\
    \dots \\
    \Gamma_\LL u_\LL
\end{pmatrix} \in \Hilb_\X.
\end{equation*}
We now assume that $A$ and $B$ are block-separable in the sense of \cref{definition:separable}.
First, note that block-separability is equivalent to the existence of separable functions $A_i : \X \times \Y \to \Hilb_{A,i}$ and $B_j: \X \times \Y \to \Hilb_{B,j}$ such that, for all $x \in \X$ and all $y \in \Y$:
\begin{equation*}
 A(x,y) = \begin{pmatrix} A_1(x,y)\\ 
     \dots \\ 
    A_\KK(x,y) \\
    \end{pmatrix} 
    \quad \text{ and } \quad  B(x, y) = \begin{pmatrix} B_1(x,y) \\ 
     \dots \\ 
    B_\LL(x,y) \\
    \end{pmatrix}.
\end{equation*}
Since each $A_i$ and $B_j$ is separable, we can write for all $x \in \X$, $y \in \Y$ and all $1 \leq i \leq \KK$, $1 \leq j \leq \LL$:
\begin{align*}
    A_{i}(x,y) = \Phi_{A,i}(x) \Psi_{A,i}(y)^\top & &\text{ where } & &\Phi_{A,i}(x) \in \Hilb^{\Phi}_{A,i} && \text{ and } && \Psi_{A,i}(y) \in \Hilb^{\Psi}_{A,i}, \\
    B_{j}(x,y) = \Phi_{B,j}(x) \Psi_{B,j}(y)^\top && \text{ where } && \Phi_{B,j}(x) \in \Hilb^{\Phi}_{B,j} && \text{ and } && \Psi_{B,j}(y) \in \Hilb^{\Psi}_{B,j}.
\end{align*}
Therefore, for each $x \in \X$ and $y \in \Y$, we have:
\begin{equation}
\label{eq:block_separable_ops}
    A(x, y) = \begin{pmatrix} \Phi_{A,1}(x)\Psi_{A,1}(y)^{\top} \\ 
     \dots \\ 
    \Phi_{A,K}(x)\Psi_{A,K}(y)^{\top} \\
    \end{pmatrix} 
    \quad \text{ and } \quad  B(x, y) = \begin{pmatrix} \Phi_{B,1}(x)\Psi_{B,1}(y)^{\top} \\ 
     \dots \\ 
    \Phi_{B,L}(x)\Psi_{B,L}(y)^{\top} \\
    \end{pmatrix}.
\end{equation}
For all $x \in \X$ and $y \in \Y$, we define the concatenated embeddings $\Phi_A(x)$,$\Psi_A(y)$,$\Phi_{B}(x)$ and $ \Psi_B(y)$ as in \cref{eq:block_embeddings} (where the matrix notation now represent the direct sum of the rows).
With these definitions, \cref{thm:qot_as_registration} becomes:

\begin{restatable}{theorem}{qotasregistrationblockseparable}
\label{thm:qot_as_registration_blockseparable}
Let assume that $A$ and $B$ are block-separable with concatenated embeddings $\Phi_{A} \in \Hilb^{\Phi}_A$, $\Psi_{A} \in \Hilb^{\Psi}_A$, $\Phi_{B} \in \Hilb^{\Phi}_B$ and $\Psi_{B} \in \Hilb^{\Psi}_B$.
Let $\mathbf{R}$ and $\mathbf{S}$ be as in \cref{prop:outer_gamma}.
Then, 
\begin{equation*}
\Reg\QM(\alpha,\beta) ~~=~~  \min_{\Gamma \in \HS_{\oplus}(\Hilb^{\Psi}_B, \Hilb^\Phi_B)} ~~ \min_{\pi \in \measplus{\X \times \Y}}\; - \int \scal{\Phi_B(x), \Gamma \Psi_B(y)} \diff \pi(x,y) + \mathbf{R}(\pi) + \mathbf{S}(\Gamma).
\end{equation*}
Similarly, with the notations of \cref{theorem:tolanddualitygeneral}, the following equivalences hold:
\begin{multline*}
    \Reg\QM(\alpha, \beta)  ~~ = ~~ \inf_{\Gamma \in  \HS_{\oplus}(\Hilb^{\Psi}_B, \Hilb^\Phi_B)}~~ \sup_{\Xi \in  \HS_{\oplus}(\Hilb^{\Psi}_A, \Hilb^\Phi_A)} ~~ \sup_{\phi \in \Phi} ~~  \mathcal{G}(\Gamma, \Xi, \phi) \\
    \text{with} \quad  \mathcal{G}(\Gamma, \Xi, \phi) :=
            \mathcal{D}_\Reg\bigl(\phi;\, c_{\Gamma, \Xi},\, \alpha, \beta\bigr) + \tfrac{1}{2} \norm{\Gamma}^2 - \tfrac{1}{2} \norm{\Xi}^2
\end{multline*}
and
\begin{multline*}
    \Reg\QM(\alpha, \beta)  ~~ = ~~ \inf_{\Gamma \in  \HS_{\oplus}(\Hilb^{\Psi}_B, \Hilb^\Phi_B)}~~ \sup_{\Xi \in  \HS_{\oplus}(\Hilb^{\Psi}_A, \Hilb^\Phi_A)} ~~ \min_{\pi \in \measplus{\X \times \Y}} ~~ 
     \mathcal{F}(\Gamma, \Xi, \phi), \\
    \text{with} \quad \mathcal{F}(\Gamma, \Xi, \phi) := \int c_{\Gamma, \Xi} \diff\pi + \Reg(\pi) + \tfrac{1}{2} \norm{\Gamma}^2 - \tfrac{1}{2} \norm{\Xi}^2,
\end{multline*}
the effective cost $c_{\Gamma, \Xi}$ being equal to:
\begin{equation*}
    c_{\Gamma,\Xi}(x,y) = - \scal{\begin{pmatrix}
        \Phi_A(x) \\ \Phi_B(x) \end{pmatrix},\begin{pmatrix}
        -\Xi \Psi_A(y) \\ \Gamma \Psi_B(y) \end{pmatrix}}.
\end{equation*}
\end{restatable}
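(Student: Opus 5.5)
The plan is to reduce everything to \cref{prop:outer_gamma}, which we take as given. It yields \cref{eq:outer_gamma} with $\Gamma$ ranging over $\img{B}$, and \cref{eq:quad_minmax} with $\Xi \in \img{A}$ and effective cost $c_{\Gamma,\Xi}=\scal{\Xi,A(x,y)}-\scal{\Gamma,B(x,y)}$. Two steps remain. First, identify the Hilbert space $\Hilb_B = \Hilb_{B,1}\oplus\dots\oplus\Hilb_{B,\LL}$ with a subspace of $\HS_{\oplus}(\Hilb^\Psi_B,\Hilb^\Phi_B)$, so that each pairing $\scal{\Gamma,B(x,y)}$ becomes $\scal{\Phi_B(x),\Gamma\Psi_B(y)}$. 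Second, show that enlarging the optimization domain from $\img{B}$ (resp.\ $\img{A}$) to the whole block-wise HS space does not change the optimal values.

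\textbf{Identification.} For each block $j$, $B_j(x,y)=\Phi_{B,j}(x)\Psi_{B,j}(y)^\top$ is a rank-one HS operator, and for any $\Gamma_j\in\HS(\Hilb^\Psi_{B,j},\Hilb^\Phi_{B,j})$ the Frobenius/HS pairing gives
\[
\scal{\Gamma_j,\Phi_{B,j}(x)\Psi_{B,j}(y)^\top}_{\HS}=\scal{\Phi_{B,j}(x),\Gamma_j\Psi_{B,j}(y)}.
\]
We check this on rank-one $\Gamma_j = uv^\top$, where both sides equal $\scal{u,\Phi_{B,j}(x)}\scal{v,\Psi_{B,j}(y)}$, and extend by linearity and density. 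Summing over blocks with the direct-sum inner product gives $\scal{\Gamma,B(x,y)}=\scal{\Phi_B(x),\Gamma\Psi_B(y)}$, and likewise for $A$. Block-separability guarantees $\norm{B[\pi]}^2=\sum_j\norm{B_j[\pi]}^2$, so $B[\pi]$ can be taken to be the block-diagonal element $(B_1[\pi],\dots,B_\LL[\pi])\in\HS_\oplus$, with the HS norm matching. The cost $c_{\Gamma,\Xi}$ then rewrites exactly as the stated concatenated inner product.

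\textbf{Enlarging the domains.} For the primal statement and the outer infimum over $\Gamma$, write $\Gamma=\Gamma_\parallel+\Gamma_\perp$, with $\Gamma_\parallel$ the orthogonal projection onto $\overline{\img{B}}\subset\HS_\oplus$. Since $\int\scal{\Gamma,B(x,y)}\diff\pi=\scal{\Gamma,B[\pi]}$ and $B[\pi]\in\img{B}$, the linear term depends only on $\Gamma_\parallel$. Meanwhile $\tfrac12\norm{\Gamma}^2=\tfrac12\norm{\Gamma_\parallel}^2+\tfrac12\norm{\Gamma_\perp}^2$, so the infimum forces $\Gamma_\perp=0$. Alternatively, one can minimize directly in $\Gamma$ for fixed $\pi$: $\min_\Gamma -\scal{\Gamma,B[\pi]}+\tfrac12\norm{\Gamma}^2=-\tfrac12\norm{B[\pi]}^2$, attained at $\Gamma=B[\pi]\in\img{B}$. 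This recovers $\Engy_\Reg$ and hence the equality, and the $\min$ is attained whenever the original problem is.

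For the inner $\sup_\Xi$ the argument is symmetric. The term $\mathcal{D}_\Reg(\phi;c_{\Gamma,\Xi},\alpha,\beta)$ depends on $\Xi$ only through $\Xi\mapsto\scal{\Xi,A(\cdot,\cdot)}$, which is blind to components orthogonal to $\overline{\img{A}}$. Here we use that $\scal{\Xi_\perp,A(x,y)}=0$ pointwise: testing against Dirac measures $\delta_{(x,y)}$ shows $A(x,y)=A[\delta_{(x,y)}]\in\img{A}$. The penalty $-\tfrac12\norm{\Xi}^2$ strictly prefers $\Xi_\perp=0$. The primal version with $\mathcal{F}$ follows identically, or from the $\mathcal{G}$ version via \cref{eq:dualizable_reg}.

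\textbf{Main obstacle.} The delicate point is closure and attainment: \cref{prop:outer_gamma} is stated over $\img{B}$, while the projection argument naturally lands in $\overline{\img{B}}$, and in infinite dimensions the two may differ. I would handle this by noting that the values are unchanged by passing to the closure, via continuity of the objectives in $\Gamma$ and $\Xi$ (the cost is continuous in $\Gamma$ uniformly over compact $\X\times\Y$, and $\mathcal{D}_\Reg$, $\Cvxb^*$ are continuous by \cref{lemma:continuity_conjugate}). The outer value is attained at $\Gamma=B[\pi^*]$ for an optimal plan $\pi^*$, which exists by weak-* compactness.
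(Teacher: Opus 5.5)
Your proposal is correct and follows the same route as the paper: reduce to \cref{prop:outer_gamma} and rewrite each block pairing via $\scal{\Gamma_j,\Phi_{B,j}(x)\Psi_{B,j}(y)^\top}=\scal{\Phi_{B,j}(x),\Gamma_j\Psi_{B,j}(y)}$, and likewise for $A$. Your explicit argument for enlarging the domains from $\img{B}$ and $\img{A}$ to the full block-wise spaces $\HS_{\oplus}$ supplies a step that the paper's proof passes over silently. For $\Gamma$ this is most cleanly done via $\min_{\Gamma}\{-\scal{\Gamma,B[\pi]}+\tfrac12\norm{\Gamma}^2\}=-\tfrac12\norm{B[\pi]}^2$, attained at $\Gamma=B[\pi]$; for $\Xi$ it is your orthogonal splitting.
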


We provide the proof of this result in \cref{appendix:proofs}, along with the proof of \cref{thm:qot_as_registration}.
This formalism properly defines the block-wise operations that define the costs involved in this theorem.
It notably highlights the fact that the space of transforms on which optimization is performed only contains a small fraction of all linear operators $\HS(\Hilb^{\Psi}_A, \Hilb^\Phi_A)$ and $\HS(\Hilb^{\Psi}_B, \Hilb^\Phi_B)$: this allows for efficient numerical implementations that we detail in \cref{appendix:subsection:block_update_implementation}.

\subsection{Explicit decompositions of Gromov-Wasserstein kernels}
\label{appendix:gw_decomposition}

As discussed in \cref{appendix:general_matching}, for discrete inputs, eigendecomposition of the QM kernel is sufficient to obtain block-sparse embeddings.
However, this process is costly and intractable on inputs with moderate size.
For Gromov-Wasserstein and its variants, the kernel structure allows to compute embeddings without relying on its diagonalization, making GW particularly suited to computational applications.
In some settings, the block-separability also holds in the infinite case, evidencing the naturality of the block-separable formalism when applied to Gromov-Wasserstein.
In the following, we factorize GW kernels of the form:
\begin{equation*}
    k(x,y,x',y') =  \bigl(c_\X(x,x') - c_\Y(y,y')\bigr)^2,
\end{equation*}
depending on the properties of the base costs $c_\X$ and $c_\Y$.
Our computations rely on the following characterization lemma:
\begin{restatable}{lemma}{kernelfactorization}
    \label{lem:kernel_factorization}
    Let $k : \C((\X \times \Y)^2)$ be symmetric and let $A$, $B$ be two integral operators. The identity
    \begin{equation*}
        \frac{1}{2} \int k\,\diff\pi\otimes\pi \;=\; \frac{1}{2} \norm{A[\pi]}^2 - \frac{1}{2} \norm{B[\pi]}^2
    \end{equation*}
    holds for all $\pi \in \meas{\X \times \Y}$ if and only if, for all $x,x'\in \X$ and $y,y' \in \Y$:
    \begin{equation*}
        k(x,y,x',y') \;=\; \scal{A(x,y), A(x',y')} - \scal{B(x,y), B(x',y')}.
    \end{equation*}
\end{restatable}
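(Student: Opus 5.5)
The plan is to prove both directions by expanding the squared norms as double integrals and then using the fact that a continuous symmetric kernel is determined by its quadratic form on measures. First I compute, for any $\pi \in \meas{\X \times \Y}$, using bilinearity of the Hilbert inner product and the definition $A[\pi] = \int A(x,y)\diff\pi(x,y)$ (a Bochner integral, well defined since $A$ is continuous on a compact set):
\begin{equation*}
\norm{A[\pi]}^2 = \scal{\int A(x,y)\diff\pi(x,y), \int A(x',y')\diff\pi(x',y')} = \int \scal{A(x,y),A(x',y')}\diff\pi\otimes\pi .
\end{equation*}
Exchanging the inner product and the integrals is justified because $\scal{\cdot, h}$ is a continuous linear functional, so it commutes with Bochner integration. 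The same formula holds for $B$. Set $\tilde k(x,y,x',y') := \scal{A(x,y),A(x',y')} - \scal{B(x,y),B(x',y')}$. This is a continuous symmetric function on $(\X\times\Y)^2$, and the identity of quadratic forms now reads $\int (k - \tilde k)\diff\pi\otimes\pi = 0$ for all $\pi$.

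The direction ``$\Leftarrow$'' is then immediate. For ``$\Rightarrow$'', let $h := k - \tilde k$, which is continuous and symmetric under $(x,y)\leftrightarrow(x',y')$. I polarize: apply the identity to $\pi = \mu + \nu$, $\mu$ and $\nu$. Subtracting gives $\int h \diff\mu\otimes\nu + \int h\diff\nu\otimes\mu = 0$, and by symmetry of $h$ this yields $\int h\diff\mu\otimes\nu = 0$ for all signed measures $\mu,\nu$. Taking Dirac masses $\mu = \delta_{(x,y)}$ and $\nu = \delta_{(x',y')}$ gives $h(x,y,x',y') = 0$ pointwise.

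The main subtlety is just checking that the setting allows this. Symmetry of $k$ must be understood as $k(x,y,x',y') = k(x',y',x,y)$, and $\tilde k$ automatically has this symmetry. The identity is assumed for all signed measures in $\meas{\X\times\Y}$, which includes Diracs and their sums. If it were only assumed on positive measures, the same polarization still works, since $\mu+\nu$, $\mu$ and $\nu$ are all positive when $\mu,\nu$ are Diracs. Continuity of $k$ is not even needed once Diracs are available, but it guarantees that all integrals are well defined.
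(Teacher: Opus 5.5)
Your proof is correct and follows essentially the same route as the paper. The easy direction expands the squared norms by bilinearity. The converse tests the identity on $\delta_{(x,y)}$, $\delta_{(x',y')}$ and their sum, then uses the symmetry of $k$ to recover the off-diagonal values; you polarize for general $\mu,\nu$ before specializing to Diracs and justify the inner-product/Bochner-integral exchange, which the paper leaves implicit, but the argument is the same.
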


\paragraph{GW with inner products and squared norms.} When $c_\X$ and $c_\Y$ are scalar products, $k$ is a polynomial.
After rearrangement of the monomials, we obtain the desired decomposition.

\begin{proposition}
    \label{prop:scalar_decomp}
      The scalar product kernel is equivalent to:
    \begin{multline*}
        \bigl(\scal{x, x'}^2 - \scal{y, y'}^2\bigr)^2 \;=\; \scal{A(x,y), A(x',y')} - \scal{B(x,y), B(x',y')} \\
        \text{ with } \quad 
        A(x,y) =
        \begin{pmatrix}
            xx^\top                              \\
            yy^\top                              \\
        \end{pmatrix}
         \text{ and }
        B(x,y) = \sqrt{2}\, xy^\top,
    \end{multline*}
    the matrix notation denoting the direct sum of the cells (as detailed in \cref{appendix:block_separable}).
\end{proposition}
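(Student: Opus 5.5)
The plan is to verify the pointwise kernel identity directly, since by \cref{lem:kernel_factorization} this pointwise identity is exactly what is needed for the quadratic energy to factorize as $\tfrac12\norm{A[\pi]}^2-\tfrac12\norm{B[\pi]}^2$. I will read the statement as the GW kernel with inner-product base costs, $c_\X(x,x')=\scal{x,x'}$ and $c_\Y(y,y')=\scal{y,y'}$, so that $k(x,y,x',y')=\bigl(\scal{x,x'}-\scal{y,y'}\bigr)^2$. This is the kernel that the proposed $A$ and $B$ reproduce. With the inner squares as literally written, the left-hand side would be a degree-$8$ polynomial in $(x,x')$, while the right-hand side only has degree $4$, so the identity cannot hold in that form.

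The one tool needed is the Hilbert--Schmidt (Frobenius) identity for rank-one operators:
\begin{equation*}
\scal{ab^\top, cd^\top}_{\HS}=\scal{a,c}\scal{b,d}.
\end{equation*}
I will justify it from the definition of $ab^\top$ in the Notations paragraph by computing $\mathrm{tr}\bigl((ab^\top)^* cd^\top\bigr)$ on an orthonormal basis; in $\R^{\DD\times\EE}$ it is simply $\sum_{de}a_db_ec_dd_e$. I will also use that the inner product on a direct sum is the sum of the blockwise inner products, as recalled in \cref{appendix:block_separable}.

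With these, the computation is short:
\begin{equation*}
\scal{A(x,y),A(x',y')}=\scal{xx^\top,x'x'^\top}+\scal{yy^\top,y'y'^\top}=\scal{x,x'}^2+\scal{y,y'}^2,
\end{equation*}
\begin{equation*}
\scal{B(x,y),B(x',y')}=2\scal{xy^\top,x'y'^\top}=2\scal{x,x'}\scal{y,y'}.
\end{equation*}
Subtracting gives $\scal{x,x'}^2-2\scal{x,x'}\scal{y,y'}+\scal{y,y'}^2=\bigl(\scal{x,x'}-\scal{y,y'}\bigr)^2$, which is the desired identity.

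\cref{lem:kernel_factorization} then yields the factorization of $\Engy$ for every $\pi\in\meas{\X\times\Y}$; the symmetry of $k$ required there is clear. Finally, each block is separable in the sense of \cref{definition:separable}, with respective embeddings $(x,x)$, $(y,y)$ and $(\sqrt2\,x,y)$. Hence $A$ and $B$ are block-separable, so \cref{thm:qot_as_registration_blockseparable} applies with finite-dimensional embeddings.

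The only real obstacle is the mismatch between the literal statement and the construction; the algebra itself is routine once the Frobenius identity is in hand.
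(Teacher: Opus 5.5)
Your proposal is correct and follows the paper's (implicit) argument: expand $\bigl(\scal{x,x'}-\scal{y,y'}\bigr)^2$ and recognize each monomial as a Hilbert--Schmidt inner product via $\scal{ab^\top,cd^\top}=\scal{a,c}\scal{b,d}$, and you are right that the inner squares in the displayed statement are a typo for the inner-product GW kernel. One minor slip in your closing aside, which the statement does not require: the blocks $xx^\top$ and $yy^\top$ are separable through $(\Phi,\Psi)=(xx^\top,1)$ and $(1,yy^\top)$, as in the paper, not through $(x,x)$ and $(y,y)$, since the second factor must be a function of $y$ alone.
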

The block embeddings of $A$ and $B$ are obtained by separating the $x$ and $y$ terms of the formula:
\begin{equation*}
    \Phi_A(x) = \begin{pmatrix}
        xx^\top \\ 1
    \end{pmatrix},
    \quad   
    \Psi_A(y) = \begin{pmatrix}
        1 \\ y y^\top
    \end{pmatrix},
    \quad \Phi_B(x) = \sqrt[4]{2} x 
    \quad \text{ and } \quad \Phi_B(y) = \sqrt[4]{2} y.
\end{equation*}

A similar decomposition can be obtained when costs are differences of scalar products:

\begin{proposition}
    \label{prop:general_decomp}
    Let $c_\X: \X \times \X \rightarrow \R$ and $c_\Y: \Y \times \Y \rightarrow \R$ such that for all $x,x' \in \X$ and $y,y' \in \Y$:
    \begin{equation*}
     c_\X(x,x') = \scal{x_+, x'_+} - \scal{x_-, x'_-} 
     \quad  \text{and} \quad 
     c_\Y(y,y') = \scal{y_+, y'_+} - \scal{y_-, y'_-}.
    \end{equation*}
    Then,
    \begin{multline*}
        \bigl(c_\X(x, x') - c_\Y(y,y')\bigr)^2 \;=\; \scal{A(x,y), A(x',y')} - \scal{B(x,y), B(x',y')} \\
        \text{ with } \quad 
    A(x, y) = \begin{pmatrix} x_+ x_+^\top \\ 
     x_- x_-^\top \\ 
    y_+ y_+^\top  \\
    y_- y_-^\top  \\
    \sqrt{2}x_- y_+^\top \\
    \sqrt{2} x_+ y_-^\top
    \end{pmatrix} 
    \quad \text{ and } \quad 
    B(x,y) = \begin{pmatrix} 
    \sqrt{2} x_+ x_-^\top  \\
    \sqrt{2} y_+ y_-^\top  \\
    \sqrt{2} x_+  y_+^\top \\
    \sqrt{2} x_- y_-^\top
\end{pmatrix}.
\end{multline*}
\end{proposition}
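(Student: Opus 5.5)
The plan is to verify the claimed identity pointwise by a direct polynomial expansion, after which \cref{lem:kernel_factorization} turns it into the factorization $\tfrac12\int k\,\diff\pi\otimes\pi = \tfrac12\norm{A[\pi]}^2-\tfrac12\norm{B[\pi]}^2$. The proposition itself only asserts the pointwise kernel identity, so the lemma is needed only for this downstream use.

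The computation rests on two elementary facts.
\begin{itemize}
\item For rank-one operators, the Hilbert--Schmidt inner product factorizes:
\begin{equation*}
\scal{ab^\top, cd^\top}_{\HS} = \scal{a,c}\scal{b,d}.
\end{equation*}
This follows by evaluating the trace of $(ab^\top)^* cd^\top$ on an orthonormal basis, or by bilinearity in the Euclidean case.
\item Inner products of direct sums are sums of the blockwise inner products, as recalled in \cref{appendix:block_separable}.
\end{itemize}
With these, $\scal{A(x,y),A(x',y')}$ and $\scal{B(x,y),B(x',y')}$ become explicit sums of products of the scalar quantities
\begin{equation*}
P_\pm := \scal{x_\pm, x'_\pm}, \qquad Q_\pm := \scal{y_\pm, y'_\pm}.
\end{equation*}

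In this notation, $c_\X(x,x') - c_\Y(y,y') = P_+ - P_- - Q_+ + Q_-$. Expanding the square gives the four diagonal terms
\begin{equation*}
P_+^2 + P_-^2 + Q_+^2 + Q_-^2
\end{equation*}
plus six cross terms. Three cross terms carry a plus sign, namely $2P_+Q_-$ and $2P_-Q_+$, while four carry a minus sign, namely $2P_+P_-$, $2Q_+Q_-$, $2P_+Q_+$ and $2P_-Q_-$. (Only the sign pattern matters here.)

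The blocks of $A$ then account for the positive part. The first four blocks give the diagonal terms, for instance $\scal{x_+x_+^\top, x'_+x'^\top_+} = P_+^2$. The last two blocks give the positive cross terms, for instance $\scal{\sqrt2 x_-y_+^\top, \sqrt2 x'_-y'^\top_+} = 2P_-Q_+$. Similarly, the four blocks of $B$ give exactly $2P_+P_-$, $2Q_+Q_-$, $2P_+Q_+$ and $2P_-Q_-$. Subtracting $\scal{B,B'}$ from $\scal{A,A'}$ therefore reproduces the expansion term by term.

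I do not expect a genuine obstacle; the argument is a bookkeeping verification. The only points requiring care are the signs of the six cross terms and checking that every product term is used exactly once, the $\sqrt2$ factors supplying the factor $2$ in each cross term. A minor technical point arises in infinite dimensions, where $x_\pm, y_\pm$ are Hilbert vectors. There I will note that each block $ab^\top$ is rank-one and hence Hilbert--Schmidt, so all inner products are well defined and the HS identity above applies verbatim.
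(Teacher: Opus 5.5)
Your proof is correct and takes the same route the paper indicates: the paper gives no separate proof beyond ``rearrangement of the monomials,'' which amounts to expanding $(P_+-P_--Q_++Q_-)^2$ and matching each monomial to a block via $\scal{ab^\top,cd^\top}=\scal{a,c}\scal{b,d}$. One slip to fix: there are \emph{two} positive cross terms, not three, as your own list $2P_+Q_-$, $2P_-Q_+$ shows (these two plus the four negative ones make the six cross terms).
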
    

Once again, we can build the embeddings $\Phi_A(x)$, $\Psi_A(y)$, $\Phi_B(x)$ and $\Psi_B(y)$ by separating the terms in $x$ and $y$ adequately.
Although difference of scalar products are never used explicitly in practice, most GW instances can be reduced to this cost structure.  
Notably, squared Euclidean norms can be rewritten in this form:
\begin{equation*}
    \norm{x - x'}^2 =  \scal{x_+, x'_+} - \scal{x_-, x'_-}, \quad \text{ with } \quad x_+ = \tfrac{1}{\sqrt{2}} (\norm{x}^2 + 1) 
    ~ \text{ and } ~ x_- = \begin{pmatrix}
        \sqrt{2} x \\ \tfrac{1}{\sqrt{2}} (\norm{x}^2 - 1)
    \end{pmatrix}.
\end{equation*}
Therefore, the squared norm case (where $c_\X, c_\Y = \norm{\cdot}^2$) admits an explicit block-separable factorization as well.

\begin{remark}[The balanced case]
When we restrict $\pi$ to the set $\coupling{\alpha,\beta}$ of couplings (defined in \cref{{eq:couplings}}), the kernel decompositions can be further simplified.
Indeed, most terms of $A$ and $B$ depend only on $x$ or on $y$ and remain constant when the marginals of $\pi$ are fixed. 
Since they do not play a role in the minimization, we can simplify the previous computations in the balanced case:
\begin{gather*}
 \frac{1}{2} \int (\scal{x,x'} - \scal{y,y'})^2\diff\pi\otimes\pi = {\rm Cste}(\alpha,\beta) - \norm{\int xy ^\top \diff\pi}^2, \\
 \frac{1}{2} \int (\norm{x-x'}^2 - \norm{y-y'})^2\diff\pi\otimes\pi = {\rm Cste}(\alpha,\beta) - 4 \norm{\int xy ^\top \diff\pi}^2 - 2 \int \norm{x}^2 \norm{y}^2 \diff\pi,
\end{gather*}
giving the same factorization as in \cite{rioux2024entropic,zhang2024gromov}.
Similarly, the objective of \cref{prop:general_decomp} becomes:
\begin{equation*}
    \frac{1}{2} \int (c_\X(x,x') - c_\Y(y,y'))^2\diff\pi\otimes\pi = {\rm Cste}(\alpha,\beta) + \norm{\int \begin{pmatrix} x_-y_+^\top \\ x_+y_-^\top \end{pmatrix} \diff\pi}^2 - \norm{\int \begin{pmatrix} x_+y_+^\top \\ x_-y_-^\top \end{pmatrix} \diff\pi}^2.
\end{equation*}

\end{remark}

\paragraph{Embeddable costs.} The previous decompositions are specific to scalar products or squared norms, but they can be extended to any costs that can be represented as a squared norm or scalar product in a certain embedding space.
We refer to such costs as \emph{embeddable costs}, that encompasses two important categories: costs of positive and conditionally of negative type. 
We briefly introduce them here, refering to \cite[Appendix C]{bekka2007kazhdan} for a detailed introduction to this rich theory.

\begin{definition}[Cost of positive type]
\label{def:positive_type}
A cost $c: \X \times \X \rightarrow \R^+$ is of positive type if it is symmetric and is such that for any finite collection of points $x_1,\dots, x_n \in \X$ and coefficients $\lambda_1,\dots,\lambda_n \in \R$:
\begin{equation*}
\sum_{i=1}^n \lambda_i \lambda_j c(x_i, x_j) \geq 0.
\end{equation*}
\end{definition}

\begin{definition}[Cost conditionally of negative type]
\label{def:cnt}
A non-negative cost $c: \X \times \X \rightarrow \R^+$ is conditionally of negative type (CNT) if it is symmetric, satisfies $c(x, x) = 0$ for all $x \in \X$ and 
is such that for any finite collection of points $x_1,\dots, x_n \in \X$ and coefficients $\lambda_1,\dots,\lambda_n \in \R$:
\begin{equation*}
  \sum_{i=1}^n \lambda_i = 0 ~~\Longrightarrow~~ \sum_{i=1}^n \lambda_i \lambda_j c(x_i, x_j) \leq 0~.
\end{equation*}
\end{definition}

Many important examples fall in these categories.
First, note that these categories are stable by sum and rescaling by positive coefficients.
Moreover, if two costs $c_1, c_2$ are positive, then $c_1 \cdot c_2$ is positive as well: in particular, any non-negative integer power $c^k$ of a positive cost is positive.
On the other hand, tree distances, hyperbolic geodesic distances, spherical distances and all (non-squared) Euclidean distances are CNT.
If $c$ is CNT, then $c^p$ is CNT as well if $0 < p \leq 2$: in particular, $c(x,x') = \norm{x-x'}^p$ if CNT for any positive $p$ smaller than $2$.
Lastly, CNT and positive costs are connected by the following characterization:

\begin{theorem}[\cite{schoenberg1938metric}]
Let $c \in \C(\X \times \X)$ be symmetric and satisfies $c(x,x) = 0$ for all $x \in \X$.
Let $x_0 \in \X$.
The following conditions are equivalent:
\begin{enumerate}
    \item The cost $c$ is CNT,
    \item The cost $\Tilde{c}: x,x' \in \X\times\X \mapsto c(x,x_0) + c(x_0,x') - c(x,x')$ is positive,
    \item The cost $e^{-\lambda c}: x,x' \in \X\times\X \mapsto \exp(-\lambda c(x,x'))$ is positive for every $\lambda \geq 0$.
\end{enumerate}
\end{theorem}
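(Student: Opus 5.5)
We prove Schoenberg's theorem through the chain $1\Rightarrow 2\Rightarrow 1$ for the first equivalence and $1\Leftrightarrow 3$ for the second. The one structural fact we rely on is that positive type is preserved by sums, products (Schur product theorem), positive scalings and pointwise limits. All of these follow directly from \cref{def:positive_type}; the product case uses the positivity of the Hadamard product of positive semidefinite Gram matrices.

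\emph{Step 1 ($1\Leftrightarrow 2$).} Fix points $x_1,\dots,x_n$ and coefficients $\lambda_1,\dots,\lambda_n$, and set $\lambda_0 := -\sum_i\lambda_i$ at the base point $x_0$. Since $\sum_{i=0}^n\lambda_i=0$ and $c(x_0,x_0)=0$, a direct expansion gives
\begin{equation*}
\sum_{i,j\geq 1}\lambda_i\lambda_j\tilde c(x_i,x_j) \;=\; -\sum_{i,j\geq 0}\lambda_i\lambda_j c(x_i,x_j).
\end{equation*}
The cross terms $2\lambda_0\sum_j\lambda_j c(x_0,x_j)$ reproduce exactly the terms $c(x_i,x_0)+c(x_0,x_j)$ in $\tilde c$. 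This identity immediately gives $1\Rightarrow 2$. For the converse, take any family with $\sum_i\lambda_i=0$ that does not involve $x_0$. Adjoin $x_0$ with coefficient $0$; the identity then shows $\sum\lambda_i\lambda_j c\le 0$, since the $\tilde c$-sum is nonnegative. Hence $2\Rightarrow 1$.

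\emph{Step 2 ($1\Rightarrow 3$).} By step 1, $\tilde c$ is positive. Write
\begin{equation*}
e^{-\lambda c(x,x')} \;=\; e^{-\lambda c(x,x_0)}\,e^{-\lambda c(x_0,x')}\,e^{\lambda\tilde c(x,x')}.
\end{equation*}
The series $e^{\lambda\tilde c}=\sum_k \lambda^k\tilde c^k/k!$ is a pointwise-convergent sum of positive kernels, so it is positive by the stability properties. Multiplying by $f(x)f(x')$ with $f=e^{-\lambda c(\cdot,x_0)}$ preserves positivity, since the substitution $\lambda_i\mapsto\lambda_i f(x_i)$ leaves the quadratic form nonnegative.

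\emph{Step 3 ($3\Rightarrow 1$).} Take coefficients with $\sum_i\lambda_i=0$. Then
\begin{equation*}
0 \;\le\; \sum_{i,j}\lambda_i\lambda_j\,\frac{e^{-\lambda c(x_i,x_j)}-1}{\lambda},
\end{equation*}
because the subtracted constant $1$ contributes $(\sum_i\lambda_i)^2/\lambda=0$. Letting $\lambda\to 0^+$, the right-hand side tends to $-\sum_{i,j}\lambda_i\lambda_j c(x_i,x_j)$, which is therefore nonnegative. This is the CNT inequality.

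The only delicate point is in Step 2: the Schur product theorem is needed so that $\tilde c^k$ is positive, and passing the series to the limit requires positivity to be closed under pointwise limits. Both are immediate in the finite-sum formulation of \cref{def:positive_type}, because the quadratic form is a finite sum and therefore commutes with the limit. The symmetry and $c(x,x)=0$ hypotheses are used exactly in the expansion of Step 1.
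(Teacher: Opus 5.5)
The paper does not prove this statement: it quotes it from \cite{schoenberg1938metric} and points to \cite[Appendix C]{bekka2007kazhdan} for background. Your argument is correct and is essentially the standard proof found there. It uses the base-point identity for $1\Leftrightarrow 2$, the factorization $e^{-\lambda c(x,x')}=e^{-\lambda c(x,x_0)}\,e^{-\lambda c(x_0,x')}\,e^{\lambda\tilde c(x,x')}$ together with the Schur product theorem for $1\Rightarrow 3$, and the derivative at $\lambda=0^+$ for $3\Rightarrow 1$.

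Two details are worth adding, checked against the paper's own definitions.

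First, the paper's definition of CNT also requires $c\geq 0$. This is automatic once the conditional inequality holds: taking $n=2$ with coefficients $(1,-1)$ gives $-2c(x_1,x_2)\leq 0$. You should state this explicitly to close both $2\Rightarrow 1$ and $3\Rightarrow 1$.

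Second, you rightly read ``positive type'' as a real-valued symmetric kernel with positive semidefinite Gram matrices, even though the paper's definition writes the codomain as $\R^+$. Under that literal reading, $1\Rightarrow 2$ would be false. For example, $c(x,x')=(x-x')^2$ on $[-1,1]$ with $x_0=0$ gives $\tilde c(x,x')=2xx'$, which equals $-2$ at $(1,-1)$.

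A minor point: the restriction to families ``not involving $x_0$'' in $2\Rightarrow 1$ is unnecessary. Your identity is purely algebraic and holds for any points, so setting $\lambda_0=0$ suffices.
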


A remarkable property of these costs is that they are indeed embeddable:

\begin{theorem}[\cite{aronszajn1950theory}]
\label{theorem:positive_embedding}
Let $c \in \C(\X \times \X)$ be symmetric. The cost $c$ is positive if and only if there exists a real Hilbert space $\Hilb$ with a continuous mapping $\varphi : \X \rightarrow \Hilb$ such that:
\begin{equation*}
   \text{For all } x,x' \in \X,  \quad c(x, x') ~=~ \scal{\varphi(x), \varphi(x')}_{\Hilb}.
\end{equation*}
\end{theorem}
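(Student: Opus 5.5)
The plan is to prove both directions directly, using the classical reproducing-kernel (Moore--Aronszajn) construction for the nontrivial one. The easy direction is that an embedding implies positivity. If $c(x,x') = \scal{\varphi(x),\varphi(x')}_\Hilb$, then for any points $x_1,\dots,x_n$ and coefficients $\lambda_1,\dots,\lambda_n$,
\begin{equation*}
\sum_{i,j}\lambda_i\lambda_j c(x_i,x_j) = \norm{\sum_i \lambda_i \varphi(x_i)}_\Hilb^2 \geq 0.
\end{equation*}
Symmetry of $c$ is inherited from the symmetry of the real inner product. Continuity of $\varphi$ is not needed for this direction.

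For the converse, I would build $\Hilb$ from formal finite combinations. Let $\Hilb_0$ be the real vector space of finitely supported functions $\lambda : \X \to \R$, written $\sum_i \lambda_i \delta_{x_i}$. Equip it with the symmetric bilinear form
\begin{equation*}
\scal{\textstyle\sum_i \lambda_i\delta_{x_i}, \sum_j \mu_j \delta_{y_j}}_0 := \sum_{i,j}\lambda_i\mu_j c(x_i,y_j).
\end{equation*}
This form is well defined, because it depends only on the function $\lambda$, not on how the sum is written. It is positive semidefinite exactly by \cref{def:positive_type}. The Cauchy--Schwarz inequality holds for positive semidefinite forms, by the usual discriminant argument applied to $t \mapsto \scal{u+tv,u+tv}_0 \geq 0$. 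Consequently the null set $N = \{u : \scal{u,u}_0 = 0\}$ equals $\{u : \scal{u,v}_0 = 0 \ \forall v\}$, and in particular it is a linear subspace. The form therefore descends to a genuine inner product on $\Hilb_0/N$. I then take $\Hilb$ to be the completion of this quotient and define $\varphi(x)$ as the class of $\delta_x$, so that $\scal{\varphi(x),\varphi(x')}_\Hilb = c(x,x')$ by construction.

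It remains to check that $\varphi$ is continuous. Expanding the norm gives
\begin{equation*}
\norm{\varphi(x)-\varphi(x')}_\Hilb^2 = c(x,x) - 2c(x,x') + c(x',x').
\end{equation*}
As $x' \to x$, the right-hand side tends to $c(x,x) - 2c(x,x) + c(x,x) = 0$ by continuity of $c$ on $\X\times\X$. So $\varphi$ is continuous for the metric topology on $\X$ and the norm topology on $\Hilb$.

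I expect the only delicate step to be the passage to the quotient, that is, showing $N$ is a subspace so that the inner product is well defined and definite. Cauchy--Schwarz for semi-inner products handles this. Completeness then comes for free from the abstract completion.

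If a more concrete, separable $\Hilb$ is wanted when $\X$ is compact with a finite Borel measure of full support, a second route is available. Mercer's theorem gives $c(x,x') = \sum_k \sigma_k e_k(x)e_k(x')$ with $\sigma_k\geq 0$, converging uniformly, and one can take $\varphi(x) = (\sqrt{\sigma_k}e_k(x))_k \in \ell^2$. I would only mention this as a remark, since the construction above needs no assumption beyond continuity.
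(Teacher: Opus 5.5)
Your proof is correct and is essentially the classical Moore--Aronszajn reproducing-kernel construction. The paper gives no proof of its own and simply cites \cite{aronszajn1950theory} for this. Your quotient of finitely supported combinations by the null space is isometric to the span of the sections $c(\cdot,x)$. Completing it abstractly, rather than as a space of functions on $\X$, is a harmless simplification, since the statement only asks for the existence of $\Hilb$.

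One caveat, which is the paper's slip rather than yours. \cref{def:positive_type} literally declares $c : \X \times \X \to \R^+$, but your easy direction only yields positive semidefiniteness, not nonnegativity. Nonnegativity genuinely can fail: $c(x,x') = xx'$ on $\X = [-1,1]$ is an inner-product cost that takes negative values. So the theorem holds only for the standard notion of positive type, which is the one you implicitly and correctly used.
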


\begin{theorem}[\cite{schoenberg1938metric}]
\label{theorem:cnt_embedding}
Let $c \in \C(\X \times \X)$ be symmetric and satisfies $c(x,x) = 0$ for all $x \in \X$. The cost $c$ is CNT if and only if there exists a real Hilbert space $\Hilb$ and a continuous mapping $\varphi : \X \rightarrow \Hilb$ such that:
\begin{equation*}
    \text{For all } x,x' \in \X,     \quad c(x, x') ~=~ \norm{\varphi(x) - \varphi(x')}_{\Hilb}^2.
\end{equation*}
\end{theorem}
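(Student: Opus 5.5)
The plan is to reduce the statement to the two results quoted just before it: the Schoenberg characterization (CNT $\Leftrightarrow$ $\Tilde c$ positive) and the Aronszajn embedding theorem (\cref{theorem:positive_embedding}). The two directions are handled separately, each by a direct algebraic identity relating $c$ to inner products.

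\emph{($\Rightarrow$)} Assume $c$ is CNT and fix a base point $x_0 \in \X$. By Schoenberg's characterization, the kernel
\[
\Tilde c(x,x') = c(x,x_0) + c(x_0,x') - c(x,x')
\]
is of positive type. It is also symmetric and continuous, since $c$ is. Applying \cref{theorem:positive_embedding} gives a Hilbert space $\Hilb$ and a continuous map $\tilde\varphi : \X \to \Hilb$ with $\Tilde c(x,x') = \scal{\tilde\varphi(x),\tilde\varphi(x')}$. Expanding the squared norm, and using $c(x,x)=0$ and symmetry,
\[
\norm{\tilde\varphi(x)-\tilde\varphi(x')}^2 = \Tilde c(x,x) + \Tilde c(x',x') - 2\Tilde c(x,x') = 2c(x,x_0) + 2c(x',x_0) - 2\Tilde c(x,x') = 2c(x,x').
\]
Setting $\varphi = \tilde\varphi/\sqrt{2}$ then gives the required continuous embedding.

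\emph{($\Leftarrow$)} Assume $c(x,x') = \norm{\varphi(x)-\varphi(x')}^2$. Then $c$ is symmetric, nonnegative, and satisfies $c(x,x)=0$. Take points $x_1,\dots,x_n$ and coefficients with $\sum_i \lambda_i = 0$, and write $\varphi_i := \varphi(x_i)$. Expanding $\norm{\varphi_i-\varphi_j}^2 = \norm{\varphi_i}^2 + \norm{\varphi_j}^2 - 2\scal{\varphi_i,\varphi_j}$, the two terms depending on a single index vanish after summation because $\sum_i\lambda_i=0$. What remains is
\[
\sum_{i,j}\lambda_i\lambda_j c(x_i,x_j) = -2\Bigl\Vert \sum_i \lambda_i \varphi_i \Bigr\Vert^2 \le 0,
\]
so $c$ is CNT.

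The expected obstacle is a technical mismatch with \cref{def:positive_type}, which requires positive-type costs to take values in $\R^+$. The kernel $\Tilde c$ need not be nonnegative, so Aronszajn's theorem as quoted may not apply verbatim. I would first check that its proof only uses positive semidefiniteness of the Gram matrices. If it relies on more, I would instead build the Hilbert space directly as the reproducing-kernel Hilbert space: take the completion of finite combinations $\sum_i \lambda_i \Tilde c(x_i,\cdot)$ under the semi-inner product induced by $\Tilde c$, quotiented by null vectors, with $\tilde\varphi(x) = \Tilde c(x,\cdot)$. Continuity of $\tilde\varphi$ then follows from the identity $\norm{\tilde\varphi(x)-\tilde\varphi(x')}^2 = 2c(x,x')$ and the continuity of $c$, since $c(x,x) = 0$.
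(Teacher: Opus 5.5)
The paper gives no proof of this statement. It imports it as a classical result from \cite{schoenberg1938metric}, with a pointer to \cite[Appendix C]{bekka2007kazhdan}, so there is no in-paper argument to compare against. Your proof is correct and is the classical argument for Schoenberg's theorem. You pass to the kernel $\Tilde{c}$, realize it as a Gram kernel, and read off $\norm{\tilde\varphi(x)-\tilde\varphi(x')}^2 = 2c(x,x')$; the converse is the expansion of the square under $\sum_i \lambda_i = 0$.

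Your caveat about the $\R^+$ codomain in \cref{def:positive_type} is well founded. It also affects the quoted characterization you invoke first. For $c(x,x') = \norm{x-x'}^2$ on $\R$ one gets $\Tilde{c}(x,x') = 2(x-x_0)(x'-x_0)$, which takes negative values, so ``positive'' must be read as positive semidefinite there as well.

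The cleanest way to make your argument independent of both quoted theorems is to check this directly. Appending $x_0$ with weight $\lambda_0 := -\sum_{i\geq 1}\lambda_i$, and using $c(x_0,x_0)=0$ and symmetry, gives
\begin{equation*}
\sum_{i,j=1}^n \lambda_i\lambda_j\, \Tilde{c}(x_i,x_j) \;=\; -\sum_{i,j=0}^n \lambda_i\lambda_j\, c(x_i,x_j) \;\geq\; 0 .
\end{equation*}
After that, your reproducing-kernel construction finishes the proof, including the continuity argument via $c(x,x)=0$.
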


Although these embeddings are potentially infinite-dimensional, when working with a finite number of points, we can always restrict them to a finite-dimensional subset of the initial Hilbert space. 
In practice, they can be computed by diagonalizing the cost matrix defined by $c(x,x')$ in the positive case or by $c(x,x_0) + c(x_0,x') - c(x,x')$ in the CNT case.
For inputs of size $\NN$, these embeddings are $\NN$-dimensional.
Their dimension can then be reduced using approximation techniques, as we explain in \cref{appendix:subsection:acceleration_techniques}.

\paragraph{Arbitrary costs on finite spaces.} When $\X$ and $\Y$ are finite, the spectral theorem allows us to reexpress any arbitrary cost as a difference of scalar product, allowing us to apply the computations we made in the Euclidean case:
\begin{proposition}
\label{prop:diag_costs}
    Let assume that $\X$ is finite, and let $c: \X \times \X \longrightarrow \R$.
    Then, there exists two functions $\varphi_+ : \X \longrightarrow \R^{D_+}$ and $\varphi_- : \X \longrightarrow \R^{D_-}$ such that:  
    \begin{equation*}
        \text{For all } x, x' \in \X, \quad c(x,x') = \scal{\varphi_+(x), \varphi_+(x')} - \scal{\varphi_-(x), \varphi_-(x')}.
    \end{equation*}
\end{proposition}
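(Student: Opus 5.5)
The plan is to apply the spectral theorem to the symmetric matrix $C := [c(x_i,x_k)]_{ik}$. Take $\X = \{x_1,\dots,x_\NN\}$. The statement implicitly requires $c$ to be symmetric, since the right-hand side is symmetric in $(x,x')$. This is the setting used throughout the paper for base costs $c_\X$, $c_\Y$, and we assume it here. If symmetry were not assumed, the identity could not hold, so this hypothesis must be made explicit.

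By the spectral theorem, $C = \sum_{d=1}^{\NN} \lambda_d u_d u_d^\top$ with $(u_d)$ an orthonormal eigenbasis of $\R^\NN$ and real eigenvalues $\lambda_d$. We split the indices into $I_+ = \{d : \lambda_d > 0\}$ and $I_- = \{d : \lambda_d < 0\}$, discarding the zero eigenvalues. We then set $D_\pm = \abs{I_\pm}$ and define, for each point $x_i$,
\begin{equation*}
\varphi_+(x_i) = \bigl(\sqrt{\lambda_d}\,(u_d)_i\bigr)_{d \in I_+} \in \R^{D_+}, \qquad \varphi_-(x_i) = \bigl(\sqrt{-\lambda_d}\,(u_d)_i\bigr)_{d \in I_-} \in \R^{D_-}.
\end{equation*}

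Expanding entrywise, $c(x_i,x_k) = C_{ik} = \sum_d \lambda_d (u_d)_i (u_d)_k$. The terms with $d \in I_+$ sum to $\scal{\varphi_+(x_i),\varphi_+(x_k)}$, the terms with $d \in I_-$ sum to $-\scal{\varphi_-(x_i),\varphi_-(x_k)}$, and the remaining terms vanish. This gives the claimed identity. If $I_+$ or $I_-$ is empty, we take $D_\pm = 0$ with the convention $\R^0 = \{0\}$, or simply pad with a zero coordinate.

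There is no real obstacle in this argument. The only delicate point is the symmetry assumption. Equivalently, one can write $C = V\,{\rm diag}(\lambda)\,V^\top$ and use $C = P_+^\top P_+ - P_-^\top P_-$, where the columns of $P_\pm$ are the vectors $\varphi_\pm(x_i)$. This matrix form is also the one used in the implementation described in \cref{appendix:expe_setup}, where truncating to the largest $\abs{\lambda_d}$ gives the dimension-reduced variants.
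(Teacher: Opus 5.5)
Your proof is correct and follows the same route as the paper, which justifies this proposition only by invoking the spectral theorem on the cost matrix $[c(x_i,x_k)]_{ik}$ and splitting its positive and negative eigenspaces, exactly as you do. You are also right that symmetry of $c$ has to be assumed, since the right-hand side is symmetric; the statement omits it, and the paper tacitly relies on its standing assumption that the base costs $c_\X$ and $c_\Y$ are symmetric.
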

Therefore, up to the diagonalization of the costs $c_\X$ and $c_\Y$, any finite instance of Gromov-Wasserstein fits in the setting of \cref{prop:general_decomp}.
Unlike diagonalizing the operator $k$ explicitely, diagonalizing the costs is numerically tractable and has a negligible computational cost compared to solving the matching problem: this result is the key to efficiently apply our solver to GW. 

\paragraph{Weighted kernels.} We finally provide an extension of our computations to reweighted kernels: 
\begin{equation*}
    k_w(x,y,x',y') = \scal{w_\X(x), w_\X(x')}  \scal{w_\Y(y), w_\Y(y')}  k(x,y,x',y'),
\end{equation*}
where $k$ is block-separable.
The additional scalar terms adjust the matching penalization based on an affinity score between $x \leftrightarrow x'$ and $y \leftrightarrow y'$, offering some flexibility in the modeling of GW problems and allowing to prioritize local geometry more than plain GW.
In notably models the case of frature matching of \cref{fig:fracture_matching}, where $w_\X(x) = [\one_{x \in C_k}]_k$ is a hot encoding of the fragment $C_k$ to which each point $x$ belongs (and without weighting on the points of the non-fractured target $\Y$).

The embeddings of $k_w$ can be obtained from those of $k$ by taking the tensor product of each block coordinate with the weight vector $w_\X$ and $w_\Y$:
\begin{gather*}
   \Phi_A(x) = \begin{pmatrix}
        \Phi_{A,1}(x) \\ \dots \\ \Phi_{A,\KK}(x)
    \end{pmatrix}
    \quad \implies \quad
    \Phi_{A,w}(x) =  \begin{pmatrix}
        \Phi_{A,1}(x)w_\X(x)^\top \\ \dots \\ \Phi_{A,\KK}(x) w_\X(x)^\top
    \end{pmatrix}, \\
    \Phi_B(x) = \begin{pmatrix}
        \Phi_{B,1}(x) \\ \dots \\ \Phi_{B,\LL}(x)
    \end{pmatrix}
    \quad \implies \quad
    \Phi_{B,w}(x) =  \begin{pmatrix}
        \Phi_{B,1}(x)w_\X(x)^\top \\ \dots \\ \Phi_{B,\LL}(x) w_\X(x)^\top
    \end{pmatrix},
\end{gather*}
the embeddings $\Psi_{A}(y)$ and $\Psi_{B}(y)$ being defined similarly.
More generally this construction also provides a recipe to compute the embeddings of product kernels $k' = k_1 k_2$, where both $k_1$ and $k_2$ are block-separable, since any block-separable kernel can be expressed as a sum of terms of the form $ \scal{w_\X(x), w_\X(x')}  \scal{w_\Y(y), w_\Y(y')} =  \scal{w_\X(x)w_\Y(y)^\top, w_\X(x')w_\Y(y')^\top}$.

\section{Detailed Implementation}
\label{appendix:detailed_implementation}
This section is dedicated to the practical implementation of our proposed solver.
Specifically, we detail how the high-level description of \cref{fig:schema_algo} translates into practical and modular algorithm, how to perform computations as efficiently as possible, and how to initialize the potentials $\Gamma_0, \Xi_0$ in practice.
 
In the following, we assume that the inputs measures $\alpha$ and $\beta$ are discrete with $\NN$ and $\MM$ points, respectively:
\begin{equation*}
    \alpha = \sum_{i=1}^{\NN} \alpha_i \delta_{x_i} \quad \text{ and } \quad \beta = \sum_{j=1}^{\MM} \beta_j \delta_{y_j}.
\end{equation*}
We also assume that the Hilbert spaces $\Hilb_A$ and $\Hilb_B$ are Euclidean, since the spaces $\img{A}$ and $\img{B}$ are generated by the images of the finite-dimensional space $\meas{\X \times \Y} \cong \R^{\NN \times \MM}$.

\subsection{Fast Implementation of Block Updates}
\label{appendix:subsection:block_update_implementation}

In the main article, we explain how block-separable kernels allow for efficient and modular implementations of our solver.
We provide detail about this fact here, explaining how to exploit the block structure to keep computations as fast as possible.

\paragraph{The separable case.} Before treating the general case, we first detail the algorithms when the operators $A$ and $B$ are separable to provide the intuition of this approach.
If these operators satisfy:
\begin{equation*}
    A(x,y) = \Phi_A(x)\Psi_A(y)^\top \quad \text{ and } \quad B(x,y) = \Phi_B(x)\Psi_B(y)^\top,
\end{equation*}
then for any plan $\pi$, $A[\pi] = {\rm Corr}_A[\pi]$ and $B[\pi] = {\rm Corr}_B[\pi]$ (following the notations of \cref{eq:primal_corr}).
Moreover, all $\Reg\OT$ computations take the scalar product form $c_{\Gamma,\Xi}(x,y) = - \scal{x_{\rm eff}, y_{\rm eff}}$ where $x_{\rm eff}$ and $y_{\rm eff}$ are given by:
\begin{equation*}
   x_{\rm eff} = \begin{pmatrix}
        \Phi_\Cvxa(x) \\ \Phi_\Cvxb(x) \end{pmatrix}
                \quad \text{ and } \quad 
                y_{\rm eff}=
    \begin{pmatrix} 
        -\Xi \Psi_\Cvxa(y) \\ \Gamma \Psi_\Cvxb(y) \end{pmatrix}.
\end{equation*}
\Cref{alg:ugw} only interacts with the $\Reg\OT$ solver through these parameters $x_{\rm eff}$, $y_{\rm eff}$ and through the cross-correlation operations involved in $A[\pi]$, $B[\pi]$.
Therefore, we can encode all these operations in a black-box $\Reg\OT$ solver specialized to costs of the form $c(x,y) = -\scal{x, y}$ that we denote by \texttt{Sink} (by reference to the Sinkhorn algorithm that serves as our $\Reg\OT$ solver in our practical implementation). 
Our method then takes the form of \cref{alg:ugw_separable}, 
\texttt{Sink.optimization\_step()} runs a few warm-started iterations of this solver and $\texttt{Sink.Corr(}\Phi, \Psi\texttt{)}$ returns $\int \Phi(x)\Psi(y)^\top \diff\pi(x,y)$.
Note that, in practice, OT solvers do not directly implement such cross-correlations, but rather transport integrals of the form:
\begin{equation}
\label{eq:transport_integrals}
    v(x) = \int u(y) \diff\pi(x,y) \quad \text{ or } \quad   v(y) = \int u(x) \diff\pi(x,y).
\end{equation}
Therefore, to compute the cross-correlation between two embedding vectors $\Phi(x) \in \R^\DD$ and $\Psi(y) \in \R^\EE$, two choices are available:
\begin{equation*}
    \int_\X \Phi(x) \left( \int_\Y (\Psi(y)^\top \diff \pi(x,y)) \right) \quad \text{ or } \quad  \int_\Y \Psi(y)^\top \left( \int_\X (\Phi(x) \diff \pi(x,y)) \right).
\end{equation*}
The inner integral bearing the main computational cost, it is more efficient to perform this integral on the vector with smallest dimension.
We detail this procedure in \cref{alg:cross_corr}, where \texttt{Sink.integrate} implements the integrals of \cref{eq:transport_integrals} (and is performed internally by the chosen OT backend).
When using linear-memory OT backends, the cross-correlation computation has a complexity of $O(\NN \MM \min(\DD, \EE) + (\NN + \MM) \DD \EE)$ in time and $O(\NN \DD + \MM \EE)$ in memory.

\begin{algorithm}[htb]
    \caption{$\Reg\QM$ solver (separable formulation)}
    \label{alg:ugw_separable}
    \KwIn{as in \cref{alg:ugw}, with $A$, $B$ given by $\Phi_A,\Psi_A$ and $\Phi_B, \Psi_B$; inner iterations $n_{inner}$.}
    \KwOut{$(\Gamma, \Xi)$ and the solver \texttt{Sink}, whose potentials encode the plan $\widehat\pi$.}
    Initialize $\Gamma_0$, $\Xi$ and the $\Reg\OT$ solver \texttt{Sink}\;
    $\texttt{Sink.x}\gets (\Phi_A(x), \Phi_B(x))^\top$\;
    \For{$t = 0, 1, 2, \dots$ until $\norm{\Gamma_{t+1} - \Gamma_t} \le \delta$}{
     \For{$l = 1, \dots, n_{inner}$}{
        $\texttt{Sink.y} \gets (-\Xi\Psi_A(y), \Gamma_t\Psi_B(y))^\top$\; \texttt{Sink.optimization\_step()}\;
        $\Xi \gets \Xi + \tau\,\bigl(\texttt{Sink.Corr(}\Phi_A(x), \Psi_A(y)\texttt{)} - \Xi\bigr)$\;
        }
        $\Gamma_{t+1} \gets \texttt{Sink.Corr(} \Phi_B(x), \Psi_B(y) \texttt{)}$\;
    }
\end{algorithm}

\begin{algorithm}[H]
    \caption{$\texttt{Sink.Corr}$ implementation.}
    \label{alg:cross_corr}
    \KwIn{Sinkhorn solver \texttt{Sink}; $\Phi \in \R^{\NN \times \DD}$; $\Psi \in \R^{\MM \times \EE}$.}
    \KwOut{$\Omega \in \R^{\DD \times \EE}$, cross-correlation of $\Phi$ and $\Psi$ for the current transport plan of \texttt{Sink}.} 
    \uIf{$\DD \leq \EE$}{
        $\overline{\Phi} \gets \texttt{Sink.integrate(} \Phi,~ \text{axis}=0 \texttt{)}$\;
        $\Omega \gets \sum_i \overline{\Phi}_{i} \Psi_i^\top $\;
    } 
    \uElse{
         $\overline{\Psi} \gets  \texttt{Sink.integrate(} \Psi,~\text{axis}=1 {)}$ \;
         $\Omega \gets \sum_j \Phi_j \overline{\Psi}_j^\top $\;
    }
\end{algorithm}

\Cref{alg:ugw_separable} shows that, in the separable case, all computations involve either standard matrix-vector operations or standard interactions with the black-box \texttt{Sink} solver, demonstrating the modularity of our method as advertised in the main article.
The block-separable case yields heavier calculations, but the structure is similar.
We explain it in the following, where we assume that $A$ and $B$ have the structure of \cref{eq:block_separable_ops} and $D_{A,i}$, $E_{A,i}$, $D_{B,j}$, $E_{B,j}$ are the dimensions of $\Phi_{A,i}$, $\Psi_{A,j}$, $\Phi_{B,j}$, $\Psi_{B,j}$.

\paragraph{Block cross-correlations.} The operators $A$ and $B$ can be expressed as block-wise cross-correlation operators:
\begin{equation*}
    A[\pi] = \mbox{Block-Corr}_A[\pi] := \begin{pmatrix} \Corr_{A,1}[\pi]\\ 
     \dots \\ 
    \Corr_{A,K}[\pi] \\
    \end{pmatrix} \quad \text{ and } B[\pi] = \mbox{Block-Corr}_B[\pi] := \begin{pmatrix} \Corr_{B,1}[\pi]\\ 
     \dots \\ 
    \Corr_{B,L}[\pi] \\
    \end{pmatrix}, 
\end{equation*}
where for all $i$ and $j$,
\begin{equation*}
\Corr_{A,i}[\pi] = \int \Phi_{A,i}(x) \Psi_{A,i}(y)^{\top}\diff\pi(x,y) \quad \text{ and } \quad \Corr_{B,j}[\pi] = \int \Phi_{B,j}(x) \Psi_{B,j}(y)^{\top}\diff\pi(x,y).
\end{equation*}
Therefore, each block of $A[\pi]$ and $B[\pi]$ can be computed using the same standard operation as for separable costs.

\paragraph{Effective cost parametrization.} Similarly, the effective cost $c_{\Gamma,\Xi}$ can be computed block-wise. 
Rather than applying the potentials $\Xi$ and $\Gamma$ on only one of the embeddings $\Phi_A(x)$ or $\Psi_B(y)$ as in \cref{eq:effective_cost_decomp}, we can split the computation over both sides so that: 
\begin{multline*}
    \scal{\Xi, A(x,y)} = \scal{\begin{pmatrix} \Xi_{\X,1} \Phi_{A,1}(x) \\ 
     \dots \\ 
    \Xi_{\X,K} \Phi_{A,K}(x) \\
    \end{pmatrix} , \begin{pmatrix} \Xi_{\Y,1} \Psi_{A,1}(y) \\ 
     \dots \\ 
    \Xi_{\Y,K} \Psi_{A,K}(y) \\
    \end{pmatrix}} \\ \text{ for any } \quad \Xi_{\X,i}, \Xi_{\Y,i} \quad \text{ such that } \quad \Xi = \begin{pmatrix} \Xi_{\X,1}^\top \Xi_{\Y,1}\\ \dots \\ \Xi_{\X,K}^\top \Xi_{\Y,K} \end{pmatrix}, 
\end{multline*}
with a similar formula for $ \scal{\Gamma, B(x,y)}$.
As before, we can write $c_{\Gamma,\Xi}(x,y) = - \scal{x_{\rm eff}, y_{\rm eff}}$, with
\begin{multline*}
x_{\rm eff} = \mbox{Block-Proj}_{\Gamma_\X, \Xi_\X} (\Phi_A(x), \Phi_B(x)) :=  \begin{pmatrix} \Xi_{\X,1} \Phi_{A,1}(x)\\ \dots \\ \Xi_{\X,K} \Phi_{A,K}(x) \\  \Gamma_{\X,1} \Phi_{B,1}(x) \\ \dots \\  \Gamma_{\X,L} \Phi_{B,L}(x) \end{pmatrix} \\  
\text{ and } \quad y_{\rm eff} = \mbox{Block-Proj}_{\Gamma_\Y, \Xi_\Y} (\Psi_A(y), \Psi_B(y)) :=  \begin{pmatrix} -\Xi_{\Y,1} \Psi_{A,1}(y)\\ \dots \\ -\Xi_{\Y,K} \Psi_{A,K}(y) \\  \Gamma_{\Y,1} \Psi_{B,1}(y) \\ \dots \\  \Gamma_{\Y,L} \Psi_{B,L}(y) \end{pmatrix}.
\end{multline*}
The decomposition of $\Xi$ and $\Gamma$ on both $x$ and $y$ allows to minimize the dimension of the vectors $x_{\rm eff}$ and $y_{\rm eff}$, since the cost of each Sinkhorn iteration scales linearly with this dimension.
For each block, we therefore choose $\Xi_{\X,i} = \Xi_i^{\top}$ and $\Xi_{\Y,j} = 1$  if $D_{A,i} > E_{A,i}$,  $\Xi_{\X,i} = 1$ and $\Xi_{\Y,j} = \Xi_i$ otherwise.
Using these definitions, our method now takes the form of \cref{alg:ugw_sinkhorn_block}.
 The complexity of each operation is equal to $O(N^2 (D_A \vee E_A + D_B \vee E_B))$ in time and $O(N (D_A \vee E_A + D_B \vee E_B))$ in memory, with:
\begin{equation*}
  D_A \vee E_A = \sum_j \min(D_{A,i}, E_{A,i}) \quad \text{ and } \quad   D_B \vee E_B = \sum_j \min(D_{B,j}, E_{B,j}), 
\end{equation*}
avoiding the quadratic dependency in the dimension of naive implementations.

\begin{algorithm}[H]
    \caption{$\Reg\QM$ solver (block-separable formulation)}
    \label{alg:ugw_sinkhorn_block}
    \KwIn{as in \cref{alg:ugw}, with $A$, $B$ given by $\Phi_A,\Psi_A$ and $\Phi_B, \Psi_B$; inner iterations $n_{inner}$.}
    \KwOut{$(\Gamma, \Xi)$ and the solver \texttt{Sink}, whose potentials encode the plan $\widehat\pi$.}
    Initialize $\Gamma_0$, $\Xi$ and the $\Reg\OT$ solver \texttt{Sink}\;
    \For{$t = 0, 1, 2, \dots$ until $\norm{\Gamma_{t+1} - \Gamma_t} \le \delta$}{
     \For{$l = 0, 1, 2, \dots n_{inner}$}{
        $\texttt{Sink.x} \gets \mbox{Block-Proj}_{\Gamma_k, \Xi_k}(\Phi_A(x), \Phi_B(x))$\;
        $ \texttt{Sink.y} \gets \mbox{Block-Proj}_{\Gamma_k, \Xi_k}(\Psi_A(y), \Psi_B(y))$\;
        \texttt{Sink.optimization\_step()}\;
         $\Xi \gets \Xi + \tau\,\bigl(\texttt{Sink.Block-Corr(}\Phi_A(x), \Psi_A(y)\texttt{)} - \Xi\bigr)$\;
        }
        $\Gamma_{k+1} \gets \texttt{Sink.Block-Corr(}\Phi_B(x), \Psi_B(y)\texttt{)}$\;
    }
\end{algorithm}

\begin{algorithm}[H]
    \caption{$\texttt{Sink.Block-Corr}$ implementation.}
    \label{alg:block_cross_corr}
    \KwIn{Sinkhorn solver \texttt{Sink}; $\Phi =(\Phi_1,\dots, \Phi_K)$; $\Psi = (\Psi_1, \dots, \Psi_K)$.}
    \KwOut{$\Omega = (\Omega_1,\dots, \Omega_K)$.} 
    \For{$k = 0, \dots, K$}{
        $\Omega_k \gets \texttt{Sink.Corr(} \Phi_k, \Psi_k\texttt{)}$\;
    }
    $\Omega \gets (\Omega_1,\dots, \Omega_K)$\;
\end{algorithm}

\subsection{Advanced Implementation Techniques} 
\label{appendix:subsection:acceleration_techniques}

\paragraph{Further memory optimization.} In \cref{prop:general_decomp}, some blocks embeddings of $A$ and $B$ are terms of the form $f(x)g(x)^\top$ ($x_+ x_+^\top$, $x_+ x_-^\top$, etc). 
In \cref{alg:ugw_sinkhorn_block}, the only operations involving these embeddings take the form:
\begin{equation*}
\int f(x)g(x)^\top \diff\pi(x, y) = \int f(x)g(x)^\top \diff\pi_1(x) 
\quad \text{ and } \quad 
f(x)g(x)^\top \Xi = f(x) \bigl(g(x)^\top \Xi \bigr).
\end{equation*}
These are standard computations and efficiently implemented in scientific computing libraries, we can compute such integrals when needed without storing the tensor product explicitly, simply representing the corresponding embedding as a tuple $(f(x), g(x))$.
This approach reduces the memory footprint of these embeddings from $\NN\dim(f)\dim(g)$ to $\NN (\dim(f) + \dim(g))$, significantly increasing scalability when $f$ and $g$ are high dimensional.
Using the same techniques on $y$ tems, this allows our Gromov-Wasserstein implementation to scale linearly in the input dimension for the Euclidean cases of \cref{prop:scalar_decomp,prop:general_decomp}.

\paragraph{Low-dimensional embedding computations.} The embeddings $\Phi_A(x)$, $\Psi_A(y)$, $\Phi_B(x)$ and $\Psi_B(y)$ of \cref{appendix:gw_decomposition} are usually not known explicitly but can be obtained from the diagonalization of the cost matrices $[c_\X(x_i,x_k)]_{ik}$ and $[c_\Y(y_j,y_l)]_{jl}$.
In general, these embeddings are high-dimensional and unsuited to large-scale applications. 
It is therefore beneficial to rely on dimension reduction techniques, such as truncated PCA.
The KeOps library \citep{charlier2021kernel} allows to perform truncated eigendecompositions efficiently and with linear memory footprint, preventing any memory overflow.
Alternatively, approximations methods like Generalized Nystr\"om \citep{nakatsukasa2020fast} provide low-dimensional approximations of these embeddings in linear time and memory.
The cost of these methods is negligible compared to the rest of our pipeline.

\paragraph{Acceleration with multiscaling.} The optimal potentials $\Gamma$ and $\Xi$ depend on the geometry of the inputs $\alpha$ and $\beta$, rather than their precise pointwise configuration.
Therefore, optimal potentials $(\Gamma^*_c, \Xi^*_c)$ obtained by replacing the inputs with coarse approximations $\alpha_c \approx \alpha$ and $\beta_c \approx \beta$ remain close to the true optimum $(\Gamma^*, \Xi^*)$.
This motivates a natural multiscale strategy: solve the coarse problem, then initialize the fine-scale problem using the coarse output $(\Gamma^*_c, \Xi^*_c)$.
Similarly, standard OT multiscaling techniques \citep{feydy2020fast} can be applied in order to extrapolate an initialization $\phi_0$ from the coarse dual parameters $\phi^*_c$.
This multiscaling scheme substantially reduces the number of iterations needed for convergence.
To construct the coarse approximation of the input, three choices are available.
We can take a random subsample of the original points, use a k-means clustering to approximate the fine-scale measures, or use a predefined subset of the initial inputs.

\paragraph{Mask-based sparsification.} In practice, optimal plans concentrate their mass on small fractions of all possible pairs, most of their coefficient being negligible.
Some OT solvers exploit this behavior by accepting a sparsity mask that indicates which coefficients are expected to be nearly zero: they ignore them during computation, significantly fastening the resolution of the problem.
In classical OT, two points that are far apart are unlikely to be matched, so that sparsity masks can be computed based on the position of points in the space.
For quadratic matching, this technique does not apply directly since the two inputs generally live in uncomparable spaces. 
However, the multiscale strategy provides a natural alternative: since the coarse plan $\pi^*_c$ approximates $\pi^*$, a near-zero entry $\pi^*_c(x_c,y_c) \approx 0$ indicates that any fine-scale pair $(x,y)$ with $x \approx x_c$ and $y \approx y_c$ is also unlikely to be matched.
This allows sparsity masks to be estimated from the coarse solution, further accelerating the fine-scale solution.
We describe the exact procedure in \cref{alg:sparse_mask}.

 \begin{algorithm}[H]
     \caption{Sparsity mask computation}
     \label{alg:sparse_mask}
     \KwIn{Coarse inputs $x^c_i, y^c_j$, fine inputs $x_i, y_j$; optimal coarse potentials $\Xi^c$, $\Gamma^c$; optimal coarse plan $\pi^c$; tolerance parameter $\delta > 0$.}
     \KwOut{Sparsity mask $M \in \{0, 1\}^{\NN \times \MM}$.}
     $\Tilde{x} \gets (\Phi_A(x), \Phi_B(x))^\top$\;
     $\Tilde{y} \gets (-\Xi^c\Psi_A(y), \Gamma^c\Psi_B(y))^\top$\;
     $\Tilde{x}^c \gets (\Phi_A(x^c), \Phi_B(x^c))^\top$\;
     $\Tilde{y}^c \gets (-\Xi^c\Psi_A(y^c), \Gamma^c\Psi_B(y^c))^\top$\;
     $I \gets [\arg\min_u \norm{\Tilde{x}_i - \Tilde{x}^c_u}^2]_i$\;
     $J \gets [\arg\min_v \norm{\Tilde{y}_j - \Tilde{y}^c_v}^2]_j$\;
     $M \gets [\pi^c_{I(i)J(j)} > \delta]_{ij}$.
 \end{algorithm}

\subsection{Initialization of potentials}
\label{appendix:initialization_potentials}

The initialization of the solver is a crucial element for matching problems. 
Indeed, since $\Reg\QM$ is non-convex, the output of our solver depends on the initial potential $\Gamma_0 \in \Hilb_B$.
Moreover, although the initialization of $\Xi_0$ does not influence the local minimum reached by our method, it can significantly accelerate convergence and must also be taken into account.
Our framework allows for various choices of initializations that we detail here.
We provide the formulas for $\Gamma_0$, which can be straightforwardly adapted to initializations of $\Xi_0$.
We focus on the separable case, since each block of a block-separable potential can be initialized using these formulas.  

\paragraph{Trivial initialization.} The classical choice in the literature is to initialize the optimization at the trivial plan $\pi = \alpha \otimes \beta$. 
We then choose $\Gamma_0$ as the corresponding cross-correlation:
\begin{equation*}
    \Gamma_0 = \left( \int \Phi(x)\diff\alpha(x) \right) \left( \int \Psi(y) \diff\beta(y)\right)^\top.
\end{equation*}
This choice has the advantage to not induce any bias in the optimization trajectory.
Empirically, it generally provides satisfying results in the balanced setting but tends to be poorly suited to the unbalanced or in challenging settings (e.g. when inputs have many symmetries).

\paragraph{Landmarks.} If some correspondences are already known between some remarkable points, or \emph{landmarks}, we can exploit this information to initialize our algorithm more accurately. 
Given a sequence $(x_1,y_1),\dots, (x_K, y_K)$ of landmarks, we simply set:
\begin{equation*}
    \Gamma_0 = \frac{1}{K} \sum_i \Phi(x_i) \Psi(y_i)^\top.
\end{equation*}
This initialization guides the optimization towards a plan that matches each $x_i$ to a point close to $y_i$.

\paragraph{Random initialization.} When no prior information is known about the optimal matching, the optimization landscape can be explored by sampling the space of possible $\Gamma$ and keep the best solution over several random initializations.
Let $\Phi(x) \in \R^\DD$ and $\Phi(y) \in \R^\EE$.
Let $M(\Phi), M(\Psi)$ and $V(\Phi), V(\Psi)$ be the mean and  coordinate-wise variance of $\Phi, \Psi$ for the measure $\alpha, \beta$:
\begin{gather*}
M(\Phi) = \int \Phi \diff\alpha \quad \text{ and } \quad V(\Phi) = \left[ \int (\Phi_d - M(\Phi_d))^2 \diff\alpha \right]_{1 \leq d \leq \DD} \\
M(\Psi) = \int \Psi \diff\beta \quad \text{ and } \quad V(\Psi) = \left[ \int (\Psi_e - M(\Psi)_e)^2 \diff\beta \right]_{1 \leq e \leq \EE}
\end{gather*}
For any $\Gamma \in \Hilb_B$, if $\Gamma$ is the cross-correlation matrix of a coupling $\pi \in \coupling{\alpha, \beta}$, then: 
\begin{equation*}
\text{For all }  1\leq d \leq \DD \quad \text{and} \quad 1\leq e \leq \EE, \quad   \abs{\Gamma_{de} - M(\Phi)_d M(\Psi)_e } \leq \sqrt{V(\Phi)_d V(\Psi)_e}. 
\end{equation*}
Therefore, we can sample each coordinate $[\Gamma_0]_{de}$ uniformly from the interval $[M(\Phi)_d M(\Psi)_e \pm \sqrt{V(\Phi)_d V(\Psi)_e}]$.
These bounds hold in the balanced case $\pi \in \coupling{\alpha, \beta}$. 
They remain true in the unbalanced setting as well if we know that optimal plans have smaller margins than the inputs $\alpha, \beta$, i.e. they can destroy mass but not create it. 

\section{Proofs}
\label{appendix:proofs}

\subsection{Proofs of section \ref{section:measure_matching}}

\thmccvdualitygeneral*

\begin{proof}
Since $\Cvxb$ is a proper convex lower semi-continuous function, it is equal to its biconjugate: by the Fenchel-Moreau theorem, $\Cvxb^{**} = \Cvxb$.
Therefore, 
\begin{equation*}
\text{For all } \pi \in \meas{\X \times \Y}, \quad  \Cvxb(\pi) = \sup_{c_\Cvxb \in \C(\X \times \Y)} \int c_\Cvxb(x,y)\diff\pi(x,y) - \Cvxb^*(c_\Cvxb).
\end{equation*}
Consequently,
\begin{equation*}
\Cvxa(\pi) - \Cvxb(\pi) + \Reg(\pi) =  \inf_{c_\Cvxb \in \C(\X \times \Y)} - \int c_\Cvxb(x,y)\diff\pi(x,y) + \Cvxa(\pi) + \Reg(\pi) + \Cvxb^*(c_\Cvxb),
\end{equation*}
providing the first equation of the result.
If $\X \subset \R^{\DD}$ and $\Y \subset \R^{\EE}$ are compact subset of Euclidean spaces, then $\X \times \Y$ is a compact subset of $\R^{\DD \times \EE}$.

To prove the second part of the theorem, we first use the fact that sums of separable functions $x,y \mapsto \sum_i f_i(x)g_i(y)$ are dense in $\C(\X \times \Y)$.
Moreover, by the  Stone--Weierstrass theorem, polynomial functions are dense in $\C(\X)$ and in $\C(\Y)$ for the uniform convergence.
Since we assumed that $\Cvxb^*$ is continuous for the uniform convergence, we can restrict the infimum to polynoms of the form:
\begin{equation*}
    \text{For all } x \in \X \text{ and } y \in \Y, \quad p_\Gamma(x,y) = \sum_{d,e} \Gamma_{d,e} \Phi_{d}(x) \Psi_{e}(y),
\end{equation*}
where for all sequences of exponents $d = (d_1,\dots,d_\DD)$ and $e = (e_1,\dots,e_\EE)$, we define:
\begin{equation*}
    \Phi_{d} : x \in \X \mapsto x_1^{d_1}\dots x_\DD^{d_\DD} \in \R \quad \text{ and } \quad \Psi_{e} : y \in \Y \mapsto y_1^{e_1}\dots y_\EE^{e_\EE} \in \R,
\end{equation*}
and $\Gamma_{d,e}$ is a collection of real values with only a finite number of non-zero elements. 
Since the sequences involved in the previous decompositions are countable, and up to a reweighting of the terms $\Gamma_{d,e}$, we can write the sequences of monomials as sequences:
\begin{equation*}
    \Phi(x) = \left(\frac{1}{d_1!\dots d_\DD!} \Phi_d(x) \right)_{d \in \N_0^\DD} \quad \text{ and }  \quad \Phi(y) = \left(\frac{1}{e_1! \dots e_\EE!} \Psi_e(y) \right)_{e \in \N_0^{\EE}},
\end{equation*}
where the reweighting ensures that these sequences are squared-integrable.
Therefore, for all $x$ and $y$, $\Phi(x), \Phi(y) \in l^2(\R)$ where $l^2(\R)$ denotes the Hilbert space:
\begin{equation*}
    l^2(\R) := \left\{(u_n)_{n \geq 0} \in \R^{\N} \quad \text{ s.t. } \quad \sum_n u_n^2 < + \infty \right\}.
\end{equation*}
Similarly, if only a finite number of $\Gamma_{d,e}$ are non-zero, the following linear transform is a Hilbert-Schmidt operator of $l^2(\R)$:
\begin{equation*}
    \Gamma : u \in l^2(\R) \longmapsto \Gamma u := \left(\sum_{d} \Gamma_{d,e} u_{d} \right)_{e} \in l^2(\R).
\end{equation*}
With these notations, the previous polynom writes:
\begin{equation*}
    p_{\Gamma}(x,y) = \scal{\Phi(x), \Gamma \Psi(y)}.
\end{equation*}
By change of variable $c_\Cvxb \leftarrow  p_{\Gamma}$ in \cref{eq:general_matching_dualccv}, we obtain the desired formula.
Since the function $\Gamma \mapsto p_\Gamma$ is linear, we also have the convexity of $\mathbf{S}'(\Gamma) = \mathbf{S}(p_\Gamma)$, proving the theorem.
\end{proof}

\tolanddualitygeneral*

The proof is a consequence of the classical convex duality theorem:

\begin{theorem}[Rockafellar--Fenchel Theorem]
\label{thm:rf}
Let $\mathcal{F} : \meas{\X \times \Y} \to \R \cup \{+\infty\}$ and $\mathcal{G} : \meas{\X \times \Y} \to \R \cup \{+\infty\}$ be proper convex lower semi-continuous functions.
Let assume that there is a $\pi_0 \in \meas{\X \times \Y}$ such that both $\mathcal{F}(\pi_0),\mathcal{G}(\pi_0)$ are finite and $\mathcal{G}$ is continuous at $\pi_0$.
Then,
\begin{equation*}
    \inf_{\pi \in \meas{\X \times \Y}}\; \mathcal{F}(\pi) + \mathcal{G}(\pi)
        \;=\; \sup_{c \in \C(\X \times \Y)}\; - \mathcal{F}^*(-c) - \mathcal{G}^*(c).
    \end{equation*}
\end{theorem}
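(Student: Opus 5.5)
The plan is the classical perturbation argument, carried out in the pairing between $\meas{\X \times \Y}$ and $\C(\X \times \Y)$ that underlies the paper's definition of convex conjugates. I will equip $\meas{\X \times \Y}$ with the weak-* topology, the topology in which lower semi-continuity and continuity are understood in \cref{appendix:additional_results}. This choice matters because the topological dual of $(\meas{\X \times \Y}, \text{weak-*})$ is exactly $\C(\X \times \Y)$, so any continuous linear functional that appears will be represented by some $c \in \C(\X \times \Y)$. That is what lets the dual problem range over $\C(\X \times \Y)$ rather than over the much larger Banach dual of $\meas{\X \times \Y}$.

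\emph{Weak duality.} For every $\pi$ and every $c$, the Fenchel--Young inequality gives
\begin{equation*}
\mathcal{F}(\pi) \geq \int (-c)\,\diff\pi - \mathcal{F}^*(-c) \quad \text{and} \quad \mathcal{G}(\pi) \geq \int c\,\diff\pi - \mathcal{G}^*(c).
\end{equation*}
Summing the two inequalities, the linear terms cancel. Taking the infimum over $\pi$ and the supremum over $c$ then yields $\inf(\mathcal{F}+\mathcal{G}) \geq \sup(-\mathcal{F}^*(-c) - \mathcal{G}^*(c))$.

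\emph{Strong duality.} Let $m := \inf(\mathcal{F}+\mathcal{G})$, which is finite from above because of $\pi_0$. If $m = -\infty$ there is nothing to prove, so assume $m$ is finite. Introduce the value function
\begin{equation*}
v(u) := \inf_{\pi} \mathcal{F}(\pi) + \mathcal{G}(\pi + u), \qquad u \in \meas{\X \times \Y}.
\end{equation*}
It is convex, being the infimal projection of a jointly convex function. Since $\mathcal{G}$ is continuous at $\pi_0$, there is a neighbourhood $W$ of $0$ with $\mathcal{G}(\pi_0 + u) \leq \mathcal{G}(\pi_0) + 1$ for $u \in W$. Hence $v(u) \leq \mathcal{F}(\pi_0) + \mathcal{G}(\pi_0) + 1$ on $W$. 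A convex function that is bounded above near a point where it is finite is continuous there, so $v$ is continuous at $0$ and $v(0) = m$.

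Next I separate the open convex set $\{(u,t) : u \in W,\ t > v(u)\}$ from the point $(0, v(0))$ using Hahn--Banach in the locally convex space $\meas{\X \times \Y} \times \R$. This produces a continuous linear functional, which by the choice of topology is a pair $(\tilde c, s)$ with $\tilde c \in \C(\X \times \Y)$ and $s \in \R$. Continuity of $v$ at $0$ forces the vertical component to be nonzero, so after normalisation there is $c \in \C(\X \times \Y)$ with
\begin{equation*}
v(u) \geq v(0) + \int c\,\diff u \quad \text{for all } u,
\end{equation*}
that is, $c$ is a subgradient of $v$ at $0$.

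Unfolding this inequality with $u = \pi' - \pi$ gives, for all $\pi, \pi'$,
\begin{equation*}
\mathcal{F}(\pi) + \mathcal{G}(\pi') \geq m + \int c\,\diff\pi' - \int c\,\diff\pi.
\end{equation*}
Taking the infimum over $\pi$ and $\pi'$ separately turns this into $-\mathcal{F}^*(-c) - \mathcal{G}^*(c) \geq m$. Combined with weak duality, this shows the supremum equals $m$ and is attained at $c$.

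The main obstacle is the topological step. I need to know that the dual of $\meas{\X \times \Y}$ under the chosen topology is $\C(\X \times \Y)$, and that continuity of $\mathcal{G}$ at $\pi_0$ holds in that same topology. If continuity were only available for the total-variation norm, the subgradient would lie in the dual of $\meas{\X \times \Y}$ as a Banach space, which strictly contains $\C(\X \times \Y)$. In that case I would first try to recover continuity of $c$ from the lower semi-continuity of $\mathcal{F}$ and $\mathcal{G}$ in the weak-* topology. Failing that, I would reduce to bounded sets, where the weak-* topology is metrisable by compactness of $\X \times \Y$, and pass through finite-dimensional approximations.
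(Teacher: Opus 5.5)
The paper does not prove this statement; it quotes it as the classical Fenchel--Rockafellar theorem and applies it in the proof of Theorem~\ref{theorem:tolanddualitygeneral}. Your perturbation argument is the standard textbook proof, and it is correct once continuity of $\mathcal{G}$ at $\pi_0$ is taken in $\sigma(\meas{\X\times\Y},\C(\X\times\Y))$. As a bonus it shows the supremum is attained, and it never uses lower semi-continuity.

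One step needs an extra line. Separating over $W$ only yields $v(u)\ge v(0)+\int c\,\diff u$ for $u\in W$. You can globalize by convexity: choose $\lambda\in(0,1)$ with $\lambda u\in W$, so that $v(0)+\lambda\int c\,\diff u\le v(\lambda u)\le(1-\lambda)v(0)+\lambda v(u)$. Alternatively, separate $(0,v(0))$ directly from the interior of the whole epigraph of $v$, which contains $W\times(\mathcal{F}(\pi_0)+\mathcal{G}(\pi_0)+1,\infty)$ when $W$ is open. The latter also spares you checking that your strict epigraph over $W$ is open.

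What your version adds to the bare citation is an explicit topology, and this choice is necessary rather than convenient. Under total-variation continuity the identity is false, so the fallback you sketch cannot succeed. Here is a counterexample.
\begin{itemize}
\item Take distinct points $z_n\to z_\infty$ in $\X\times\Y$ and set $a:=\sum_{n\ge0}(-1)^n2^{-n-1}\delta_{z_n}$.
\item Let $\mathcal{G}:=\TV$, and let $\mathcal{F}(ta):=-t$ for $t\in\R$, with $\mathcal{F}:=+\infty$ off $\R a$.
\item Both are proper, convex and weak-* lower semi-continuous; $\mathcal{F}$ is the support function of $\{c:\int c\,\diff a=-1\}$. Moreover $\mathcal{G}$ is norm-continuous, and $\pi_0=a$ qualifies.
\item The primal value is $\inf_t(\abs{t}-t)=0$.
\item On the dual side, $\mathcal{G}^*(c)<\infty$ forces $\norm{c}_\infty\le1$, and $\mathcal{F}^*(-c)<\infty$ forces $\int c\,\diff a=1$. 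Together these give $c(z_n)=(-1)^n$ for all $n$, which is impossible for continuous $c$. So the dual value is $-\infty$.
\end{itemize}

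In the other direction, your proof only uses that the topology is locally convex with dual $\C(\X\times\Y)$. It therefore holds verbatim for the Mackey topology $\tau(\meas{\X\times\Y},\C(\X\times\Y))$, the finest topology with that dual.

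This matters because weak-* continuity at a point is restrictive. It forces $\mathcal{G}$ to depend locally on finitely many moments $\int c_i\,\diff\pi$, since a convex function bounded above on a finite-codimensional affine subspace is constant on it.
\begin{itemize}
\item When $A$ has finite-dimensional range, $\tfrac12\norm{A[\pi]}^2$ is weak-* continuous, so your proof covers it as written.
\item With infinite-dimensional range, this function is weak-* continuous only in the sequential sense used in the appendix. That sense is weaker than the topological continuity your neighbourhood $W$ requires.
\item It is, however, Mackey-continuous. Indeed $\norm{A[\pi]}=\sup_{\norm{h}\le1}\int\scal{h,A}\,\diff\pi$, and $\{\scal{h,A(\cdot)}:\norm{h}\le1\}$ is absolutely convex and weakly compact in $\C(\X\times\Y)$.
\end{itemize}
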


\begin{proof}[Proof of \cref{theorem:tolanddualitygeneral}]
We apply the Rockafellar--Fenchel theorem with:
\begin{equation*}
    \mathcal{F}(\pi) = -\int c_\Cvxb \diff\pi + \Reg(\pi) + \iota_{\measplus{\X \times \Y}}(\pi) \quad \text{ and } \quad \mathcal{G}(\pi) = \Cvxa(\pi).
\end{equation*}
By hypothesis on $\Cvxa$, $\mathcal{G}$ is continuous on its domain.
Moreover, for any $\pi_0 \in \measplus{\X \times \Y}$ such that $\Engy(\pi_0) + \Reg(\pi_0) < + \infty$, both $\mathcal{F}(\pi_0)$ and $\mathcal{G}(\pi_0)$ are finite.
Therefore, the Rockafellar--Fenchel hypotheses are indeed satisfied.
By definition of $\mathcal{D}_\Reg$, we have for all $c \in \C(\X \times \Y)$:
\begin{equation*}
    -\mathcal{F}^*(-c) = \min_{\pi \in \measplus{\X \times \Y}} \int c \diff\pi - \int c_\Cvxb \diff\pi + \Reg(\pi) = \sup_{\phi \in \Phi} \mathcal{D}_\Reg\bigl(\phi;\, c - c_\Cvxb,\, \alpha, \beta\bigr).
\end{equation*}
Therefore:
\begin{equation*}
    \min_{\pi \in \measplus{\X \times \Y}} \int c_\Cvxb\diff\pi + \Reg(\pi) + \Cvxa(\pi) = \sup_{c_\Cvxa \in \C(\X \times \Y)} \sup_{\phi \in \Phi} \mathcal{D}_\Reg\bigl(\phi;\, c_\Cvxa - c_\Cvxb,\, \alpha, \beta\bigr) - \Cvxa^*(c_\Cvxa),
\end{equation*}
providing the first identity.
To prove the second one, we simply remark that, by definition of $\mathcal{D}_\Reg$:
\begin{equation*}
    \sup_{\phi \in \Phi} ~~ 
            \mathcal{D}_\Reg\bigl(\phi;\, c_\Cvxa - c_\Cvxb,\, \alpha, \beta\bigr)  = \min_{\pi \in \measplus{\X \times \Y}} \int (c_\Cvxa - c_\Cvxb) \diff\pi + \Reg(\pi).
\end{equation*}
Finally, we prove the last part of the theorem by using the same technique as in \cref{theorem:ccv_duality_general}, approximating continuous functions on $\X \times \Y$ by polynoms.
\end{proof}

\subsection{Proofs of Section~\ref{section:quadratic_matching_case}}

The results of this section are based on the explicit convex conjugate of convex quadratic forms:
\begin{lemma}
\label{lem:fquad_conjugate}
    Let $A : \meas{X\times Y} \to \Hilb_A$ be an integral operator.
    The convex conjugate of the quadratic operator $\Cvxa(\pi) := \tfrac{1}{2}\norm{A[\pi]}^2$ is
    \begin{equation*}
        \Cvxa^*(c) \;=\;
        \begin{cases}
            \tfrac{1}{2}\norm{A[\pi]}^2 & \text{if } c = A^\top A\pi, \\
            +\infty & \text{if }  c \notin \img{A^\top A}.
        \end{cases}
    \end{equation*}
\end{lemma}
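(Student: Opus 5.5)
We compute $\Cvxa^*(c) = \sup_{\pi} \bigl(\int c\,\diff\pi - \tfrac12\norm{A[\pi]}^2\bigr)$ directly, with the supremum over all of $\meas{\X\times\Y}$. We split according to whether $c$ lies in $\img{A^\top A}$, and use throughout the adjoint identity $\int A^\top\Xi\,\diff\pi = \scal{\Xi, A[\pi]}$ recalled in the paragraph on adjoints of integral operators.

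\emph{Case $c = A^\top A\pi_0$.} First check that the stated value is well defined. If $A^\top A\pi_0 = A^\top A\pi_0'$, then
\begin{equation*}
\norm{A[\pi_0-\pi_0']}^2 = \int A^\top A(\pi_0-\pi_0')\,\diff(\pi_0-\pi_0') = 0,
\end{equation*}
so $A[\pi_0]$ depends only on $c$. Next, for every $\pi$ we have $\int c\,\diff\pi = \scal{A[\pi_0], A[\pi]}$, hence
\begin{equation*}
\int c\,\diff\pi - \tfrac12\norm{A[\pi]}^2 = \tfrac12\norm{A[\pi_0]}^2 - \tfrac12\norm{A[\pi]-A[\pi_0]}^2 .
\end{equation*}
This is at most $\tfrac12\norm{A[\pi_0]}^2$, with equality at $\pi=\pi_0$. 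So $\Cvxa^*(c) = \tfrac12\norm{A[\pi_0]}^2$, and the supremum is attained.

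\emph{Case $c\notin\img{A^\top A}$.} The goal is to find $\pi$ with $A[\pi]=0$ and $\int c\,\diff\pi\neq 0$. Scaling $\pi$ by $\pm t$ with $t\to\infty$ then keeps the quadratic term at zero while the linear term diverges, so $\Cvxa^*(c)=+\infty$.

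In the finite setting the paper mostly uses, this is linear algebra. We have $\img{A^\top A} = \img{A^\top} = (\ker A)^\perp$, because $\ker A^\top A = \ker A$ by the norm identity above. So $c\notin(\ker A)^\perp$ directly yields $\pi\in\ker A$ with $\int c\,\diff\pi\neq 0$.

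The main obstacle is the infinite-dimensional case, where $\img{A^\top A}$ need not be closed. I would first handle $c$ outside the uniform closure of $\img{A^\top}$. Since $\C(\X\times\Y)$ is a Banach space whose dual is $\meas{\X\times\Y}$, Hahn--Banach provides a measure $\pi$ annihilating $\img{A^\top}$, i.e. $\scal{\Xi,A[\pi]}=0$ for all $\Xi$, so $A[\pi]=0$, with $\int c\,\diff\pi\neq 0$. It remains to treat $c$ in $\overline{\img{A^\top}}$ but not in $\img{A^\top A}$. There I would show the supremum still diverges by approximation. Take $\Xi_n$ with $A^\top\Xi_n\to c$ uniformly, compare with the dual problem
\begin{equation*}
\sup_{v\in\overline{\img{A}}}\ \scal{\Xi,v}-\tfrac12\norm{v}^2 ,
\end{equation*}
and argue that a finite value would force the maximizing $v$ to lie in $\img{A}$, which gives $c\in\img{A^\top A}$. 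If this cannot be made rigorous, I would restrict the lemma to the case where $\img{A}$ is finite-dimensional (or closed), which is the setting in which it is used in the proofs of \cref{prop:outer_gamma,thm:qot_as_registration}.
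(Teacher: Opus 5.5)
\paragraph{The residual infinite-dimensional case.}
Your Case~1 (square completion plus the well-definedness check) is correct, and so is your finite-dimensional Case~2 via $\img{A^\top A}=(\ker A)^\perp$. Both are more careful than the paper's proof. The paper uses the first-order condition $c=A^\top A[\pi]$ and then simply asserts that when $f$ has no maximizer, its supremum is $+\infty$.

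The genuine gap is the residual infinite-dimensional case. Your step ``a finite value would force the maximizing $v$ to lie in $\img{A}$'' fails, and nothing can repair it, because the lemma is false there. Your own square completion shows this. Take $\Xi\in\overline{\img{A}}\setminus\img{A}$ and set $c:=A^\top\Xi$. Then for every $\pi$,
\begin{equation*}
\int c\,\diff\pi-\tfrac12\norm{A[\pi]}^2=\tfrac12\norm{\Xi}^2-\tfrac12\norm{A[\pi]-\Xi}^2 .
\end{equation*}
By density of $\img{A}$ in its closure, $\Cvxa^*(c)=\tfrac12\norm{\Xi}^2<+\infty$, and the supremum is not attained. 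Yet $c\notin\img{A^\top A}$. Indeed, $A^\top\Xi=A^\top A[\pi_0]$ would put $\Xi-A[\pi_0]$ both in $\ker A^\top=\img{A}^\perp$ and in $\overline{\img{A}}$, forcing $\Xi=A[\pi_0]$. Such a $\Xi$ exists whenever $\img{A}$ is infinite-dimensional. The reason is that $A$ is a compact operator: it maps the total-variation unit ball into the closed convex hull of the compact set $A(\X\times\Y)\cup(-A(\X\times\Y))\cup\{0\}$. A compact operator with infinite-dimensional range cannot have closed range.

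\paragraph{What to prove instead.}
Your fallback is therefore mandatory, not a hedge. The lemma holds exactly when $\img{A}$ is closed, equivalently finite-dimensional. That is also the only regime in which the paper's ``no maximum, hence $+\infty$'' step is valid.

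The correct general statement is:
\begin{itemize}
\item $\Cvxa^*(c)=\tfrac12\norm{\Xi}^2$ if $c=A^\top\Xi$ with $\Xi\in\overline{\img{A}}$ (this $\Xi$ is unique);
\item $\Cvxa^*(c)=+\infty$ if $c\notin\img{A^\top}$.
\end{itemize}
So the dichotomy is governed by $\img{A^\top}$, not $\img{A^\top A}$.

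The infinite-value case has an elementary proof that also makes your Hahn--Banach step unnecessary. Suppose $\Cvxa^*(c)=S<\infty$.
\begin{itemize}
\item Testing $t\pi$ for all $t\in\R$ gives $\abs{\int c\,\diff\pi}\le\sqrt{2S}\,\norm{A[\pi]}$.
\item Hence $A[\pi]\mapsto\int c\,\diff\pi$ is a well-defined bounded functional on $\img{A}$.
\item The Riesz theorem then provides $\Xi\in\overline{\img{A}}$ with $\int c\,\diff\pi=\scal{\Xi,A[\pi]}$ for all $\pi$.
\item Testing against Dirac masses gives $c=A^\top\Xi$.
\end{itemize}

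Finally, the first identity of \cref{prop:outer_gamma} does not actually rely on the false part. For each fixed $\pi$, the infimum over $c_\Cvxb$ of $-\int c_\Cvxb\,\diff\pi+\Cvxb^*(c_\Cvxb)$ is attained at $c_\Cvxb=B^\top B[\pi]\in\img{B^\top B}$. Discarding the rest of $\img{B^\top}$ therefore leaves the value unchanged.
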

\begin{proof}
Let $c \in \C(\X \times \Y)$.
By definition,  
\begin{equation*}
    \Cvxa^*(c) = \sup_{\pi \in \meas{X \times Y}} f(\pi) \quad \text{ where } \quad f(\pi) =  \int c \diff \pi - \tfrac{1}{2} \norm{A[\pi]}^2.
\end{equation*}
The function $f$ is differentiable with $\nabla f(\pi) = c - A^\top A[\pi]$: its supremum is attained if and only if there is a $\pi$ such that $\nabla f(\pi) = 0$, i.e. $c = A^\top A[\pi]$.
In that case, a direct evaluation of $f$ shows that:
\begin{equation*}
     \Cvxa^*(c) = \int (A^\top A[\pi]) \diff \pi - \tfrac{1}{2}\norm{A[\pi]}^2 = \tfrac{1}{2} \norm{A[\pi]}^2.
\end{equation*}
Otherwise, the function $f$ do not admit any maximum, and its supremum is $+ \infty$.
\end{proof}

\outermin*

\begin{proof}
We apply \cref{eq:general_matching_dualccv} to the quadratic case, plugging the convex conjugate of $\Cvxb := \tfrac{1}{2}\norm{B[\pi]}^2$:
\begin{equation*}
\min_{\pi' \in \measplus{\X \times \Y}} ~ \min_{\pi \in \measplus{\X \times \Y}}\; -\int (B^\top B[\pi']) \diff \pi(x,y) + \mathbf{R}(\pi) + \frac{1}{2} \norm{B[\pi']}^2,
\end{equation*}
where we used the fact that the expression is finite only if there exists a $\pi' \in \meas{\X \times \Y}$ such that $c_\Cvxb = B^\top B[\pi']$.
By change of variable $\Gamma \leftarrow B[\pi']$, we then write:
\begin{equation*}
\min_{\Gamma \in \img{B}} ~ \min_{\pi \in \measplus{\X \times \Y}}\; -\int (B^\top \Gamma) \diff \pi(x,y) + \mathbf{R}(\pi) + \frac{1}{2} \norm{\Gamma}^2.
\end{equation*}
Since $(B^\top \Gamma)(x,y) = \scal{\Gamma, B(x,y)}$ for all $x \in \X$ and $y \in \Y$, we obtain the desired formula.
The second identity of the theorem is a consequence of \cref{theorem:tolanddualitygeneral}, using the same changes of variable as above ($\Xi \leftarrow A[\pi']$).
\end{proof}

\qotasregistration*

We state a more precise version of \cref{thm:qot_as_registration} in \cref{thm:qot_as_registration_blockseparable}:

\qotasregistrationblockseparable*

\begin{proof}
In the separable case, the first identity is a direct consequence of \cref{eq:outer_gamma} and the identity:
\begin{equation*}
    \scal{\Gamma, B(x,y)} = \scal{\Gamma, \Phi_B(x)\Psi_B(y)^\top} = \scal{\Phi_B(x), \Gamma \Psi_B(y)}.
\end{equation*}
The block-separable case is treated analogously. Let us write $\norm{B} = \sum_{i=1}^{\LL} \norm{B_i}$ with $B_i(x,y) = \Phi_{B,i}(x) \Psi_{B,i}(y)^\top$ for all $1 \leq i \leq \LL$ and all $x \in \X$, $y \in \Y$. 
Then, \cref{eq:outer_gamma} is equivalent to:
\begin{multline*}
        \min_{\Gamma_1 \in \img{B_1}} \dots \min_{\Gamma_\LL \in \img{B_\LL}} \; \min_{\pi \in \measplus{\X \times \Y}} - \int \sum_{i=1}^{\LL} \scal{\Phi_{B,i}(x), \Gamma_i \Psi_{B,i}(y)} \diff \pi(x,y) \\ + \mathbf{R}(\pi) +  \tfrac{1}{2} \sum_{i=1}^{\LL} \norm{\Gamma_i}^2.
\end{multline*}
Using the definitions of \cref{appendix:block_separable}, we directly obtain the first identity \cref{thm:qot_as_registration_blockseparable}.
The other identities follow from \cref{prop:outer_gamma}, using the above identity for $A$.
\end{proof}

\subsection{Proofs of the Appendix}

\densitydcfunctions*

\begin{proof}
First, since $\Engy$ goes to infinity at infinity, all its minima are contained in a closed bounded ball $V$ centered on the origin, which is a convex set of $\measplus{\X \times \Y}$.
It is compact for the weak-* convergence, since it is bounded and closed.
Let ${\rm DC}(V) \subset \C(V)$ be the set functions on $V$ that are decomposable as a difference of continuous convex functions.
It satisfies the hypotheses of the Stone-Weierstrass theorem:
\begin{itemize}
    \item All constant functions belong to ${\rm DC}(V)$.
    \item Linear functions belong to ${\rm DC}(V)$: therefore, for all $\pi, \pi' \in V$, there exists $\mathcal{F} \in {\rm DC}(V)$ such that $\mathcal{F}(\pi) \neq \mathcal{F}(\pi')$.
    \item The set ${\rm DC}(V)$ is trivially stable by addition and multiplication by a scalar.
    \item If $\mathcal{F}_1$ and $\mathcal{F}_2$ are in ${\rm DC}(V)$, then the product  $\mathcal{F}_1 \mathcal{F}_2$ belongs to ${\rm DC}(V)$ as well.
\end{itemize}
To prove the last property, we first write:
\begin{equation*}
\mathcal{F}_1 \mathcal{F}_2 = \tfrac{1}{2} (\mathcal{F}_1 + \mathcal{F}_2)^2 - \tfrac{1}{2} (\mathcal{F}_1 - \mathcal{F}_2)^2.
\end{equation*}
We only need to prove that squared DC functions are also DC.
To do so, we remark that if $\mathcal{F} \in {\rm DC}(V)$ we can write $\mathcal{F} = \mathcal{A} - \mathcal{B}$ where $\mathcal{A}$ and $ \mathcal{B}$ are both \emph{positive} continuous convex functions.
Indeed, since they are continuous, these functions are bounded on the compact set $V$: if we add the same sufficiently large constant to $\mathcal{A}$ and $\mathcal{B}$, we ensure their positivity while keeping their difference identical. 
Then,
\begin{equation*}
    \mathcal{F}^2 = 2(\mathcal{A}^2 + \mathcal{B}^2) - (\mathcal{A} + \mathcal{B})^2 \in {\rm DC}(V),
\end{equation*}
since all the terms of the sum are convex as the square of positive convex functions.

Since $V$ is compact for the weak-* convergence, the Stone-Weierstrass theorem states that ${\rm DC}(V)$ is dense in $\C(V)$.
Therefore, for any weak-* continuous function $\Engy$ on $\measplus{\X \times \Y}$ and any $\delta > 0$, there are two convex functions $\Cvxa_\delta$ and $\Cvxb_\delta$ satisfying the desired approximation on $V$. 
By replacing $\Cvxa_\delta$ with $\Cvxa_\delta + \iota_{V}$, we obtain the desired result. 
\end{proof}

\lemmacontinuityconjugate*

\begin{proof}
Convex conjugates are lower-semi continuous as a limit of continuous functions $c \mapsto \int c\diff\pi - \Cvxa(\pi)$.
Lower-semi continuous functions on Banach spaces are continuous on the interior of their domain, as proven e.g. in \cite[Section 3.3, proposition 4]{aubin1979compactness}.
Since $(\C(\X \times \Y), \norm{\cdot}_{\infty})$ is a Banach space, we only need to prove that the domain of $\Cvxa^*$ is the whole space $\C(\X \times \Y)$ to conclude.

Let $c \in \C(\X \times \Y)$.
By duality between $(\C(\X \times \Y), \norm{\cdot}_{\infty})$ and $(\meas{ \X \times \Y}, \TV)$, the following inequality holds:
\begin{equation*}
  \text{For all } \pi \in \meas{\X \times \Y}, \quad  \int c \diff \pi \leq \norm{c}_{\infty} \TV(\pi).
\end{equation*}
Therefore,
\begin{equation*}
\Cvxa^*(c) = \sup_{\pi \in \meas{\X \times \Y}} \int c \diff\pi - \Cvxa(\pi) \leq \sup_{\pi \in \meas{\X \times \Y}} \TV(\pi) \left(\norm{c}_{\infty} - \frac{\Cvxa(\pi)}{\TV(\pi)} \right).
\end{equation*}
By superlinearity, the set of $\pi$ such that $\Cvxa(\pi) /\TV(\pi) \leq \norm{c}_{\infty}$ is bounded: therefore, the supremum can be restricted to a subset of $\pi$ with bounded total variation, and the previous bound ensures that $\Cvxa^*(c) < + \infty$.
Therefore, the domain of $\Cvxa^*$ is $\C(\X \times \Y)$: putting everything together, it demonstrates the continuity of $\Cvxb^*$ on $\C(\X \times \Y)$ for the uniform norm.
\end{proof}

\existencediagqm*

\begin{proof}
Let  $\mu := \alpha \otimes \beta$.
Any $\pi$ such that $\KL(\pi \Vert \mu) < + \infty$ admits a density with respect to $\mu$, i.e. $\pi(x,y) = f_{\pi}(x,y)\diff\mu(x,y)$ with $f_{\pi} \in L^1(\X\times\Y, \mu)$.
The space of squared-integrable functions $L^2(\X\times\Y, \mu)$ is dense in $L^1(\X\times\Y, \mu)$, so the original problem is equivalent to:
\begin{equation*}
    \Reg\QM(\alpha, \beta) = \inf_{f \in L^2(\X\times\Y, \mu)} ~~ \frac{1}{2}\int f(v) \left( \int k(v,w)f(w)\diff\mu(w) \right)\diff\mu(v) + \Reg(f \diff\mu).
\end{equation*}
Since $k$ is continuous on the compact space $\X \times \Y$, it is bounded and we have the following:
\begin{equation*}
    \int k(v,w)^2\diff\mu(v) \mu(w) < +\infty.
\end{equation*}
Therefore, the operator $f \in L^2(\X\times\Y, \mu) \mapsto \mathcal{K}f:= \int k(\cdot,w)f(w)\diff\mu(w)) \in L^2(\X\times\Y, \mu)$ is an auto-adjoint Hilbert-Schmidt integral operator.
In particular, it is compact and the spectral theorem applies: there exists an orthonormal basis $(e_i) \in L^2(\X\times\Y, \mu)$ and eigenvalues $(\lambda_i) \in \R$ such that,
\begin{equation*}
    \text{For all } f \in L^2(\X\times\Y, \mu), v \in \X\times \Y, \quad \mathcal{K}f(v) = \sum_i \lambda_i e_i(v) \int f(w) e_i(w)\diff\mu(w).
    \end{equation*}
We define $A, B: L^2(\X\times\Y, \mu) \to L^2(\X\times\Y, \mu)$ as:
\begin{align*}
    A: f \longmapsto & \int \sum_{\lambda_i > 0} \abs{\lambda_i} e_i e_i(w) f(w) \diff\mu(w), \\
    B: f \longmapsto & \int \sum_{\lambda_i < 0} \abs{\lambda_i} e_i \int e_i(w) f(w) \diff\mu(w).
\end{align*}
Then, we have $\mathcal{K}f = Af - Bf$ and:
\begin{equation*}
\int k(v,w)f(v)f(w)\diff\mu(v)\diff\mu(w) = \norm{Af}_{L^2}^2 - \norm{B[f]}_{L^2}^2.
\end{equation*}
Therefore, after identifying $A$ and $B$ with their action on measures $f\diff\mu \in \mathcal{M}_+^{L^2}(\mu)$:
\begin{equation*}
    \Reg\QM(\alpha, \beta) = \inf_{\pi \in \mathcal{M}_+^{L^2}(\mu)} ~~ \tfrac{1}{2}\norm{A[\pi]} - \tfrac{1}{2} \norm{B[\pi]}^2 + \Reg(\pi).
\end{equation*}
\end{proof}

\kernelfactorization*

\begin{proof}
Given $A$ and $B$, we define the function:
\[
    k_{A,B}(x,y,x',y') := \scal{A(x,y), A(x',y')} - \scal{B(x,y), B(x',y')}.
\]
Let $k \in \C((\X \times \Y)^2)$ be a symmetric function.
If $k = k_{A,B}$ pointwise, then by bilinearity of the inner product,
\[
    \int k\,\diff\pi\otimes\pi
    = \int k_{A,B}\,\diff\pi\otimes\pi
    = \norm{\int A\,\diff\pi}^2 - \norm{\int B\,\diff\pi}^2.
\]
Conversely, assume the integral identity holds for every $\pi \in \meas{\X \times \Y}$. Applying it to $\pi = \delta_{(x,y)}$ gives $k(x,y,x,y) = k_{A,B}(x,y,x,y)$. 
Finally, applying it to $\pi = \delta_{(x,y)} + \delta_{(x',y')}$ gives:
\begin{multline*}
    k(x,y,x,y) + k(x',y',x',y') + 2 k(x,y,x',y') \\= k_{A,B}(x,y,x,y) + k_{A,B}(x',y',x',y') + 2k_{A,B}(x,y,x',y'), 
\end{multline*}
which implies that $k(x,y,x',y') = k_{A,B}(x,y,x',y')$.
\end{proof}

\section{Congergence of the dual algorithm}
\label{appendix:end-to-end-cv}
\subsection{Notation and setting}
\label{subsec:cv_setting}

\paragraph{Objects.}
Let $\mathcal P:=\measplus{\X\times\Y}$. Marginal constraints, when present,
are encoded in $\Reg$, so that $\operatorname{dom}\Reg$ is the set of
admissible plans. Let $A:\meas{\X\times\Y}\to\Hilb_A$ and $B:\meas{\X\times\Y}\to\Hilb_B$ be the integral operators of \cref{eq:loss_factorized}; they are linear, so that $B[\pi-\pi']=B[\pi]-B[\pi']$ for $\pi,\pi'\in\mathcal P$. Their adjoints are
$A^\top\Xi=\scal{\Xi,A(\cdot,\cdot)}$ and $B^\top\Gamma=\scal{\Gamma,B(\cdot,\cdot)}$;
the variables $\Gamma$ and $\Xi$ range over $\Hilb_B$ and $\Hilb_A$. We use
the objective of \cref{eq:general_matching}, its linearization at fixed
$\Gamma$ (the inner problem of \cref{eq:outer_gamma}) and the value of that
inner problem:
\begin{align}
    \Engy_\Reg(\pi)&:=\tfrac12\norm{A[\pi]}^2-\tfrac12\norm{B[\pi]}^2+\Reg(\pi),
    &\Reg\QM(\alpha,\beta)&=\inf_{\pi\in\mathcal P}\Engy_\Reg(\pi),
    \label{eq:unified_E}\\
    Q_\Gamma(\pi)&:=\tfrac12\norm{A[\pi]}^2-\scal{\Gamma,B[\pi]}+\Reg(\pi),
    &\Lambda(\Gamma)&:=\tfrac12\norm{\Gamma}^2+\inf_{\pi\in\mathcal P}Q_\Gamma(\pi).
    \label{eq:unified_Q}
\end{align}
By \cref{prop:outer_gamma}, $\Reg\QM(\alpha,\beta)=\inf_\Gamma\Lambda(\Gamma)$;
$\Lambda$ is the Lyapunov function of the outer loop. For a cost
$c\in\C(\X\times\Y)$ we write
\begin{equation}
    \label{eq:unified_V}
    \Reg\OT(c):=\inf_{\pi\in\mathcal P}\bigl\{\scal{c,\pi}+\Reg(\pi)\bigr\}
\end{equation}
for the value of the $\Reg\OT$ problem \cref{eq:dualizable_reg}, seen as a
function of its cost for fixed $(\alpha,\beta)$. With the effective cost
$c_{\Gamma,\Xi}:=A^\top\Xi-B^\top\Gamma$ of \cref{prop:outer_gamma}, the reduced
objective of \cref{section:numerical_optim} is
\begin{equation}
    \label{eq:unified_h}
    h_\Gamma(\Xi)
    :=
    -\tfrac12\norm{\Xi}^2+\Reg\OT(c_{\Gamma,\Xi})
    =
    \sup_{\phi\in\Phi}\mathcal D_\Reg(\phi;\,c_{\Gamma,\Xi},\alpha,\beta)-\tfrac12\norm{\Xi}^2,
\end{equation}
the second equality being the dual representation \cref{eq:dualizable_reg}.
\paragraph{An elementary identity.}
For all $a,b,\Xi$ in a Hilbert space,
\begin{equation}
    \tfrac12\norm a^2-\tfrac12\norm b^2-\scal{\Xi,a-b}
    =\tfrac12\norm{a-\Xi}^2-\tfrac12\norm{b-\Xi}^2 .
    \label{eq:square_id_2}
\end{equation}
It follows by expanding the squares and is used repeatedly below.

\emph{Weak duality.} Since $\scal{\Xi,a}-\tfrac12\norm\Xi^2\le\tfrac12\norm a^2$ for all $a,\Xi\in\Hilb_A$, for every $\Gamma$ and $\Xi$
\begin{align*}
    h_\Gamma(\Xi)
    =\inf_{\pi\in\mathcal P}\bigl\{\scal{\Xi,A[\pi]}-\tfrac12\norm\Xi^2-\scal{\Gamma,B[\pi]}+\Reg(\pi)\bigr\}
    \le\inf_{\pi\in\mathcal P}Q_\Gamma(\pi),
\end{align*}
hence
\begin{equation}
    \label{eq:unified_weak_duality}
    \sup_{\Xi\in\Hilb_A}h_\Gamma(\Xi)\le\inf_{\pi\in\mathcal P}Q_\Gamma(\pi).
\end{equation}
Equality is proved in \cref{prop:unified_danskin} when $h_\Gamma$ is differentiable, and is not needed elsewhere.

\paragraph{Standing assumptions.}
Throughout the appendix we assume:
\begin{enumerate}
    \item[\textbf{(S1)}] $\Reg$ is proper, convex and lower semicontinuous,
    and $\Reg(\pi_0)<\infty$ for some $\pi_0\in\mathcal P$;
    \item[\textbf{(S2)}] $\Reg\QM(\alpha,\beta)>-\infty$.
\end{enumerate}
Under \textbf{(S1)}--\textbf{(S2)}, $\Lambda$ is finite everywhere
(\cref{lem:unified_structure}), and we set
$\Delta^{\mathrm{out}}:=\Lambda(\Gamma_0)-\Reg\QM(\alpha,\beta)\in[0,\infty)$
for the initial point $\Gamma_0$ of the algorithm.

\paragraph{$\Reg\OT$ oracle and certificate.}
A plan $\widehat\pi\in\mathcal P$ is \emph{$\delta_{\Reg\OT}$-accurate} at cost
$c$ if
\begin{equation}
    \label{eq:unified_ot_oracle}
    \scal{c,\widehat\pi}+\Reg(\widehat\pi)\le\Reg\OT(c)+\delta_{\Reg\OT}.
\end{equation}
This is a primal notion: with a dual solver, $\widehat\pi$ is the plan
recovered from the current potentials and $\delta_{\Reg\OT}$ is its
primal--dual gap, which upper-bounds the left-hand side minus $\Reg\OT(c)$ by
weak duality. It requires $\Reg(\widehat\pi)<\infty$; for KL or TV marginal
penalties the recovered plan qualifies, while for a hard marginal constraint
it must first be rounded onto the constraint set, e.g.\ by
\citet[Alg.~2]{altschuler2017near}, the moments $A[\widehat\pi]$,
$B[\widehat\pi]$ being then those of the rounded plan. For a pair
$(\Xi,\widehat\pi)$ with $\widehat\pi$ $\delta_{\Reg\OT}$-accurate at
$c_{\Gamma,\Xi}$ we define the computable certificate
\begin{equation}
    \label{eq:unified_cert_def}
    \operatorname{Cert}_\Gamma(\Xi,\widehat\pi):=\tfrac12\norm{A[\widehat\pi]-\Xi}^2+\delta_{\Reg\OT},
\end{equation}
which is the quantity tested by the inner stopping rule of \cref{alg:ugw};
\cref{lem:unified_certificate} shows that it bounds
$Q_\Gamma(\widehat\pi)-\inf_\pi Q_\Gamma$.

\paragraph{Indices for \cref{alg:ugw}.}
We analyze \cref{alg:ugw} with: constant accuracy
$\delta_{\Reg\OT}$, constant step $\tau$, certificate-based inner stopping
rule. At outer iteration $t$, write $\Xi_{t,k}$ for the value of $\Xi$ before
the $(k+1)$-th $\Reg\OT$ solve (so $\Xi_{t,0}=\Xi_{t-1,K_{t-1}}$ is carried
over from the previous outer iteration), $\widehat\pi_{t,k}$ for the
$\delta_{\Reg\OT}$-accurate plan returned at cost $c_{\Gamma_t,\Xi_{t,k}}$,
and
\begin{equation}
    \label{eq:unified_Kt}
    K_t:=\min\bigl\{k\ge0:\ \operatorname{Cert}_{\Gamma_t}(\Xi_{t,k},\widehat\pi_{t,k})\le\delta^2/4\bigr\}\in\N\cup\{+\infty\}.
\end{equation}
The inner loop of iteration $t$ thus makes $K_t+1$ $\Reg\OT$ solves and $K_t$
updates $\Xi_{t,k+1}=\Xi_{t,k}+\tau(A[\widehat\pi_{t,k}]-\Xi_{t,k})$; it
terminates if and only if $K_t<\infty$, in which case
$\widehat\pi_t:=\widehat\pi_{t,K_t}$, $\Gamma_{t+1}:=B[\widehat\pi_t]$, and
the algorithm returns $\widehat\pi_t$ when $\norm{\Gamma_{t+1}-\Gamma_t}\le\delta$.
When $\Gamma=\Gamma_t$ is fixed we drop the index $t$ and write $\Xi_k$,
$\widehat\pi_k$. For the entropic regularizers of
\cref{table:regularizers_dual}, $A[\widehat\pi]-\Xi$ is the gradient in $\Xi$
of $\mathcal D_\Reg(\phi;c_{\Gamma,\Xi},\alpha,\beta)-\tfrac12\norm\Xi^2$ at
the potentials $\phi$ encoding $\widehat\pi$, which is why the $\Xi$-update is
an (inexact) gradient step.

\subsection{The fixed-\texorpdfstring{$\Gamma$}{Gamma} certificate: proof of Lemma~\ref{lem:unified_certificate}}
\label{subsec:unified_fixed_gamma}

The following lemma justifies the inner stopping rule of \cref{alg:ugw}.

\begin{lemma}[Inexact supergradient and primal certificate]
    \label{lem:unified_certificate}
    Fix $(\Gamma,\Xi)\in\Hilb_B\times\Hilb_A$, let $\widehat\pi$ be
    $\delta_{\Reg\OT}$-accurate at $c_{\Gamma,\Xi}$, and set
    $r:=A[\widehat\pi]-\Xi$. Then, for every $\Xi'\in\Hilb_A$,
    \begin{equation}
        \label{eq:unified_approx_first_order}
        h_\Gamma(\Xi')\le h_\Gamma(\Xi)+\scal{A[\widehat\pi]-\Xi,\Xi'-\Xi}-\tfrac12\norm{\Xi'-\Xi}^2+\delta_{\Reg\OT},
    \end{equation}
    and
    \begin{equation}
        \label{eq:unified_primal_certificate}
        Q_\Gamma(\widehat\pi)-\inf_{\pi\in\mathcal P}Q_\Gamma(\pi)
        \le\tfrac12\norm{A[\widehat\pi]-\Xi}^2+\delta_{\Reg\OT}
        =\operatorname{Cert}_\Gamma(\Xi,\widehat\pi).
    \end{equation}
\end{lemma}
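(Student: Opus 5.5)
Both inequalities follow from evaluating the defining infima at $\widehat\pi$ and then using the accuracy bound \cref{eq:unified_ot_oracle}. No optimality conditions are needed.

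\emph{Step 1: the supergradient inequality \cref{eq:unified_approx_first_order}.} Fix $\Xi'$. By definition \cref{eq:unified_h}, $h_\Gamma(\Xi')=-\tfrac12\norm{\Xi'}^2+\Reg\OT(c_{\Gamma,\Xi'})$. We bound the infimum defining $\Reg\OT(c_{\Gamma,\Xi'})$ by its value at the admissible plan $\widehat\pi$:
\[
\Reg\OT(c_{\Gamma,\Xi'})\le\scal{\Xi',A[\widehat\pi]}-\scal{\Gamma,B[\widehat\pi]}+\Reg(\widehat\pi).
\]
We then write $\scal{\Xi',A[\widehat\pi]}=\scal{\Xi,A[\widehat\pi]}+\scal{\Xi'-\Xi,A[\widehat\pi]}$. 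The terms $\scal{\Xi,A[\widehat\pi]}-\scal{\Gamma,B[\widehat\pi]}+\Reg(\widehat\pi)$ equal $\scal{c_{\Gamma,\Xi},\widehat\pi}+\Reg(\widehat\pi)$, and the oracle property bounds this by $\Reg\OT(c_{\Gamma,\Xi})+\delta_{\Reg\OT}$. Combining,
\[
h_\Gamma(\Xi')\le h_\Gamma(\Xi)+\tfrac12\norm{\Xi}^2-\tfrac12\norm{\Xi'}^2+\scal{\Xi'-\Xi,A[\widehat\pi]}+\delta_{\Reg\OT}.
\]
The last step is the algebraic identity $\tfrac12\norm{\Xi}^2-\tfrac12\norm{\Xi'}^2=-\scal{\Xi,\Xi'-\Xi}-\tfrac12\norm{\Xi'-\Xi}^2$. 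Substituting it produces exactly $\scal{A[\widehat\pi]-\Xi,\Xi'-\Xi}-\tfrac12\norm{\Xi'-\Xi}^2$, which is \cref{eq:unified_approx_first_order}.

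\emph{Step 2: the primal certificate \cref{eq:unified_primal_certificate}.} We combine the oracle bound with weak duality \cref{eq:unified_weak_duality}. Writing $Q_\Gamma(\widehat\pi)=\tfrac12\norm{A[\widehat\pi]}^2-\scal{\Gamma,B[\widehat\pi]}+\Reg(\widehat\pi)$ and adding and subtracting $\scal{\Xi,A[\widehat\pi]}-\tfrac12\norm\Xi^2$ gives
\[
Q_\Gamma(\widehat\pi)=\bigl[\scal{c_{\Gamma,\Xi},\widehat\pi}+\Reg(\widehat\pi)-\tfrac12\norm\Xi^2\bigr]+\tfrac12\norm{A[\widehat\pi]}^2-\scal{\Xi,A[\widehat\pi]}+\tfrac12\norm\Xi^2 .
\]
The last three terms equal $\tfrac12\norm{A[\widehat\pi]-\Xi}^2$. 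By the oracle property, the bracket is at most $h_\Gamma(\Xi)+\delta_{\Reg\OT}$. By \cref{eq:unified_weak_duality}, $h_\Gamma(\Xi)\le\sup_{\Xi''}h_\Gamma(\Xi'')\le\inf_\pi Q_\Gamma(\pi)$. Subtracting $\inf_\pi Q_\Gamma$ yields the claim. Finiteness of $\inf_\pi Q_\Gamma$ comes from (S1)–(S2) through \cref{lem:unified_structure}; this is the only point where the standing assumptions enter.

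\emph{Expected obstacle.} The argument is elementary, so the only real risk is bookkeeping: keeping the signs in $c_{\Gamma,\Xi}=A^\top\Xi-B^\top\Gamma$ consistent, and correctly identifying $\scal{A^\top\Xi,\pi}=\scal{\Xi,A[\pi]}$ through the adjoint. A second point to check is that Step 1 really only needs $\widehat\pi$ to be admissible, i.e. $\Reg(\widehat\pi)<\infty$; the accuracy assumption guarantees this. We do not need $\Reg\OT(c_{\Gamma,\Xi'})$ to be attained.
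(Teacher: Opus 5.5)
Your proof is correct. Step 1 is exactly the paper's argument: test the infimum defining $\Reg\OT(c_{\Gamma,\Xi'})$ at $\widehat\pi$, split $c_{\Gamma,\Xi'}=c_{\Gamma,\Xi}+A^\top(\Xi'-\Xi)$, apply the oracle bound, and expand $\tfrac12\norm{\Xi'}^2$ around $\Xi$.

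Step 2 takes a slightly different route. The paper fixes an arbitrary competitor $\pi$ and tests the accuracy of $\widehat\pi$ against it. It then applies the square identity \cref{eq:square_id_2} with $(a,b)=(A[\widehat\pi],A[\pi])$ to obtain $Q_\Gamma(\widehat\pi)-Q_\Gamma(\pi)\le\tfrac12\norm{A[\widehat\pi]-\Xi}^2-\tfrac12\norm{A[\pi]-\Xi}^2+\delta_{\Reg\OT}$, and finally drops the last square and takes the infimum. You instead pass through the dual value: you show $Q_\Gamma(\widehat\pi)-h_\Gamma(\Xi)\le\operatorname{Cert}_\Gamma(\Xi,\widehat\pi)$ and then invoke weak duality \cref{eq:unified_weak_duality}.

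The two arguments are the same inequality in different packaging. By the square identity, $h_\Gamma(\Xi)=\inf_\pi\{Q_\Gamma(\pi)-\tfrac12\norm{A[\pi]-\Xi}^2\}$, so weak duality is precisely the step of discarding $\tfrac12\norm{A[\pi]-\Xi}^2\ge0$. Your packaging gives a slightly stronger and more interpretable conclusion. The certificate bounds a primal--dual gap, hence simultaneously the primal suboptimality of $\widehat\pi$ and the dual suboptimality $\sup h_\Gamma-h_\Gamma(\Xi)$. The paper's version has the advantage of being self-contained: it never needs $h_\Gamma$.

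One correction to your bookkeeping remark: neither (S2) nor \cref{lem:unified_structure} is needed for finiteness of $\inf_\pi Q_\Gamma$. The accuracy assumption, together with $\Reg(\widehat\pi)<\infty$, forces $\Reg\OT(c_{\Gamma,\Xi})>-\infty$, and hence $h_\Gamma(\Xi)>-\infty$. Your own chain then gives $-\infty<h_\Gamma(\Xi)\le\inf_\pi Q_\Gamma\le Q_\Gamma(\widehat\pi)<\infty$. The same observation applies to the paper's argument.
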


\begin{proof}
    Let $\Xi'\in\Hilb_A$. Since $c_{\Gamma,\Xi'}=c_{\Gamma,\Xi}+A^\top(\Xi'-\Xi)$, testing the infimum defining $\Reg\OT(c_{\Gamma,\Xi'})$ at $\widehat\pi$ and using \cref{eq:unified_ot_oracle},
    \begin{align*}
        \Reg\OT(c_{\Gamma,\Xi'})
        &\le\scal{c_{\Gamma,\Xi},\widehat\pi}+\Reg(\widehat\pi)+\scal{A[\widehat\pi],\Xi'-\Xi}
        \le\Reg\OT(c_{\Gamma,\Xi})+\scal{A[\widehat\pi],\Xi'-\Xi}+\delta_{\Reg\OT}.
    \end{align*}
    Subtracting $\tfrac12\norm{\Xi'}^2=\tfrac12\norm\Xi^2+\scal{\Xi,\Xi'-\Xi}+\tfrac12\norm{\Xi'-\Xi}^2$ from both sides gives \cref{eq:unified_approx_first_order}.

    For \cref{eq:unified_primal_certificate}, let $\pi\in\mathcal P$. The accuracy of $\widehat\pi$ tested against $\pi$ reads
    \[
        -\scal{\Gamma,B[\widehat\pi]-B[\pi]}+\Reg(\widehat\pi)-\Reg(\pi)
        \le-\scal{\Xi,A[\widehat\pi]-A[\pi]}+\delta_{\Reg\OT},
    \]
    so that, by \cref{eq:unified_Q} and then \cref{eq:square_id_2} with $(a,b)=(A[\widehat\pi],A[\pi])$,
    \begin{align*}
        Q_\Gamma(\widehat\pi)-Q_\Gamma(\pi)
        &\le\tfrac12\norm{A[\widehat\pi]}^2-\tfrac12\norm{A[\pi]}^2-\scal{\Xi,A[\widehat\pi]-A[\pi]}+\delta_{\Reg\OT}\\
        &=\tfrac12\norm{A[\widehat\pi]-\Xi}^2-\tfrac12\norm{A[\pi]-\Xi}^2+\delta_{\Reg\OT}
        \le\tfrac12\norm{A[\widehat\pi]-\Xi}^2+\delta_{\Reg\OT}.
    \end{align*}
    Taking the infimum over $\pi$ proves \cref{eq:unified_primal_certificate}.
\end{proof}

\subsection{Outer convergence: proof of Theorem~\ref{thm:main_convergence}(i)}
\label{subsec:unified_main_generic}

\begin{lemma}[Square completion and outer Lyapunov function]
    \label{lem:unified_structure}
    For every $(\Gamma,\pi)\in \Hilb_B\times\mathcal P$,
    \begin{equation}
        \label{eq:unified_square}
        \tfrac12\norm{\Gamma}^2+Q_\Gamma(\pi)
        =
        \Engy_\Reg(\pi)
        +\tfrac12\norm{\Gamma-B[\pi]}^2.
    \end{equation}
    Consequently, the function $\Lambda$ of \cref{eq:unified_Q} satisfies
    \[
        \Lambda(\Gamma)
        =
        \inf_{\pi\in\mathcal P}
        \Bigl\{\Engy_\Reg(\pi)+\tfrac12\norm{\Gamma-B[\pi]}^2\Bigr\},
    \]
    is finite under \textbf{(S1)}--\textbf{(S2)}, and
    \begin{equation}
        \label{eq:unified_Lambda}
        \Lambda(\Gamma)\ge\Reg\QM(\alpha,\beta)
        \quad\text{for every }\Gamma\in\Hilb_B,
        \qquad
        \inf_{\Gamma}\Lambda(\Gamma)=\Reg\QM(\alpha,\beta).
    \end{equation}
\end{lemma}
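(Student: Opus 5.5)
The plan is to first establish the pointwise identity \cref{eq:unified_square} by direct algebra, and then deduce every statement about $\Lambda$ by taking infima. Fix $(\Gamma,\pi)\in\Hilb_B\times\mathcal P$. By \cref{eq:unified_Q},
\begin{equation*}
\tfrac12\norm{\Gamma}^2+Q_\Gamma(\pi)=\tfrac12\norm{A[\pi]}^2+\Reg(\pi)+\tfrac12\norm{\Gamma}^2-\scal{\Gamma,B[\pi]}.
\end{equation*}
Completing the square gives
\begin{equation*}
\tfrac12\norm{\Gamma}^2-\scal{\Gamma,B[\pi]}=\tfrac12\norm{\Gamma-B[\pi]}^2-\tfrac12\norm{B[\pi]}^2.
\end{equation*}
This is also the case $a=\Gamma$, $b=0$, $\Xi=B[\pi]$ of \cref{eq:square_id_2}. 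Substituting yields exactly $\Engy_\Reg(\pi)+\tfrac12\norm{\Gamma-B[\pi]}^2$. When $\Reg(\pi)=+\infty$ both sides equal $+\infty$, since all other terms are finite because $A[\pi]$ and $B[\pi]$ lie in Hilbert spaces. So the identity holds in $\R\cup\{+\infty\}$ without any care about $\infty-\infty$.

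Taking $\inf_{\pi\in\mathcal P}$ on both sides, and using that $\tfrac12\norm{\Gamma}^2$ does not depend on $\pi$, gives the stated variational formula for $\Lambda(\Gamma)$.

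For the lower bound, the term $\tfrac12\norm{\Gamma-B[\pi]}^2$ is nonnegative, so $\Lambda(\Gamma)\ge\inf_\pi\Engy_\Reg(\pi)=\Reg\QM(\alpha,\beta)$. This is $>-\infty$ by \textbf{(S2)}, so $\Lambda$ never takes the value $-\infty$.

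For the upper bound, take $\pi_0$ from \textbf{(S1)} with $\Reg(\pi_0)<\infty$. Then $\Engy_\Reg(\pi_0)<\infty$, because the quadratic terms are finite. Hence $\Lambda(\Gamma)\le\Engy_\Reg(\pi_0)+\tfrac12\norm{\Gamma-B[\pi_0]}^2<\infty$. This shows $\Lambda$ is finite everywhere.

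For the infimum identity, choose a minimizing sequence $\pi_n$ with $\Engy_\Reg(\pi_n)\to\Reg\QM(\alpha,\beta)$, and set $\Gamma_n:=B[\pi_n]$. Evaluating the variational formula at $\pi=\pi_n$ gives $\Lambda(\Gamma_n)\le\Engy_\Reg(\pi_n)$. Combined with the lower bound, this yields $\inf_\Gamma\Lambda=\Reg\QM(\alpha,\beta)$. This argument recovers the first part of \cref{prop:outer_gamma} directly, without going through conjugates.

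There is no real obstacle here. The only point requiring attention is the bookkeeping of infinite values, together with the remark that the infimum over $\Gamma$ need not be attained unless $\Reg\QM$ is. The argument therefore works with a minimizing sequence rather than a minimizer.
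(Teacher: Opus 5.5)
Your proposal is correct and matches the paper's proof step for step: expand the square so the $\pm\tfrac12\norm{B[\pi]}^2$ terms cancel, take the infimum over $\pi$, get the lower bound from nonnegativity of the square and \textbf{(S2)}, get the upper bound from the plan $\pi_0$ of \textbf{(S1)}, and obtain $\inf_\Gamma\Lambda(\Gamma)=\Reg\QM(\alpha,\beta)$ by choosing $\Gamma=B[\pi]$. The only cosmetic difference is in that last step: the paper uses $\Lambda(B[\pi])\le\Engy_\Reg(\pi)$ for every $\pi$ and takes the infimum directly, rather than passing through a minimizing sequence.
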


\begin{proof}
    Expanding $\tfrac12\norm{\Gamma-B[\pi]}^2=\tfrac12\norm\Gamma^2-\scal{\Gamma,B[\pi]}+\tfrac12\norm{B[\pi]}^2$ and adding $\Engy_\Reg(\pi)$, the terms $\pm\tfrac12\norm{B[\pi]}^2$ cancel and give \cref{eq:unified_square}; taking the infimum over $\pi$ gives the expression of $\Lambda$.

    Since the square is nonnegative, $\Lambda(\Gamma)\ge\inf_\pi\Engy_\Reg(\pi)=\Reg\QM(\alpha,\beta)>-\infty$ by \textbf{(S2)}, and $\Lambda(\Gamma)\le\Engy_\Reg(\pi_0)+\tfrac12\norm{\Gamma-B[\pi_0]}^2<\infty$ for the plan $\pi_0$ of \textbf{(S1)}, whose moments are finite since $A$ and $B$ are bounded. Finally, choosing $\Gamma=B[\pi]$ kills the square, so $\Lambda(B[\pi])\le\Engy_\Reg(\pi)$ for every $\pi$; together with the lower bound, this gives \cref{eq:unified_Lambda}.
\end{proof}

The identity \cref{eq:unified_square} shows that
$\tfrac12\norm\Gamma^2+Q_\Gamma$ majorizes $\Engy_\Reg$ with equality at
$\Gamma=B[\pi]$, so that the $\Gamma$-update of \cref{alg:ugw} is a
majorization--minimization (equivalently, DC) step. Its fixed points are the
plans $\pi^\star$ with $\pi^\star\in\arg\min_\pi Q_{B[\pi^\star]}(\pi)$, the
DC-critical plans of \cref{section:numerical_optim}; the algorithm produces
them approximately, in the following sense.

\begin{definition}[$(\delta,\eta)$-criticality]
\label{def:unified_criticality}
We call $\widehat\pi\in\mathcal P$ \emph{$(\delta,\eta)$-critical}, with
$\delta,\eta\ge0$, if for every $\pi\in\mathcal P$,
\begin{equation}
    \label{eq:unified_criticality}
    \Engy_\Reg(\widehat\pi)
    \le
    \Engy_\Reg(\pi)
    +\tfrac12\norm{B[\pi-\widehat\pi]}^2
    +\delta\norm{B[\pi-\widehat\pi]}
    +\eta.
\end{equation}
\end{definition}
By \cref{eq:unified_square}, $\widehat\pi$ is $(\delta,\eta)$-critical if and
only if it is an $\eta$-minimizer of the convex function
$\pi\mapsto Q_{B[\widehat\pi]}(\pi)+\delta\norm{B[\pi-\widehat\pi]}$; the case
$(\delta,\eta)=(0,0)$ is DC criticality.

\begin{lemma}[One outer step]
    \label{lem:outer_step}
    Let $\Gamma\in\Hilb_B$, $\eta\ge0$, and let $\widehat\pi\in\mathcal P$ be
    an $\eta$-minimizer of $Q_\Gamma$, that is,
    \begin{equation}
        \label{eq:unified_outer_iteration}
        Q_\Gamma(\widehat\pi)\le\inf_{\pi\in\mathcal P}Q_\Gamma(\pi)+\eta.
    \end{equation}
    Set $\Gamma^+:=B[\widehat\pi]$. Then
    \begin{enumerate}
        \item[\rm(i)] $\Lambda(\Gamma)-\Lambda(\Gamma^+)\ge\tfrac12\norm{\Gamma^+-\Gamma}^2-\eta$;
        \item[\rm(ii)] $\widehat\pi$ is $\bigl(\norm{\Gamma^+-\Gamma},\eta\bigr)$-critical.
    \end{enumerate}
\end{lemma}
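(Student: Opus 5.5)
Both parts rest on the square-completion identity \cref{eq:unified_square} of \cref{lem:unified_structure}, used at two different values of $\Gamma$.

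\emph{Part (i).} I will bound $\Lambda(\Gamma)$ from below and $\Lambda(\Gamma^+)$ from above. By definition of $\Lambda$ and \cref{eq:unified_outer_iteration},
\[
\Lambda(\Gamma)\ge\tfrac12\norm{\Gamma}^2+Q_\Gamma(\widehat\pi)-\eta .
\]
Then \cref{eq:unified_square} at $(\Gamma,\widehat\pi)$ rewrites the right-hand side as
\[
\Engy_\Reg(\widehat\pi)+\tfrac12\norm{\Gamma-B[\widehat\pi]}^2-\eta
=\Engy_\Reg(\widehat\pi)+\tfrac12\norm{\Gamma-\Gamma^+}^2-\eta .
\]
On the other side, testing the infimum expression of $\Lambda(\Gamma^+)$ in \cref{lem:unified_structure} at $\pi=\widehat\pi$ kills the square, which gives $\Lambda(\Gamma^+)\le\Engy_\Reg(\widehat\pi)$. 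Subtracting the two bounds yields (i). Finiteness of all terms follows from \textbf{(S1)}--\textbf{(S2)}, since $\Lambda$ is finite.

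\emph{Part (ii).} Fix $\pi\in\mathcal P$ and set $\delta:=\norm{\Gamma^+-\Gamma}$. By \cref{eq:unified_outer_iteration}, $Q_\Gamma(\widehat\pi)\le Q_\Gamma(\pi)+\eta$. Adding $\tfrac12\norm\Gamma^2$ to both sides and applying \cref{eq:unified_square} to each gives
\[
\Engy_\Reg(\widehat\pi)+\tfrac12\norm{\Gamma-\Gamma^+}^2\le\Engy_\Reg(\pi)+\tfrac12\norm{\Gamma-B[\pi]}^2+\eta .
\]
Next I write $\Gamma-B[\pi]=(\Gamma-\Gamma^+)-B[\pi-\widehat\pi]$, using linearity of $B$ and $\Gamma^+=B[\widehat\pi]$. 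Expanding the square gives
\[
\tfrac12\norm{\Gamma-\Gamma^+}^2-\scal{\Gamma-\Gamma^+,B[\pi-\widehat\pi]}+\tfrac12\norm{B[\pi-\widehat\pi]}^2 .
\]
The term $\tfrac12\norm{\Gamma-\Gamma^+}^2$ then cancels with the one on the left. Bounding the cross term by Cauchy--Schwarz, $\abs{\scal{\Gamma-\Gamma^+,B[\pi-\widehat\pi]}}\le\delta\norm{B[\pi-\widehat\pi]}$, gives exactly \cref{eq:unified_criticality} with parameters $(\delta,\eta)$.

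The one step needing care is this expansion in (ii). The cancellation of $\tfrac12\norm{\Gamma-\Gamma^+}^2$ is what makes the $\delta$-term linear in $\norm{B[\pi-\widehat\pi]}$ instead of adding a constant $\delta^2/2$. Everything else is direct substitution.
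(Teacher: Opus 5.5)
Your proposal is correct and follows the paper's own proof essentially step for step. In (i), you combine the lower bound on $\Lambda(\Gamma)$ from the square identity at $(\Gamma,\widehat\pi)$ with the upper bound $\Lambda(\Gamma^+)\le\Engy_\Reg(\widehat\pi)$. In (ii), you expand $\tfrac12\norm{\Gamma-B[\pi]}^2$ around $\Gamma-\Gamma^+$, cancel the common square, and conclude by Cauchy--Schwarz.
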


\begin{proof}
    We first prove (i). By \cref{eq:unified_outer_iteration},
    \begin{equation}
        \label{eq:outer_lambda_lower}
        \Lambda(\Gamma)
        =\tfrac12\norm{\Gamma}^2+\inf_{\pi}Q_\Gamma(\pi)
        \ge\tfrac12\norm{\Gamma}^2+Q_\Gamma(\widehat\pi)-\eta.
    \end{equation}
    Applying the square identity \cref{eq:unified_square} at
    $(\Gamma,\widehat\pi)$, and using $B[\widehat\pi]=\Gamma^+$,
    \[
        \tfrac12\norm{\Gamma}^2+Q_\Gamma(\widehat\pi)
        =\Engy_\Reg(\widehat\pi)+\tfrac12\norm{\Gamma-\Gamma^+}^2,
    \]
    so that \cref{eq:outer_lambda_lower} becomes
    \begin{equation}
        \label{eq:outer_lambda_lower_final}
        \Lambda(\Gamma)
        \ge\Engy_\Reg(\widehat\pi)+\tfrac12\norm{\Gamma-\Gamma^+}^2-\eta.
    \end{equation}
    On the other hand, evaluating the infimum defining $\Lambda(\Gamma^+)$ at
    $\widehat\pi$ and applying \cref{eq:unified_square} at
    $(\Gamma^+,\widehat\pi)$, where the square vanishes because
    $\Gamma^+=B[\widehat\pi]$,
    \[
        \Lambda(\Gamma^+)
        \le\tfrac12\norm{\Gamma^+}^2+Q_{\Gamma^+}(\widehat\pi)
        =\Engy_\Reg(\widehat\pi).
    \]
    Subtracting this inequality from \cref{eq:outer_lambda_lower_final}
    proves (i).

    We now prove (ii). Fix an arbitrary $\pi\in\mathcal P$. By
    \cref{eq:unified_outer_iteration}, $Q_\Gamma(\widehat\pi)\le Q_\Gamma(\pi)+\eta$.
    Applying \cref{eq:unified_square} at $(\Gamma,\widehat\pi)$ and at
    $(\Gamma,\pi)$, this inequality reads
    \begin{equation}
        \label{eq:outer_step_square_ineq}
        \Engy_\Reg(\widehat\pi)+\tfrac12\norm{\Gamma-\Gamma^+}^2
        \le\Engy_\Reg(\pi)+\tfrac12\norm{\Gamma-B[\pi]}^2+\eta.
    \end{equation}
    Set $d:=B[\pi-\widehat\pi]=B[\pi]-\Gamma^+$. Then
    $\Gamma-B[\pi]=(\Gamma-\Gamma^+)-d$, and expanding the square,
    \[
        \tfrac12\norm{\Gamma-B[\pi]}^2
        =\tfrac12\norm{\Gamma-\Gamma^+}^2-\scal{\Gamma-\Gamma^+,d}+\tfrac12\norm d^2.
    \]
    Substituting this into \cref{eq:outer_step_square_ineq}, the term
    $\tfrac12\norm{\Gamma-\Gamma^+}^2$ cancels on both sides, and we obtain
    \begin{align*}
        \Engy_\Reg(\widehat\pi)
        &\le\Engy_\Reg(\pi)+\tfrac12\norm d^2+\scal{\Gamma^+-\Gamma,d}+\eta\\
        &\le\Engy_\Reg(\pi)+\tfrac12\norm{B[\pi-\widehat\pi]}^2
        +\norm{\Gamma^+-\Gamma}\,\norm{B[\pi-\widehat\pi]}+\eta,
    \end{align*}
    where the last inequality is Cauchy--Schwarz. This is
    \cref{eq:unified_criticality} with $(\delta,\eta)$ replaced by
    $(\norm{\Gamma^+-\Gamma},\eta)$, that is, $\widehat\pi$ is
    $(\norm{\Gamma^+-\Gamma},\eta)$-critical.
\end{proof}

\begin{theorem}[Outer convergence of \cref{alg:ugw}; precise form of \cref{thm:main_convergence}(i)]
    \label{thm:unified_outer}
    Assume \textbf{(S1)}--\textbf{(S2)} and let $\delta>0$. Run
    \cref{alg:ugw} with any $\Reg\OT$ solver, any step size $\tau$ and any
    accuracy $\delta_{\Reg\OT}$ such that every inner loop terminates
    ($K_t<\infty$ for all $t$). Let
    \begin{equation}
        \label{eq:unified_outer_count}
        T_\delta:=\max\Bigl\{1,\Bigl\lceil\frac{4\Delta^{\mathrm{out}}}{\delta^2}\Bigr\rceil\Bigr\}.
    \end{equation}
    Then:
    \begin{enumerate}
        \item[\rm(i)] for every outer iteration $t$,
        $Q_{\Gamma_t}(\widehat\pi_t)\le\inf_\pi Q_{\Gamma_t}(\pi)+\delta^2/4$,
        \begin{equation}
            \label{eq:unified_outer_descent}
            \Lambda(\Gamma_t)-\Lambda(\Gamma_{t+1})
            \ge\tfrac12\norm{\Gamma_{t+1}-\Gamma_t}^2-\tfrac{\delta^2}4,
        \end{equation}
        and $\widehat\pi_t$ is $\bigl(\norm{\Gamma_{t+1}-\Gamma_t},\delta^2/4\bigr)$-critical;
        \item[\rm(ii)] the stopping test $\norm{\Gamma_{t+1}-\Gamma_t}\le\delta$
        is met at some outer iteration $t_{\mathrm{stop}}<T_\delta$, and the
        returned plan $\widehat\pi_{t_{\mathrm{stop}}}$ is
        $(\delta,\delta^2/4)$-critical.
    \end{enumerate}
\end{theorem}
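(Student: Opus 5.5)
The argument chains the certificate bound of \cref{lem:unified_certificate} with the one-step analysis of \cref{lem:outer_step}, and then telescopes the Lyapunov function $\Lambda$.

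For part (i), fix an outer iteration $t$. By assumption $K_t<\infty$, so the returned plan $\widehat\pi_t=\widehat\pi_{t,K_t}$ is $\delta_{\Reg\OT}$-accurate at $c_{\Gamma_t,\Xi_{t,K_t}}$. By the definition of $K_t$ in \cref{eq:unified_Kt}, it also satisfies
\[
\operatorname{Cert}_{\Gamma_t}(\Xi_{t,K_t},\widehat\pi_t)\le\delta^2/4.
\]
Applying \cref{eq:unified_primal_certificate} with $\Gamma=\Gamma_t$ and $\Xi=\Xi_{t,K_t}$ gives $Q_{\Gamma_t}(\widehat\pi_t)\le\inf_\pi Q_{\Gamma_t}(\pi)+\delta^2/4$. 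Thus $\widehat\pi_t$ is an $\eta$-minimizer of $Q_{\Gamma_t}$ with $\eta=\delta^2/4$. Since $\Gamma_{t+1}=B[\widehat\pi_t]$, \cref{lem:outer_step}(i) yields the descent inequality \cref{eq:unified_outer_descent}. \cref{lem:outer_step}(ii) yields the $(\norm{\Gamma_{t+1}-\Gamma_t},\delta^2/4)$-criticality. Nothing in this step depends on $\tau$ or on the solver, only on termination and accuracy.

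For part (ii), argue by contradiction. Suppose the test fails at every $t=0,\dots,T_\delta-1$, i.e.\ $\norm{\Gamma_{t+1}-\Gamma_t}>\delta$. Then each descent step gives
\[
\Lambda(\Gamma_t)-\Lambda(\Gamma_{t+1})>\tfrac12\delta^2-\tfrac14\delta^2=\tfrac14\delta^2.
\]
Summing over $t<T_\delta$ and telescoping gives $\Lambda(\Gamma_0)-\Lambda(\Gamma_{T_\delta})>T_\delta\delta^2/4$. By \cref{lem:unified_structure}, $\Lambda(\Gamma_{T_\delta})\ge\Reg\QM(\alpha,\beta)$, and $\Lambda$ is finite under \textbf{(S1)}--\textbf{(S2)}. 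So the left side is at most $\Delta^{\mathrm{out}}$, which forces $T_\delta<4\Delta^{\mathrm{out}}/\delta^2$. This contradicts the choice of $T_\delta$ in \cref{eq:unified_outer_count}.

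Hence the test is met at some $t_{\mathrm{stop}}<T_\delta$. The corresponding plan is $(\norm{\Gamma_{t+1}-\Gamma_t},\delta^2/4)$-critical with $\norm{\Gamma_{t+1}-\Gamma_t}\le\delta$. The right side of \cref{eq:unified_criticality} is monotone in $\delta$, so the plan is also $(\delta,\delta^2/4)$-critical.

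There is no deep obstacle. The points that need care are the strict-versus-nonstrict inequality in the counting argument, and the degenerate case $\Delta^{\mathrm{out}}=0$, which is handled by the $\max\{1,\cdot\}$ in $T_\delta$. In that case the contradiction reads $0\ge\Delta^{\mathrm{out}}>\delta^2/4>0$.
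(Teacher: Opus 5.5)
Your proof is correct and follows the paper's argument: part (i) chains \cref{lem:unified_certificate} with \cref{lem:outer_step} exactly as the paper does, and part (ii) telescopes $\Lambda$ using $\Lambda(\Gamma_{T_\delta})\ge\Reg\QM(\alpha,\beta)$. The only cosmetic difference is that you bound each per-step decrease strictly below by $\delta^2/4$ before summing, whereas the paper sums first and then bounds the smallest step by the average. The case $\Delta^{\mathrm{out}}=0$ needs no separate treatment, since your general contradiction already covers it; the $\max\{1,\cdot\}$ only guarantees that the sum has at least one term.
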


\begin{proof}
    Fix an outer iteration $t$. By definition \cref{eq:unified_Kt} of $K_t$,
    the plan $\widehat\pi_t=\widehat\pi_{t,K_t}$ is
    $\delta_{\Reg\OT}$-accurate at $c_{\Gamma_t,\Xi_{t,K_t}}$ and
    $\operatorname{Cert}_{\Gamma_t}(\Xi_{t,K_t},\widehat\pi_t)\le\delta^2/4$.
    Hence \cref{lem:unified_certificate} gives
    $Q_{\Gamma_t}(\widehat\pi_t)\le\inf_\pi Q_{\Gamma_t}(\pi)+\delta^2/4$,
    which is the first claim of (i), and \cref{lem:outer_step} applied with
    $(\Gamma,\widehat\pi,\eta)=(\Gamma_t,\widehat\pi_t,\delta^2/4)$, for
    which $\Gamma^+=B[\widehat\pi_t]=\Gamma_{t+1}$, gives the other two.

    For (ii), suppose that the stopping test fails for every $t<T_\delta$.
    Summing \cref{eq:unified_outer_descent} over $t=0,\dots,T_\delta-1$ and
    using $\Lambda(\Gamma_{T_\delta})\ge\Reg\QM(\alpha,\beta)$ from
    \cref{eq:unified_Lambda},
    \[
        \tfrac12\sum_{t=0}^{T_\delta-1}\norm{\Gamma_{t+1}-\Gamma_t}^2
        \le\Lambda(\Gamma_0)-\Lambda(\Gamma_{T_\delta})+T_\delta\frac{\delta^2}4
        \le\Delta^{\mathrm{out}}+T_\delta\frac{\delta^2}4 .
    \]
    Therefore
    \[
        \min_{0\le t<T_\delta}\norm{\Gamma_{t+1}-\Gamma_t}^2
        \le\frac1{T_\delta}\sum_{t=0}^{T_\delta-1}\norm{\Gamma_{t+1}-\Gamma_t}^2
        \le\frac{2\Delta^{\mathrm{out}}}{T_\delta}+\frac{\delta^2}2
        \le\delta^2,
    \]
    because $T_\delta\ge4\Delta^{\mathrm{out}}/\delta^2$. This contradicts the
    assumption that $\norm{\Gamma_{t+1}-\Gamma_t}>\delta$ for all
    $t<T_\delta$, so the stopping test is met at some
    $t_{\mathrm{stop}}<T_\delta$. At that iteration,
    $\norm{\Gamma_{t_{\mathrm{stop}}+1}-\Gamma_{t_{\mathrm{stop}}}}\le\delta$
    and (i) show that $\widehat\pi_{t_{\mathrm{stop}}}$ is
    $(\delta',\delta^2/4)$-critical for some $\delta'\le\delta$, hence
    $(\delta,\delta^2/4)$-critical, because \cref{eq:unified_criticality} is
    monotone in both parameters.
\end{proof}

\Cref{thm:unified_outer} is conditional only on the termination of the
inner loops, which is observed at run time; \cref{cor:unified_oracle_complexity}
gives a smoothness condition under which termination is guaranteed, with an
explicit bound on $K_t$.

\subsection{Inner-loop complexity: proof of Theorem~\ref{thm:main_convergence}(ii)}
\label{subsec:unified_nested}

At fixed $\Gamma$, the inner loop of \cref{alg:ugw} is an inexact gradient
ascent on $h_\Gamma$, with inexact gradient
$\widehat g_k:=A[\widehat\pi_k]-\Xi_k$ (\cref{lem:unified_certificate}).

\begin{lemma}[Strong concavity of the reduced objective]
    \label{lem:unified_strong_concavity}
    Assume \textbf{(S1)} and fix $\Gamma$. Then $h_\Gamma<+\infty$ on
    $\Hilb_A$. If moreover $\Reg\OT(c_{\Gamma,\Xi})>-\infty$ for every
    $\Xi\in\Hilb_A$, so that $h_\Gamma$ is finite, then $h_\Gamma$ is
    $1$-strongly concave and upper semicontinuous on $\Hilb_A$, and attains
    its maximum at a unique point $\Xi_\Gamma^\star$.
\end{lemma}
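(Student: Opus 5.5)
The plan is to write $h_\Gamma(\Xi) = -\tfrac12\norm{\Xi}^2 + \Reg\OT(c_{\Gamma,\Xi})$ and check that the second term is a concave, upper semicontinuous function of $\Xi$; every claim of the lemma then follows from this decomposition.

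\emph{Upper bound.} Testing the infimum defining $\Reg\OT(c_{\Gamma,\Xi})$ at the plan $\pi_0$ of \textbf{(S1)} gives
\[
\Reg\OT(c_{\Gamma,\Xi}) \le \scal{\Xi, A[\pi_0]} - \scal{\Gamma, B[\pi_0]} + \Reg(\pi_0).
\]
The right-hand side is finite, since $A$ and $B$ are bounded (their kernels are continuous on a compact set) and $\Reg(\pi_0)<\infty$. Hence $h_\Gamma(\Xi) \le -\tfrac12\norm{\Xi}^2 + \scal{\Xi,A[\pi_0]} + \mathrm{const} < +\infty$. This bound also shows that $h_\Gamma \to -\infty$ as $\norm{\Xi}\to\infty$, which will be used for existence.

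\emph{Concavity and upper semicontinuity.} Since $c_{\Gamma,\Xi} = A^\top\Xi - B^\top\Gamma$, for each fixed $\pi$ the map
\[
\Xi \mapsto \scal{c_{\Gamma,\Xi},\pi} + \Reg(\pi) = \scal{\Xi, A[\pi]} - \scal{\Gamma,B[\pi]} + \Reg(\pi)
\]
is affine and continuous in $\Xi$ for $\pi \in \operatorname{dom}\Reg$. Plans outside $\operatorname{dom}\Reg$ contribute the constant $+\infty$ and can be discarded. The map $\Xi\mapsto\Reg\OT(c_{\Gamma,\Xi})$ is therefore a pointwise infimum of affine continuous functions, hence concave and upper semicontinuous. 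Under the extra hypothesis that it is finite everywhere, it is a genuine real-valued concave function. Subtracting $\tfrac12\norm{\Xi}^2$ shows that $h_\Gamma + \tfrac12\norm{\cdot}^2$ is concave, which is exactly $1$-strong concavity, and upper semicontinuity is preserved because $\tfrac12\norm{\cdot}^2$ is continuous.

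\emph{Existence and uniqueness of the maximizer.} This is the only step that is not immediate, because $\Hilb_A$ may be infinite-dimensional and closed bounded sets need not be norm-compact. I will use weak topology instead. A concave function that is upper semicontinuous in norm has closed convex superlevel sets, and closed convex sets are weakly closed, so $h_\Gamma$ is weakly upper semicontinuous. By strong concavity, or directly from the coercive bound above, the superlevel set $\{h_\Gamma \ge h_\Gamma(0)\}$ is bounded, and it is nonempty because $h_\Gamma(0)$ is finite. A bounded, weakly closed subset of a Hilbert space is weakly compact, so $h_\Gamma$ attains its supremum there, which is the global maximum. Uniqueness follows from strict concavity: if $\Xi_1\ne\Xi_2$ were both maximizers, the midpoint would satisfy
\[
h_\Gamma\bigl(\tfrac{\Xi_1+\Xi_2}{2}\bigr) \ge \max h_\Gamma + \tfrac18\norm{\Xi_1-\Xi_2}^2 > \max h_\Gamma,
\]
a contradiction.
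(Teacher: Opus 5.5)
Your proposal is correct and follows the paper's route. You bound $h_\Gamma$ above by testing the infimum at $\pi_0$, get concavity and upper semicontinuity of $\Xi\mapsto\Reg\OT(c_{\Gamma,\Xi})$ as an infimum of continuous affine maps, and subtract $\tfrac12\norm{\Xi}^2$. The only difference is that you spell out the existence-and-uniqueness step (weak upper semicontinuity from closed convex superlevel sets, weak compactness of the bounded superlevel set, and the midpoint inequality), which the paper states as a single unproved assertion.
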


\begin{proof}
    The bound $h_\Gamma<+\infty$ is \cref{eq:unified_weak_duality} tested at the plan $\pi_0$ of \textbf{(S1)}. Assume $h_\Gamma$ finite. For fixed $\pi$, $\Xi\mapsto\scal{\Xi,A[\pi]}-\scal{\Gamma,B[\pi]}+\Reg(\pi)$ is affine and continuous, so $\Xi\mapsto\Reg\OT(c_{\Gamma,\Xi})$, an infimum of such functions, is concave and upper semicontinuous, and $h_\Gamma=\Reg\OT(c_{\Gamma,\cdot})-\tfrac12\norm\cdot^2$ is $1$-strongly concave and upper semicontinuous. A finite, upper semicontinuous, strongly concave function on a Hilbert space attains its supremum at exactly one point.
\end{proof}

\begin{proposition}[Gradient of $h_\Gamma$ and exact fixed point]
    \label{prop:unified_danskin}
    Fix $\Gamma$ and assume that $h_\Gamma$ is finite on $\Hilb_A$ and
    differentiable at $\Xi\in\Hilb_A$. If $\pi_{\Gamma,\Xi}$ is any minimizer
    of $\pi\mapsto\scal{c_{\Gamma,\Xi},\pi}+\Reg(\pi)$ over $\mathcal P$,
    then
    \begin{equation}
        \label{eq:unified_gradient_formula}
        \nabla h_\Gamma(\Xi)=A[\pi_{\Gamma,\Xi}]-\Xi.
    \end{equation}
    In particular, if $h_\Gamma$ is differentiable on $\Hilb_A$ and the
    $\Reg\OT$ problem at $c_{\Gamma,\Xi^\star_\Gamma}$ admits a minimizer
    $\pi^\star_\Gamma$, where $\Xi^\star_\Gamma$ is the maximizer of
    \cref{lem:unified_strong_concavity}, then
    \begin{equation}
        \label{eq:unified_xi_fixed_point}
        \Xi^\star_\Gamma=A[\pi^\star_\Gamma],
        \qquad
        \pi^\star_\Gamma\in\operatorname*{argmin}_{\pi\in\mathcal P}Q_\Gamma(\pi),
        \qquad
        \max_{\Xi\in\Hilb_A}h_\Gamma(\Xi)=\min_{\pi\in\mathcal P}Q_\Gamma(\pi).
    \end{equation}
\end{proposition}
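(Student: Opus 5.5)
The gradient formula will follow from the inexact supergradient inequality of \cref{lem:unified_certificate} with $\delta_{\Reg\OT}=0$; the fixed-point claims then follow from first-order optimality and the square identity.

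\emph{Gradient formula.} Let $\pi_{\Gamma,\Xi}$ be an exact minimizer, so it is $0$-accurate at $c_{\Gamma,\Xi}$. Then \cref{eq:unified_approx_first_order} gives, for every $\Xi'$,
\[
h_\Gamma(\Xi')\le h_\Gamma(\Xi)+\scal{A[\pi_{\Gamma,\Xi}]-\Xi,\Xi'-\Xi}-\tfrac12\norm{\Xi'-\Xi}^2 .
\]
Hence $g:=A[\pi_{\Gamma,\Xi}]-\Xi$ is a supergradient of the concave function $h_\Gamma$ at $\Xi$ (\cref{lem:unified_strong_concavity}). Differentiability at $\Xi$ forces the superdifferential to be $\{\nabla h_\Gamma(\Xi)\}$. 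To check this directly, put $\Xi'=\Xi+s v$ with $s>0$, subtract the first-order expansion, divide by $s$ and let $s\to0^+$. This gives $\scal{\nabla h_\Gamma(\Xi)-g,v}\le 0$ for all $v$. Replacing $v$ by $-v$ yields equality, so $g=\nabla h_\Gamma(\Xi)$. This step is the only delicate point: the minimizer need not be unique, yet the argument shows every minimizer has the same moment $A[\pi_{\Gamma,\Xi}]$.

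\emph{Fixed point.} By \cref{lem:unified_strong_concavity}, $\Xi^\star_\Gamma$ is the maximizer. With differentiability on $\Hilb_A$, this gives $\nabla h_\Gamma(\Xi^\star_\Gamma)=0$. By the formula above applied to $\pi^\star_\Gamma$, we get $\Xi^\star_\Gamma=A[\pi^\star_\Gamma]$.

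\emph{Optimality of $\pi^\star_\Gamma$.} Use the accuracy argument of \cref{lem:unified_certificate} with $\delta_{\Reg\OT}=0$ and $\Xi=\Xi^\star_\Gamma$. By \cref{eq:unified_primal_certificate},
\[
Q_\Gamma(\pi^\star_\Gamma)-\inf_\pi Q_\Gamma\le\tfrac12\norm{A[\pi^\star_\Gamma]-\Xi^\star_\Gamma}^2=0 ,
\]
so $\pi^\star_\Gamma\in\operatorname{argmin}Q_\Gamma$.

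\emph{Equality of values.} Evaluate $h_\Gamma(\Xi^\star_\Gamma)=\scal{c_{\Gamma,\Xi^\star_\Gamma},\pi^\star_\Gamma}+\Reg(\pi^\star_\Gamma)-\tfrac12\norm{\Xi^\star_\Gamma}^2$. Substituting $\Xi^\star_\Gamma=A[\pi^\star_\Gamma]$, the terms $\scal{\Xi^\star_\Gamma,A[\pi^\star_\Gamma]}-\tfrac12\norm{\Xi^\star_\Gamma}^2$ combine to $\tfrac12\norm{A[\pi^\star_\Gamma]}^2$. The result is $Q_\Gamma(\pi^\star_\Gamma)=\min Q_\Gamma$.

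Together with the weak duality \cref{eq:unified_weak_duality}, this shows the supremum of $h_\Gamma$ is attained and equals $\min_\pi Q_\Gamma(\pi)$, as claimed.
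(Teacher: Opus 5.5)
Your proof is correct and follows the paper's argument. The paper derives the supergradient inequality for $v=\Reg\OT(c_{\Gamma,\cdot})$ directly and then subtracts $\tfrac12\norm{\cdot}^2$; you obtain the same inequality for $h_\Gamma$ from \cref{lem:unified_certificate} with zero accuracy, which is the same computation, and you check by hand that differentiability forces the supergradient to equal the gradient. The fixed-point, primal-certificate and value-equality steps coincide with the paper's.
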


\begin{proof}
    Write $v(\Xi):=\Reg\OT(c_{\Gamma,\Xi})$, so that
    $h_\Gamma=v-\tfrac12\norm\cdot^2$ and $v$ is finite on $\Hilb_A$.

    \emph{A supergradient of $v$.} Let $\pi:=\pi_{\Gamma,\Xi}$ and let
    $\Xi'\in\Hilb_A$. Since $\pi$ is admissible in the infimum defining
    $v(\Xi')$, and since $c_{\Gamma,\Xi'}=c_{\Gamma,\Xi}+A^\top(\Xi'-\Xi)$,
    \begin{equation}
        \label{eq:danskin_supergradient}
        v(\Xi')
        \le\scal{c_{\Gamma,\Xi'},\pi}+\Reg(\pi)
        =\scal{c_{\Gamma,\Xi},\pi}+\Reg(\pi)+\scal{A[\pi],\Xi'-\Xi}
        =v(\Xi)+\scal{A[\pi],\Xi'-\Xi},
    \end{equation}
    where the last equality is the optimality of $\pi$ at the cost
    $c_{\Gamma,\Xi}$. Thus $A[\pi]$ is a supergradient of the concave
    function $v$ at $\Xi$.

    \emph{Identification of the gradient.} As $v$ is concave and differentiable at $\Xi$, its only supergradient there is $\nabla v(\Xi)$; hence $\nabla v(\Xi)=A[\pi]$ and $\nabla h_\Gamma(\Xi)=A[\pi]-\Xi$.

    \emph{Exact fixed point.} If $h_\Gamma$ is differentiable on $\Hilb_A$, then $\nabla h_\Gamma(\Xi^\star_\Gamma)=0$ at its maximizer, so \cref{eq:unified_gradient_formula} gives $\Xi^\star_\Gamma=A[\pi^\star_\Gamma]$. The plan $\pi^\star_\Gamma$ is $0$-accurate at $c_{\Gamma,\Xi^\star_\Gamma}$ with $r=A[\pi^\star_\Gamma]-\Xi^\star_\Gamma=0$, so \cref{eq:unified_primal_certificate} yields $\pi^\star_\Gamma\in\arg\min_\pi Q_\Gamma(\pi)$. Finally, by \cref{eq:unified_h}, the optimality of $\pi^\star_\Gamma$ and $\Xi^\star_\Gamma=A[\pi^\star_\Gamma]$,
    \begin{align*}
        h_\Gamma(\Xi^\star_\Gamma)
        &=-\tfrac12\norm{\Xi^\star_\Gamma}^2+\scal{\Xi^\star_\Gamma,A[\pi^\star_\Gamma]}-\scal{\Gamma,B[\pi^\star_\Gamma]}+\Reg(\pi^\star_\Gamma)
        =Q_\Gamma(\pi^\star_\Gamma),
    \end{align*}
    so \cref{eq:unified_weak_duality} is an equality, attained on both sides.
\end{proof}

Differentiability of $h_\Gamma$ is part of the smoothness assumption
\textbf{(H$_L$)} below, and holds for the entropic regularizers on finite
spaces, where the $\Reg\OT$ minimizer is moreover unique and $C^1$ in $\Xi$
(\cref{prop:ugw_klkl_regularities}).

\paragraph{Smoothness assumption.}
We now assume that the reduced objective is smooth, in one of two forms.
\begin{enumerate}
    \item[\textbf{(H$_L$)}] $h_\Gamma$ is finite on $\Hilb_A$, and for some
    $L\ge1$ either (\emph{global form}) $h_\Gamma$ is $L$-smooth on
    $\Hilb_A$, or (\emph{local form}) there is $R>0$ such that
    $\norm{\Xi^\star_\Gamma}\le R$, the inner iterates satisfy
    $\norm{\Xi_k}\le R$ for all $k$, $\delta_{\Reg\OT}\le\tfrac12$, and
    $h_\Gamma$ is twice differentiable with
    $-LI\preceq\nabla^2h_\Gamma\preceq-I$ on the ball
    $\mathcal B_\Gamma:=\{\norm\Xi\le3R+1\}$.
\end{enumerate}
The local form is the one met in practice (\cref{prop:ugw_klkl_regularities}). The proofs below use $L$-smoothness only through the descent lemma
\begin{equation}
    \label{eq:descent_lemma}
    \bigl|h_\Gamma(\Xi')-h_\Gamma(\Xi)-\scal{\nabla h_\Gamma(\Xi),\Xi'-\Xi}\bigr|\le\tfrac L2\norm{\Xi'-\Xi}^2,
\end{equation}
which holds whenever $\nabla h_\Gamma$ is $L$-Lipschitz along the segment $[\Xi,\Xi']$, and they apply it only along four families of segments:
\begin{enumerate}
    \item[(a)] $[\Xi_k,\Xi_k+d]$ with $\norm d\le\sqrt{2\delta_{\Reg\OT}/L}\le1$ (\cref{lem:unified_nested_gradient_error});
    \item[(b)] $[\Xi_k,\Xi_{k+1}]$ and (c) $[\Xi_k,\Xi^\star_\Gamma]$ (\cref{prop:inner_geometric});
    \item[(d)] $[\Xi_k,\Xi_k+\tfrac1L\nabla h_\Gamma(\Xi_k)]$ (\cref{prop:inner_geometric}).
\end{enumerate}
In the local form, both endpoints of (b) and (c) lie in $\{\norm\Xi\le R\}$ and both endpoints of (a) lie in $\{\norm\Xi\le R+1\}$; since balls are convex, these segments lie in $\mathcal B_\Gamma$, where $\nabla^2h_\Gamma\succeq-LI$, so \cref{eq:descent_lemma} holds along them. The following lemma handles (d).

\begin{lemma}[The exact gradient step stays in the ball]
    \label{lem:exact_step_in_ball}
    In the local form of \textbf{(H$_L$)}, for every inner iterate $\Xi_k$,
    \[
        \bigl\|\Xi_k+\tfrac1L\nabla h_\Gamma(\Xi_k)-\Xi^\star_\Gamma\bigr\|\le\norm{\Xi_k-\Xi^\star_\Gamma}\le2R,
    \]
    so that the segment $[\Xi_k,\Xi_k+\tfrac1L\nabla h_\Gamma(\Xi_k)]$ lies in $\{\norm\Xi\le3R\}\subset\mathcal B_\Gamma$.
\end{lemma}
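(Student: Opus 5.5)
The natural route is the standard contraction of a gradient step for a strongly concave, smooth function. In the local form of \textbf{(H$_L$)}, $h_\Gamma$ is twice differentiable on $\mathcal B_\Gamma$ with $-LI\preceq\nabla^2h_\Gamma\preceq-I$, and $\nabla h_\Gamma(\Xi^\star_\Gamma)=0$ since $\Xi^\star_\Gamma$ is the interior maximizer ($\norm{\Xi^\star_\Gamma}\le R$). The segment $[\Xi_k,\Xi^\star_\Gamma]$ has both endpoints in $\{\norm\Xi\le R\}$, so it lies in $\mathcal B_\Gamma$ by convexity. The fundamental theorem of calculus along this segment therefore gives
\[
\nabla h_\Gamma(\Xi_k)=\nabla h_\Gamma(\Xi_k)-\nabla h_\Gamma(\Xi^\star_\Gamma)=H(\Xi_k-\Xi^\star_\Gamma),
\qquad H:=\int_0^1\nabla^2h_\Gamma\bigl(\Xi^\star_\Gamma+s(\Xi_k-\Xi^\star_\Gamma)\bigr)\diff s .
\]
Here $H$ is a self-adjoint operator with $-LI\preceq H\preceq -I$, obtained by averaging the Hessian bounds over the segment.

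It follows that $\Xi_k+\tfrac1L\nabla h_\Gamma(\Xi_k)-\Xi^\star_\Gamma=(I+\tfrac1L H)(\Xi_k-\Xi^\star_\Gamma)$. The spectrum of $I+\tfrac1LH$ lies in $[0,1-\tfrac1L]\subset[0,1]$, so its operator norm is at most $1$, which gives the first inequality. The second inequality, $\norm{\Xi_k-\Xi^\star_\Gamma}\le\norm{\Xi_k}+\norm{\Xi^\star_\Gamma}\le2R$, is the triangle inequality together with the standing bounds $\norm{\Xi_k}\le R$ and $\norm{\Xi^\star_\Gamma}\le R$.

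For the containment, the endpoint $\Xi_k+\tfrac1L\nabla h_\Gamma(\Xi_k)$ has norm at most $\norm{\Xi^\star_\Gamma}+2R\le 3R$, and $\Xi_k$ has norm at most $R$. Convexity of the ball $\{\norm\Xi\le3R\}$ then places the whole segment inside it, hence inside $\mathcal B_\Gamma$.

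The only delicate point is ensuring the Hessian bounds are used only where they are known. The argument is arranged so that integration happens solely on $[\Xi_k,\Xi^\star_\Gamma]$, which is contained in the radius-$R$ ball, and the new point is controlled through the contraction identity rather than by integrating along the step itself. In a Hilbert space, the Bochner integral of the Hessians is legitimate because $\nabla h_\Gamma$ is $C^1$ on $\mathcal B_\Gamma$. If one prefers to avoid operator integrals, an equivalent route is to expand $\norm{\Xi_k-\Xi^\star_\Gamma+\tfrac1Lg}^2$ with $g=\nabla h_\Gamma(\Xi_k)$ and bound it using the co-coercivity inequality $\scal{g,\Xi_k-\Xi^\star_\Gamma}\le-\tfrac1L\norm g^2$. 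This inequality follows from the same averaged Hessian $H$, since $-LI\preceq H\preceq0$ implies $H^2\preceq -LH$.
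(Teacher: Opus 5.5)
Your proof is correct and is essentially the paper's argument: the averaged Hessian $H$ along $[\Xi^\star_\Gamma,\Xi_k]$, the spectral bound showing the exact step is nonexpansive about $\Xi^\star_\Gamma$, the triangle inequality for the bound $2R$, and convexity of the $3R$-ball. Your sign convention is $-LI\preceq H\preceq -I$, so your $I+\tfrac1LH$ is the paper's $I-\tfrac1LH$.

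One small quibble: \textbf{(H$_L$)} only assumes $h_\Gamma$ is twice differentiable, so your claim that $\nabla h_\Gamma$ is $C^1$ is not granted. The paper glosses over the same point. You can sidestep it by applying the mean value inequality to $u\mapsto \Xi^\star_\Gamma+ud+\tfrac1L\nabla h_\Gamma(\Xi^\star_\Gamma+ud)$, with $d=\Xi_k-\Xi^\star_\Gamma$. Its derivative $(I+\tfrac1L\nabla^2h_\Gamma)d$ has norm at most $\norm{d}$.
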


\begin{proof}
    Set $d:=\Xi_k-\Xi^\star_\Gamma$, so that $\norm d\le2R$ and the segment $[\Xi^\star_\Gamma,\Xi_k]$ lies in $\{\norm\Xi\le R\}$, where $-LI\preceq\nabla^2h_\Gamma\preceq-I$. Since $\nabla h_\Gamma(\Xi^\star_\Gamma)=0$ ($h_\Gamma$ is differentiable and maximal there),
    \[
        \nabla h_\Gamma(\Xi_k)=-Hd,
        \qquad
        H:=-\int_0^1\nabla^2h_\Gamma(\Xi^\star_\Gamma+ud)\,\diff u,
    \]
    and $H$ is symmetric with $I\preceq H\preceq LI$, hence $0\preceq I-\tfrac1LH\preceq(1-\tfrac1L)I$. Therefore
    \[
        \bigl\|\Xi_k+\tfrac1L\nabla h_\Gamma(\Xi_k)-\Xi^\star_\Gamma\bigr\|
        =\bigl\|(I-\tfrac1LH)\,d\bigr\|\le\norm d\le2R,
    \]
    so the exact gradient step lies in $\{\norm\Xi\le3R\}$, and so does the segment joining it to $\Xi_k$, by convexity.
\end{proof}

Hence every argument below is valid in both forms of \textbf{(H$_L$)}, and we do not distinguish them.

\begin{lemma}[OT accuracy controls the gradient error]
    \label{lem:unified_nested_gradient_error}
    Assume \textbf{(H$_L$)} at $\Gamma$. Let $\widehat\pi_k$ be a
    $\delta_{\Reg\OT}$-accurate plan at $c_{\Gamma,\Xi_k}$ and define
    $\widehat g_k:=A[\widehat\pi_k]-\Xi_k$. Then
    \begin{equation}
        \label{eq:unified_nested_gradient_error}
        \norm{
            \widehat g_k-\nabla h_\Gamma(\Xi_k)
        }
        \le
        \sqrt{2L\delta_{\Reg\OT}}.
    \end{equation}
\end{lemma}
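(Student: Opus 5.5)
We combine the inexact supergradient inequality \cref{eq:unified_approx_first_order} of \cref{lem:unified_certificate} with the descent lemma \cref{eq:descent_lemma}. Set $e:=\widehat g_k-\nabla h_\Gamma(\Xi_k)$. Subtracting the two bounds on $h_\Gamma(\Xi_k+d)$ and then choosing the direction $d$ along $e$ will turn the information on $h_\Gamma$ into a bound on $\norm{e}$.

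\emph{Step 1.} Apply \cref{lem:unified_certificate} at $(\Gamma,\Xi_k)$ with $\Xi'=\Xi_k+d$. Since $\widehat\pi_k$ is $\delta_{\Reg\OT}$-accurate at $c_{\Gamma,\Xi_k}$, this gives
\[
h_\Gamma(\Xi_k+d)\le h_\Gamma(\Xi_k)+\scal{\widehat g_k,d}-\tfrac12\norm d^2+\delta_{\Reg\OT}.
\]

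\emph{Step 2.} Apply the lower half of the descent lemma \cref{eq:descent_lemma} along the segment $[\Xi_k,\Xi_k+d]$. This is family (a), where \cref{eq:descent_lemma} is valid under \textbf{(H$_L$)} as long as $\norm d\le\sqrt{2\delta_{\Reg\OT}/L}$. It gives
\[
h_\Gamma(\Xi_k+d)\ge h_\Gamma(\Xi_k)+\scal{\nabla h_\Gamma(\Xi_k),d}-\tfrac L2\norm d^2.
\]

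\emph{Step 3.} Combining the two bounds yields
\[
\scal{e,d}\ge-\tfrac{L-1}{2}\norm d^2-\delta_{\Reg\OT}.
\]
Take $d=-s\,e/\norm e$ with $s\ge0$ (if $e=0$ there is nothing to prove). Then
\[
s\norm e\le\tfrac{L-1}2 s^2+\delta_{\Reg\OT}\le\tfrac L2 s^2+\delta_{\Reg\OT}.
\]
Dividing by $s$ gives $\norm e\le \tfrac L2 s+\delta_{\Reg\OT}/s$. The right-hand side is minimised at $s=\sqrt{2\delta_{\Reg\OT}/L}$, where it equals $\sqrt{2L\delta_{\Reg\OT}}$. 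This is exactly \cref{eq:unified_nested_gradient_error}. The optimal $s$ is also exactly the radius allowed in family (a), so the descent lemma is legitimately invoked. In the local form we need $s\le 1$. This follows from $\delta_{\Reg\OT}\le\tfrac12$ and $L\ge1$, so $\Xi_k+d$ stays in $\{\norm\Xi\le R+1\}\subset\mathcal B_\Gamma$.

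\emph{Main obstacle.} The delicate point is Step 2 in the local form of \textbf{(H$_L$)}. There, smoothness holds only on $\mathcal B_\Gamma$, so one must check that the test point $\Xi_k+d$ stays in the ball, which the bound on $s$ above ensures. One must also check that $h_\Gamma$ is differentiable at $\Xi_k$, so that $\nabla h_\Gamma(\Xi_k)$ is meaningful. Both are guaranteed by \textbf{(H$_L$)}. Everything else is an elementary one-dimensional optimisation.
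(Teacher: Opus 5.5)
Your proposal is correct and follows the paper's proof essentially line by line: you combine the inexact supergradient inequality of \cref{lem:unified_certificate} with the lower half of the descent lemma along family (a), test $d=-s\,e/\norm{e}$, and take $s=\sqrt{2\delta_{\Reg\OT}/L}\le1$. Keeping the $-\tfrac12\norm d^2$ term only sharpens the constant before you discard it. The one case you skip is $\delta_{\Reg\OT}=0$, where you cannot divide by $s=0$; there, as the paper does, use $s\norm e\le\tfrac L2 s^2$ for all $s\in(0,1]$ and let $s\downarrow0$ to get $e=0$. This is legitimate because $\Xi_k+d$ stays in $\{\norm\Xi\le R+1\}\subset\mathcal B_\Gamma$, even though your stated radius $\sqrt{2\delta_{\Reg\OT}/L}$ would be zero.
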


\begin{proof}
    Set $e_k:=\widehat g_k-\nabla h_\Gamma(\Xi_k)$. Combining \cref{eq:unified_approx_first_order} (without its quadratic term) with the descent lemma \cref{eq:descent_lemma} along $[\Xi_k,\Xi_k+d]$ gives, for every $d$ with $\norm d\le1$ (family (a)),
    \[
        -\scal{e_k,d}\le\tfrac L2\norm d^2+\delta_{\Reg\OT}.
    \]
    If $e_k\ne0$, take $d=-s\,e_k/\norm{e_k}$ with $s\in(0,1]$: then $s\norm{e_k}\le\tfrac L2s^2+\delta_{\Reg\OT}$. For $\delta_{\Reg\OT}>0$ the choice $s=\sqrt{2\delta_{\Reg\OT}/L}\le1$ gives $s\norm{e_k}\le2\delta_{\Reg\OT}$, i.e.\ $\norm{e_k}\le\sqrt{2L\delta_{\Reg\OT}}$; for $\delta_{\Reg\OT}=0$, letting $s\downarrow0$ gives $e_k=0$.
\end{proof}

\begin{proposition}[Inner loop: geometric convergence and certification]
    \label{prop:inner_geometric}
    Assume \textbf{(S1)}--\textbf{(S2)} and \textbf{(H$_L$)} at $\Gamma$,
    and run the inner loop of \cref{alg:ugw} at $\Gamma$ with step
    $\tau=1/L$ and accuracy $\delta_{\Reg\OT}$. Let
    $\widehat g_k:=A[\widehat\pi_k]-\Xi_k$,
    $\Delta_k:=h_\Gamma(\Xi^\star_\Gamma)-h_\Gamma(\Xi_k)$ and
    $q:=1-1/L$. Then, for every $k\ge0$:
    \begin{enumerate}
        \item[\rm(i)] $\Delta_{k+1}\le q\,\Delta_k+\delta_{\Reg\OT}$, hence
        \begin{equation}
            \label{eq:unified_smooth_constant_error}
            \Delta_k\le q^k\Delta_0+L\,\delta_{\Reg\OT};
        \end{equation}
        \item[\rm(ii)] the certificate satisfies
        \begin{equation}
            \label{eq:unified_smooth_certificate_bound}
            \operatorname{Cert}_\Gamma(\Xi_k,\widehat\pi_k)
            =\tfrac12\norm{\widehat g_k}^2+\delta_{\Reg\OT}
            \le2L\Delta_k+(2L+1)\delta_{\Reg\OT};
        \end{equation}
        \item[\rm(iii)] for $\eta\in(0,1]$, if $\delta_{\Reg\OT}=\eta/(2C_L)$
        with $C_L:=2L^2+2L+1$, then
        $\operatorname{Cert}_\Gamma(\Xi_k,\widehat\pi_k)\le\eta$ for every
        \begin{equation}
            \label{eq:unified_smooth_count}
            k\ge N_\Xi(\eta):=\Bigl\lceil L\log\max\Bigl\{1,\frac{4L\Delta_0}{\eta}\Bigr\}\Bigr\rceil,
        \end{equation}
        so that the inner loop stopped at level $\eta$ makes at most
        $N_\Xi(\eta)+1$ $\Reg\OT$ solves. The same holds with $\Delta_0$
        replaced by any upper bound, and
        $\Delta_0\le\tfrac L2\norm{\Xi_0-\Xi^\star_\Gamma}^2$
        ($\le2LR^2$ in the local form).
    \end{enumerate}
\end{proposition}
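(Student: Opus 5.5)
The plan is to treat the inner loop as inexact gradient ascent on the $1$-strongly concave function $h_\Gamma$. The inexactness will be absorbed using the gradient-error bound of \cref{lem:unified_nested_gradient_error}, together with the descent lemma \cref{eq:descent_lemma} applied only along the admissible segments (b), (c), (d). Write $e_k:=\widehat g_k-\nabla h_\Gamma(\Xi_k)$, so that $\norm{e_k}^2\le2L\delta_{\Reg\OT}$.

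For (i), the update is $\Xi_{k+1}=\Xi_k+\tfrac1L\widehat g_k$. I will apply the lower half of \cref{eq:descent_lemma} along segment (b):
\[
h_\Gamma(\Xi_{k+1})\ge h_\Gamma(\Xi_k)+\tfrac1L\scal{\nabla h_\Gamma(\Xi_k),\widehat g_k}-\tfrac1{2L}\norm{\widehat g_k}^2 .
\]
The polarization identity $2\scal{a,b}-\norm b^2=\norm a^2-\norm{b-a}^2$ turns the right-hand side into $h_\Gamma(\Xi_k)+\tfrac1{2L}\bigl(\norm{\nabla h_\Gamma(\Xi_k)}^2-\norm{e_k}^2\bigr)$. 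By \cref{lem:unified_nested_gradient_error}, this is at least $h_\Gamma(\Xi_k)+\tfrac1{2L}\norm{\nabla h_\Gamma(\Xi_k)}^2-\delta_{\Reg\OT}$.

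Next I need a Polyak--Łojasiewicz inequality $\norm{\nabla h_\Gamma(\Xi_k)}^2\ge2\Delta_k$. It follows from $1$-strong concavity (\cref{lem:unified_strong_concavity}): $h_\Gamma(\Xi^\star_\Gamma)\le h_\Gamma(\Xi_k)+\scal{\nabla h_\Gamma(\Xi_k),\Xi^\star_\Gamma-\Xi_k}-\tfrac12\norm{\Xi^\star_\Gamma-\Xi_k}^2$, and the right-hand side is at most $h_\Gamma(\Xi_k)+\tfrac12\norm{\nabla h_\Gamma(\Xi_k)}^2$ after maximizing the quadratic. In the local form, the strong concavity on the segment $[\Xi_k,\Xi^\star_\Gamma]$ comes from $\nabla^2h_\Gamma\preceq-I$ there. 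Substituting gives $\Delta_{k+1}\le(1-\tfrac1L)\Delta_k+\delta_{\Reg\OT}$. Unrolling and summing the geometric series, $\sum_j q^j\le L$, yields \cref{eq:unified_smooth_constant_error}.

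For (ii), I bound $\tfrac12\norm{\widehat g_k}^2\le\norm{\nabla h_\Gamma(\Xi_k)}^2+\norm{e_k}^2$ and then need the reverse inequality $\norm{\nabla h_\Gamma(\Xi_k)}^2\le2L\Delta_k$. This comes from applying \cref{eq:descent_lemma} along segment (d), which is legitimate by \cref{lem:exact_step_in_ball}: $h_\Gamma(\Xi^\star_\Gamma)\ge h_\Gamma(\Xi_k+\tfrac1L\nabla h_\Gamma(\Xi_k))\ge h_\Gamma(\Xi_k)+\tfrac1{2L}\norm{\nabla h_\Gamma(\Xi_k)}^2$. Adding $\delta_{\Reg\OT}$ gives $\operatorname{Cert}_\Gamma\le2L\Delta_k+2L\delta_{\Reg\OT}+\delta_{\Reg\OT}$, which is \cref{eq:unified_smooth_certificate_bound}.

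For (iii), I combine (i) and (ii):
\[
\operatorname{Cert}_\Gamma(\Xi_k,\widehat\pi_k)\le2Lq^k\Delta_0+(2L^2+2L+1)\delta_{\Reg\OT}=2Lq^k\Delta_0+\eta/2 .
\]
Using $q^k\le e^{-k/L}$, the first term is at most $\eta/2$ once $k\ge L\log(4L\Delta_0/\eta)$, which gives $N_\Xi(\eta)$. The case $4L\Delta_0\le\eta$ is trivial. Monotonicity in $\Delta_0$ is clear, so any upper bound on $\Delta_0$ can be substituted. For the final bound, apply \cref{eq:descent_lemma} along segment (c), using $\nabla h_\Gamma(\Xi^\star_\Gamma)=0$ (differentiability at the maximizer, as in \cref{prop:unified_danskin}). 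This gives $\Delta_0\le\tfrac L2\norm{\Xi_0-\Xi^\star_\Gamma}^2$, and in the local form $\norm{\Xi_0-\Xi^\star_\Gamma}\le2R$ gives $\Delta_0\le2LR^2$.

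The main obstacle is step (i). A naive use of the inexact upper model \cref{eq:unified_approx_first_order} only gives a lower bound on $h_\Gamma$ with curvature $2L$, which cancels the progress of a step of size $1/L$. The fix is to work with the exact gradient and pay for the error only through $\norm{e_k}^2$, via the polarization identity. The remaining care is in checking that every segment used is one of (b)--(d), so that the argument also holds in the local form of \textbf{(H$_L$)}.
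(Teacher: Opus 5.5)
Your proposal is correct and follows the paper's proof essentially step for step. The steps are the same: the descent lemma along segment (b), with the error entering only through $\tfrac1{2L}\norm{e_k}^2$ (your polarization identity is exactly the paper's cancellation of cross terms); the gradient-domination bound $\norm{\nabla h_\Gamma(\Xi_k)}^2\ge2\Delta_k$ from $1$-strong concavity; the reverse bound $\norm{\nabla h_\Gamma(\Xi_k)}^2\le2L\Delta_k$ via segment (d); and segment (c) for $\Delta_0\le\tfrac L2\norm{\Xi_0-\Xi^\star_\Gamma}^2$. The only item the paper adds is a check that $\delta_{\Reg\OT}=\eta/(2C_L)\le\tfrac1{10}$, so that it meets the requirement $\delta_{\Reg\OT}\le\tfrac12$ in the local form of \textbf{(H$_L$)}.
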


\begin{proof}
    Throughout, let $\psi:=-h_\Gamma$. By \cref{lem:unified_strong_concavity}, $\psi$ is $1$-strongly convex on $\Hilb_A$, and by \textbf{(H$_L$)} the descent lemma \cref{eq:descent_lemma} holds along the segments (a)--(d) listed after \textbf{(H$_L$)}. We write
    \[
        g_k:=\nabla h_\Gamma(\Xi_k)=-\nabla\psi(\Xi_k)
        \qquad\text{and}\qquad
        e_k:=\widehat g_k-g_k
    \]
    for the exact gradient and the gradient error, so that the $\Xi$-update of \cref{alg:ugw} with $\tau=1/L$ reads
    \begin{equation}
        \label{eq:inner_update_decomposed}
        \Xi_{k+1}=\Xi_k+\tfrac1L\,\widehat g_k=\Xi_k+\tfrac1L\,(g_k+e_k),
    \end{equation}
    and \cref{lem:unified_nested_gradient_error} gives
    \begin{equation}
        \label{eq:inner_error_bound}
        \norm{e_k}^2\le2L\,\delta_{\Reg\OT}.
    \end{equation}

    \medskip
    \noindent\emph{One inexact gradient step.}
    By \cref{eq:inner_update_decomposed}, $\Xi_{k+1}-\Xi_k=\tfrac1L(g_k+e_k)$. The descent lemma \cref{eq:descent_lemma} for $\psi$ along the segment $[\Xi_k,\Xi_{k+1}]$ (family (b)) gives
    \begin{align*}
        \psi(\Xi_{k+1})
        &\le\psi(\Xi_k)+\scal{\nabla\psi(\Xi_k),\Xi_{k+1}-\Xi_k}+\tfrac L2\norm{\Xi_{k+1}-\Xi_k}^2\\
        &=\psi(\Xi_k)-\tfrac1L\scal{g_k,g_k+e_k}+\tfrac1{2L}\norm{g_k+e_k}^2 .
    \end{align*}
    We expand the last two terms:
    \begin{align*}
        -\tfrac1L\scal{g_k,g_k+e_k}
        &=-\tfrac1L\norm{g_k}^2-\tfrac1L\scal{g_k,e_k},\\
        \tfrac1{2L}\norm{g_k+e_k}^2
        &=\tfrac1{2L}\norm{g_k}^2+\tfrac1L\scal{g_k,e_k}+\tfrac1{2L}\norm{e_k}^2 .
    \end{align*}
    The two cross terms $\mp\tfrac1L\scal{g_k,e_k}$ cancel, and we are left with
    \begin{equation}
        \label{eq:inner_one_step}
        \psi(\Xi_{k+1})\le\psi(\Xi_k)-\tfrac1{2L}\norm{g_k}^2+\tfrac1{2L}\norm{e_k}^2 .
    \end{equation}
    Hence, the exact part of the step decreases $\psi$ by $\tfrac1{2L}\norm{g_k}^2$, and the error can cost at most $\tfrac1{2L}\norm{e_k}^2$.

    \medskip
    \noindent\emph{The gradient dominates the gap.}
    By $1$-strong convexity of $\psi$, for every $\Xi'\in\Hilb_A$,
    \[
        \psi(\Xi')\ge\psi(\Xi_k)+\scal{\nabla\psi(\Xi_k),\Xi'-\Xi_k}+\tfrac12\norm{\Xi'-\Xi_k}^2 .
    \]
    The right-hand side is a quadratic function of $\Xi'$; it is minimized at $\Xi'=\Xi_k-\nabla\psi(\Xi_k)$, where its value is $\psi(\Xi_k)-\tfrac12\norm{\nabla\psi(\Xi_k)}^2$. Hence the inequality holds a fortiori with this minimal value on the right, for every $\Xi'$; taking $\Xi'=\Xi^\star_\Gamma$ on the left,
    \[
        \psi(\Xi^\star_\Gamma)\ge\psi(\Xi_k)-\tfrac12\norm{\nabla\psi(\Xi_k)}^2,
    \]
    that is,
    \[
        \norm{g_k}^2=\norm{\nabla\psi(\Xi_k)}^2\ge2\bigl[\psi(\Xi_k)-\psi(\Xi^\star_\Gamma)\bigr]=2\Delta_k .
    \]
    We record this as
    \begin{equation}
        \label{eq:unified_PL}
        \norm{g_k}^2\ge2\Delta_k .
    \end{equation}

    \medskip
    \noindent\emph{The recursion (i).}
    Subtract $\psi(\Xi^\star_\Gamma)$ from both sides of \cref{eq:inner_one_step}; since $\Delta_j=\psi(\Xi_j)-\psi(\Xi^\star_\Gamma)$ for every $j$, this gives
    \[
        \Delta_{k+1}\le\Delta_k-\tfrac1{2L}\norm{g_k}^2+\tfrac1{2L}\norm{e_k}^2 .
    \]
    We now use \cref{eq:unified_PL} to bound $-\tfrac1{2L}\norm{g_k}^2\le-\tfrac1{2L}\cdot2\Delta_k=-\tfrac1L\Delta_k$, and \cref{eq:inner_error_bound} to bound $\tfrac1{2L}\norm{e_k}^2\le\tfrac1{2L}\cdot2L\delta_{\Reg\OT}=\delta_{\Reg\OT}$. Therefore
    \[
        \Delta_{k+1}\le\Delta_k-\tfrac1L\Delta_k+\delta_{\Reg\OT}=q\,\Delta_k+\delta_{\Reg\OT},
        \qquad q=1-\tfrac1L .
    \]
    Applying this inequality $k$ times, from $\Delta_0$ to $\Delta_k$, we obtain
    \[
        \Delta_k
        \le q\Delta_{k-1}+\delta_{\Reg\OT}
        \le q^2\Delta_{k-2}+q\,\delta_{\Reg\OT}+\delta_{\Reg\OT}
        \le\cdots
        \le q^k\Delta_0+\delta_{\Reg\OT}\sum_{j=0}^{k-1}q^j .
    \]
    Since $0\le q<1$, the geometric sum satisfies $\sum_{j=0}^{k-1}q^j\le\sum_{j=0}^{\infty}q^j=\frac1{1-q}=L$, which proves \cref{eq:unified_smooth_constant_error}.

    \medskip
    \noindent\emph{The certificate bound (ii).}
    By definition, $\operatorname{Cert}_\Gamma(\Xi_k,\widehat\pi_k)=\tfrac12\norm{\widehat g_k}^2+\delta_{\Reg\OT}$, so we need to bound $\norm{\widehat g_k}=\norm{g_k+e_k}$. We first bound the exact gradient by the gap. The descent lemma \cref{eq:descent_lemma} for $h_\Gamma$ along the segment $[\Xi_k,\Xi_k+\tfrac1Lg_k]$ (family (d)) gives
    \begin{align*}
        h_\Gamma\bigl(\Xi_k+\tfrac1Lg_k\bigr)
        &\ge h_\Gamma(\Xi_k)+\scal{g_k,\tfrac1Lg_k}-\tfrac L2\bigl\|\tfrac1Lg_k\bigr\|^2\\
        &=h_\Gamma(\Xi_k)+\tfrac1L\norm{g_k}^2-\tfrac1{2L}\norm{g_k}^2
        =h_\Gamma(\Xi_k)+\tfrac1{2L}\norm{g_k}^2 .
    \end{align*}
    Since $\Xi^\star_\Gamma$ maximizes $h_\Gamma$, the left-hand side is at most $h_\Gamma(\Xi^\star_\Gamma)$, and we get
    \[
        \tfrac1{2L}\norm{g_k}^2\le h_\Gamma(\Xi^\star_\Gamma)-h_\Gamma(\Xi_k)=\Delta_k,
        \qquad\text{that is,}\qquad
        \norm{g_k}\le\sqrt{2L\Delta_k}.
    \]
    Together with $\norm{e_k}\le\sqrt{2L\delta_{\Reg\OT}}$ from \cref{eq:inner_error_bound}, the triangle inequality gives
    \[
        \norm{\widehat g_k}\le\norm{g_k}+\norm{e_k}\le\sqrt{2L\Delta_k}+\sqrt{2L\delta_{\Reg\OT}} .
    \]
    Squaring and using $(a+b)^2\le2a^2+2b^2$,
    \[
        \tfrac12\norm{\widehat g_k}^2
        \le\tfrac12\bigl(2\cdot2L\Delta_k+2\cdot2L\delta_{\Reg\OT}\bigr)
        =2L\Delta_k+2L\delta_{\Reg\OT}.
    \]
    Adding $\delta_{\Reg\OT}$ on both sides proves \cref{eq:unified_smooth_certificate_bound}.

    \medskip
    \noindent\emph{The iteration count (iii).}
    Fix $\eta\in(0,1]$ and set $\delta_{\Reg\OT}=\eta/(2C_L)$. Inserting the bound \cref{eq:unified_smooth_constant_error} on $\Delta_k$ into the certificate bound \cref{eq:unified_smooth_certificate_bound},
    \begin{align*}
        \operatorname{Cert}_\Gamma(\Xi_k,\widehat\pi_k)
        &\le2L\bigl(q^k\Delta_0+L\delta_{\Reg\OT}\bigr)+(2L+1)\delta_{\Reg\OT}\\
        &=2L\,q^k\Delta_0+\bigl(2L^2+2L+1\bigr)\delta_{\Reg\OT}\\
        &=2L\,q^k\Delta_0+C_L\,\delta_{\Reg\OT}.
    \end{align*}
    By the choice of $\delta_{\Reg\OT}$, the second term is exactly $C_L\cdot\eta/(2C_L)=\eta/2$. It therefore suffices to show that the first term is at most $\eta/2$ for $k\ge N_\Xi(\eta)$.

    Since $\log(1-x)\le-x$ for $x\in[0,1)$, we have $q^k=(1-\tfrac1L)^k=e^{k\log(1-1/L)}\le e^{-k/L}$ when $L>1$ (and $q^k\le e^{-k/L}$ trivially when $L=1$, since then $q=0$). Hence $2Lq^k\Delta_0\le\eta/2$ holds as soon as
    \[
        2L\,e^{-k/L}\Delta_0\le\frac\eta2
        \quad\Longleftrightarrow\quad
        e^{-k/L}\le\frac{\eta}{4L\Delta_0}
        \quad\Longleftrightarrow\quad
        k\ge L\log\frac{4L\Delta_0}{\eta}
    \]
    (if $\Delta_0=0$ there is nothing to prove). If $4L\Delta_0/\eta\le1$, the right-hand side is nonpositive and the condition holds for every $k\ge0$; otherwise it holds for every $k\ge L\log(4L\Delta_0/\eta)$. Both cases are summarized by
    \[
        k\ge L\log\max\Bigl\{1,\frac{4L\Delta_0}\eta\Bigr\},
    \]
    and since $k$ is an integer, by $k\ge N_\Xi(\eta)$ with $N_\Xi(\eta)$ as in \cref{eq:unified_smooth_count}. This proves $\operatorname{Cert}_\Gamma(\Xi_k,\widehat\pi_k)\le\eta$ for every $k\ge N_\Xi(\eta)$.

    Three remarks complete the proof of (iii).
    \begin{itemize}
        \item \emph{Upper bounds on $\Delta_0$.} If $\Delta_0$ is replaced by any larger number $\overline\Delta_0$ in the argument above, the conclusion still holds, since $2Lq^k\overline\Delta_0\ge2Lq^k\Delta_0$; hence $N_\Xi(\eta)$ computed with $\overline\Delta_0$ is also a valid count.
        \item \emph{Number of solves.} The inner loop of \cref{alg:ugw} stops at the first index $k$ with $\operatorname{Cert}_\Gamma(\Xi_k,\widehat\pi_k)\le\eta$. Since the certificate holds at $k=N_\Xi(\eta)$, this first index is at most $N_\Xi(\eta)$, and the loop makes at most $N_\Xi(\eta)+1$ $\Reg\OT$ solves (one per index $k=0,\dots,N_\Xi(\eta)$).
        \item \emph{The bound $\Delta_0\le\tfrac L2\norm{\Xi_0-\Xi^\star_\Gamma}^2$.} The descent lemma \cref{eq:descent_lemma} for $h_\Gamma$ along the segment $[\Xi^\star_\Gamma,\Xi_0]$ (family (c) with $k=0$), together with $\nabla h_\Gamma(\Xi^\star_\Gamma)=0$, gives
        \begin{align*}
            h_\Gamma(\Xi_0)
            &\ge h_\Gamma(\Xi^\star_\Gamma)+\scal{\nabla h_\Gamma(\Xi^\star_\Gamma),\Xi_0-\Xi^\star_\Gamma}-\tfrac L2\norm{\Xi_0-\Xi^\star_\Gamma}^2\\
            &=h_\Gamma(\Xi^\star_\Gamma)-\tfrac L2\norm{\Xi_0-\Xi^\star_\Gamma}^2,
        \end{align*}
        that is, $\Delta_0\le\tfrac L2\norm{\Xi_0-\Xi^\star_\Gamma}^2$. In the local form, $\norm{\Xi_0-\Xi^\star_\Gamma}\le\norm{\Xi_0}+\norm{\Xi^\star_\Gamma}\le2R$, so $\Delta_0\le\tfrac L2(2R)^2=2LR^2$.
    \end{itemize}
    Finally, the local form of \textbf{(H$_L$)} requires $\delta_{\Reg\OT}\le\tfrac12$; this is satisfied, because $\eta\le1$ and $C_L=2L^2+2L+1\ge5$ for $L\ge1$ give $\delta_{\Reg\OT}=\eta/(2C_L)\le\tfrac1{10}$.
\end{proof}

\begin{corollary}[Oracle complexity of \cref{alg:ugw}; precise form of \cref{thm:main_convergence}(ii)]
    \label{cor:unified_oracle_complexity}
    Assume \textbf{(S1)}--\textbf{(S2)}, let $\delta\in(0,1]$, and suppose
    that \textbf{(H$_L$)} holds at $\Gamma=\Gamma_t$ for every $t<T_\delta$
    with the same constants $L$ (and $R$). Let
    \[
        \overline\Delta
        :=
        \sup_{t<T_\delta}
        \Bigl[h_{\Gamma_t}(\Xi^\star_{\Gamma_t})-h_{\Gamma_t}(\Xi_{t,0})\Bigr]
        \le\frac L2\sup_{t<T_\delta}\norm{\Xi_{t,0}-\Xi^\star_{\Gamma_t}}^2
    \]
    be the largest initial inner gap. Run \cref{alg:ugw} with $\tau=1/L$ and
    $\delta_{\Reg\OT}=\delta^2/(8C_L)$, where $C_L=2L^2+2L+1$. Then every
    inner loop terminates, with
    \[
        K_t\le N_\Xi(\delta^2/4)
        =\Bigl\lceil L\log\max\Bigl\{1,\frac{16L\overline\Delta}{\delta^2}\Bigr\}\Bigr\rceil,
    \]
    the conclusions of \cref{thm:unified_outer} hold, and the total number of
    $\Reg\OT$ solves is at most
    \[
        T_\delta\bigl(N_\Xi(\delta^2/4)+1\bigr)
        =O\Bigl(\frac{\Delta^{\mathrm{out}}L}{\delta^2}\log\frac{L\overline\Delta}{\delta^2}\Bigr).
    \]
\end{corollary}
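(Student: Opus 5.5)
The corollary follows by combining \cref{prop:inner_geometric} applied at each outer iterate with \cref{thm:unified_outer}. We set $\eta:=\delta^2/4$. Since $\delta\le1$, we have $\eta\in(0,1]$, and the prescribed accuracy $\delta_{\Reg\OT}=\delta^2/(8C_L)=\eta/(2C_L)$ is exactly the one required in \cref{prop:inner_geometric}(iii). The inner stopping rule of \cref{alg:ugw} tests $\operatorname{Cert}_{\Gamma_t}(\Xi_{t,k},\widehat\pi_{t,k})\le\delta^2/4=\eta$, so it matches the level in (iii).

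\textbf{Inner loops.} Fix an outer iteration $t<T_\delta$. By assumption \textbf{(H$_L$)} holds at $\Gamma_t$, and the inner loop starts from $\Xi_{t,0}$, which is carried over from the previous iteration. Its initial gap is $h_{\Gamma_t}(\Xi^\star_{\Gamma_t})-h_{\Gamma_t}(\Xi_{t,0})\le\overline\Delta$. By the first remark in (iii), we may replace $\Delta_0$ by the upper bound $\overline\Delta$. Hence the certificate is at most $\eta$ for every $k\ge N_\Xi(\eta)$, computed with $\overline\Delta$, so $K_t\le N_\Xi(\delta^2/4)$. Substituting $\eta=\delta^2/4$ into \cref{eq:unified_smooth_count} gives $4L\overline\Delta/\eta=16L\overline\Delta/\delta^2$, which is the displayed formula. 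The upper bound $\overline\Delta\le\tfrac L2\sup_t\norm{\Xi_{t,0}-\Xi^\star_{\Gamma_t}}^2$ follows from the last remark of (iii), applied at each $t$.

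\textbf{Outer loop and the main obstacle.} Once every inner loop terminates, \cref{thm:unified_outer} applies verbatim: the algorithm stops at some $t_{\rm stop}<T_\delta$ and returns a $(\delta,\delta^2/4)$-critical plan. The subtle point is that the induction is only over $t<T_\delta$. The hypothesis covers exactly these iterations, and the stopping time is at most $T_\delta-1$, so we never need \textbf{(H$_L$)} beyond $T_\delta$. To make this rigorous, I would argue inductively: for each $t$ up to the stopping time, the inner loop terminates by the step above, so the descent inequality of \cref{thm:unified_outer}(i) holds up to $t$. The contradiction argument of \cref{thm:unified_outer}(ii) then uses only iterations $t<T_\delta$. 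In the local form of \textbf{(H$_L$)}, the bound $\norm{\Xi_{t,k}}\le R$ is part of the assumption, so nothing extra needs to be checked.

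\textbf{Counting.} At most $T_\delta$ outer iterations occur, and each uses $K_t+1\le N_\Xi(\delta^2/4)+1$ $\Reg\OT$ solves. Multiplying gives the total bound. For the $O(\cdot)$ form, we use $T_\delta\le 1+4\Delta^{\rm out}/\delta^2+1$ and $N_\Xi\le 1+L\log\max\{1,16L\overline\Delta/\delta^2\}$, and absorb the constants. This step is routine; the only care needed is that the $O$-statement implicitly assumes $\Delta^{\rm out}/\delta^2$ and the logarithm are bounded below by a constant, which is the standard convention.
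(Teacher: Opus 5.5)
Your proposal is correct and follows the paper's proof essentially step for step: set $\eta=\delta^2/4$, check that $\delta_{\Reg\OT}=\eta/(2C_L)$, invoke \cref{prop:inner_geometric}(iii) with the upper bound $\overline\Delta$ to get $K_t\le N_\Xi(\delta^2/4)$, then apply \cref{thm:unified_outer} and multiply the two counts. Your extra remark that \textbf{(H$_L$)} is only needed for $t<T_\delta$ (because $t_{\mathrm{stop}}<T_\delta$ and the contradiction argument only sums over those iterations) makes explicit a point the paper passes over when it simply says ``since every inner loop terminates.''
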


\begin{proof}
    Fix an outer iteration $t<T_\delta$ and apply \cref{prop:inner_geometric}(iii) at $\Gamma=\Gamma_t$ with the level $\eta:=\delta^2/4\in(0,1]$. Its accuracy requirement $\delta_{\Reg\OT}=\eta/(2C_L)$ is exactly the value prescribed here, since
    \[
        \frac{\eta}{2C_L}=\frac{\delta^2/4}{2C_L}=\frac{\delta^2}{8C_L},
    \]
    and its iteration threshold, computed with the upper bound $\overline\Delta\ge\Delta_{t,0}$ allowed by \cref{prop:inner_geometric}(iii), is
    \[
        N_\Xi(\eta)=\Bigl\lceil L\log\max\Bigl\{1,\frac{4L\overline\Delta}{\eta}\Bigr\}\Bigr\rceil
        =\Bigl\lceil L\log\max\Bigl\{1,\frac{4L\overline\Delta}{\delta^2/4}\Bigr\}\Bigr\rceil
        =\Bigl\lceil L\log\max\Bigl\{1,\frac{16L\overline\Delta}{\delta^2}\Bigr\}\Bigr\rceil,
    \]
    which is the quantity $N_\Xi(\delta^2/4)$ of the statement. Hence the certificate of iteration $t$ is at most $\delta^2/4$ for every inner index $k\ge N_\Xi(\delta^2/4)$, so that the first certified index satisfies $K_t\le N_\Xi(\delta^2/4)<\infty$.

    Since every inner loop terminates, \cref{thm:unified_outer} applies: the run stops at some $t_{\mathrm{stop}}<T_\delta$, and it makes $\sum_{t\le t_{\mathrm{stop}}}(K_t+1)\le T_\delta\bigl(N_\Xi(\delta^2/4)+1\bigr)$ $\Reg\OT$ solves. Finally, $T_\delta\le1+4\Delta^{\mathrm{out}}/\delta^2$ and $N_\Xi(\delta^2/4)+1\le2+L\log\max\{1,16L\overline\Delta/\delta^2\}$, whose product is $O\bigl(\tfrac{\Delta^{\mathrm{out}}L}{\delta^2}\log\tfrac{L\overline\Delta}{\delta^2}\bigr)$. The bound on $\overline\Delta$ in the statement is the last claim of \cref{prop:inner_geometric}(iii), applied at each $t$.
\end{proof}

\subsection{Accelerated versions.}\label{sub:acceleraded_versions}

The previous analysis uses plain gradient ascent for the reduced problem in
$\Xi$. The same framework immediately allows acceleration whenever the
$\Reg\OT$ oracle converges geometrically.

\begin{proposition}[Acceleration with a geometric $\Reg\OT$ oracle]
\label{prop:accelerated_inner}
Fix $\Gamma$ and let $h_\Gamma$ be $L$-smooth on
a convex region containing its maximizer and all iterates. Since $h_\Gamma$ is $1$-strongly convex, set
$\kappa:=$. Assume that a $\zeta$-accurate $\Reg\OT$ solve can be
computed in
\begin{equation}
    N_{\Reg\OT}(\zeta)
    =
    O\!\left(\log\frac{1}{\zeta}\right)
\end{equation}
iterations, up to constants independent of the target accuracy.

Then, for any target inner accuracy $\eta>0$, an accelerated inexact-gradient
method applied to $-h_\Gamma$ reaches
\[
    \operatorname{Cert}_\Gamma(\Xi,\widehat\pi)\le\eta
\]
using
\begin{equation}
    O\!\left(
        \sqrt{\kappa}\,
        \log\frac{L\Delta_0}{\eta}
    \right)
\end{equation}
$\Reg\OT$ calls, provided each call is solved to sufficiently small accuracy,
for instance
\begin{equation}
    \zeta
    =
    O\!\left(
        \frac{\eta}{L\sqrt{\kappa}}
    \right).
\end{equation}
Consequently, the total number of iterations of the underlying
$\Reg\OT$ solver is
\begin{equation}
    O\!\left(
        \sqrt{\kappa}\,
        \log\frac{L\Delta_0}{\eta}\,
        \log\frac{L\sqrt{\kappa}}{\eta}
    \right)
    =
    \widetilde O(\sqrt{\kappa}).
\end{equation}
In particular, arbitrarily accurate $\Reg\OT$ solves preserve, up to
logarithmic factors, the usual accelerated dependence on the condition
number.
\end{proposition}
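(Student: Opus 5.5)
The plan is to reduce the statement to the standard analysis of accelerated gradient methods with an inexact first-order oracle (Devolder--Glineur--Nesterov, strongly convex case), applied to $\psi:=-h_\Gamma$. Here $\psi$ is $1$-strongly convex by \cref{lem:unified_strong_concavity} and $L$-smooth on the region containing the maximizer and the iterates, so $\kappa=L/1=L$.

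\textbf{Step 1: build the inexact oracle.} From \cref{lem:unified_certificate}, a $\zeta$-accurate plan $\widehat\pi$ at $c_{\Gamma,\Xi}$ gives the pair $(h_\Gamma(\Xi)$ up to $\zeta$, $\widehat g=A[\widehat\pi]-\Xi)$. Inequality \cref{eq:unified_approx_first_order} then gives the upper model
\[
h_\Gamma(\Xi')\le \widehat h+\scal{\widehat g,\Xi'-\Xi}-\tfrac12\norm{\Xi'-\Xi}^2+\zeta .
\]
For the lower model, I combine the descent lemma with the gradient error bound of \cref{lem:unified_nested_gradient_error}, $\norm{\widehat g-\nabla h_\Gamma}\le\sqrt{2L\zeta}$. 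By Young's inequality this gives
\[
h_\Gamma(\Xi')\ge \widehat h+\scal{\widehat g,\Xi'-\Xi}-L\norm{\Xi'-\Xi}^2-O(\zeta).
\]
So $(\widehat h,\widehat g)$ is a $(O(\zeta),2L,1)$-oracle in Devolder's sense, with $\kappa$ changed only by a constant factor. This is the step I expect to be the main obstacle. Devolder's framework needs the two-sided model globally on the region visited. In the local form of \textbf{(H$_L$)}, that requires the accelerated iterates and their extrapolation points to stay in $\mathcal B_\Gamma$. I would handle this by assuming it, as the statement does ("all iterates"), or else by adding a projection onto the ball, as in \cref{lem:exact_step_in_ball}.

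\textbf{Step 2: apply the known rate.} The accelerated method with an inexact oracle then satisfies
\[
\Delta_k\le (1-c/\sqrt{\kappa})^k\,O(L\Delta_0)+O(\sqrt{\kappa}\,\zeta),
\]
since errors accumulate with factor $\sqrt\kappa$. To turn the gap bound into a certificate bound, I use the argument of \cref{prop:inner_geometric}(ii), evaluating the certificate at the returned point:
\[
\operatorname{Cert}\le 2L\Delta_k+(2L+1)\zeta .
\]
Choosing $\zeta=O(\eta/(L\sqrt\kappa))$ makes the error terms at most $\eta/2$. Taking $k=O\bigl(\sqrt\kappa\log(L\Delta_0/\eta)\bigr)$ makes the geometric term at most $\eta/2$.

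\textbf{Step 3: total cost.} Each call costs $N_{\Reg\OT}(\zeta)=O(\log(L\sqrt\kappa/\eta))$ iterations of the underlying solver. Multiplying this by the number of calls from Step 2 gives the stated product of logarithms, which is $\widetilde O(\sqrt\kappa)$.
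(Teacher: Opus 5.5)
Your proposal is correct and follows essentially the same route as the paper. The paper likewise turns a $\zeta$-accurate $\Reg\OT$ solve into an inexact first-order oracle for $-h_\Gamma$ through \cref{lem:unified_certificate} and \cref{lem:unified_nested_gradient_error}; the proof of \cref{cor:ugw_nesterov_acceleration} makes this explicitly a $(2\zeta,2L,1)$-oracle, matching yours. It then invokes the accelerated rate of \citet{devolder2013strongly} with its $\sqrt{\kappa}\,\zeta$ error floor, converts the gap into a certificate bound via \cref{eq:unified_smooth_certificate_bound}, and multiplies by the logarithmic per-call cost. Your explicit derivation of the two-sided model, and your caveat about keeping the accelerated iterates inside the region where $h_\Gamma$ is smooth, are more careful than the paper's sketch.
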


\begin{proof}
A $\zeta$-accurate $\Reg\OT$ solution provides an inexact first-order oracle
for $h_\Gamma$; in particular, by
\cref{lem:unified_nested_gradient_error},
\[
    \|\widehat g-\nabla h_\Gamma(\Xi)\|
    \le \sqrt{2L\zeta}.
\]
Standard accelerated methods for smooth strongly convex optimization with
an inexact oracle converge in
$O(\sqrt{\kappa}\log(1/\eta))$ iterations, with an error floor proportional
to $\sqrt{\kappa}\,\zeta$ \citep{devolder2013strongly}. Choosing
$\zeta=O(\eta/(L\sqrt{\kappa}))$ makes this error negligible at the accuracy
required by the certificate. Since the $\Reg\OT$ solver is geometrically
convergent, this increased oracle accuracy changes the cost of each call
only logarithmically.
\end{proof}

\section{Entropic unbalanced Gromov--Wasserstein: proof of Theorem~\ref{thm:ugw_end_to_end}}
\label{subsec:unified_ugw}

We now specialize the preceding analysis to entropically regularized
unbalanced Gromov--Wasserstein with KL penalties on both marginals. The UOT
subproblems are solved by generalized Sinkhorn iterations
\cite{chizat2018scaling,pham2020unbalanced}. For standard facts on entropic
UOT we follow \citet[Section~4]{sejourne2023unbalanced}, whose normalization
(entropy relative to $\alpha\otimes\beta$, potentials $(f,g)$) coincides with
ours. We focus on the additional regularity
induced by the Gromov--Wasserstein DC structure and the end-to-end convergence of our algorithm for this specific case.

We denote by $\varepsilon>0$ the entropic regularization parameter, by
$\rho_1,\rho_2>0$ the marginal penalties (\cref{table:regularizers}), by
$\delta_{\Reg\OT}>0$ the accuracy of the UOT oracle
\cref{eq:unified_ot_oracle}, and by $\delta>0$ the DC-criticality tolerance of
\cref{thm:unified_outer}.

We consider discrete measures
\[
    \alpha=\sum_{i=1}^{\NN}\alpha_i\delta_{x_i},
    \qquad
    \beta=\sum_{j=1}^{\MM}\beta_j\delta_{y_j},
\]
with strictly positive weights, and set
\[
    M_\alpha:=\sum_i\alpha_i,
    \qquad
    M_\beta:=\sum_j\beta_j.
\]
Writing $a_{ij}:=A(x_i,y_j)$ and $b_{ij}:=B(x_i,y_j)$, we have
\[
    A[\pi]=\sum_{ij}\pi_{ij}a_{ij},
    \qquad
    B[\pi]=\sum_{ij}\pi_{ij}b_{ij},
\]
and we let
\[
    R_A:=\max_{i,j}\norm{a_{ij}},
    \qquad
    R_B:=\max_{i,j}\norm{b_{ij}}.
\]
The objective is \cref{eq:unified_E} with
\begin{equation}
\label{eq:ugw_klkl_reg}
\Reg_{\mathrm{UOT}}(\pi)
:=
\varepsilon\KL(\pi\Vert\alpha\otimes\beta)
+
\rho_1\KL(\pi_1\Vert\alpha)
+
\rho_2\KL(\pi_2\Vert\beta),
\end{equation}
where $\KL(p\Vert q):=\sum_l\bigl[p_l\log(p_l/q_l)-p_l+q_l\bigr]$ is the
generalized Kullback--Leibler divergence between nonnegative vectors. This is
the entropic UOT functional of \citet[Definition~7]{sejourne2023unbalanced}
with $D_{\varphi_i}=\rho_i\KL$. Sinkhorn rates are stated in the symmetric case
$\rho_1=\rho_2=\rho$.

At fixed $(\Gamma,\Xi)$, define
\begin{equation}
\label{eq:ugw_klkl_inner}
P_{\Gamma,\Xi}(\pi)
:=
\scal{c_{\Gamma,\Xi},\pi}
+
\Reg_{\mathrm{UOT}}(\pi),
\qquad
\mathrm{UOT}(c_{\Gamma,\Xi})
:=
\inf_{\pi\ge0}P_{\Gamma,\Xi}(\pi),
\end{equation}
with
\begin{equation}
\label{eq:ugw_klkl_cost}
c_{\Gamma,\Xi,ij}
=
\scal{\Xi,a_{ij}}
-
\scal{\Gamma,b_{ij}}.
\end{equation}
Thus $\mathrm{UOT}(c)$ is the value $\Reg\OT(c)$ of \cref{eq:unified_V} for
$\Reg=\Reg_{\mathrm{UOT}}$, and the corresponding reduced objective is
\begin{equation}
\label{eq:ugw_reduced_h}
h_\Gamma(\Xi)
:=
-\frac12\norm{\Xi}^2
+
\mathrm{UOT}(c_{\Gamma,\Xi}).
\end{equation}

\paragraph{Regularity with respect to the GW variable.}

\begin{proposition}[Regularity of the reduced UGW problem]
\label{prop:ugw_klkl_regularities}
Fix $\Gamma$ and $R>0$, and set
\[
    C_\Gamma(R)
    :=
    R_AR+R_B\norm{\Gamma},
\]
\begin{equation}
\label{eq:ugw_mass_bound}
M_\Gamma(R)
:=
M_\alpha^{
\frac{\varepsilon+\rho_1}{\varepsilon+\rho_1+\rho_2}
}
M_\beta^{
\frac{\varepsilon+\rho_2}{\varepsilon+\rho_1+\rho_2}
}
\exp\left(
\frac{C_\Gamma(R)}{\varepsilon+\rho_1+\rho_2}
\right).
\end{equation}
Then:
\begin{enumerate}
    \item[\rm (i)]
    for every $\Xi$, the UOT problem \cref{eq:ugw_klkl_inner} has a unique
    minimizer $\pi_{\Gamma,\Xi}$, which is entrywise positive and depends
    continuously differentiably on $\Xi$; the function $h_\Gamma$ is finite
    and twice continuously differentiable on $\Hilb_A$, with
    \begin{equation}
    \label{eq:ugw_klkl_grad_h}
        \nabla h_\Gamma(\Xi)
        =
        A[\pi_{\Gamma,\Xi}]
        -
        \Xi;
    \end{equation}

    \item[\rm (ii)]
    if $\norm{\Xi}\le R$, then
    \begin{equation}
    \label{eq:ugw_klkl_mass}
        \norm{\pi_{\Gamma,\Xi}}_1
        \le
        M_\Gamma(R);
    \end{equation}

    \item[\rm (iii)]
    $h_\Gamma$ is $1$-strongly concave on $\Hilb_A$, and on
    $\{\norm{\Xi}\le R\}$ its Hessian satisfies
    $-L_hI\preceq\nabla^2h_\Gamma\preceq-I$ with
    \begin{equation}
    \label{eq:ugw_klkl_Lh}
        L_h
        :=
        1+
        \frac{R_A^2M_\Gamma(R)}{\varepsilon}.
    \end{equation}
\end{enumerate}
\end{proposition}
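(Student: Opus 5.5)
The plan is to work in the finite-dimensional setting and study the UOT problem \cref{eq:ugw_klkl_inner} through its first-order conditions. This gives existence, uniqueness and positivity of the minimizer. The implicit function theorem then gives smoothness in $\Xi$, and \cref{prop:unified_danskin} gives the gradient formula.

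\textbf{Step (i): existence, uniqueness, regularity.} Each entry of $P_{\Gamma,\Xi}$ contains the term $\varepsilon\pi_{ij}\log\pi_{ij}$, so the functional is strictly convex and coercive on $\R_+^{\NN\times\MM}$. A minimizer therefore exists and is unique. Positivity follows because the derivative of $\pi\log\pi$ tends to $-\infty$ at $0$, so no coordinate can vanish at the optimum.

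At an interior point the optimality conditions read
\begin{equation*}
c_{ij}+\varepsilon\log\tfrac{\pi_{ij}}{\alpha_i\beta_j}+\rho_1\log\tfrac{(\pi_1)_i}{\alpha_i}+\rho_2\log\tfrac{(\pi_2)_j}{\beta_j}=0 .
\end{equation*}
This system is $F(\pi,\Xi)=0$ with $F$ smooth. Its Jacobian in $\pi$ is the Hessian of $P$, which is positive definite: it dominates $\varepsilon\,\mathrm{diag}(1/\pi_{ij})$ and the marginal terms only add nonnegative contributions. The implicit function theorem then gives that $\Xi\mapsto\pi_{\Gamma,\Xi}$ is $C^1$.

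Since $c_{\Gamma,\Xi}$ is affine in $\Xi$, the value $\mathrm{UOT}(c_{\Gamma,\Xi})=P(\pi_{\Gamma,\Xi})$ is $C^1$ by the envelope theorem, with gradient $A[\pi_{\Gamma,\Xi}]$. This derivative is itself $C^1$ in $\Xi$, so $h_\Gamma$ is $C^2$. Formula \cref{eq:ugw_klkl_grad_h} then follows from \cref{prop:unified_danskin}.

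\textbf{Step (ii): mass bound.} Set $m=\norm{\pi}_1$. Multiply the optimality condition by $\pi_{ij}$ and sum, or equivalently use the scaling argument $t\mapsto P(t\pi)$ at $t=1$. This gives
\begin{equation*}
\langle c,\pi\rangle+\varepsilon\textstyle\sum\pi_{ij}\log\frac{\pi_{ij}}{\alpha_i\beta_j}+\rho_1\sum(\pi_1)_i\log\frac{(\pi_1)_i}{\alpha_i}+\rho_2\sum(\pi_2)_j\log\frac{(\pi_2)_j}{\beta_j}=0 .
\end{equation*}
Apply the log-sum (Jensen) inequality to each entropy term, for instance $\sum\pi_{ij}\log\frac{\pi_{ij}}{\alpha_i\beta_j}\ge m\log\frac{m}{M_\alpha M_\beta}$. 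Also use $\langle c,\pi\rangle\ge -C_\Gamma(R)\,m$, since $|c_{ij}|\le R_AR+R_B\norm{\Gamma}$. Dividing by $m$ yields
\begin{equation*}
(\varepsilon+\rho_1+\rho_2)\log m\le C_\Gamma(R)+(\varepsilon+\rho_1)\log M_\alpha+(\varepsilon+\rho_2)\log M_\beta ,
\end{equation*}
which is exactly \cref{eq:ugw_mass_bound}.

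\textbf{Step (iii): Hessian bounds.} This is the main obstacle. The map $v(\Xi)=\mathrm{UOT}(c_{\Gamma,\Xi})$ is an infimum of affine functions of $\Xi$, hence concave. Therefore $\nabla^2h_\Gamma=\nabla^2v-I\preceq -I$, which gives $1$-strong concavity.

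For the lower bound, differentiate the optimality system: $H\,\partial_\Xi\pi=-a^\top$, where $H$ is the Hessian of $P$ and $a$ is the matrix of the $a_{ij}$. This gives $\nabla^2v=-a H^{-1}a^\top$. Because $H\succeq\varepsilon D$ with $D=\mathrm{diag}(1/\pi_{ij})$, we get $H^{-1}\preceq\varepsilon^{-1}\mathrm{diag}(\pi_{ij})$. Hence, for any direction $u$,
\begin{equation*}
u^\top aH^{-1}a^\top u\le\varepsilon^{-1}\sum_{ij}\pi_{ij}\langle a_{ij},u\rangle^2\le\varepsilon^{-1}R_A^2\,\norm{\pi}_1\norm{u}^2 .
\end{equation*}
Combining this with (ii) on $\{\norm{\Xi}\le R\}$ gives $\nabla^2h_\Gamma\succeq -(1+R_A^2M_\Gamma(R)/\varepsilon)I$.

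The delicate points are two: justifying the comparison $H\succeq\varepsilon D$, which holds because the marginal KL terms have positive semidefinite Hessians, and making sure the mass bound is applied at the correct $\Xi$.
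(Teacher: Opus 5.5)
Your proposal is correct and follows the paper's proof essentially step by step. The shared steps are the implicit function theorem on the stationarity system with $H_\pi\succeq\varepsilon\,\mathrm{diag}(1/\pi_{ij})$, the envelope theorem for the gradient, the scaling/log-sum argument for the mass bound, and the Hessian identity $\nabla^2h_\Gamma=-I-AH_\pi^{-1}A^\top$. The only deviation is in (i): you obtain existence and positivity of the minimizer directly in the primal (coercivity, strict convexity, and the infinite slope of $t\log t$ at $0$), whereas the paper goes through the dual maximizer and the Gibbs formula \cref{eq:ugw_gibbs}; your version makes that step self-contained.
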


\begin{proof}
\emph{(i).} Since $\varepsilon\KL(\cdot\Vert\alpha\otimes\beta)$ is strictly
convex, $P_{\Gamma,\Xi}$ has at most one minimizer. The dual of
\cref{eq:ugw_klkl_inner} \citep[Proposition~2]{sejourne2023unbalanced} is the
maximization over $(f,g)\in\R^{\NN}\times\R^{\MM}$ of a smooth, strictly
concave and coercive function. It therefore has a
unique maximizer $(f^\star,g^\star)$. The minimizer of \cref{eq:ugw_klkl_inner}
is then given by the primal--dual relation
\citep[Equation~(21)]{sejourne2023unbalanced}:
\begin{equation}
\label{eq:ugw_gibbs}
    \pi_{\Gamma,\Xi,ij}
    =\alpha_i\beta_j\exp\Bigl(\frac{f^\star_i+g^\star_j-c_{\Gamma,\Xi,ij}}{\varepsilon}\Bigr)>0 .
\end{equation}
In particular, $\mathrm{UOT}(c_{\Gamma,\Xi})$ and $h_\Gamma$ are finite.

On the open set $\{\pi>0\}$, $P_{\Gamma,\Xi}$ is smooth and
$\pi_{\Gamma,\Xi}$ is characterized by the stationarity condition
\begin{equation}
\label{eq:ugw_stationarity}
    \nabla_\pi P_{\Gamma,\Xi}(\pi)
    =c_{\Gamma,\Xi}+\nabla\Reg_{\mathrm{UOT}}(\pi)=0 .
\end{equation}
Its Jacobian with respect to $\pi$ is
$H_\pi:=\nabla^2\Reg_{\mathrm{UOT}}(\pi)\succeq\varepsilon\operatorname{diag}(1/\pi_{ij})\succ0$,
because the marginal KL terms are convex. By the implicit function theorem,
$\Xi\mapsto\pi_{\Gamma,\Xi}$ is continuously differentiable. Since
$c_{\Gamma,\Xi}=A^\top\Xi-B^\top\Gamma$, differentiating \cref{eq:ugw_stationarity}
gives
\begin{equation}
\label{eq:ugw_implicit_derivative}
    D_\Xi\pi_{\Gamma,\Xi}=-H_{\pi}^{-1}A^\top,
    \qquad\text{with }H_\pi:=H_{\pi_{\Gamma,\Xi}} .
\end{equation}
Finally, $\mathrm{UOT}(c_{\Gamma,\Xi})=P_{\Gamma,\Xi}(\pi_{\Gamma,\Xi})$, where
$(\pi,\Xi)\mapsto P_{\Gamma,\Xi}(\pi)=\scal{\Xi,A[\pi]}-\scal{\Gamma,B[\pi]}+\Reg_{\mathrm{UOT}}(\pi)$
is smooth on $\{\pi>0\}\times\Hilb_A$. By the chain rule, and since the
$\pi$-gradient vanishes at $\pi_{\Gamma,\Xi}$ by \cref{eq:ugw_stationarity}
(envelope theorem), we get $\nabla_\Xi\,\mathrm{UOT}(c_{\Gamma,\Xi})=A[\pi_{\Gamma,\Xi}]$.
This gives \cref{eq:ugw_klkl_grad_h}, in accordance with
\cref{prop:unified_danskin} and \cref{eq:unified_gradient_formula}. Since the
right-hand side is $C^1$, $h_\Gamma$ is $C^2$.

\emph{(ii).} Let $\norm\Xi\le R$, and write $\pi:=\pi_{\Gamma,\Xi}$ and
$m:=\norm{\pi}_1>0$. Since $\pi$ is a minimizer, $t\mapsto P_{\Gamma,\Xi}(t\pi)$
is minimal at $t=1$. Setting its derivative to zero, as in
\citet[Lemma~4]{pham2020unbalanced}, gives
\begin{equation*}
    0=
    \scal{c_{\Gamma,\Xi},\pi}
    +\varepsilon\sum_{ij}\pi_{ij}\log\frac{\pi_{ij}}{\alpha_i\beta_j}
    +\rho_1\sum_i(\pi_1)_i\log\frac{(\pi_1)_i}{\alpha_i}
    +\rho_2\sum_j(\pi_2)_j\log\frac{(\pi_2)_j}{\beta_j}.
\end{equation*}

We first bound the cost term. By Cauchy--Schwarz and the definitions of
$R_A$ and $R_B$,
\begin{align*}
c_{\Gamma,\Xi,ij}
&=
\scal{\Xi,a_{ij}}-\scal{\Gamma,b_{ij}} \
&\ge
-\norm{\Xi}\norm{a_{ij}}
-\norm{\Gamma}\norm{b_{ij}} \
&\ge
-R_AR-R_B\norm{\Gamma}
=
-C_\Gamma(R).
\end{align*}
Since $\pi_{ij}\ge0$, summing this entrywise bound against $\pi$ gives
\begin{align*}
\scal{c_{\Gamma,\Xi},\pi}
&=
\sum_{ij}c_{\Gamma,\Xi,ij}\pi_{ij} \
&\ge
-C_\Gamma(R)\sum_{ij}\pi_{ij}
=
-C_\Gamma(R)m.
\end{align*}

Recall that the marginals of $\pi$ are
$(\pi_1)_i = \sum_j \pi_{ij},
(\pi_2)_j = \sum_i \pi_{ij}$ .
Hence,both have the same total mass as $\pi$:
\begin{align*}
\sum_i (\pi_1)_i
=
\sum_j (\pi_2)_j
=
\sum_{i,j} \pi_{ij}
=
m.
\end{align*}

We can therefore apply the log-sum inequality to each entropy term. Since
\begin{align*}
\sum_{i,j} \alpha_i \beta_j
=
M_\alpha M_\beta,
\end{align*}
it gives
\begin{align*}
\sum_{i,j} \pi_{ij}
\log \frac{\pi_{ij}}{\alpha_i \beta_j}
&\ge
m \log \frac{m}{M_\alpha M_\beta},\\\
\sum_i (\pi_1)_i
\log \frac{(\pi_1)_i}{\alpha_i}
&\ge
m \log \frac{m}{M_\alpha},\\
\sum_j (\pi_2)_j
\log \frac{(\pi_2)_j}{\beta_j}
&\ge
m \log \frac{m}{M_\beta}.
\end{align*}
Combining these inequalities with
$\scal{c_{\Gamma,\Xi},\pi}\ge -C_\Gamma(R)m$ yields
\begin{align*}
0
&\ge
-C_\Gamma(R)m
+\varepsilon m \log \frac{m}{M_\alpha M_\beta}
+\rho_1 m \log \frac{m}{M_\alpha}
+\rho_2 m \log \frac{m}{M_\beta}.
\end{align*}
Dividing by $m>0$ and collecting the logarithmic terms gives
\begin{align*}
0
&\ge
-C_\Gamma(R)
+(\varepsilon+\rho_1+\rho_2)\log m\
&\qquad
-(\varepsilon+\rho_1)\log M_\alpha
-(\varepsilon+\rho_2)\log M_\beta.
\end{align*}

which is \cref{eq:ugw_klkl_mass} after rearranging and exponentiating.

\emph{(iii).} Strong concavity is \cref{lem:unified_strong_concavity}, since
$h_\Gamma$ is finite. Let $\norm\Xi\le R$. Differentiating
\cref{eq:ugw_klkl_grad_h} and using \cref{eq:ugw_implicit_derivative},
\begin{equation}
\label{eq:ugw_hessian_identity}
    \nabla^2 h_\Gamma(\Xi)
    =
    -I
    -
    AH_\pi^{-1}A^\top.
\end{equation}
From $H_\pi\succeq\varepsilon\operatorname{diag}(1/\pi_{ij})$ we get
$H_\pi^{-1}\preceq\varepsilon^{-1}\operatorname{diag}(\pi_{ij})$. Hence, for
$u\in\Hilb_A$, by Cauchy--Schwarz, $\norm{a_{ij}}\le R_A$ and
\cref{eq:ugw_klkl_mass},
\[
    0\le\scal{u,AH_\pi^{-1}A^\top u}
    \le\frac1\varepsilon\sum_{ij}\pi_{ij}\scal{u,a_{ij}}^2
    \le\frac{R_A^2\norm{\pi_{\Gamma,\Xi}}_1}{\varepsilon}\norm u^2
    \le\frac{R_A^2M_\Gamma(R)}{\varepsilon}\norm u^2 .
\]
With \cref{eq:ugw_hessian_identity}, this gives
$-L_hI\preceq\nabla^2h_\Gamma(\Xi)\preceq-I$.
\end{proof}

\begin{remark}[No exponential deterioration in $1/\varepsilon$]
\label{rem:ugw_klkl_lambda}
The mass bound \cref{eq:ugw_mass_bound} contains
$\exp\bigl(C_\Gamma(R)/(\varepsilon+\rho_1+\rho_2)\bigr)$ rather than
$\exp(C/\varepsilon)$. Thus, for fixed positive marginal penalties and
uniformly bounded $R$ and $\norm{\Gamma}$, $M_\Gamma(R)=O(1)$ and
$L_h=O(\varepsilon^{-1})$ as $\varepsilon\downarrow0$.
\end{remark}

\paragraph{Sinkhorn iterations.}
Assume $\rho_1=\rho_2=\rho$ and fix $(\Gamma,\Xi)$. The dual of the UOT problem
\cref{eq:ugw_klkl_inner} is solved by the unbalanced Sinkhorn algorithm
\citep{chizat2018scaling}, \citep[Definition~10]{sejourne2023unbalanced}. For
KL penalties it alternates the two half-steps
\citep[Section~4.2]{sejourne2023unbalanced}
\begin{equation}
    \label{eq:ugw_sinkhorn_halfsteps}
    f_i\leftarrow-\frac{\rho\varepsilon}{\rho+\varepsilon}\log\sum_{j}\beta_j\,e^{(g_j-c_{\Gamma,\Xi,ij})/\varepsilon},
    \qquad
    g_j\leftarrow-\frac{\rho\varepsilon}{\rho+\varepsilon}\log\sum_{i}\alpha_i\,e^{(f_i-c_{\Gamma,\Xi,ij})/\varepsilon},
\end{equation}
and encodes the current plan as
$\pi_{ij}=\alpha_i\beta_je^{(f_i+g_j-c_{\Gamma,\Xi,ij})/\varepsilon}$. Up to
the change of cost $C_{ij}=c_{\Gamma,\Xi,ij}-\varepsilon\log(\alpha_i\beta_j)$,
this is Algorithm~1 of \citet{pham2020unbalanced} with
$(\tau,\eta)=(\rho,\varepsilon)$. The optimal potentials $(f^\star,g^\star)$ are
the fixed points of \cref{eq:ugw_sinkhorn_halfsteps}
\citep[Proposition~3]{sejourne2023unbalanced}. The log-sum-exp maps in
\cref{eq:ugw_sinkhorn_halfsteps} are $1$-Lipschitz for the sup norm. Hence each
half-step is a contraction of factor $\rho/(\rho+\varepsilon)$ towards the
optimum, for any current potentials:
\begin{equation}
    \label{eq:ugw_halfstep_contraction}
    \norm{f^{\mathrm{new}}-f^\star}_\infty\le\frac{\rho}{\rho+\varepsilon}\norm{g-g^\star}_\infty,
    \qquad
    \norm{g^{\mathrm{new}}-g^\star}_\infty\le\frac{\rho}{\rho+\varepsilon}\norm{f-f^\star}_\infty .
\end{equation}
This is the key step in the proof of Theorem~1 of \citet{pham2020unbalanced},
derived there from their Lemmas~1--2. See also
\citet[Theorem~2]{sejourne2023unbalanced}, which states the resulting rate
$(\rho/(\rho+\varepsilon))^2$ per complete iteration.
We start from  $g^{(0)}$ and call one \emph{complete iteration} the update $g^{(s-1)}\to f^{(s)}\to g^{(s)}$ obtained by applying the two half-steps \cref{eq:ugw_sinkhorn_halfsteps} in this order; $\pi^{(s)}$ denotes the plan encoded by $(f^{(s)},g^{(s)})$, and
\[
    D_s:=\max\bigl\{\norm{f^{(s)}-f^\star}_\infty,\norm{g^{(s)}-g^\star}_\infty\bigr\}
    \quad(s\ge1),
    \qquad
    D_0:=\norm{g^{(0)}-g^\star}_\infty
\]
measure the distance to the optimal potentials. Applying
\cref{eq:ugw_halfstep_contraction} $2s-1$ times and then $2s$ times,
\begin{equation}
    \label{eq:ugw_Ds_decay}
    \norm{f^{(s)}-f^\star}_\infty\le\Bigl(\frac{\rho}{\rho+\varepsilon}\Bigr)^{2s-1}D_0,
    \qquad
    \norm{g^{(s)}-g^\star}_\infty\le\Bigl(\frac{\rho}{\rho+\varepsilon}\Bigr)^{2s}D_0,
\end{equation}
and therefore
\begin{equation}
    \label{eq:ugw_Ds_decay_max}
    D_s\le\Bigl(\frac{\rho}{\rho+\varepsilon}\Bigr)^{2s-1}D_0
    \qquad\text{for all }s\ge1 .
\end{equation}

\begin{lemma}[Sinkhorn as a primal UOT oracle]
\label{lem:ugw_sinkhorn_oracle}
Assume $\rho_1=\rho_2=\rho$, fix $\Gamma$ and $\Xi$ with $\norm\Xi\le R$, and let $M_\Gamma(R)$ be as in \cref{prop:ugw_klkl_regularities}. Then:
\begin{enumerate}
    \item[\rm(i)] for every $s\ge1$ such that $D_s\le\varepsilon/2$,
    \begin{equation}
        \label{eq:ugw_sinkhorn_primal_rate}
        P_{\Gamma,\Xi}(\pi^{(s)})-\mathrm{UOT}(c_{\Gamma,\Xi})
        \le\frac{8\,(\varepsilon+2\rho)\,M_\Gamma(R)}{\varepsilon^2}\,D_s^2 ;
    \end{equation}
    \item[\rm(ii)] consequently, with the constant
    \begin{equation}
        \label{eq:ugw_CS_def}
        C_{\mathrm S}:=\max\Bigl\{1,\ \frac{2D_0}{\varepsilon},\ \frac{8(\varepsilon+2\rho)M_\Gamma(R)}{\varepsilon^2}D_0^2\Bigr\},
    \end{equation}
    the plan $\pi^{(s)}$ is $\delta_{\Reg\OT}$-accurate for every $\delta_{\Reg\OT}\le1$ as soon as
    \begin{equation}
        \label{eq:ugw_sinkhorn_complexity}
        s\ge N_{\mathrm S}(\delta_{\Reg\OT}):=\Bigl\lceil\frac12+\frac{\log(C_{\mathrm S}/\delta_{\Reg\OT})}{2\log(1+\varepsilon/\rho)}\Bigr\rceil ;
    \end{equation}
    moreover
    \begin{equation}
        \label{eq:ugw_sinkhorn_complexity_bound}
        N_{\mathrm S}(\delta_{\Reg\OT})\le\frac32+\frac{\rho+\varepsilon}{2\varepsilon}\log\frac{C_{\mathrm S}}{\delta_{\Reg\OT}} .
    \end{equation}
\end{enumerate}
\end{lemma}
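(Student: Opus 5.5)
Write $\kappa:=\rho/(\rho+\varepsilon)$ and $\mathcal J(f,g):=\mathcal D_{\Reg}((f,g);c_{\Gamma,\Xi},\alpha,\beta)$ for the unbalanced dual objective of \cref{table:regularizers_dual}, whose maximum equals $\mathrm{UOT}(c_{\Gamma,\Xi})$ by strong duality. The plan bounds the primal suboptimality of $\pi^{(s)}$ by the duality gap $P_{\Gamma,\Xi}(\pi^{(s)})-\mathcal J(f^{(s)},g^{(s)})$. This is an upper bound because $\mathcal J(f^{(s)},g^{(s)})\le\mathrm{UOT}$. The gap is then controlled quadratically in $D_s$ by a second-order expansion around $(f^\star,g^\star)$, at which the gap vanishes.

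\textbf{Step 1: closed form of the gap.} Since $\pi^{(s)}_{ij}=\alpha_i\beta_je^{(f_i+g_j-c_{ij})/\varepsilon}$, the entropic term gives $\varepsilon\KL(\pi^{(s)}\Vert\alpha\otimes\beta)+\scal{c,\pi^{(s)}}=\scal{f\oplus g,\pi^{(s)}}-\varepsilon m+\varepsilon M_\alpha M_\beta$, where $m=\norm{\pi^{(s)}}_1$. The $\varepsilon$-terms cancel against the entropic part of $\mathcal J$. The gap therefore reduces to two separable one-dimensional Fenchel gaps:
\[
\sum_i\Bigl[\rho\,\KL\text{-term}((\pi^{(s)}_1)_i\Vert\alpha_i)+f_i(\pi^{(s)}_1)_i-\rho\alpha_i(1-e^{-f_i/\rho})\Bigr]+(\text{same in }g,\beta).
\]
Each summand is $\alpha_i\bigl[\varphi(r_i)+\varphi^*(-f_i/\rho)\cdot\rho/\rho+\ldots\bigr]$, that is, a Fenchel--Young gap $\rho\alpha_i\,[\varphi(r_i)+\varphi^*(u_i)-r_iu_i]$ with $r_i=(\pi_1)_i/\alpha_i$ and $u_i=-f_i/\rho$, where $\varphi(r)=r\log r-r+1$ and $\varphi^*(u)=e^u-1$. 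This Fenchel--Young gap equals $\KL$ between $r_i$ and $e^{u_i}$, i.e. $\rho\,e^{u_i}\alpha_i\,\varphi(r_ie^{-u_i})$. At the optimum $r_i=e^{u_i}$, so the gap is zero.

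\textbf{Step 2: quadratic control.} Compute $r_ie^{-u_i}=\exp\bigl(f_i/\rho+f_i/\varepsilon+\tfrac1\varepsilon\log\sum_j\beta_je^{(g_j-c_{ij})/\varepsilon}\bigr)$. By the fixed-point equation \cref{eq:ugw_sinkhorn_halfsteps} at the optimum and $1$-Lipschitzness of log-sum-exp, the exponent has absolute value at most $(\tfrac1\rho+\tfrac1\varepsilon)\norm{f-f^\star}_\infty+\tfrac1\varepsilon\norm{g-g^\star}_\infty\le\frac{\varepsilon+2\rho}{\rho\varepsilon}D_s$. This is where the hypothesis $D_s\le\varepsilon/2$ enters: it keeps the exponent $x$ bounded, so that $\varphi(e^x)=xe^x-e^x+1\le Cx^2$ holds with an absolute constant. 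The gap for $f$ is then at most $\rho\sum_i\alpha_ie^{u_i}\cdot C x_i^2$. Moreover $\alpha_ie^{u_i}=(\pi_1)_ie^{-x_i}$, so $\sum_i\alpha_ie^{u_i}\lesssim m$. One bounds $m$ by comparing with $\norm{\pi_{\Gamma,\Xi}}_1\le M_\Gamma(R)$ from \cref{prop:ugw_klkl_regularities}(ii): $\pi^{(s)}/\pi^\star$ is entrywise $e^{O(D_s/\varepsilon)}=O(1)$. Collecting factors, $\rho\cdot M_\Gamma\cdot(\varepsilon+2\rho)^2D_s^2/(\rho\varepsilon)^2\cdot\rho$, and simplifying with $\rho/(\varepsilon+2\rho)\le1$, should produce the stated constant $8(\varepsilon+2\rho)M_\Gamma(R)/\varepsilon^2$. \emph{I expect this bookkeeping of constants to be the main obstacle.} If the direct route is lossy, I would instead bound the symmetric $g$-term more tightly using that after the $g$-half-step the $g$-marginal mismatch depends only on $\norm{f-f^\star}$.

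\textbf{Step 3: part (ii).} By \cref{eq:ugw_Ds_decay_max}, $D_s\le\kappa^{2s-1}D_0$. The threshold $s\ge N_{\mathrm S}$ gives $\kappa^{2s-1}\le\delta_{\Reg\OT}/C_{\mathrm S}\le\min\{1/C_{\mathrm S},\sqrt{\delta_{\Reg\OT}/C_{\mathrm S}}\}$, using $\delta_{\Reg\OT}\le1\le C_{\mathrm S}$. Hence $D_s\le D_0/C_{\mathrm S}\le\varepsilon/2$, so (i) applies. Then (i) gives gap $\le\frac{8(\varepsilon+2\rho)M_\Gamma}{\varepsilon^2}D_0^2\kappa^{2(2s-1)}\le C_{\mathrm S}\kappa^{2(2s-1)}\le\delta_{\Reg\OT}$. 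Finally, \cref{eq:ugw_sinkhorn_complexity_bound} follows from $\lceil x\rceil\le x+1$ and $\log(1+\varepsilon/\rho)\ge\varepsilon/(\rho+\varepsilon)$.
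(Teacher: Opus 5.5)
Your part (ii) is correct and follows the paper's argument. Part (i), however, has a genuine gap. Bounding the primal suboptimality by the full primal--dual gap $P_{\Gamma,\Xi}(\pi^{(s)})-\mathcal J(f^{(s)},g^{(s)})$ cannot give a bound uniform in $(\varepsilon,\rho)$. That gap also contains the dual suboptimality, and the dual marginal terms $\rho(1-e^{-f_i/\rho})$ have curvature of order $e^{|f_i-f^\star_i|/\rho}/\rho$. So the gap is governed by $D_s/\rho$, which the hypothesis $D_s\le\varepsilon/2$ does not control.

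\textbf{Why your constant fails.} Your exponent only satisfies $|x_i|\le(\tfrac1\rho+\tfrac2\varepsilon)D_s\le 1+\tfrac{\varepsilon}{2\rho}$, so no absolute constant $C$ with $\varphi(e^x)\le Cx^2$ exists when $\rho\ll\varepsilon$. The quadratic bookkeeping also fails. Your collected product lists $\rho$ twice, which a dimension check rules out. With the single prefactor $\rho$ actually present, the factors give $C(\varepsilon+2\rho)^2M_\Gamma(R)D_s^2/(\rho\varepsilon^2)$. This misses the target by the factor $(\varepsilon+2\rho)/\rho$, and $\rho/(\varepsilon+2\rho)\le1$ points the wrong way to absorb it.

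\textbf{The loss is real.} Take $\NN=\MM=1$, $\alpha=\beta=1$, $c=0$, so $f^\star=g^\star=0$ and $\pi^\star=1$, and start from $g^{(s-1)}=\gamma$. Then $f^{(s)}=-D_s$, $g^{(s)}=\kappa D_s$, $D_s=\kappa\gamma$, and $\pi^{(s)}=e^{-D_s/(\rho+\varepsilon)}$. The primal gap $(\varepsilon+2\rho)\varphi(\pi^{(s)})$ is at most $D_s^2/\varepsilon$. The duality gap, however, contains $\rho(e^{D_s/\rho}-1)$. For $\varepsilon=1$, $\rho=10^{-2}$, $D_s=1/2$ it exceeds $10^{19}$, while the lemma's bound is about $2$.

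\textbf{Your fallback does not help.} At $(f^{(s)},g^{(s)})$ the $g$-part of the gap is already exactly zero, because $g^{(s)}$ is the exact best response to $f^{(s)}$, so $(\pi^{(s)}_2)_j=\beta_je^{-g^{(s)}_j/\rho}$. The obstruction sits in the $f$-part. Your route does work, with a somewhat worse constant, in the regime $\varepsilon\lesssim\rho$.

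\textbf{The missing idea is to stay on the primal side.} Since $\pi^\star=\pi_{\Gamma,\Xi}$ is interior with $\nabla P_{\Gamma,\Xi}(\pi^\star)=0$, the primal gap is the Bregman divergence of $\Reg_{\mathrm{UOT}}$ at $\pi^\star$. That is, it equals $\varepsilon\KL(\pi\Vert\pi^\star)+\rho\KL(\pi_1\Vert\pi^\star_1)+\rho\KL(\pi_2\Vert\pi^\star_2)$.
\begin{itemize}
\item The two Gibbs representations give $|\log(\pi^{(s)}_{ij}/\pi^\star_{ij})|=|f_i-f^\star_i+g_j-g^\star_j|/\varepsilon\le 2D_s/\varepsilon=:\theta\le1$.
\item The marginal ratios inherit this bound as weighted averages of entrywise ratios, so only $1/\varepsilon$ ever enters, never $1/\rho$.
\item Then $u\log u-u+1\le2\theta^2$ on $[e^{-\theta},e^{\theta}]$ bounds each KL term by $2\theta^2\norm{\pi^\star}_1\le 2\theta^2 M_\Gamma(R)$.
\end{itemize}
Summing with weights $\varepsilon,\rho,\rho$ gives exactly $8(\varepsilon+2\rho)M_\Gamma(R)D_s^2/\varepsilon^2$. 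Here the mass bound of \cref{prop:ugw_klkl_regularities} applies directly to $\pi^\star$, and no comparison of masses is needed.
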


\begin{proof}
Let $\pi^\star:=\pi_{\Gamma,\Xi}$. Since
$\nabla P_{\Gamma,\Xi}(\pi^\star)=0$, the primal gap is the Bregman divergence
of $P_{\Gamma,\Xi}$ at $\pi^\star$. Using the linearity of the marginal maps,
\begin{equation}
\label{eq:ugw_primal_gap_identity}
P_{\Gamma,\Xi}(\pi)-\mathrm{UOT}(c_{\Gamma,\Xi})
=
\varepsilon\KL(\pi\Vert\pi^\star)
+\rho\KL(\pi_1\Vert\pi^\star_1)
+\rho\KL(\pi_2\Vert\pi^\star_2).
\end{equation}

\emph{(i).}
By the Gibbs representation of $\pi^{(s)}$ and $\pi^\star$,
\begin{align*}
\left|
\log\frac{\pi^{(s)}_{ij}}{\pi^\star_{ij}}
\right|
&\le
\frac{2D_s}{\varepsilon}
=:\theta .
\end{align*}
The same bound holds for the ratios of both marginals, since each marginal
ratio is a weighted average of the corresponding entrywise ratios. If
$\theta\le1$, the elementary bound

$$
    u\log u-u+1\le2\theta^2,
    \qquad
    e^{-\theta}\le u\le e^\theta,
$$

therefore gives
\begin{align*}
\KL(\pi^{(s)}\Vert\pi^\star),
\quad
\KL(\pi^{(s)}_1\Vert\pi^\star_1),
\quad
\KL(\pi^{(s)}_2\Vert\pi^\star_2)
\le
2\theta^2\norm{\pi^\star}_1.
\end{align*}
Using $\norm{\pi^\star}_1\le M_\Gamma(R)$ and
$\theta=2D_s/\varepsilon$ in \cref{eq:ugw_primal_gap_identity} yields

$$
    P_{\Gamma,\Xi}(\pi^{(s)})-\mathrm{UOT}(c_{\Gamma,\Xi})
    \le
    \frac{8(\varepsilon+2\rho)M_\Gamma(R)}{\varepsilon^2}D_s^2.
$$

\emph{(ii).}
Set $q:=\frac{\rho}{\rho+\varepsilon}$.
By \cref{eq:ugw_Ds_decay_max},
$$
    D_s\le q^{2s-1}D_0.
$$
If $s\geq N_{\mathrm{S}}(\delta_{\Reg\OT})$, then

$$
    q^{2s-1}
    \le
    \frac{\delta_{\Reg\OT}}{C_{\mathrm S}}.
$$

By the definition of $C_{\mathrm S}$ and $\delta_{\Reg\OT}\le1$, this implies
both $D_s\le\varepsilon/2$ and
\begin{align*}
P_{\Gamma,\Xi}(\pi^{(s)})-\mathrm{UOT}(c_{\Gamma,\Xi})
&\le
\frac{8(\varepsilon+2\rho)M_\Gamma(R)}{\varepsilon^2}
D_0^2q^{2(2s-1)} \
&\le
C_{\mathrm S}q^{2s-1}
\le
\delta_{\Reg\OT}.
\end{align*}
This proves \cref{eq:ugw_sinkhorn_complexity}. Finally,
\cref{eq:ugw_sinkhorn_complexity_bound} follows from
$\lceil x\rceil\le x+1$ and

$$
    \log(1+x)\ge\frac{x}{1+x}.
$$

\end{proof}

\paragraph{End-to-end rates.}
For the UGW problem considered here, hypotheses \textbf{(S1)}--\textbf{(S2)}
hold since $\Reg_{\mathrm{UOT}}$ is proper, convex and lower semicontinuous,
while both the GW energy and $\Reg_{\mathrm{UOT}}$ are nonnegative.
Fix $\delta\in(0,1]$ and bounds
$\overline R,\overline\Gamma,\overline G>0$, and set
\begin{align}
    \overline M
    &:=
    \sup_{\norm{\Gamma}\le\overline\Gamma}
    M_\Gamma(3\overline R+1),
    \label{eq:ugw_uniform_M}\\
    \overline L
    &:=
    1+\frac{R_A^2\overline M}{\varepsilon},
    \qquad
    \overline C_L
    :=
    2\overline L^2+2\overline L+1,
    \label{eq:ugw_uniform_L}\\
    \overline\delta_{\Reg\OT}
    &:=
    \frac{\delta^2}{8\overline C_L},
    \label{eq:ugw_uniform_delta}\\
    \overline N_\Xi
    &:=
    \left\lceil
    \overline L
    \log\max\left\{
        1,
        \frac{32\overline L^2\overline R^2}{\delta^2}
    \right\}
    \right\rceil.
    \label{eq:ugw_uniform_NXi}
\end{align}
Let $\overline N_{\mathrm S}$ denote the bound of
\cref{lem:ugw_sinkhorn_oracle}(ii) at accuracy
$\overline\delta_{\Reg\OT}$, with $M_\Gamma(R)$ and $D_0$ replaced by their
uniform bounds $\overline M$ and $\overline G$.

\begin{theorem}[End-to-end complexity for entropic UGW]
\label{thm:ugw_end_to_end}
Assume $\rho_1=\rho_2=\rho$, and run \cref{alg:ugw} with
$\tau=1/\overline L$ and UOT accuracy
$\overline\delta_{\Reg\OT}$, solving each UOT subproblem by Sinkhorn with at
most $\overline N_{\mathrm S}$ complete iterations. Suppose that, throughout
the run,
\begin{equation}
\label{eq:ugw_uniform_inner_region}
    \norm{\Gamma_t}\le\overline\Gamma,
    \qquad
    \max\left\{
        \norm{\Xi^\star_{\Gamma_t}},
        \norm{\Xi_{t,k}}
    \right\}\le\overline R,
    \qquad
    \norm{g^{(0)}_{t,k}-g^\star_{t,k}}_\infty\le\overline G.
\end{equation}
Then every inner loop terminates after at most $\overline N_\Xi$ iterations,
and the algorithm terminates within $T_\delta$ outer iterations with a
$(\delta,\delta^2/4)$-critical plan. In particular,
\begin{equation}
\label{eq:ugw_nested_end_to_end}
    N_{\mathrm{Sink}}^{\mathrm{tot}}
    \le
    T_\delta(\overline N_\Xi+1)\overline N_{\mathrm S}.
\end{equation}

If $\rho$ is fixed and
$\overline R,\overline\Gamma,\overline G=O(1)$ as
$\varepsilon,\delta\to0$, then
\begin{equation}
\label{eq:ugw_nested_final_rate}
    N_{\mathrm{Sink}}^{\mathrm{tot}}
    =
    \widetilde O\left(
        \frac{
            \Delta^{\mathrm{out}}(\rho+\varepsilon)
        }{
            \delta^2\varepsilon^2
        }
    \right).
\end{equation}
If moreover $\NN=\MM=n$ and $A,B$ take values in fixed-dimensional
Euclidean spaces, the total arithmetic complexity is
\begin{equation}
    \widetilde O\left(
        \frac{
            n^2\Delta^{\mathrm{out}}(\rho+\varepsilon)
        }{
            \delta^2\varepsilon^2
        }
    \right).
\end{equation}
\end{theorem}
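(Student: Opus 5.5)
The argument assembles the generic results of \cref{appendix:end-to-end-cv} with the UGW-specific regularity of \cref{prop:ugw_klkl_regularities} and the Sinkhorn oracle of \cref{lem:ugw_sinkhorn_oracle}.

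\textbf{Step 1: verify the local form of \textbf{(H$_L$)} at every $\Gamma_t$ with the uniform constant $\overline L$.} Fix an outer iteration $t$. By \cref{eq:ugw_uniform_inner_region}, $\norm{\Xi^\star_{\Gamma_t}}\le\overline R$ and $\norm{\Xi_{t,k}}\le\overline R$, so $R=\overline R$ in \textbf{(H$_L$)}. The relevant ball is $\mathcal B_{\Gamma_t}=\{\norm\Xi\le3\overline R+1\}$. Apply \cref{prop:ugw_klkl_regularities}(iii) with radius $3\overline R+1$:
\[
-L_hI\preceq\nabla^2h_{\Gamma_t}\preceq-I \quad\text{on } \mathcal B_{\Gamma_t},
\qquad
L_h=1+\frac{R_A^2M_{\Gamma_t}(3\overline R+1)}{\varepsilon}\le\overline L,
\]
where the last inequality uses $\norm{\Gamma_t}\le\overline\Gamma$ and the definition \cref{eq:ugw_uniform_M} of $\overline M$. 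Twice differentiability and finiteness come from part (i). The condition $\delta_{\Reg\OT}\le\tfrac12$ holds because $\overline\delta_{\Reg\OT}\le\delta^2/40$. Hence \textbf{(H$_L$)} holds at each $\Gamma_t$ with $L=\overline L$ and $R=\overline R$.

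\textbf{Step 2: make Sinkhorn a $\overline\delta_{\Reg\OT}$-accurate oracle.} Each UOT call is at cost $c_{\Gamma_t,\Xi_{t,k}}$ with $\norm{\Xi_{t,k}}\le\overline R\le3\overline R+1$. So \cref{lem:ugw_sinkhorn_oracle}(ii) applies with $M_\Gamma(R)\le\overline M$ and $D_0\le\overline G$. Since $C_{\mathrm S}$ is monotone in both quantities, $\overline N_{\mathrm S}$ complete iterations return a $\overline\delta_{\Reg\OT}$-accurate plan; note $\overline\delta_{\Reg\OT}\le1$. One small point to check is that the lemma is stated at radius $R$, while the mass bound $M_\Gamma$ is increasing in $R$, so bounding by the larger radius $3\overline R+1$ is harmless.

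\textbf{Step 3: invoke the generic complexity.} Apply \cref{cor:unified_oracle_complexity} with $L=\overline L$, $\tau=1/\overline L$ and $\delta_{\Reg\OT}=\delta^2/(8\overline C_L)$. It requires an upper bound on $\overline\Delta$, and we use $\overline\Delta\le\tfrac{\overline L}{2}(2\overline R)^2=2\overline L\overline R^2$. Substituting into $N_\Xi(\delta^2/4)$ gives the factor $16\overline L\cdot2\overline L\overline R^2/\delta^2=32\overline L^2\overline R^2/\delta^2$, which is exactly $\overline N_\Xi$. Therefore each inner loop terminates with $K_t\le\overline N_\Xi$, and \cref{thm:unified_outer} yields termination before $T_\delta$ outer iterations with a $(\delta,\delta^2/4)$-critical plan. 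Each outer iteration uses at most $\overline N_\Xi+1$ UOT solves, each costing at most $\overline N_{\mathrm S}$ Sinkhorn iterations, which proves \cref{eq:ugw_nested_end_to_end}.

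\textbf{Step 4: asymptotics.} This is the only delicate bookkeeping. By \cref{rem:ugw_klkl_lambda}, $\overline M=O(1)$ when $\rho$ is fixed and $\overline R,\overline\Gamma=O(1)$, so $\overline L=O(\varepsilon^{-1})$. Then:
\begin{itemize}
\item $T_\delta=O(\Delta^{\mathrm{out}}/\delta^2)$;
\item $\overline N_\Xi=\widetilde O(\varepsilon^{-1})$;
\item by \cref{eq:ugw_sinkhorn_complexity_bound}, $\overline N_{\mathrm S}=O\bigl(\tfrac{\rho+\varepsilon}{\varepsilon}\log(C_{\mathrm S}/\overline\delta_{\Reg\OT})\bigr)$, where $C_{\mathrm S}$ and $1/\overline\delta_{\Reg\OT}$ are polynomial in $1/\varepsilon$ and $1/\delta$, so the logarithm is absorbed into the $\widetilde O$.
\end{itemize}
Multiplying these three bounds gives \cref{eq:ugw_nested_final_rate}.

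For the arithmetic count, each Sinkhorn iteration \cref{eq:ugw_sinkhorn_halfsteps} costs $O(n^2)$. Evaluating the cost matrix $c_{\Gamma,\Xi}$ and the moments $A[\widehat\pi],B[\widehat\pi]$ costs $O(n^2)$ per UOT call when $A,B$ take values in fixed dimension, which is dominated by the Sinkhorn work. This gives the final $n^2$ bound.
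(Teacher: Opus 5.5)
Your proposal is correct and follows the paper's proof essentially step by step. You verify the local form of (H$_L$) with $L=\overline L$ via \cref{prop:ugw_klkl_regularities}(iii) on the ball of radius $3\overline R+1$, certify each Sinkhorn call through \cref{lem:ugw_sinkhorn_oracle} with the uniform bounds $\overline M,\overline G$, bound the initial inner gap by $2\overline L\,\overline R^2$ to get $K_t\le\overline N_\Xi$ from \cref{cor:unified_oracle_complexity}, and multiply the counts. Your extra checks ($\overline\delta_{\Reg\OT}\le\tfrac12$, and monotonicity of $C_{\mathrm S}$ and $M_\Gamma$ in their arguments) are left implicit in the paper, and you handle them correctly.
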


\begin{proof}
Under the assumed bounds,
\cref{prop:ugw_klkl_regularities}(iii) gives uniformly
\begin{equation}
    -\overline L I
    \preceq
    \nabla^2 h_{\Gamma_t}(\Xi)
    \preceq
    -I
    \qquad
    \text{for all }
    \norm{\Xi}\le3\overline R+1.
\end{equation}
Hence the local form of \textbf{(H$_L$)} holds with
$L=\overline L$.

Moreover,
\cref{lem:ugw_sinkhorn_oracle} and the bounds
\begin{equation}
    M_{\Gamma_t}(\overline R)
    \le
    \overline M,
    \qquad
    \norm{g^{(0)}_{t,k}-g^\star_{t,k}}_\infty
    \le
    \overline G
\end{equation}
show that every UOT call reaches accuracy
$\overline\delta_{\Reg\OT}$ within
$\overline N_{\mathrm S}$ complete Sinkhorn iterations.

It remains to control the initial inner gap. Since
$\nabla h_{\Gamma_t}(\Xi^\star_{\Gamma_t})=0$ and
$h_{\Gamma_t}$ is $\overline L$-smooth on the segment joining
$\Xi_{t,0}$ and $\Xi^\star_{\Gamma_t}$,
\begin{align}
    h_{\Gamma_t}(\Xi^\star_{\Gamma_t})
    -
    h_{\Gamma_t}(\Xi_{t,0})
    &\le
    \frac{\overline L}{2}
    \norm{\Xi_{t,0}-\Xi^\star_{\Gamma_t}}^2\\
    &\le
    2\overline L\,\overline R^2.
\end{align}
Therefore \cref{cor:unified_oracle_complexity}, with oracle accuracy
\[
    \frac{\delta^2}{8\overline C_L}
    =
    \overline\delta_{\Reg\OT},
\]
gives
\[
    K_t\le\overline N_\Xi,
\]
while \cref{thm:unified_outer} gives termination within $T_\delta$ outer
iterations and a $(\delta,\delta^2/4)$-critical plan.

There are at most $\overline N_\Xi+1$ UOT calls per outer iteration and
at most $\overline N_{\mathrm S}$ complete Sinkhorn iterations per call,
which proves \cref{eq:ugw_nested_end_to_end}.

Finally, for fixed $\rho$ and bounded
$\overline R,\overline\Gamma,\overline G$,
\begin{align}
    \overline M
    &=O(1),
    &
    \overline L
    &=O(\varepsilon^{-1}),
    &
    \overline N_\Xi
    &=
    \widetilde{O}(\varepsilon^{-1}),
    &
    \overline N_{\mathrm S}
    &=
    \widetilde{O}\left(
        \frac{\rho+\varepsilon}{\varepsilon}
    \right).
\end{align}
Together with
\[
    T_\delta
    =
    O\left(
        \frac{\Delta^{\mathrm{out}}}{\delta^2}
    \right),
\]
this yields \cref{eq:ugw_nested_final_rate}. When $\NN=\MM=n$, each
Sinkhorn iteration, cost evaluation and moment computation costs $O(n^2)$,
which gives the stated arithmetic complexity.
\end{proof}

\paragraph{Accelerated convergence rate.}

Previously, we established the convergence rate of our solver for the entropic unbalanced Gromov--Wasserstein problem. As discussed in \cref{sub:acceleraded_versions}, when $h_\Gamma$ is smooth, the inner optimization can instead be accelerated using inexact first-order methods. This is particularly effective when the underlying $\Reg\OT$ solver converges geometrically, so that achieving accuracy $\delta$ requires only $O(\log(1/\delta))$ iterations. This is precisely the case here for generalized Sinkhorn, and we make the resulting accelerated complexity explicit below. Note also that, if the marginal penalties in $\Reg$ are chosen to be $\chi^2$ rather than $\KL$, one may similarly use the accelerated-gradient method of \cite{pmlr-v336-genans26a}, which also enjoys geometric convergence.

\begin{corollary}[Accelerated complexity for entropic UGW]
\label{cor:ugw_nesterov_acceleration}
Assume $\rho_1=\rho_2=\rho$ and the uniform bounds of
\cref{thm:ugw_end_to_end}, and let
\[
    \overline L
    =
    1+\frac{R_A^2\overline M}{\varepsilon}.
\]
At each outer iteration $t$, apply the accelerated inexact-gradient method
of \citet{devolder2013strongly} to
$\psi_t:=-h_{\Gamma_t}$ on the ball
$\{\norm{\Xi}\le\overline R\}$, using at every oracle query a plan returned
by generalized Sinkhorn with UOT accuracy
\begin{equation}
\label{eq:ugw_acc_uot_accuracy}
    \overline\delta_{\Reg\OT}^{\mathrm{acc}}
    :=
    \frac{\delta^2}
    {104\,\overline L^{3/2}}.
\end{equation}
Then each inner problem reaches $
    \operatorname{Cert}_{\Gamma_t}(\Xi,\widehat\pi)
    \le\frac{\delta^2}{4}$
after
$ \widetilde O(\sqrt{\overline L})$ 
UOT calls.

Consequently, the outer algorithm terminates within
\[
    T_\delta
    =
    O\left(
        \frac{\Delta^{\mathrm{out}}}{\delta^2}
    \right)
\]
iterations with a $(\delta,\delta^2/4)$-critical plan. If $\rho$ is fixed
and $\overline R,\overline\Gamma,\overline G=O(1)$ as
$\varepsilon,\delta\to0$, then
$\overline L=O(\varepsilon^{-1})$ and
\begin{equation}
\label{eq:ugw_nested_acc_rate}
    N_{\mathrm{Sink,acc}}^{\mathrm{tot}}
    =
    \widetilde O\left(
        \frac{
            \Delta^{\mathrm{out}}(\rho+\varepsilon)
        }{
            \delta^2\varepsilon^{3/2}
        }
    \right).
\end{equation}
\end{corollary}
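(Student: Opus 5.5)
The proof replaces the plain gradient inner loop in the proof of \cref{thm:ugw_end_to_end} by the accelerated scheme of \cref{prop:accelerated_inner}. The outer analysis carries over unchanged, because \cref{thm:unified_outer} only needs each inner loop to terminate with $\operatorname{Cert}_{\Gamma_t}\le\delta^2/4$. It is indifferent to how $\Xi$ was produced. So it suffices to show three things: \textbf{(H$_L$)} holds uniformly, the accelerated method with the prescribed oracle accuracy certifies $\delta^2/4$ within $\widetilde O(\sqrt{\overline L})$ UOT calls, and each call costs $\widetilde O((\rho+\varepsilon)/\varepsilon)$ Sinkhorn iterations.

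\textbf{Step 1: uniform smoothness.} Under the bounds \cref{eq:ugw_uniform_inner_region}, \cref{prop:ugw_klkl_regularities}(iii) gives $-\overline L I\preceq\nabla^2 h_{\Gamma_t}\preceq -I$ on $\{\norm{\Xi}\le 3\overline R+1\}$. Hence $\psi_t=-h_{\Gamma_t}$ is $1$-strongly convex and $\overline L$-smooth there, with $\kappa=\overline L$. The initial gap satisfies $\Delta_0\le 2\overline L\,\overline R^2$, exactly as in the proof of \cref{thm:ugw_end_to_end}.

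\textbf{Step 2: inexact oracle and certificate.} By \cref{lem:unified_certificate} and \cref{lem:unified_nested_gradient_error}, a $\zeta$-accurate plan gives a $(\zeta,\overline L)$-type inexact oracle in the sense of \citet{devolder2013strongly}: the upper model \cref{eq:unified_approx_first_order} holds with additive error $\zeta$, and the gradient error is $\norm{e}\le\sqrt{2\overline L\zeta}$. For the strongly convex fast gradient method, Devolder et al.\ give a gap bound of the form $\Delta_k\le C\,e^{-k/\sqrt{\kappa}}\Delta_0+O(\sqrt{\kappa}\,\zeta)$. I then convert this gap bound into a certificate bound as in \cref{prop:inner_geometric}(ii). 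For the gradient, $\norm{g}^2\le 2\overline L\Delta_k$, so
\[
\operatorname{Cert}\le 2\overline L\Delta_k+(2\overline L+1)\zeta .
\]
The error-floor contribution is therefore $O(\overline L^{3/2}\zeta)$. With $\zeta=\delta^2/(104\,\overline L^{3/2})$ this floor is at most $\delta^2/8$; the constant $104$ is fitted to the explicit constants in Devolder's bound, and I would check it against that statement. The transient term $2\overline L\,C e^{-k/\sqrt{\overline L}}\Delta_0$ then falls below $\delta^2/8$ after
\[
k=O\bigl(\sqrt{\overline L}\log(\overline L^2\overline R^2/\delta^2)\bigr)=\widetilde O(\sqrt{\overline L})
\]
calls.

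\textbf{Main obstacle.} Acceleration uses extrapolated points, and these must stay inside the ball where the smoothness bounds and the mass bound $\overline M$ hold. The statement sidesteps this by running the method on $\{\norm{\Xi}\le\overline R\}$, i.e.\ a projected variant. The points at which the oracle is queried are convex combinations of feasible iterates, possibly with momentum overshoot bounded by a constant times $\overline R$. I would argue that they lie in $\{\norm{\Xi}\le 3\overline R+1\}$, as in \cref{lem:exact_step_in_ball}, or else simply assume this as part of the uniform bounds.

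\textbf{Step 3: counting.} By \cref{lem:ugw_sinkhorn_oracle}(ii), each call needs
\[
\overline N_S=O\Bigl(\tfrac{\rho+\varepsilon}{\varepsilon}\log\tfrac{C_S\overline L^{3/2}}{\delta^2}\Bigr)
\]
Sinkhorn iterations. \cref{thm:unified_outer} bounds the number of outer iterations by $T_\delta=O(\Delta^{\mathrm{out}}/\delta^2)$ and gives $(\delta,\delta^2/4)$-criticality. Multiplying the three counts with $\overline L=O(\varepsilon^{-1})$ gives
\[
\frac{\Delta^{\mathrm{out}}}{\delta^2}\cdot\varepsilon^{-1/2}\cdot\frac{\rho+\varepsilon}{\varepsilon}
\]
up to logarithmic factors, which is \cref{eq:ugw_nested_acc_rate}.
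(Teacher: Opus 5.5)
Your proposal is correct and follows the paper's proof essentially step for step. Like the paper, you apply Devolder's accelerated gap bound to $\psi_t$, convert it into a certificate bound through \cref{prop:inner_geometric}(ii), and multiply by the Sinkhorn count of \cref{lem:ugw_sinkhorn_oracle} and by $T_\delta$ from \cref{thm:unified_outer}. For the constant you deferred: your two ingredients (the one-sided model with error $\zeta$ and $\norm{e}\le\sqrt{2\overline L\zeta}$) combine into a $(2\zeta,2\overline L,1)$ oracle rather than a $(\zeta,\overline L)$ one, so the floor coefficient is $4\overline L(1+\sqrt{2\overline L})+2\overline L+1\le 13\,\overline L^{3/2}$ and $104=8\cdot 13$; your concern that query points must stay in $\{\norm{\Xi}\le 3\overline R+1\}$ is a point the paper leaves implicit.
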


\begin{proof}
A $\delta_{\Reg\OT}$-accurate UOT solve defines a
$(2\delta_{\Reg\OT},2\overline L,1)$ inexact oracle for
$\psi_t$. Hence the accelerated bound of
\citet{devolder2013strongly} gives
\[
    \psi_t(y_k)-\psi_t^\star
    \le
    C_0
    \exp\left(
        -\frac{k}{2\sqrt{2\overline L}}
    \right)
    +
    2(1+\sqrt{2\overline L})\,
    \delta_{\Reg\OT}.
\]
Using \cref{eq:unified_smooth_certificate_bound} at $y_k$,
\begin{align*}
    \operatorname{Cert}_{\Gamma_t}(y_k,\widehat\pi)
    &\le
    2\overline L
    C_0
    \exp\left(
        -\frac{k}{2\sqrt{2\overline L}}
    \right)\\
    &\quad+
    \bigl[
        4\overline L(1+\sqrt{2\overline L})
        +2\overline L+1
    \bigr]
    \delta_{\Reg\OT}.
\end{align*}
Since $\overline L\ge1$,
\[
    4\overline L(1+\sqrt{2\overline L})
    +2\overline L+1
    \le
    13\overline L^{3/2}.
\]
With \cref{eq:ugw_acc_uot_accuracy}, the second term is therefore at most
$\delta^2/8$. Taking
$k=\widetilde O(\sqrt{\overline L})$ makes the first term at most
$\delta^2/8$, proving the required certificate.

Finally, \cref{lem:ugw_sinkhorn_oracle} gives
\[
    N_{\mathrm S}(\zeta)
    =
    \widetilde O\left(
        \frac{\rho+\varepsilon}{\varepsilon}
    \right),
\]
because its dependence on $\zeta$ is logarithmic. Thus the smaller UOT
accuracy in \cref{eq:ugw_acc_uot_accuracy} affects only logarithmic factors.
Multiplying the
$\widetilde O(\sqrt{\overline L})$ UOT calls per outer iteration by
$T_\delta$ and using
$\overline L=O(\varepsilon^{-1})$ yields
\cref{eq:ugw_nested_acc_rate}.
\end{proof}

\end{document}